\newif\ifstoc
\newsavebox{\box@tikzpicture}
	\pgfmathsetmacro\width@scale@picture{#2/\wd\box@tikzpicture}%
\theoremstyle{plain}
\newtheorem{theorem}{Theorem}
\newtheorem{lemma}{Lemma}
\newtheorem{observation}{Observation}
\theoremstyle{definition}
\newtheorem{definition}{Definition}
\newtheorem*{problem}{Problem}
\theoremstyle{remark}
\newcommand{\openC}{\eta}
\newcommand{\dist}{\text{dist}}
\newcommand{\Reals}{{\mathds R}}
\newcommand{\B}{{\mathcal B}}
\newcommand{\E}{{\mathcal E}}
\newcommand{\G}{{\mathcal G}}
\newcommand{\hp}{{\mathit{HP}}}
\newcommand{\M}{{\mathcal M}}
\newcommand{\inte}{{\textnormal{int}}}
\newcommand{\exte}{{\textnormal{ext}}}
\newcommand{\lengt}{{\textnormal{length}}}
\newcommand{\cost}{{\textnormal{cost}}}
\newcommand{\cone}{{\textnormal{cone}}}
\newcommand{\parent}{{\textnormal{par}}}
\newcommand{\children}{{\textnormal{children}}}
\newcommand{\fix}{{\textnormal{fixedcost}}}
\DeclareMathOperator*{\argmin}{arg\,min}
\newcommand{\tCC}{\tilde{\clustering}}
\newcommand{\R}{{\mathcal R}}
\newcommand{\clustering}{\mathcal{C}}
\newcommand{\C}{\clustering}        
\newcommand{\COpt}{\C^{\textsc{Opt}}}
\newcommand{\Part}{\mathrm{Part}}
\newcommand{\Table}{\mathrm{Table}}
\newcommand{\diam}{\mathnormal{diam}}
\newcommand{\length}{\mathnormal{length}}
\def\polylog{\operatorname{polylog}}
\newcommand{\CH}{H}
\newcommand{\per}{h}
\newcommand{\bound}{\partial}
\newcommand{\Opt}{\mbox{\sc Opt}}
\newcommand{\mydef}{\coloneqq}
\newcommand{\outside}{\exte} 
\newcommand{\inter}{\inte} 
\newcommand{\pol}{\pi} 
\newcommand{\cl}{\mathcal{C}} 
\newcommand{\facilityname}{unit disk fencing problem}
\newcommand{\fixedkname}{$k$-cluster fencing problem}
\newcommand{\bestc}[1]{\ensuremath{\sigma_{\min}( #1 )}}
\newcommand{\calL}{\mathcal{L}}
\definecolor{blueblack}{rgb}{0,0,.7}
\newcounter{sideremark}
\newcommand{\vincent}[1]{}
\newcommand{\eva}[1]{}
\title{Fast Fencing}
\author[1]{Mikkel Abrahamsen}
\author[1]{Anna Adamaszek}
\author[2]{Karl Bringmann}
\author[3]{Vincent Cohen-Addad}
\author[4]{Mehran Mehr}
\author[5]{Eva Rotenberg}
\author[1]{Alan Roytman}
\author[1]{Mikkel Thorup}
\affil[1]{Department of Computer Science, University of Copenhagen, Denmark}
\affil[2]{Saarland Informatics Campus, Max Planck Institute for Informatics, Germany}
\affil[3]{Sorbonne Universit\'e, CNRS, Laboratoire d'informatique de Paris 6, LIP6, F-75252 Paris, France}
\affil[4]{Department of Computer Science, TU Eindhoven, Netherlands}
\affil[5]{Department of Applied Mathematics and Computer Science, Technical University of Denmark, Denmark}
\date{}
\begin{document}
\pagenumbering{gobble}
\thispagestyle{empty}
\maketitle
\begin{abstract}
We consider very natural ``fence enclosure'' problems studied by Capoyleas, Rote, and Woeginger and Arkin, Khuller, and Mitchell in the early 90’s. Given a set $S$ of $n$
points in the plane, we aim at finding a set of closed curves such
that (1) each point is enclosed by a curve and (2) the total length of
the curves is minimized.  We consider two main variants. In the first variant,
we pay a unit cost per
curve in addition to the total length of the curves. An equivalent formulation of this version is that
we have to enclose $n$ unit disks, paying only the total length of the
enclosing curves.
In the other variant, we are allowed to use at most $k$ closed curves and pay no cost per curve.

For the variant with at most $k$ closed curves,	we present
an algorithm that is polynomial	in both	$n$ and	$k$. For
the variant with unit cost per curve, or unit disks, we	
present	a near-linear time algorithm.

Capoyleas, Rote, and Woeginger solved the problem with at most
$k$ curves in $n^{O(k)}$ time.
Arkin, Khuller, and Mitchell used this to solve the
unit cost per curve version in exponential time. At
the time, they conjectured that the problem with $k$ curves
is NP-hard for general $k$. Our polynomial time algorithm
refutes this unless P equals NP.
\end{abstract}

\newpage
\pagenumbering{arabic}
\section{Introduction}
We consider some very natural ``fence enclosure'' problems studied by
Capoyleas, Rote, and Woeginger~\cite{CRW91} and
Arkin, Khuller, and Mitchell~\cite{AKM93} in the early 90s. Given a
set $S$ of $n$ points in the plane, we aim at finding a set of closed
curves such that (1) each point is enclosed by a curve and (2) the
total length of the curves is minimized.  We consider two main
variants.
In the first variant, we pay an opening cost $\openC>0$ per curve, which is part of the input.
An equivalent formulation is that a circle with radius $\openC/2\pi$ is centered at each point, and we need to enclose these circles with curves of minimal total length, paying no opening cost.
The equivalence is illustrated and explained in Figure~\ref{fig:protection}.
By a suitable scaling, we may assume that the circles are unit circles.
We thus refer to this variant as the {\em \facilityname{}}.
In the other variant, we are allowed to use at most $k$ closed curves and pay no cost per curve. We can
think of this as dividing the points into $k$ clusters and then viewing the
closed curves as perimeters of the convex hulls of
the clusters. For this reason, we refer to the variant as the {\em \fixedkname{}} (also referred to as the {\em minimum perimeter sum
problem} in the literature).

Capoyleas, Rote, and Woeginger~\cite{CRW91} presented an algorithm for the \fixedkname{} that runs in time $\rho(k)n^{O(k)}$, where $\rho(k)$ is the number of nonisomorphic
planar graphs on $k$ nodes.  This yields a polynomial running time
when $k$ is fixed.
Arkin et al.~\cite{AKM93} conjectured the problem to be
NP-hard when $k$ is part of the input and neither an NP-hardness proof nor a polynomial time algorithm has been found so far.
To solve the unit disk
version, which is equivalent to having a unit cost per cluster, Arkin et al.~suggested running their algorithm for the $k$-cluster fencing problem for all $k\leq n$, adding $k$ to the
total perimeter length. These were the best known bounds
except in the special case of the $k$-cluster fencing problem for $k\in\{1,2\}$ (see the discussion on related work
for more details).

\subsection{Our Results}
We present polynomial time algorithms for both problems.
More specifically, for the \facilityname{}, we present an efficient near-linear time algorithm (Theorem~\ref{thm:facility}).
For the \fixedkname{}, we present an algorithm that is
polynomial in both $n$ and $k$ (Theorem~\ref{thm:fixedk}).
In particular, this refutes
the conjectured hardness unless P $=$ NP.

Our algorithm for the \facilityname{} can be generalized to the case
where the input consists of objects that are allowed to be disks with different
diameters or polygonal objects that have to be fenced. For this
variant,  our
running time increases by a factor that is logarithmic in
the ratio between the maximum and the minimum object diameter.

In order to achieve near-linear bounds for the \facilityname{}, we
introduce new techniques that we believe can have other applications
in computational geometry. We give a detailed overview of the techniques
below.

Throughout our paper, we assume that it is possible to compare the costs of two different
clusterings efficiently. Note that this is a standard assumption in computational geometry.

\subsection{Applications}
The problem of fencing in disks or objects appears very commonly
in the real world. A good example is the protection of trees,
either at construction sites to protect the roots, or in the
wild to protect rare trees from deer and other animals. When
trees are planted by nature, we have no control over their
location. In this context, each disk should have a sufficient
diameter to protect rare trees from wildlife (see Figure~\ref{fig:protection}).

There are many standards that specify how far fences should be from trees,
and even discussions on different advantages of grouping trees beyond the fence cost (e.g., see~\cite{TreeProtection}).

\begin{figure}
\centering
\includegraphics[width=0.5\textwidth, trim={1.5cm 0.5cm 1.4cm 0.9cm},clip]{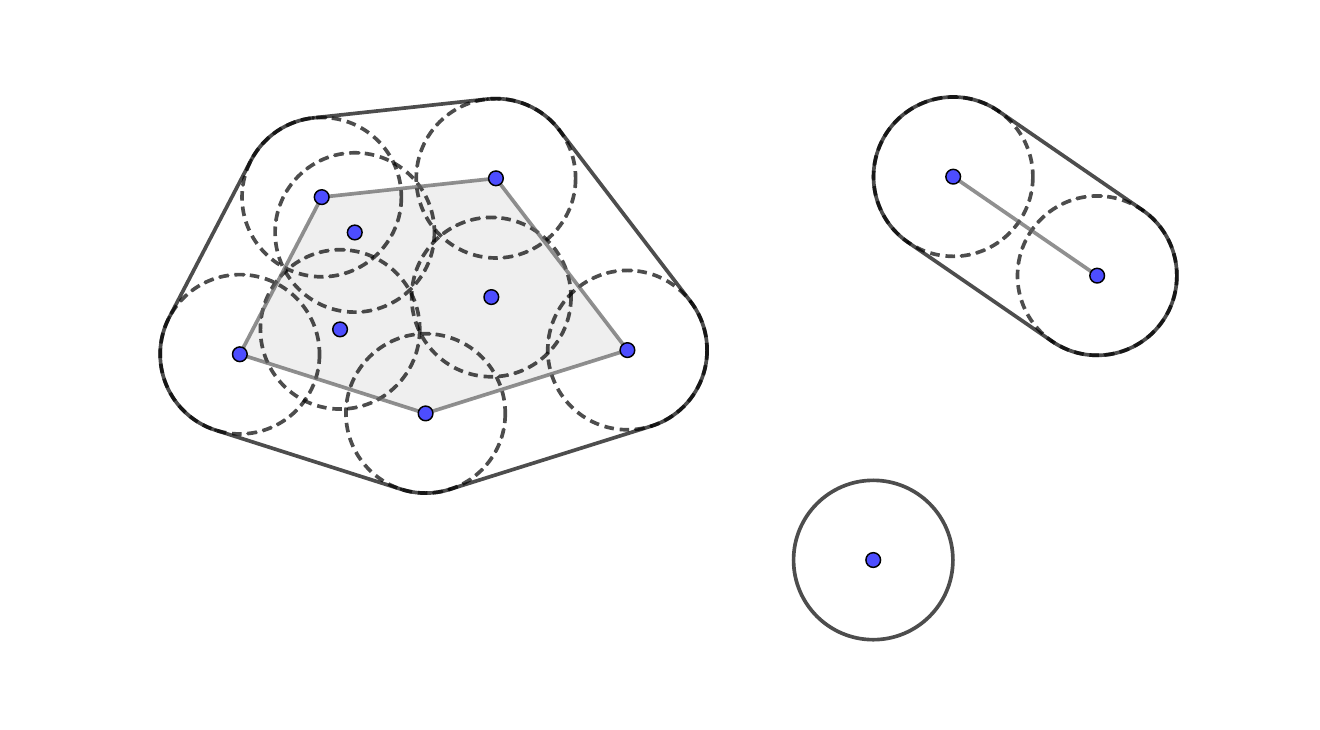}
\caption{A set of $11$ points and the set of enclosing curves minimizing the total length plus the opening cost $\openC$ per curve.
There are three curves, enclosing $1$, $2$, and $8$ points, drawn in gray (one cannot be seen, as it consists of just one input point).
A dashed circle centered at each point with radius $\openC/2\pi$ is drawn.
For each cluster, the curve enclosing the respective circles is drawn in black.
We note that the perimeter of each black curve is exactly $\openC$ larger than the perimeter of the gray curve, as the linear pieces sum to the perimeter of the gray curve and the circular arcs sum to $\openC$.
Hence, the problem of enclosing points with curves, minimizing the total length plus an opening cost $\openC$ per curve, is equivalent to that of enclosing circles with curves of minimal total length.}\label{fig:protection}
\end{figure}



\subsection{Our Techniques}
\paragraph{Approach for the \facilityname{}.}
Our approach is as follows.
We first partition the plane and recursively subdivide it into four quadrants, using a quadtree dissection-type approach.
This divides the plane into \emph{cells} of geometrically decreasing sizes.
Each cell of width $w$ consists of four cells of width $w/2$.
We find optimal partitions for the smallest cells first (i.e., the lowest level of the quadtree).
We then work with increasing cell sizes, solving the problem layer by layer in the quadtree.
To obtain a solution at a given cell size, we rely on solutions to smaller cell sizes.
For this to work, we need to have precomputed the solutions for various \emph{polyominoes} made of some constant number of cells of smaller sizes.

For a given polyomino, we show how to obtain an optimal clustering of the points within the polyomino by merging the solutions for smaller polyominoes.
In some cases, merging is not enough as there can be one large cluster intersecting all the cells of the polyomino, which we do not find by merging solutions of smaller polyominoes.

Hence, we need an efficient algorithm for finding the best cluster $C$ intersecting all the cells of the polyomino.
In order to do this, we give a subroutine running in time $O(n\log^3 n)$ that finds the best cluster that intersects all cells.
The subroutine works in two steps: it first finds a point $x_0$ that belongs to the cluster
we are looking for together with a point $p$ on the boundary of the cluster.
%
Once we have $x_0$ and $p$, the idea is to make an angular sweep of a ray from $x_0$, and consider the points in the order the ray sweeps over them.
For each point $v$, we calculate the ``best'' path from $p$ to $v$, in terms of both its length and how many clusters' opening costs it may save.
We prove that the ``best'' path consists of line segments between points, and save for each $v$ information about the last line segment on the path to $v$
(see the red lines in Figure~\ref{fig:curvetree}).
This information allows us to finally retrieve the boundary of the convex hull of $C$ recursively.

\begin{figure}[ht!]
\begin{center}
\begin{tikzpicture*}{0.33\textwidth}
\begin{scope}[
tinyvertex/.style={
draw,
circle,
minimum size=1mm,
inner sep=0pt,
outer sep=0pt%
},
every label/.append style={
font=\small,
}
]
\node[tinyvertex,label={left:$x_0$}] (x) at (0,0) {};
\node[tinyvertex,label={left:$p$}] (p) at (0,5) {};
\node[tinyvertex,label={below right:$t$}] (t) at (4.5,2.7) {};

\draw plot coordinates {(1,5) (1.6,5.2) (1.1,2.3) (1,5)};

\draw plot coordinates {(2.5,3) (2.75,3.5) (3,3) (2.75,2.5) (2.5,3)};

\draw plot coordinates {(4,1) (2.2,0.8) (4.5,2.7) (4,1)};

\draw [dashed] plot coordinates {(0,0) (5,3)};

\draw [<-] (6,0) arc (0:90:7.4);

\draw [red] plot coordinates {(0,5) (1.6,5.2) (2.75,3.5)};
\draw [red] plot coordinates {(0,5) (1,5)};
\draw [red] plot coordinates {(0,5) (1.1,2.3)};
\draw [red] plot coordinates {(1.6,5.2) (2.5,3)};
\draw [red,dashed] plot coordinates {(1.6,5.2) (4.5,2.7)};

\end{scope} 
\end{tikzpicture*}
\end{center}
\caption{Given the advice that $p$ lies on the perimeter of the cluster containing $x_0$ in an optimal clustering, we can compute the cluster containing $x_0$ with an angular sweep. Here, the angular sweep has reached the point $t$.}
\label{fig:curvetree}
\end{figure}
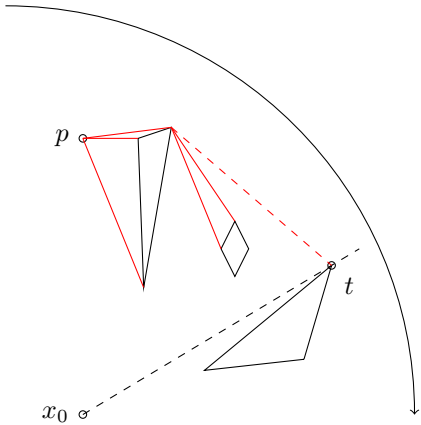


\paragraph{Approach for the \fixedkname{}.}
Our algorithm for the \fixedkname{} shares some similarities with the work of Gibson
et~al.~\cite{GKKPV12}, in which a dynamic programming approach was
used to solve the minimum radius sum problem.  One main difference
is that the running time complexity they obtain is $O(n^{881})$, whereas our approach
works in $O(n^{27})$ time.
Their technique is to
divide a given instance $(S,k)$ of the minimum radius sum problem into
subproblems, where $S$ is the set of points to be clustered and $k$ is
a constraint on the number of clusters we can use.  The problem
$(S,k)$ is solvable if we have a solution to all subproblems.
However, the number of subproblems is exponential.  To get a
polynomial time algorithm, they showed that a solution can be found
after considering only a polynomial number of subproblems.

We also use a dynamic programming approach for our problem, although we need new
techniques since the number of candidate clusters for the minimum radius sum problem
is only $O(n^2)$ (as each disk is determined by two points in $S$, one determining the center
and the other determining the radius).  In contrast, our problem has
an exponential number of candidate clusters (dictated by all subsets of $S$).
We define subproblems based on \emph{boxes}, which are rectangles that cover
some portion of the plane and some number of input points from $S$.  Our key observation
is that there is some separator of each box (i.e., a vertical line segment or a horizontal line
segment) that splits the box into two strictly smaller boxes such that an optimal solution
only has a constant number of line segments that intersect this separator (in fact, we give a
bound of two on the number of such intersecting line segments).

A dynamic programming approach naturally follows: we simply guess the position of such a separator
(for which we argue there are $O(n)$ choices), and then guess which segments crossing this separator
belong to an optimal solution.  We first obtain solutions for smaller boxes, and then glue
together solutions for smaller boxes to obtain solutions for larger boxes.

We note that there is an unpublished solution~\cite{M17} (i.e., polynomial time algorithm)
for the rectilinear version of the problem,
where we must enclose $n$ points using $k$ axis-parallel rectangles rather than
convex hulls (as in our setting).  The solution to the axis-parallel version uses similar ideas.
In particular, it is possible to argue the existence of separators that do not cut through any clusters of
an optimal solution in this setting.
This is not possible for our problem.
For the \fixedkname{}, it is possible that any such vertical or horizontal separator cuts at least one cluster, and allowing skew separators would
result in subproblems of high complexity.
In particular, consider an example with $n$ sufficiently large and assume $k<n/3$.
We have $n-k+1$ points spread evenly on a circle as the corners of a regular $(n-k+1)$-gon,
and $k-1$ points spread evenly on a surrounding circle with the same center.
The surrounding points are fairly close yet sufficiently far enough away that the optimal
solution is to cluster the inner $n-k+1$ points
together, and open a cluster for each of the $k-1$ points on the surrounding circle.
Cutting away the $k-1$ points on the outside (with not necessarily axis-aligned separators)
creates subproblems defined on polygonal regions with $k-1$ sides, resulting in high complexity subproblems.

\subsection{Related Work}
The literature on geometric clustering is vast~\cite{agarwal1998efficient}, and thus we focus on the
most relevant prior works.
Arkin, Khuller, and Mitchell~\cite{AKM93} considered many clustering variants related to the problems studied in the present paper.  For the variant
where points have a value associated with them, they showed that the problem of maximizing
profit (i.e., sum of values of points enclosed minus the total perimeter) is NP-hard when values are
unrestricted in sign.  When values are strictly positive, they gave an
$O(n^3)$ time algorithm.  For the version in which there is a budget on the total perimeter
we can use, the problem of maximizing profit is also NP-hard, even when
values are strictly positive (they provided a pseudo-polynomial algorithm when the values
are integers).

The $k$-cluster fencing problem for $k=1$ is the very well-known problem of computing the convex hull of a set of points in the plane~\cite{de2000computational}.
There has also been some work for the special case of $k=2$ clusters.
The work of Mitchell and Wynters~\cite{MW91} studied four flavors
of the problem: minimizing the sum of perimeters, the maximum of the perimeters,
the sum of the areas enclosed by the fences, and the maximum of the areas.
They gave polynomial time solutions for all four flavors, running in time
$O(n^3)$ (for some of them, they gave improved running time bounds of $O(n^2)$).
More recently, the work of Abrahamsen et al.~\cite{ABBMM17} gave an algorithm
running in time $O(n \log^4(n))$ that solves the case of $k=2$ clusters,
yielding the first subquadratic time algorithm for this setting.

There have been many other papers studying related geometric clustering problems.
Capoyleas, Rote, and Woeginger~\cite{CRW91} studied a general
geometric $k$-clustering framework in which the cost of a solution is
determined by some weight function that assigns real weights to any
subset of points in the plane (i.e., each cluster), after which a symmetric $k$-ary
function over $k$-tuples is applied (e.g., the sum function).
For the case when the weight function is the diameter, radius, or perimeter and
the symmetric $k$-ary function is an arbitrary monotone increasing
function (such as the sum or the maximum), they gave an algorithm running in time
$\rho(k)n^{O(k)}$, where $\rho(k)$ is the number of nonisomorphic
planar graphs on $k$ nodes.
This is polynomial if $k$ is fixed and not given as input.

In addition, the work of Behsaz and Salavatipour~\cite{BS15} studied objectives
such as minimizing the sum of radii and minimizing the sum of diameters subject
to the constraint of having at most $k$ clusters.  For
minimizing the sum of radii, they gave a polynomial time algorithm for clustering
points in metric spaces that are induced by unweighted graphs, assuming no singleton
clusters are allowed.  They also showed that finding the best single cluster for each connected
component of the graph yields a $\frac{3}{2}$-approximation algorithm, assuming
no singleton clusters are allowed.  For the problem of minimizing the sum of
diameters, they gave a polynomial time approximation scheme when points
lie in the plane with Euclidean distances, along with a polynomial time
exact algorithm when $k$ is constant (for the metric setting).

Many classical clustering problems are NP-hard when $k$ is given as part of the input,
though there are some notable exceptions.  In 2012, Gibson et~al.~\cite{GKKPV12} devised a
polynomial time algorithm for finding $k$ disks, each centered at a point in $S$, such that the sum of the
radii of the disks is minimized subject to the constraint that their union must cover~$S$.
In their paper, they used a dynamic programming approach to get a running time of $O(n^{881}T(n))$,
where $T(n)$ is the time needed to compare two candidate solutions.

\section{The Unit Disk Fencing Problem}

Given a set of points $A$ in the plane, we denote by $H(A)$ the convex hull of $A$, and by $h(A)$ the perimeter of $H(A)$.

  Let $A$ be a finite set of points in $\mathbb{R}^2$ and let $\openC>0$ be the \emph{opening cost}.
Consider a partition $\cl\mydef\{C_1,\ldots,C_{\ell}\}$ of $A$.
We refer to each set $C_i$ as a \emph{cluster}.
  The \emph{cost} of $\cl$ with respect to $\openC$ is
  $$\openC \cdot \ell + \sum_{i=1}^{\ell} h(C_i).$$
  The partition $\cl$ is \emph{optimal} for $(A,\openC)$ if no partition of $A$ has a lower cost.
  We denote the cost of an optimal partition for $(A,\openC)$ as $\Opt(A,\openC)$.
  When the opening cost is clear from the context, we might omit it.
In the \emph{\facilityname{}}, we are given a set of points $A$ and an opening cost $\openC$, and the goal is to find an optimal partition for $(A,\openC)$.

\begin{observation}
In an optimal partition, the clusters have pairwise disjoint convex hulls.
\end{observation}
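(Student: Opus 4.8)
The plan is to prove the contrapositive: if a partition $\cl=\{C_1,\ldots,C_\ell\}$ has two clusters $C_i, C_j$ whose convex hulls $H(C_i)$ and $H(C_j)$ are not disjoint, then we can construct another partition of strictly smaller cost, so $\cl$ is not optimal. First I would merge $C_i$ and $C_j$ into a single cluster $C' \mydef C_i \cup C_j$, obtaining a new partition $\cl'$ with $\ell-1$ clusters. The cost changes by $-\openC + h(C') - h(C_i) - h(C_j)$, so since $\openC > 0$ it suffices to show that $h(C') \le h(C_i) + h(C_j)$, i.e., that the perimeter of the convex hull of the union is at most the sum of the perimeters, whenever the two hulls intersect.

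The key geometric step is therefore the inequality $h(H(A)\cup H(B)) \le h(H(A)) + h(H(B))$ for two convex polygons whose interiors (or boundaries) meet. I would argue this via the standard fact that perimeter is monotone under inclusion for convex sets: $H(C') = H(H(C_i)\cup H(C_j))$ is a convex polygon whose boundary consists of arcs taken alternately from $\bound H(C_i)$ and $\bound H(C_j)$, connected by at most two bridging edges (line segments) in the case the hulls are disjoint; when the hulls intersect, each bridging segment is replaced by nothing — the boundary of $H(C')$ is obtained from pieces of $\bound H(C_i)$ and $\bound H(C_j)$ only, with the portions lying strictly inside the other hull removed. Since removing a piece of a convex polygon's boundary and replacing it by the corresponding chord never increases length (the chord is the shortest path between its endpoints), the perimeter of $H(C')$ is at most the total length of $\bound H(C_i)$ plus $\bound H(C_j)$, which is exactly $h(C_i)+h(C_j)$. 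In fact, since the two hulls genuinely overlap, the inequality is strict, giving a strict decrease in cost.

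I expect the main obstacle to be making the description of $\bound H(C')$ rigorous — carefully stating which arcs of the two original hulls appear on the boundary of the merged hull and handling degenerate cases (one hull contained in the other; hulls that are lower-dimensional, e.g. a single point or a segment; hulls sharing only a boundary point). For the degenerate containment case $H(C_i) \subseteq H(C_j)$, the argument is immediate since then $H(C') = H(C_j)$ and $h(C') = h(C_j) < h(C_i) + h(C_j)$ as $h(C_i) > 0$ when $|C_i|\ge 2$, or $h(C_i)=0$ but then merging still removes the opening cost $\openC>0$. The cleanest way to package all cases is to invoke the monotonicity of perimeter under set inclusion for convex bodies together with the observation that $\bound H(C')$ can be partitioned into finitely many subarcs, each of which is either a subarc of $\bound H(C_i)$, a subarc of $\bound H(C_j)$, or a chord shortcutting such a subarc; a short accounting argument then yields $h(C') \le h(C_i) + h(C_j)$, and strictness follows because at least one nontrivial shortcut occurs (or an opening cost is saved). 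Combining this with the cost change computed above completes the proof.
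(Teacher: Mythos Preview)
Your argument is correct. The paper states this as an \emph{observation} without proof, treating it as self-evident, so there is no proof in the paper to compare against; your write-up supplies exactly the standard justification (merge the two clusters, note that the perimeter of the convex hull of the connected union $H(C_i)\cup H(C_j)$ is at most $h(C_i)+h(C_j)$, and conclude strict improvement from the saved opening cost $\openC>0$).

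One small remark: you do not actually need strictness of the perimeter inequality anywhere---since $\openC>0$, the weak inequality $h(C')\le h(C_i)+h(C_j)$ already gives a strict cost decrease of at least $\openC$. The cleanest packaging of the perimeter bound is the one you hint at near the end: when $H(C_i)$ and $H(C_j)$ intersect, their union is a connected region whose boundary is contained in $\partial H(C_i)\cup\partial H(C_j)$, hence has total length at most $h(C_i)+h(C_j)$; and the perimeter of the convex hull of any connected region is at most the perimeter of the region itself (replacing arcs by chords only shortens). This covers all your degenerate cases uniformly.
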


We say that $A$ is \emph{indivisible} if $\{A\}$ is an optimal partition for $(A,\openC)$.

\begin{observation}
Each cluster of an optimal partition is indivisible.
\end{observation}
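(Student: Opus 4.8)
The plan is a one-step local-improvement (exchange) argument. Assume for contradiction that $\cl=\{C_1,\dots,C_\ell\}$ is an optimal partition for $(A,\openC)$ but that some cluster $C_j$ is not indivisible. Unfolding the definition, this says that $\{C_j\}$ is not an optimal partition for $(C_j,\openC)$, so there exists a partition $\D=\{D_1,\dots,D_m\}$ of $C_j$ with
$$\openC\cdot m+\sum_{t=1}^{m}h(D_t)\;<\;\openC+h(C_j).$$

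I would then build the partition $\cl'$ of $A$ obtained from $\cl$ by removing $C_j$ and inserting $D_1,\dots,D_m$ in its place. Since $\D$ partitions $C_j$ while $\cl\setminus\{C_j\}$ partitions $A\setminus C_j$, the family $\cl'$ is a genuine partition of $A$, consisting of $\ell-1+m$ clusters. Subtracting the cost of $\cl$ from the cost of $\cl'$, every term $h(C_i)$ with $i\ne j$ cancels and what remains is
$$\Bigl(\openC\cdot m+\sum_{t=1}^{m}h(D_t)\Bigr)-\bigl(\openC+h(C_j)\bigr),$$
which is strictly negative by the inequality above. Hence $\cl'$ is cheaper than $\cl$, contradicting the optimality of $\cl$. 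Therefore every cluster of an optimal partition is indivisible.

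I do not expect any real obstacle here: the only points that need care are that $\cl'$ is indeed a partition of $A$ and that the opening-cost bookkeeping is consistent. The number of clusters changes by exactly $m-1$ both when $C_j$ is split inside its own instance $(C_j,\openC)$ and when the substitution is performed inside $\cl$, so the cost improvement observed locally in $(C_j,\openC)$ transfers verbatim to the global cost. Both facts are immediate from the definition of the cost function, so the argument is short and self-contained; no use of the preceding observation on disjoint convex hulls is required.
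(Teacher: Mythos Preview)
Your argument is correct. The paper states this as an observation without proof, treating it as immediate; your exchange argument is exactly the natural justification and matches the intended reasoning.
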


We say an optimal partition $\{C_1,\ldots,C_{\ell}\}$ for $(A,\openC)$ is \emph{maximal} if there is no optimal partition $\{C'_1,\ldots,C'_{\ell'}\}$ of $(A,\openC)$, where $\ell' < \ell$, such that for each $i \in \{1,\ldots,\ell\}$, there is some $j\in \{1,\ldots,\ell'\}$ such that $C_i \subseteq C'_j$.

\subsection{Structural Results}

\begin{figure}
\centering
\includegraphics[scale=0.5]{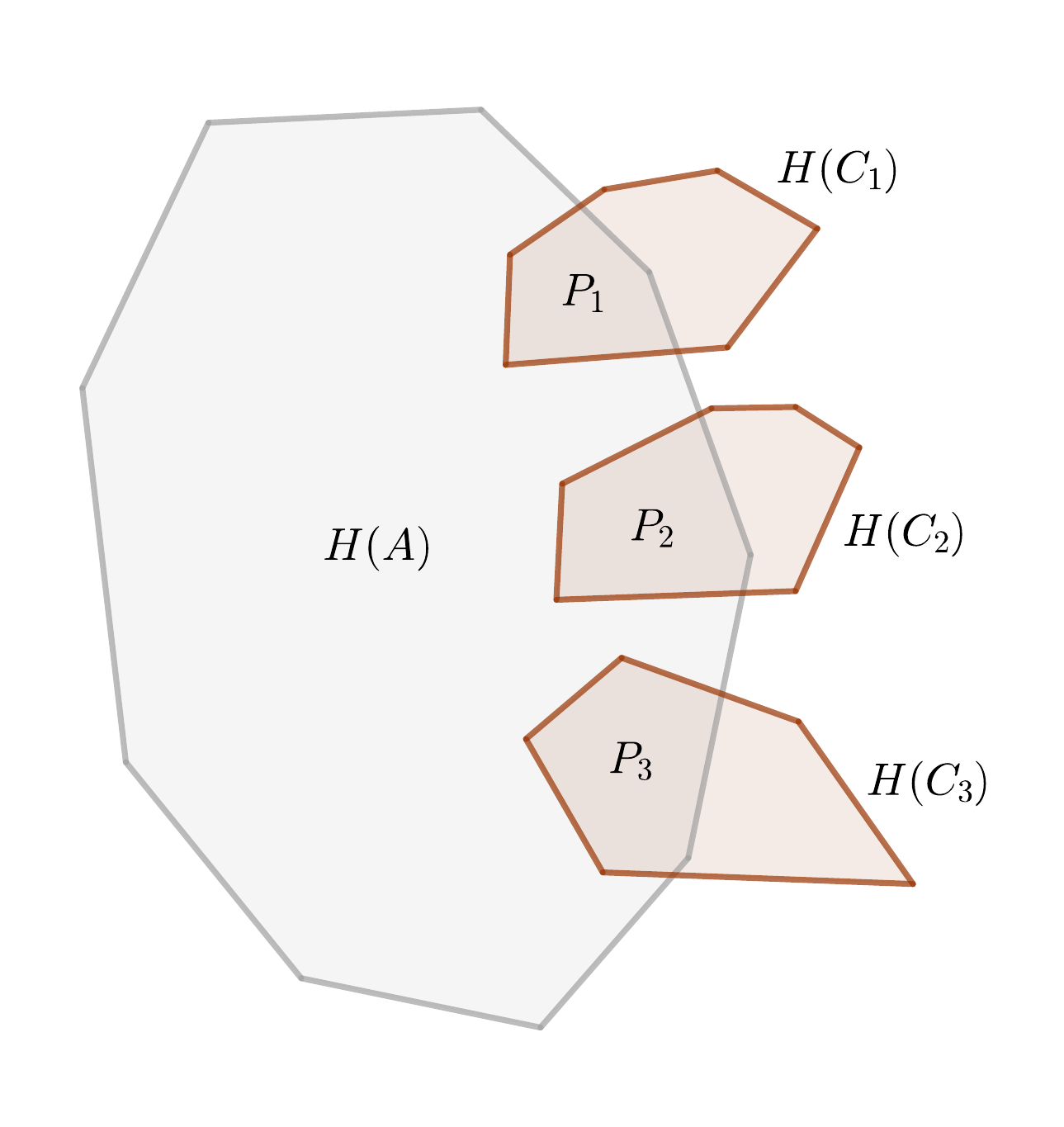}
\caption{The polygon $Q$ from the proof of Lemma~\ref{lem:sum-of-indivisible-clusters} is the set $H(A)\cup\bigcup_{i=1}^{3} H(C_i)$.
Note that the perimeter is $h(Q)=h(A) + \sum_{i=1}^{3} (h(C_i)-h(P_i))$.}
\label{fig:lem:sum-of-indivisible-clusters}
\end{figure}

\begin{lemma}\label{lem:sum-of-indivisible-clusters}
  Let $A,B$ be two indivisible sets of points in $\mathbb R^2$ under the opening cost $\openC$.
  If $H(A)$ and $H(B)$ intersect, then the set $A \cup B$ is indivisible.
\end{lemma}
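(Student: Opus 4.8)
The plan is to show that a \emph{maximal} optimal partition of $(A\cup B,\openC)$ consists of a single cluster; this is exactly the statement that $A\cup B$ is indivisible. So I would assume, for contradiction, that $\cl=\{C_1,\dots,C_\ell\}$ is a maximal optimal partition with $\ell\ge 2$. By the observation that in an optimal partition the convex hulls are pairwise disjoint, the sets $H(C_1),\dots,H(C_\ell)$ are pairwise disjoint. Set $I_A\mydef\{i:C_i\cap A\neq\emptyset\}$ and $I_B\mydef\{i:C_i\cap B\neq\emptyset\}$; since every cluster is a nonempty subset of $A\cup B$, we have $I_A\cup I_B=\{1,\dots,\ell\}$.

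The crux is to prove $|I_A|\le 1$ (and, symmetrically, $|I_B|\le 1$). Suppose $|I_A|\ge 2$, and merge the clusters meeting $A$ into $A^{*}\mydef\bigcup_{i\in I_A}C_i$, obtaining the coarser partition $\cl^{*}\mydef\{A^{*}\}\cup\{C_j:j\notin I_A\}$, which has $\ell-|I_A|+1<\ell$ clusters. Comparing costs, $\cost(\cl^{*})-\cost(\cl)=h(A^{*})-\sum_{i\in I_A}h(C_i)-(|I_A|-1)\openC$, so it suffices to prove
\[
 h(A^{*})\ \le\ \sum_{i\in I_A}h(C_i)+(|I_A|-1)\openC ;
\]
then $\cl^{*}$ is optimal, coarsens $\cl$, and is strictly smaller, contradicting maximality. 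To prove the inequality I would use the ``flower'' polygon $Q\mydef H(A)\cup\bigcup_{i\in I_A}H(C_i)$ of Figure~\ref{fig:lem:sum-of-indivisible-clusters}. Each $H(C_i)$, $i\in I_A$, contains a point of $A$ and hence meets the connected set $H(A)$, while the petals $H(C_i)$ are pairwise disjoint, so $Q$ is a connected polygon and, counting each petal's overlap once,
\[
 h(Q)\ =\ h(A)+\sum_{i\in I_A}\bigl(h(C_i)-h(P_i)\bigr),\qquad P_i\mydef H(A)\cap H(C_i).
\]
Since $H(Q)=H(A^{*})$ and taking the convex hull does not increase the perimeter of a connected polygon, $h(A^{*})\le h(Q)$. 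Now I invoke indivisibility of $A$: the induced partition $\{C_i\cap A:i\in I_A\}$ of $A$ costs at least $\openC+h(A)$, i.e.\ $\sum_{i\in I_A}h(C_i\cap A)\ge h(A)-(|I_A|-1)\openC$; moreover the points of $C_i\cap A$ lie in the convex region $P_i$, so $H(C_i\cap A)\subseteq P_i$ and thus $h(P_i)\ge h(C_i\cap A)$ by monotonicity of the perimeter under inclusion of convex sets. Summing, $\sum_{i\in I_A}h(P_i)\ge h(A)-(|I_A|-1)\openC$, and substituting into the formula for $h(Q)$ yields the desired inequality.

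With $|I_A|\le 1$ and $|I_B|\le 1$ established, and $\ell=|I_A\cup I_B|\ge 2$, we must have $|I_A|=|I_B|=1$ and $I_A\cap I_B=\emptyset$. Hence $A$ lies in a single cluster $C_a$ and $B$ in a single cluster $C_b$ with $a\neq b$; but then $H(C_a)\supseteq H(A)$ and $H(C_b)\supseteq H(B)$ are disjoint, contradicting $H(A)\cap H(B)\neq\emptyset$. Therefore no maximal optimal partition has $\ell\ge 2$, so $\{A\cup B\}$ is optimal and $A\cup B$ is indivisible.

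The step I expect to be the main obstacle is the geometry of the flower polygon $Q$: one must check that $Q$ is a connected (indeed simply connected) polygon whose boundary decomposes exactly so that each $h(P_i)$ is subtracted once, and that passing from $Q$ to its convex hull only decreases the perimeter. The remaining ingredients — monotonicity of the perimeter of convex sets under inclusion, and the lower bound on $\sum_{i\in I_A}h(C_i\cap A)$ coming from indivisibility of $A$ — are routine bookkeeping.
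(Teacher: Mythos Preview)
Your proposal is correct and uses essentially the same approach as the paper: both arguments take a maximal optimal partition of $A\cup B$, form the ``flower'' polygon $Q=H(A)\cup\bigcup H(C_i)$, use the perimeter identity $h(Q)=h(A)+\sum(h(C_i)-h(P_i))$ together with indivisibility of $A$ to bound $\sum h(P_i)$ from below, and thereby contradict maximality. Your organization is in fact slightly more direct than the paper's (you go straight to $|I_A|\le 1$, whereas the paper detours through showing an auxiliary set $A'$ is indivisible), but the geometric core is identical.
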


\begin{proof}
  Let $\cl$ be a maximal optimal partition of $(A \cup B,\openC)$.
  Let $\{C_1,\ldots,C_m\}$ be the clusters from $\cl$ such that $A \cap C_i \neq \emptyset$ and $B \cap C_i \neq \emptyset$ for $1 \le i \le m$.
  Assume for contradiction that $m=0$.
  This means that $A\cap B=\emptyset$ and each cluster of $\cl$ is a subset of either $A$ or $B$, and hence that $|\cl|\geq 2$.
  If $|\cl|=2$, then $\cl=\{A,B\}$, and $\cl$ cannot be optimal as $H(A)$ and $H(B)$ intersect.
  If $|\cl|>2$, then $\cl$ cannot be maximal as $A$ and $B$ are indivisible.
  Hence $m\geq 1$.

  We want to show that the set $A'  \mydef A \cup \bigcup_{i=1}^m C_i$ is indivisible.
  Let $\{A_1, \ldots, A_j\}$, where possibly $j=0$, be the clusters of $\cl$ that are contained in $A$.
  Then the set $\mathcal A\mydef\{A_1,\ldots,A_j, C_1, \ldots, C_m\}$ is a partition of $A'$.
  Note that $\mathcal A$ must be an optimal partition of $A'$, as otherwise $\cl$ would not be optimal for $A\cup B$.

  The cost of $\mathcal A$ is thus
  $$\Opt(A')=(j+m) \openC + \sum_{i=1}^j h(A_i) + \sum_{i=1}^{m} h(C_i).$$
  Now, for each cluster $C_i$, we define a polygon $P_i = H(A) \cap H(C_i)$.
  As $H(A)$ and $H(C_i)$ are both convex, so is $P_i$.

  Note that all points in $A'$ are contained in the polygon $Q\mydef H(A)\cup\bigcup_{i=1}^{m} H(C_i)$, see Figure~\ref{fig:lem:sum-of-indivisible-clusters}.
  It hence follows that
  $$h(A') \le h(Q)= h(A) + \sum_{i=1}^{m} (h(C_i) - h(P_i)).$$
  Note that $\{A_1,\ldots,A_j,P_1\cap A,\ldots, P_m\cap A\}$ is a partition of $A$.
  Hence, by the indivisibility of $A$, we have
  $$\openC + h(A) \le (j+m) \openC + \sum_{i=1}^j h(A_j) + \sum_{i=1}^{m} h(P_i).$$
  Combining these two inequalities yields
  $$h(A') +\openC \le (j+m) \openC + \sum_{i=1}^j h(A_j)
  + \sum_{i=1}^{m} h(C_i)=\Opt(A'),$$
  and so $A'$ is indivisible.

 As $A'$ is indivisible, $A'$ is the union of clusters of $\cl$, and $\cl$ is maximal, it follows that $A'$ is itself a cluster of $\cl$, i.e., $m=1$ and $j=0$.
 Recall furthermore that $A'$ contains $A$ and intersects $B$.
 In a similar way, it can be shown that $B'  \mydef B \cup \bigcup_{i=1}^m C_i$ is a cluster of $\cl$ that contains $B$ and intersects $A$.
 Since $A'\cap B'\neq\emptyset$ and $\cl$ is optimal, we must have $A'=B'=A\cup B$, so $\cl=\{A\cup B\}$ and $A\cup B$ is indivisible.
\end{proof}

\begin{lemma}\label{lem:cluster-contained}
Let $A \subseteq B$ be sets of points in $\mathbb{R}^2$, and let $A$ be indivisible.
Let $\cl = \{C_1,\ldots,C_{\ell}\}$ be a maximal optimal partition of $B$.
Then $A\subseteq C_i$ for some $i\in\{1,\ldots,\ell\}$.
\end{lemma}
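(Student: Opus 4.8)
The plan is to argue by contradiction using the merging lemma (Lemma~\ref{lem:sum-of-indivisible-clusters}) together with the maximality of $\cl$. Suppose $A$ is not contained in a single cluster of $\cl$. Let $C_1,\ldots,C_m$ (with $m\ge 2$) be exactly the clusters of $\cl$ that intersect $A$, and consider the set $A' \mydef A\cup\bigcup_{i=1}^m C_i$. My first step is to show $A'$ is indivisible. This is essentially a bootstrapped version of the argument inside the proof of Lemma~\ref{lem:sum-of-indivisible-clusters}: since $A$ is indivisible and each $C_i$ is indivisible (being a cluster of an optimal partition) and $H(A)$ meets each $H(C_i)$, repeated application of Lemma~\ref{lem:sum-of-indivisible-clusters} — first merging $A$ with $C_1$, then the result with $C_2$, and so on, noting at each stage that the growing hull still contains $H(A)$ hence still meets the next $H(C_i)$ — yields that $A'$ is indivisible. (Alternatively one can redo the polygon-$Q$ computation of the previous lemma directly with $Q = H(A)\cup\bigcup_{i=1}^m H(C_i)$ and the partition $\{P_i\cap A\}_i$ of $A$; either route works.)

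Next I would use indivisibility of $A'$ to contradict the choice of $\cl$. The clusters $C_1,\ldots,C_m$ are a subset of the clusters of the optimal partition $\cl$, so replacing them by the single cluster $A'$ and keeping all other clusters of $\cl$ unchanged gives a new partition $\cl'$ of $B$. Its cost is
$$\cost(\cl') = \cost(\cl) - \sum_{i=1}^m\bigl(\openC + h(C_i)\bigr) + \bigl(\openC + h(A')\bigr).$$
By indivisibility of $A'$ we have $\openC + h(A') \le \Opt(A',\openC) \le \sum_{i=1}^m(\openC + h(C_i))$, where the second inequality holds because $\{C_1,\ldots,C_m\}$ is a partition of $A'$. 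Hence $\cost(\cl')\le \cost(\cl)$, so $\cl'$ is also optimal. Since $m\ge 2$, the partition $\cl'$ has strictly fewer clusters than $\cl$, and every cluster of $\cl$ is contained in a cluster of $\cl'$ (the $C_i$ go into $A'$, everything else is unchanged). This contradicts the maximality of $\cl$. Therefore $m=1$, i.e. $A$ meets only one cluster $C_i$, and since $A$ meets no other cluster, every point of $A\subseteq B$ lies in $C_i$; that is, $A\subseteq C_i$.

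The main obstacle is the first step: establishing that $A'$ is indivisible when $m$ clusters are involved rather than just two. The subtle point is making sure the iterated merging is legitimate — i.e., that at each intermediate stage the merged set still intersects the hull of the next $C_i$. This follows because every intermediate merged set contains $A$, so its convex hull contains $H(A)$, and $H(A)$ intersects $H(C_i)$ for all $i$ by assumption (this is exactly the definition of which clusters we collected). Once that is in place, the rest is the bookkeeping of costs above, which is routine. I should also handle the trivial edge case $m=0$: if $A$ met no cluster of $\cl$ at all then $A=\emptyset$ (since $A\subseteq B$ and $\cl$ partitions $B$), in which case the statement is vacuous, so we may assume $m\ge 1$ throughout.
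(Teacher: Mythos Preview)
Your proposal is correct and follows essentially the same approach as the paper's proof. The paper's argument is terser---it collects the clusters $\mathcal S\subseteq\cl$ meeting $A$, applies Lemma~\ref{lem:sum-of-indivisible-clusters} to conclude $\bigcup_{S\in\mathcal S} S$ is indivisible, and then invokes maximality of $\cl$ directly---but the iterated merging you spell out is exactly what underlies that step, and your explicit cost computation is just the unpacking of why indivisibility of the union contradicts maximality.
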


\begin{proof}
Let $\mathcal S\subseteq\cl$ be the set of clusters of $\cl$ that intersect $A$.
By Lemma~\ref{lem:sum-of-indivisible-clusters}, each of the sets $S\cup A$, where $S\in\mathcal S$, is indivisible.
It thus follows that $A\cup\bigcup_{S\in\mathcal S} S=\bigcup_{S\in\mathcal S} S$ is also indivisible.
Since $\cl$ is maximal, it must then be the case that $\mathcal S$ consists of a single cluster $C_i$ that contains $A$.
\end{proof}

\begin{lemma}\label{lem:max-opt-part}
Each instance $(A,\openC)$ of the \facilityname{} has a unique maximal optimal partition.
\end{lemma}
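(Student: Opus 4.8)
The plan is to prove two things separately: that a maximal optimal partition exists at all, and that it is unique; the uniqueness part is where the structural lemmas do the work.

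\textbf{Existence.} Since $A$ is finite, it has only finitely many partitions, so an optimal partition exists. Among all optimal partitions of $(A,\openC)$, I would pick one, call it $\cl$, with the fewest clusters. Then there is no optimal partition with strictly fewer clusters than $\cl$, so the condition defining maximality is satisfied vacuously, and $\cl$ is a maximal optimal partition. (Equivalently, one can start from any optimal partition and repeatedly coarsen it while staying optimal; since the number of clusters strictly decreases each time, this terminates at a maximal one.)

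\textbf{Uniqueness.} Let $\cl=\{C_1,\ldots,C_\ell\}$ and $\cl'=\{C'_1,\ldots,C'_{\ell'}\}$ be two maximal optimal partitions of $(A,\openC)$. By the observation that every cluster of an optimal partition is indivisible, each $C_i$ is an indivisible subset of $A$. Applying Lemma~\ref{lem:cluster-contained} with the indivisible set $C_i\subseteq A$ and the maximal optimal partition $\cl'$ of $A$ yields an index $\sigma(i)$ with $C_i\subseteq C'_{\sigma(i)}$. Swapping the roles of $\cl$ and $\cl'$, the same argument shows that every cluster $C'_j$ of $\cl'$ is contained in some cluster of $\cl$. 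Now fix $C_i\in\cl$: we have $C_i\subseteq C'_{\sigma(i)}$, and in turn $C'_{\sigma(i)}\subseteq C_{i'}$ for some $C_{i'}\in\cl$, so $C_i\subseteq C_{i'}$ with both members of the partition $\cl$; since distinct clusters of a partition are disjoint and nonempty, this forces $C_{i'}=C_i$ and hence $C_i=C'_{\sigma(i)}\in\cl'$. Thus $\cl\subseteq\cl'$, and by symmetry $\cl'\subseteq\cl$, so $\cl=\cl'$.

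\textbf{Main obstacle.} There is no deep obstacle here; the entire content is in recognizing that Lemma~\ref{lem:cluster-contained} can be invoked in \emph{both} directions (using that clusters of an optimal partition are indivisible, so they play the role of the indivisible set $A$ in that lemma), together with the elementary fact that two nested members of a single partition must coincide. The one point meriting a line of care is the existence claim, namely checking that a minimum-cardinality optimal partition genuinely meets the (vacuously satisfied) maximality condition, so that the object whose uniqueness we assert is actually guaranteed to exist.
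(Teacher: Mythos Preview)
Your proof is correct and the uniqueness argument is essentially identical to the paper's: both apply Lemma~\ref{lem:cluster-contained} in each direction and then use that two nested clusters of the same partition must coincide. The only difference is that you also explicitly verify existence (by taking an optimal partition with the fewest clusters), which the paper's proof leaves implicit.
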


\begin{proof}
Consider two maximal optimal partitions $\cl = \{C_1,\ldots,C_{\ell}\}$ and $\cl' = \{C'_1,\ldots,C'_{\ell'}\}$ of $A$.
Lemma~\ref{lem:cluster-contained} gives that each cluster $C_i$ of $\cl$ is contained in some cluster $C'_j$ of $\cl'$.
Likewise, each cluster $C'_j$ of $\cl'$ is contained in some cluster $C_i$ of $\cl$.
Since the clusters of an optimal partition are disjoint, it now follows that there is a one-to-one correspondence between the partitions $\cl$ and $\cl'$, i.e., they are the same partition of $A$.
\end{proof}

\begin{lemma}\label{lem:big-cluster-property}
Let $A = \bigcup_{i \in I} A_i$ be a set of points such that $A = \bigcup_{i \in I} A^\complement_i$, where $A^\complement_i \coloneqq A\setminus A_i$ for $i \in I$.
Let $\cl_i$ be the maximal optimal partition of $A^\complement_i$, and $\cl$ the maximal optimal partition of $A$.
Any cluster of $\cl_i$ that intersects a cluster $C\in\cl$ is also contained in $C$, and it follows in particular that each cluster $C\in\cl$ is the union of clusters of the partitions $\cl_i$.
Furthermore, each cluster $C \in \cl$ is either a cluster of some partition $\cl_i$ or has a non-empty intersection with each set $A_i$.
\end{lemma}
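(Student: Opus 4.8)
The plan is to prove the three assertions of Lemma~\ref{lem:big-cluster-property} in order, repeatedly invoking Lemma~\ref{lem:cluster-contained} together with the crucial fact that for any fixed cluster $C\in\cl$ and any index $i$, the set $C$ is \emph{not} entirely contained in $A_i$ — because $A=\bigcup_{i\in I}A^\complement_i$ means every point of $C$ lies in some $A^\complement_j$, but more usefully, I will show that $C\setminus A_i\neq\emptyset$ follows from the hypothesis $A=\bigcup_i A^\complement_i$ whenever $C$ is nonempty. Actually the cleaner statement I want is: for every $i$, either $C\subseteq A^\complement_i$ or $C$ meets both $A_i$ and $A^\complement_i$; this is trivially a tautology, so the real content is elsewhere.

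\textbf{Step 1 (each cluster of $\cl_i$ meeting $C$ is contained in $C$).} Fix $i\in I$ and a cluster $C\in\cl$. Since $\cl$ is the maximal optimal partition of $A$, each $C\in\cl$ is indivisible by the second Observation. Now consider the restriction: I claim $C\cap A^\complement_i$ is indivisible whenever it is nonempty. This is not immediate — indivisibility is not inherited by arbitrary subsets. Instead I will argue directly. Let $D$ be a cluster of $\cl_i$ (the maximal optimal partition of $A^\complement_i$) with $D\cap C\neq\emptyset$. I want $D\subseteq C$. Suppose not; then $D$ meets some other cluster $C'\in\cl$ as well, or meets no cluster — impossible since $D\subseteq A^\complement_i\subseteq A$ and $\cl$ partitions $A$. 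So $D$ meets at least two clusters of $\cl$. The set $\mathcal T$ of clusters of $\cl$ met by $D$ thus has size $\geq 2$; by Lemma~\ref{lem:sum-of-indivisible-clusters} applied iteratively (each $C''\in\mathcal T$ is indivisible, $D$ is indivisible, and $H(D)$ meets each $H(C'')$), the union $D\cup\bigcup_{C''\in\mathcal T}C'' = \bigcup_{C''\in\mathcal T}C''$ is indivisible. But this union is a union of $\geq 2$ clusters of $\cl$, contradicting maximality of $\cl$. Hence $\mathcal T=\{C\}$ and $D\subseteq C$. The ``in particular'' — that each $C\in\cl$ is a union of clusters of $\cl_i$ — then follows because every point of $C$ lies in $A=\bigcup_j A^\complement_j$... wait, I need it to lie in $A^\complement_i$ specifically, which is false in general. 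The correct reading: each $C\in\cl$ is a union of clusters taken across \emph{all} the partitions $\cl_i$, $i\in I$, simultaneously — every point $p\in C$ lies in some $A^\complement_i$ (by hypothesis $A=\bigcup_i A^\complement_i$), and the cluster of $\cl_i$ containing $p$ is, by what we just proved, contained in $C$. So $C$ is covered by clusters of the various $\cl_i$, each contained in $C$; that is the claim.

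\textbf{Step 2 (the dichotomy for each $C\in\cl$).} Fix $C\in\cl$ and suppose $C$ does \emph{not} have nonempty intersection with every $A_i$; say $C\cap A_{i_0}=\emptyset$, i.e.\ $C\subseteq A^\complement_{i_0}$. I must show $C$ is then a cluster of $\cl_{i_0}$. Since $C\subseteq A^\complement_{i_0}\subseteq A$ and $C$ is indivisible, Lemma~\ref{lem:cluster-contained} (with $A:=C$, $B:=A^\complement_{i_0}$, and $\cl:=\cl_{i_0}$) gives that $C$ is contained in some cluster $D\in\cl_{i_0}$. Conversely, $D\cap C\neq\emptyset$, so by Step 1, $D\subseteq C$. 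Therefore $C=D$ is a cluster of $\cl_{i_0}$, completing the proof.

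\textbf{Anticipated main obstacle.} The subtle point — and the one I expect to require the most care — is the precise handling of the ``$A=\bigcup_i A^\complement_i$'' hypothesis: it is exactly what guarantees that no single $A_i$ swallows a whole cluster $C$ unless $C$ is already a cluster at level $i$, and it is also what lets us cover each $C$ by small clusters. Without it, a cluster $C\in\cl$ might lie entirely inside some $A_{i}$ for every $i$, and then none of the $\cl_i$ would ``see'' it. I will state this as an explicit sub-observation (for every $p\in A$ there is an $i$ with $p\in A^\complement_i$, equivalently $p\notin A_i$) and use it at the two places flagged above. The rest is a routine bookkeeping of maximality via Lemma~\ref{lem:sum-of-indivisible-clusters}, exactly as in the proof of Lemma~\ref{lem:cluster-contained}.
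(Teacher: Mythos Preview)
Your proof is correct and follows essentially the same route as the paper. The one difference is in Step~1: where you argue by contradiction via Lemma~\ref{lem:sum-of-indivisible-clusters} that a cluster $D\in\cl_i$ cannot meet two distinct clusters of $\cl$, the paper simply invokes Lemma~\ref{lem:cluster-contained} directly --- since $D$ is indivisible and $D\subseteq A^\complement_i\subseteq A$, Lemma~\ref{lem:cluster-contained} immediately places $D$ inside a single cluster of the maximal optimal partition $\cl$ of $A$, which must then be $C$ by disjointness. Your argument effectively inlines the proof of that lemma, so it is correct but a step longer than necessary; Step~2 matches the paper exactly.
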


\begin{proof}
In order to prove that first part, consider some cluster $C_i\in\cl_i$ that intersects $C\in\cl$.
Since $C_i$ is indivisble, Lemma~\ref{lem:cluster-contained} gives that there is a cluster of $\cl$ containing $C_i$.
That cluster must be $C$, as the clusters of $\cl$ are disjoint.
It hence also follows that each cluster $C\in\cl$ is the union of some clusters of the partitions $\cl_i$.

In order to prove the second part, consider a cluster $C \in \cl$ that does not intersect some set $A_i$.
Then, $C \subseteq A^\complement_i$.
Since $C$ is indivisible, Lemma~\ref{lem:cluster-contained} gives that there is a cluster $C_i$ of $\cl_i$ such that $C\subseteq C_i$.
By the above, we also have that $C_i\subseteq C$.
Hence $C=C_i$, i.e., $C$ is a cluster of $\cl_i$.
\end{proof}

\subsection{Partitioning into Independent Instances}\label{sec:partitioning}

Consider an instance $(A,\openC)$ of the \facilityname{}.
Observe that any two points $p_1,p_2 \in A$ such that $\|p_1p_2\| < \openC /2$ must be in the same cluster in any optimal partition of $A$.
We will prove that we can efficiently decompose the problem instance $(A,\openC)$ into a collection of independent subinstances $(A_i,\openC)$, such that for each subinstance $\diam(A_i)=\textrm{poly}(|A_i|) \cdot \openC$.

\begin{lemma}
\label{lem:aspectratio}
Any $n$-point instance $(A,\openC)$ of the \facilityname{} can be reduced in time $O(n \log^2 n)$ into a disjoint collection of subinstances $(A_i,\openC)$,
where $A = \bigcup_{i} A_i$ and each subinstance is bounded by a box of side lengths at most  $2 |A_i|^2 \cdot \openC$.
The subinstances are independent in the sense that an optimal partition for $(A,\openC)$ is the union of the optimal partitions for $(A_i,\openC)$.
\end{lemma}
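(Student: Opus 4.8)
The plan is to build the Euclidean minimum spanning tree $T$ of $A$ and to carve $A$ into the subinstances $A_i$ by cutting a few long edges of $T$, processing $T$ top-down. The one structural fact that drives everything is that an indivisible set cannot be spread out: if $C$ is indivisible then, comparing the partition $\{C\}$ with the partition of $C$ into singletons, we get $\openC + h(C) \le |C|\,\openC$, hence $h(C)\le (|C|-1)\openC$; since the perimeter of a convex set is at least twice its diameter, $\diam(C) < |C|\,\openC/2$. As each cluster of an optimal partition is indivisible, every optimal cluster lying inside a region that contains at most $m$ points of $A$ has diameter less than $m\openC/2$.

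Concretely, run the following recursion on a set $\tilde A\subseteq A$ together with its spanning tree $\tilde T$ (which, being $T$ restricted to $\tilde A$, is the MST of $\tilde A$): put $\tilde s=|\tilde A|$; if the longest edge $e$ of $\tilde T$ has length $\ell_e<2\tilde s\,\openC$, output $\tilde A$ as a final subinstance; otherwise cut $e$, splitting $\tilde A$ into two pieces, and recurse on each. Starting from $(A,T)$ this produces $A_1,\dots,A_k$. \emph{Cuts are safe:} when $e$ is cut inside $\tilde A$ into $A',A''$, the MST property gives $\dist(A',A'')=\ell_e\ge 2\tilde s\,\openC$; if some indivisible $C\subseteq \tilde A$ met both $A'$ and $A''$ we would get $\ell_e\le \dist(A',A'')\le \diam(C) < |C|\,\openC/2\le \tilde s\,\openC/2$, a contradiction. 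So no indivisible subset of $A$ is ever split; a short induction down the recursion shows every indivisible $C\subseteq A$ lies inside a single $A_i$. In particular every cluster of every optimal partition of $A$ lies inside one $A_i$, and since the cost is additive over a partition and $h$ only decreases when a cluster is restricted, $\Opt(A)=\sum_i\Opt(A_i)$ and the union of optimal partitions of the $A_i$ is an optimal partition of $A$. \emph{Pieces are small:} the recursion stops at $\tilde A=A_i$ exactly when the longest edge of its spanning tree is below $2|A_i|\,\openC$; as the tree path between the two diametral points of $A_i$ uses at most $|A_i|-1$ edges, $\diam(A_i) < (|A_i|-1)\cdot 2|A_i|\,\openC\le 2|A_i|^2\openC$, so $A_i$ fits in a box of side at most $2|A_i|^2\openC$.

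For the running time, compute $T$ in $O(n\log n)$ time (e.g.\ from the Delaunay triangulation) and build the Kruskal merge tree of $T$ (inserting edges in increasing length order), storing at each node the size of the corresponding piece and the length of the edge that created it. The recursion above is then exactly a top-down traversal of this merge tree that, at a reached node $P$, stops when the creating edge of $P$ is shorter than $2|P|\,\openC$ and otherwise descends into both children; it visits $O(n)$ nodes with $O(1)$ work each, and no diameter ever has to be computed explicitly since the stopping test is a statement about a single tree edge. Total time $O(n\log n)$, comfortably within the stated bound. (The observation preceding the lemma, that points within $\openC/2$ share a cluster, can be used as optional preprocessing but is not needed here.)

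The main obstacle is choosing the cut rule correctly. The tempting rule ``cut an MST edge $e$ when $\ell_e$ is large compared with the smaller of the two subtree sizes it separates'' is \emph{not} safe: an indivisible cluster of $m$ points can have both its extreme points close to such an edge and diameter anywhere up to $\approx m\openC/2$, so it straddles the cut even though the local subtree is tiny. The safe comparison is between $\ell_e$ and the size of the \emph{whole current piece} $\tilde A$, since $|\tilde A|$ upper-bounds $|C|$ for any indivisible $C\subseteq\tilde A$ and thus bounds $\diam(C)$ — and this is what forces the hierarchical, top-down order. One then has to check that this rule is still strong enough to push the diameter of each final piece below $2|A_i|^2\openC$ (the recursion must not halt too early), which is exactly where the threshold $2\tilde s\,\openC$ and the target $2|A_i|^2\openC$ must be balanced, together with the small but essential fact that restricting $T$ to a recursively produced piece gives the MST of that piece, so that the identity $\dist(A',A'')=\ell_e$ is available at every level.
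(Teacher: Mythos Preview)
Your proof is correct and takes a genuinely different route from the paper. The paper's argument is more pedestrian: it sorts the points by $x$-coordinate, splits wherever consecutive $x$-values differ by more than $n\openC/2$, does the same in $y$, and then recurses on any resulting piece $A_i$ with $|A_i|<|A|/2$; the recursion depth is $O(\log n)$, giving $O(n\log^2 n)$. Your MST-based argument is cleaner and in fact slightly faster: building the Euclidean MST takes $O(n\log n)$, and the top-down walk over the Kruskal merge tree (where the edge stored at each internal node is indeed the longest MST edge of that subtree, since Kruskal inserts edges in increasing order) costs $O(n)$. Both proofs rest on the same structural observation---an indivisible set $C$ has $\diam(C)<|C|\openC/2$---but deploy it differently: the paper uses it only at the top level (any cluster has diameter $<n\openC/2$) and then relies on halving to shrink the threshold, whereas you use it locally at every node via the MST cut property $\dist(A',A'')=\ell_e$. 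The paper's approach has the advantage of being completely elementary (no MST, no Delaunay), while yours gives a tighter running time and a more conceptually uniform recursion. One small point: your aside that ``$h$ only decreases when a cluster is restricted'' is not needed for the independence claim; the clean argument is simply that every cluster of an optimal partition of $A$ is indivisible and hence lands entirely in one $A_i$, so the optimal partition of $A$ decomposes into partitions of the $A_i$, giving $\Opt(A)\ge\sum_i\Opt(A_i)$, and the reverse inequality is immediate.
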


\begin{proof}
Clearly, the optimal partition for $(A,\openC)$ has cost of at most $n \cdot \openC$, as a partition of cost $n \cdot \openC$ can be obtained by opening $n$ singleton clusters. Therefore, if any two points are at a distance greater than $n \cdot \openC / 2$, they must be in separate clusters of any optimal partition (as the perimeter of any cluster containing such two points would be greater than $\Opt(A)$).

We first sort all points from $A$ with respect to their $x$-coordinate. Denote the sorted points by $p_1, \ldots, p_n$. Whenever for two consecutive points $p_i, p_{i+1}$ the difference in their $x$-coordinate is greater than $n \cdot \openC / 2$, we know that the sets of points $\{p_1, \ldots, p_i\}$ and $\{p_{i+1}, \ldots, p_n\}$ can be treated separately, i.e., each cluster of any optimal partition will be contained in either $\{p_1, \ldots, p_i\}$ or $\{p_{i+1}, \ldots, p_n\}$. That gives us a partition of $\{p_1, \ldots, p_n\}$ into subinstances, where each subinstance is contained in a vertical slab of width at most $n^2 \cdot \openC / 2$. Now, for each subinstance we sort the points according to their $y$-coordinate, and perform a similar operation. Therefore, in time $O(n \log n)$ we partitioned $(A,\openC)$ into subinstances $(A_i,\openC)$, such that $A = \bigcup_{i} A_i$, each subinstance has a bounding box of size at most $n^2 \cdot \openC/2 \times n^2 \cdot \openC/2$, and an optimal partition for $(A,\openC)$ is the union of the optimal partitions for $(A_i,\openC)$.

Note that if $|A_i| \ge |A|/2$, then $n^2 \cdot \openC/2 = |A|^2 \cdot \openC/2 \le 2 |A_i|^2 \cdot \openC$, and the size of the bounding box is as required.
It therefore remains to consider subinstances $(A_i,\openC)$ for which $|A_i| < |A|/2$ and the smallest bounding box has size at least $2|A_i|^2\cdot\openC$.
In such an instance, there must be two consecutive points in the order of $x$- or $y$-coordinates where the respective coordinates differ by at least $\frac{2|A_i|^2\cdot\openC}{|A_i|-1}>|A_i|\cdot \openC/2$.
Thus, the instance $(A_i,\openC)$ can be recursively partitioned into yet smaller subinstances.
Since at each recursive level, at most half of the points from the previous level remain, the depth of the recursive tree is at most $O(\log n)$.
The total running time is therefore $O(n \log^2 n)$.
\end{proof}

\subsection{Cells and Polyominoes}

Consider an instance $(A,\openC)$ of the \facilityname{} with bounding box $S$ (which is an axis-parallel square of side length $\textrm{poly}(|A|) \cdot \openC$).
We will recursively subdivide $S$ into \emph{cells}, starting with a single cell $S$, and recursively partitioning each cell into four smaller squares, ending when the side length of the cells is at most $\openC / 8$.
This happens after some $L = O(\log n)$ recursive operations due to the partitioning described in Section~\ref{sec:partitioning}.
We call a cell a \emph{level $i$ cell} if it has been obtained after $i-1$ levels of subdivision.
The square $S$ is thus a level $1$ cell.

We consider level $i$ cells and define \emph{polyominoes} to be simple polygons consisting of some level $i$ cells.
Two cells are \emph{neighbouring} if their boundaries share an edge or a corner.
We will be particularly interested in \emph{monominoes}, which consist of a single cell, \emph{dominoes}, which are the union of two cells sharing an edge, \emph{L-trominoes}, which are the union of three pairwise neighbouring cells (i.e., which are in the shape of the letter L), and \emph{square-tetrominoes}, which are a $2\times 2$-square of cells.
A \emph{basic polyomino} is a monomino, a domino, an L-tromino, or a square-tetromino.
See Figure~\ref{fig:polyominoes} for all the basic polyominoes.
Note that any non-basic polyomino contains two non-neighbouring cells.
Polyominoes consisting of level $i$ cells are called level $i$ polyominoes.
We say that a polyomino $\pol$ is \emph{convex} if the intersection of $\pol$ with any horizontal or vertical line has at most one connected component.

\begin{figure}
\centering
\includegraphics[width=.6\textwidth]{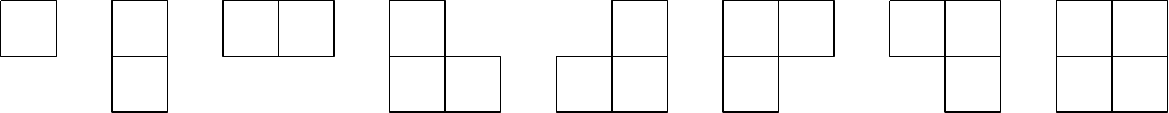}
\caption{Basic polyominoes.}\label{fig:polyominoes}
\end{figure}

Note that each level $i$ monomino, domino, tromino, or tetromino, for $i<L$, contains $4, 8$, $12$, or $16$ cells at level $i+1$, respectively.
We define a \emph{subpolyomino} at level $i+1$ to be a convex polyomino at level $i+1$ that is contained in a basic polyomino at level $i$.
Note that each basic polyomino at level $i+1$ and at level $i$ is also a subpolyomino at level $i+1$.
For all subpolyominoes, see Figure~\ref{fig:sub-polyominoes}.


\begin{figure}
\centering
\includegraphics[width=0.9\textwidth]{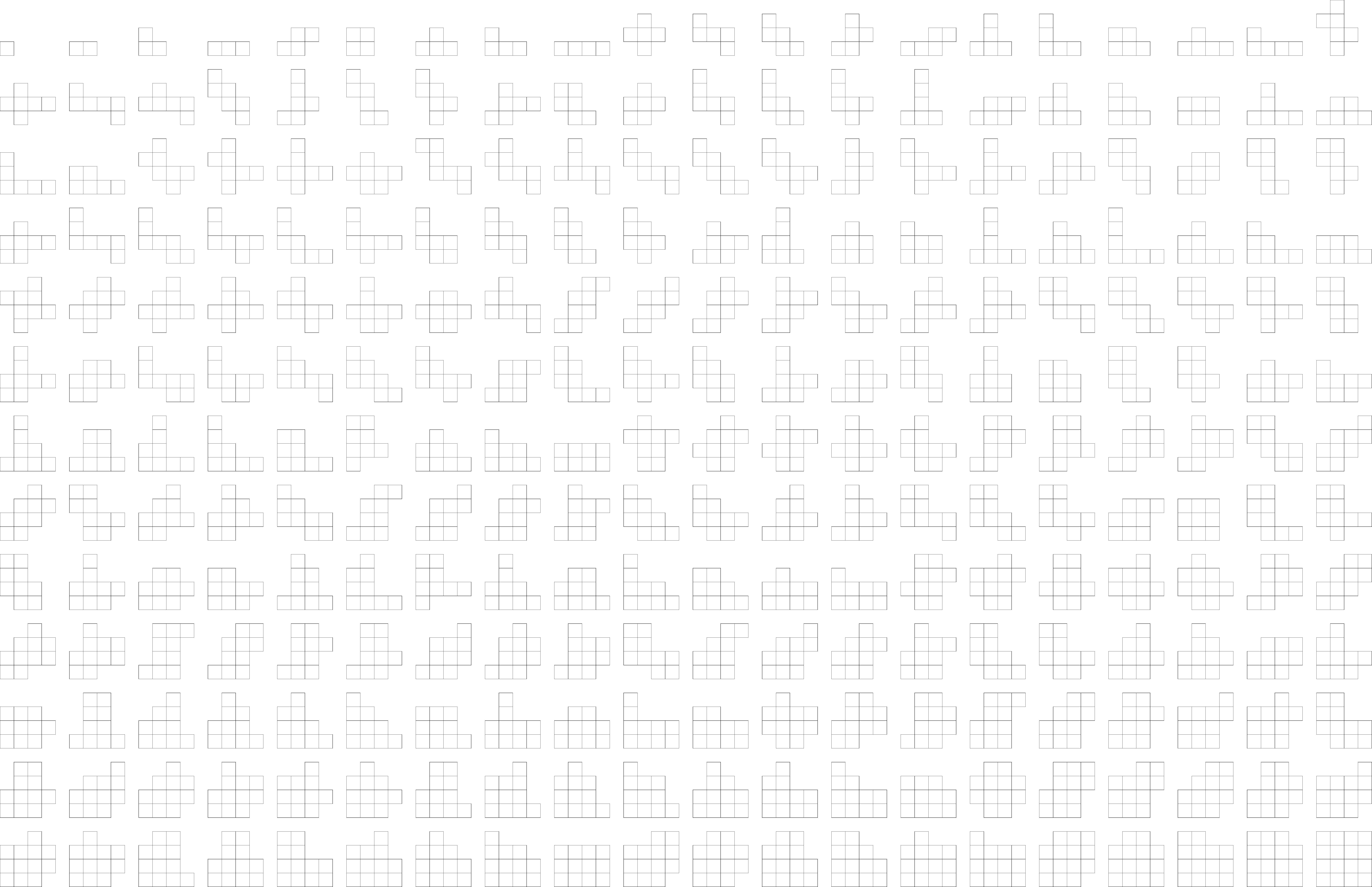}
\caption{The $260$ different subpolyominoes (up to rotations and reflections).}
\label{fig:sub-polyominoes}
\end{figure}

As we want each input point to belong to exactly one cell of a given level, we define a cell to include its right and bottom edge and the bottom-right corner.
The other edges and corners then belong to the neighbouring cells.
For any collection of cells $\pol$ (not necessarily a polyomino), we define $A_\pol\mydef A\cap\pol$ to be the input points in $\pol$.
We say that a polyomino $\pol$ is \emph{empty} if $A_\pol=\emptyset$.
We will consider subproblems of the original \facilityname{} instance $(A,\openC)$, each subproblem corresponding to the input points $A_\pol$ lying in a non-empty basic polyomino $\pol$ of some level.
Note that the number of non-empty basic polyominoes of a given level $i$ is $O(n)$, as each point belongs to a constant number of such polyominoes.
Therefore, the total number of non-empty basic polyominoes is at most $O(n \log n)$.

We compute an optimal partition for each subproblem $(A_\pol, \openC)$, where $\pol$ is a non-empty basic polyomino.
We start with the polyominoes at level $L$.
At level $L$, any two points in the same basic polyomino $\pol$ have distance less than $\openC/2$ and therefore $A_\pol$ is indivisible.
Suppose now that we have already computed the optimal partition for each non-empty basic polyomino at some level $i\leq L$.
As we will see, this makes it possible to compute the optimal partitions for all non-empty subpolyominoes at level $i$.
Since each basic polyomino at level $i-1$ is a subpolyomino at level $i$, we thus also know the optimal partitions of each basic polyomino at level $i-1$.
It follows that the process can be repeated until we reach level $1$, i.e., we have computed an optimal partition of $A$.

To compute and process the polyominoes efficiently, we will use a quadtree construction as described in the next section.

\subsection{Quadtree Construction}

A \emph{quadtree} is a geometric data structure for objects in the Euclidean plane. The root of the quadtree corresponds to a square containing all the input objects, while the children of each node correspond to the four subsquares of the given square. The leaves correspond to subsquares that have small enough side length, or where the input objects have small complexity, e.g., subsquares containing at most some constant number of input objects. See \cite[Chapter 2]{H11} for more information on quadtrees, and for their applications.

In our case, the root of the quadtree corresponds to the level $1$ cell $S$. Each node corresponding to a level $i$ cell $C$ has at most $4$ children, which correspond to level $i+1$ cells contained in $C$. We do not create nodes corresponding to empty cells, i.e., with no points from $A$. The leaves correspond to the highest level cells, i.e., the side length of the leaf cells is at most $\openC / 8$. As there are at most $n$ non-empty cells at each level, the number of nodes of the quadtree is at most $n \cdot L = O(n \log n)$. The quadtree can be constructed in time $O(n \log n)$, as at each of the $L$ levels, we have to compute the subsquares for the $n$ points.

While constructing the quadtree, for each node corresponding to each level $i$ cell, we can compute the nodes corresponding to the eight \emph{neighbouring cells} (if such nodes exist, i.e., the corresponding cells are non-empty). Remembering this information will allow us to construct the polyominoes easier.

From the quadtree, we can construct the set of all non-empty basic polyominoes in time $O(n \log n)$, assigning each basic polyomino $\pol$ to the nodes of the quadtree corresponding to the cells of $\pol$. We do that by considering all nodes of the tree, for each node corresponding to a cell $c$ considering the $21$ basic polyominoes containing $c$, and either constructing a new polyomino, or assigning an existing polyomino to the currently considered node.

\subsection{Finding an Optimal Partition for Each Basic Polyomino}

We now describe an algorithm for finding an optimal partition for each basic polyomino.

\paragraph*{Leaf polyominoes.}
Consider a basic polyomino $\pol$ at level $L$. As $\pol$ consists of level $L$ cells, and the side length of such cells is at most $\openC / 8$, the distance between any two points in $\pol$ is smaller than $\openC / 2$. Therefore, an optimal partition for $(A_\pol,\openC)$ consists of one indivisible set of points $A_\pol$. Therefore, optimal partitions for leaf polyominoes can be computed efficiently.

\paragraph*{At most one big cluster.}
Suppose that we have already computed the optimal partitions for all basic polyominoes at level $i$.
In order to compute the optimal partitions for the basic polyominoes at level $i-1$, we first compute the optimal partitions for all subpolyominoes at level $i$.
This suffices as the basic polyominoes at level $i-1$ are also subpolyominoes at level $i$.
To find an optimal partition for each subpolyomino $\pol$ efficiently, we make use the following property.



\begin{figure}
\includegraphics[scale=0.2]{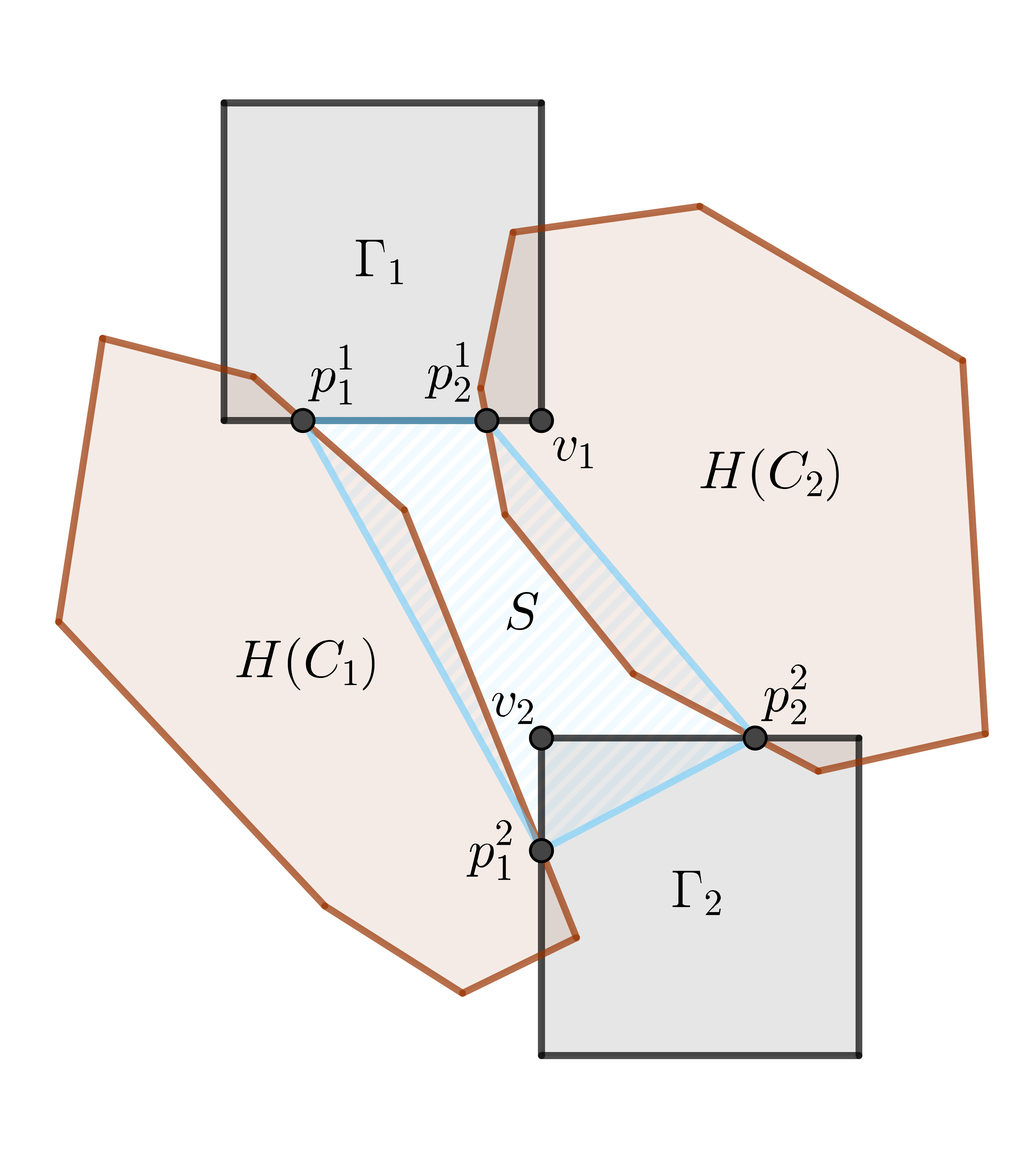}
\centering
\caption{A situation from the proof of Lemma~\ref{lem:only-one-big-cluster}.}
\label{fig:only-one-big-cluster}
\end{figure}

\begin{lemma}\label{lem:only-one-big-cluster}
Let $\pol$ be a non-basic polyomino at some level $i$.
Let  $\cl = \{C_1,\ldots,C_{\ell}\}$ be a maximal optimal partition of $A_\pol$.
For any pair $\Gamma_1, \Gamma_2$ of non-neighbouring cells of $\pol$, there is at most one cluster $C\in\cl$ such that $H(C)$ intersects both $\Gamma_1$ and $\Gamma_2$.
In particular, $\cl$ has at most one cluster $C$ such that $H(C)$ intersects all cells of $\pol$.
\end{lemma}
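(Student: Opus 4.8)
The plan is to argue by contradiction: suppose two distinct clusters $C, C' \in \cl$ each have convex hulls that intersect both $\Gamma_1$ and $\Gamma_2$, where $\Gamma_1$ and $\Gamma_2$ are non-neighbouring cells. Since $\Gamma_1$ and $\Gamma_2$ are non-neighbouring, there is a positive-width gap between them, and I expect to use that $H(C)$ must contain a path (a portion of its boundary, or a chord) traveling from $\Gamma_1$ to $\Gamma_2$, and likewise for $H(C')$. Two convex sets, each of which stretches from $\Gamma_1$ across to $\Gamma_2$, cannot be disjoint — any curve from $\Gamma_1$ to $\Gamma_2$ inside $H(C)$ and any such curve inside $H(C')$ must cross (this is essentially a planarity/Jordan-curve argument, using that $\Gamma_1, \Gamma_2$ are separated and that the cells of a polyomino are arranged so that the two "transit corridors" are forced to overlap). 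Hence $H(C)$ and $H(C')$ intersect, contradicting the Observation that clusters of an optimal partition have pairwise disjoint convex hulls.

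More carefully, the first step is to set up the topological obstruction precisely. I would take a point $a_1 \in H(C) \cap \Gamma_1$ and $a_2 \in H(C) \cap \Gamma_2$; since $H(C)$ is convex, the segment $a_1 a_2$ lies in $H(C)$. Similarly I get a segment $a_1' a_2' \subseteq H(C')$ with $a_1' \in \Gamma_1$, $a_2' \in \Gamma_2$. The two segments both go from the region $\Gamma_1$ to the region $\Gamma_2$. The key geometric input is that $\Gamma_1$ and $\Gamma_2$ are two axis-aligned squares of the same size that do not share even a corner; I would like to conclude that any two segments, one joining a point of $\Gamma_1$ to a point of $\Gamma_2$ and the other doing the same, must intersect — but this is false in general for arbitrary placements of two disjoint squares. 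So the real content is that we may instead use the full convex hulls (not just single chords) and the fact that $C \cup C'$, viewed together, behave like the situation in Lemma~\ref{lem:sum-of-indivisible-clusters}.

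The cleaner route, which I would actually pursue, is this: by Lemma~\ref{lem:only-one-big-cluster}'s own hypothesis $\pol$ is non-basic, so it contains a pair of non-neighbouring cells, and more relevantly, I want to invoke that distinct clusters in a maximal optimal partition have disjoint hulls (Observation) together with the structural fact that $H(C)$ and $H(C')$ each separate the plane. Fix the two cells $\Gamma_1, \Gamma_2$. Because they are non-neighbouring, the closed square $\Gamma_1$ and the closed square $\Gamma_2$ are at distance at least the side length apart in at least one coordinate; in particular there is an axis-parallel line $\ell$ (say vertical) strictly separating them, or they are diagonally far. Consider the line segment realizing $\diam$ between the two cells... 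Actually the robust argument: both $H(C)$ and $H(C')$ meet $\Gamma_1$ and meet $\Gamma_2$; since each is convex and connected, each contains a connected subset meeting both squares, hence each "crosses" the strip between the two squares. Two disjoint convex sets cannot both cross a strip between two squares in a way that... — here I would need the combinatorial lemma that in a polyomino, two non-neighbouring cells are separated by the cells of the polyomino in a controlled way. I suspect the authors instead argue: if $H(C)$ and $H(C')$ both meet $\Gamma_1$ and $\Gamma_2$, then $H(C) \cup H(C')$ together with suitable connecting pieces form a set containing all of $\Gamma_1, \Gamma_2$ and a path between, and then a charging/exchange argument (as in Lemma~\ref{lem:sum-of-indivisible-clusters}) shows merging $C$ and $C'$ does not increase the cost, contradicting maximality of $\cl$.

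Thus my concrete plan: (1) assume for contradiction two clusters $C \neq C'$ in $\cl$ both have hulls meeting $\Gamma_1$ and $\Gamma_2$; (2) show $H(C)$ and $H(C')$ must intersect, using a Jordan-curve / planarity argument — each hull, being convex and meeting both of the separated cells $\Gamma_1, \Gamma_2$, contains a segment crossing any curve that separates $\Gamma_1$ from $\Gamma_2$; since $H(C)$ and $H(C')$ are on "opposite sides" is impossible when both must reach both cells, so they overlap; (3) invoke the Observation that clusters of an optimal partition have disjoint hulls for the contradiction. The last sentence of the lemma ("$\cl$ has at most one cluster whose hull meets all cells of $\pol$") follows immediately: a non-basic polyomino contains some pair of non-neighbouring cells $\Gamma_1, \Gamma_2$, and any cluster meeting all cells in particular meets both of these, so by the first part there is at most one such cluster. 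The main obstacle I anticipate is step (2): making the "two convex sets each reaching two separated cells must intersect" argument fully rigorous, since it is genuinely false for arbitrary disjoint squares and must use specific structure — either that the cells belong to a common (connected) polyomino, or a cost-exchange argument à la Lemma~\ref{lem:sum-of-indivisible-clusters} rather than a purely topological one. I would resolve it by the exchange argument: assuming both hulls meet $\Gamma_1$ and $\Gamma_2$, build the polygon $H(C) \cup H(C') \cup (\text{two thin connectors through } \Gamma_1 \text{ and through } \Gamma_2)$ and bound its perimeter against $h(C) + h(C') + O(\text{cell side length})$, then use that the cell side length at this level is small relative to $\openC$ (as in the leaf-polyomino argument, or at least small enough after the reduction of Section~\ref{sec:partitioning}) to conclude merging is beneficial — contradicting that $\cl$ is optimal and maximal.
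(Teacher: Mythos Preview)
Your final plan---merge $C$ and $C'$ via short connectors inside $\Gamma_1$ and $\Gamma_2$ and show the resulting region has small enough perimeter---is exactly the paper's approach, but your estimate of the cost is wrong in a way that breaks the proof. You propose to bound the merged perimeter by $h(C)+h(C')+O(\text{cell side length})$ and then absorb the additive error using the saved opening cost $\openC$. That would only work when the cell side length is small compared to $\openC$, i.e.\ essentially only at the leaf level $L$. The lemma, however, is stated and used at every level $i$, and for small $i$ the cell side length is $\mathrm{poly}(n)\cdot\openC$, so your inequality gives nothing.

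The point you are missing is that the merged perimeter is \emph{at most} $h(C)+h(C')$, with no additive loss and no appeal to $\openC$. Concretely: since $H(C)$ and $H(C')$ are disjoint and each meets both $\Gamma_1$ and $\Gamma_2$, one can choose intersection points $p^i_j\in\partial\Gamma_i\cap\partial H(C_j)$ so that the two ``connector'' segments $p^1_1p^1_2$ and $p^2_1p^2_2$ lie inside $\Gamma_1$ and $\Gamma_2$ respectively, while the segments $p^1_1p^2_1\subset\partial H(C_1)$ and $p^1_2p^2_2\subset\partial H(C_2)$ run between the cells. Because $\Gamma_1$ and $\Gamma_2$ are non-neighbouring (separated by at least one full cell width $\alpha$), one gets the purely metric inequality
\[
\|p^1_1p^1_2\|+\|p^2_1p^2_2\|\;\le\;\|p^1_1p^2_1\|+\|p^1_2p^2_2\|,
\]
so the added connector length is already paid for by the removed portions of $\partial H(C_1)$ and $\partial H(C_2)$. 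Hence $C_1\cup C_2$ can be enclosed with perimeter at most $h(C_1)+h(C_2)$, contradicting optimality and maximality of $\cl$ directly. Your topological attempt (two chords must cross) was rightly abandoned; the real content is this quadrilateral inequality, which is where the non-neighbouring hypothesis is actually used.
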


\begin{proof}
Assume that there are two clusters $C_1, C_2 \in \cl$ such that $H(C_1)$ and $H(C_2)$ both intersect $\Gamma_1$ and $\Gamma_2$.
The situation is depicted in Figure~\ref{fig:only-one-big-cluster}.
Since $\cl$ is optimal, $H(C_1)$ and $H(C_2)$ are disjoint, and it follows that both boundaries $\partial\Gamma_1$ and $\partial\Gamma_2$ intersect both boundaries $\partial H(C_1)$ and $\partial H(C_2)$.
We may then define the point $p^i_j$, for each choice $i,j \in \{1,2\}$, to be the intersection point of $\partial\Gamma_i$ with $\partial H(C_j)$ such that (i) $p^i_1 p^i_2\cap(H(C_1)\cup H(C_2))=\{p^i_1, p^i_2\}$, and (ii) $p^1_j p^2_j\cap (\Gamma_1\cup \Gamma_2)\subseteq \{p^1_j, p^2_j\}$.
Let $S$ be the quadrilateral with vertices at $p^i_j$ for $i,j \in \{1,2\}$, and consider the polygon $C_{12} = H(C_1) \cup H(C_2) \cup S$. We will show that $h(C_{12})$ is not larger than $h(C_1)+h(C_2)$.
As $C_{12}$ contains all points of $C_1 \cup C_2$, this shows that $\cl$ is not optimal or maximal.

In the following, we show the inequality
\begin{align}\label{lem:only-one-big-cluster:ineq}
\|p^1_1 p^1_2\| + \|p^2_1 p^2_2\| \le \|p^1_1 p^2_1\| + \|p^1_2 p^2_2\|.
\end{align}
The desired result then follows as the latter sum is at most the length of the perimeter of $H(C_1)$ and $H(C_2)$ contained in $S$.
Let $\alpha$ be the side length of the cells.
If $\Gamma_1$ and $\Gamma_2$ are in the same row or column of cells, then clearly $\| p^i_1 p^i_2\| \le \alpha\leq \|p^1_j p^2_j\|$, and the inequality~\ref{lem:only-one-big-cluster:ineq} holds.
Otherwise, let $v_1, v_2$ be the corners of $\Gamma_1, \Gamma_2$ minimizing the distance between the cells.
By considering cases of where the points $p^i_j$ are on $\partial\Gamma_i$, one can observe that it always holds that
$$
\|p^i_1p^i_2\| =
\sqrt{\|p^i_1v_i\|^2+\|p^i_2v_i\|^2} \leq
\sqrt{\alpha^2+\|p^i_jv_i\|^2} \leq
\sqrt{\alpha^2+\|p^1_jv_1\|^2+\|p^2_jv_2\|^2} \leq
\|p^1_j p^2_j\|,
$$
and inequality~\eqref{lem:only-one-big-cluster:ineq} follows.
\end{proof}


\paragraph*{Cluster unions.}
Consider a given subpolyomino $\pol$ at some level $i$ for which we want to compute the maximal optimal partition of the input points $A_\pol$.
The overall approach is the following.
We use maximal optimal partitions $\cl_j$ for $A_{\pol_j}$ for various smaller collections $\pol_j\subsetneq\pol$ of level $i$ cells.
We then construct the \emph{merger} of the partitions $\cl_j$, which is the partition of $A_\pol$ we get when we unite the partitions $\cl_j$ and keep merging clusters with overlapping convex hulls.
The merger $\cl$ thus consists of clusters with pairwise disjoint convex hulls, but is in general not optimal.
As we shall see, a maximal optimal partition of $A_\pol$ can be obtained by uniting one subset of the clusters of $\cl$ into one big cluster.
This is the motivation for the following definitions.

Let $\cl=\{C_1,\ldots,C_\ell\}$ be a partition of a set $A$ of points such that $H(C_i) \cap H(C_j) = \emptyset$ for any $i \neq j$.
For any subset $\mathcal S \subseteq \cl$ consider the set $U_{\mathcal S} = \bigcup_{C_i \in \mathcal S} C_i$.
Let $\cl[\mathcal S]$ be the partition consisting of the cluster $U_{\mathcal S}$ and every $C_i\notin\mathcal S$.
Consider the set $\mathcal S^*\subseteq\cl$ minimizing the cost of the partition $\cl[\mathcal S^*]$.
If there is more than one such partition, we are interested in a maximal one.
We say that $\mathcal S^*$ is an \emph{optimal cluster union} for the clustering $\cl$.

\begin{lemma}\label{lem:merger-works}
Let $A = \bigcup_{i \in I} A_i$ be a set of points such that $A = \bigcup_{i \in I} A^\complement_i$, where $A^\complement _i \coloneqq A\setminus A_i$ for $i \in I$.
Let $\cl^*$ be the maximal optimal partition of $A$ and suppose that there is at most one cluster in $\cl^*$ that intersects each set $A_i$.
Let $\cl_i$ be the maximal optimal partition of $A^\complement_i$ and let $\cl$ be the merger of the partitions $\cl_i$.
Let $\mathcal S^*$ be an optimal cluster union for $\cl$.
Then $\cl[\mathcal S^*]$ and $\cl^*$ are the same partition.
\end{lemma}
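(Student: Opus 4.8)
The plan is to show that $\cl^*$ refines $\cl[\mathcal S^*]$ and vice versa, exploiting Lemma~\ref{lem:big-cluster-property}, which already tells us a great deal about how $\cl^*$ relates to the partitions $\cl_i$. First I would invoke Lemma~\ref{lem:big-cluster-property}: each cluster $C\in\cl^*$ is a union of clusters of the $\cl_i$, and moreover each such $C$ is either already a cluster of some $\cl_i$, or it intersects every set $A_i$. By the hypothesis, there is at most one cluster of $\cl^*$ of the latter type; call it $C_{\mathrm{big}}$ (it may not exist, in which case the argument is easier). Every other cluster of $\cl^*$ is a cluster of some $\cl_i$, hence appears among the clusters of the merger $\cl$ (mergers only combine clusters with intersecting hulls, and clusters of $\cl^*$ have pairwise disjoint hulls, so a cluster of $\cl^*$ that is a cluster of some $\cl_i$ survives intact in $\cl$). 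Thus $\cl^*$ is obtained from $\cl$ by taking some subset $\mathcal T\subseteq\cl$ and merging exactly those into $C_{\mathrm{big}}$, i.e.\ $\cl^* = \cl[\mathcal T]$ for a suitable maximal $\mathcal T$.

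Next I would argue $\mathcal T$ is an optimal cluster union. Since $\cl^*$ is the (globally) optimal partition of $A$, and every partition of the form $\cl[\mathcal S]$ is in particular a partition of $A$, we have $\cost(\cl[\mathcal T]) = \cost(\cl^*) \le \cost(\cl[\mathcal S])$ for all $\mathcal S\subseteq\cl$; hence $\mathcal T$ achieves the minimum cost among all cluster unions. So $\cost(\cl[\mathcal T]) = \cost(\cl[\mathcal S^*])$, and both are cost-minimizing cluster unions of $\cl$.

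Finally I would pin down uniqueness to conclude $\cl[\mathcal S^*] = \cl[\mathcal T] = \cl^*$. Here I would use that $\cl[\mathcal S^*]$, being a cost-minimizing partition of $A$, is itself an optimal partition of $A$; and by Lemma~\ref{lem:max-opt-part} the maximal optimal partition of $A$ is unique, namely $\cl^*$. So it suffices to show $\cl[\mathcal S^*]$ is \emph{maximal}. This is exactly where the ``maximal one'' clause in the definition of optimal cluster union does its work: among all cost-minimizing $\mathcal S$ we chose a maximal one, which should translate into $\cl[\mathcal S^*]$ being a maximal optimal partition — one has to check that ``$\mathcal S^*$ maximal as a subset of $\cl$'' genuinely yields ``$\cl[\mathcal S^*]$ maximal as an optimal partition,'' using that any further merging of an optimal partition either increases cost or, if cost-preserving, corresponds to enlarging $\mathcal S^*$ within $\cl$ (any coarsening of $\cl[\mathcal S^*]$ that is still an optimal partition of $A$ must, by the structural lemmas applied to $\cl$, combine clusters of $\cl$ and hence be of the form $\cl[\mathcal S']$ with $\mathcal S'\supseteq\mathcal S^*$). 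Then $\cl[\mathcal S^*]$ and $\cl^*$ are both maximal optimal partitions of $A$, so they coincide.

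The main obstacle I anticipate is the last step: cleanly arguing that the combinatorial notion of maximality of $\mathcal S^*$ inside $\cl$ matches the notion of maximality of an optimal partition of $A$ used in Lemma~\ref{lem:max-opt-part}. The subtlety is that an optimal partition of $A$ coarser than $\cl[\mathcal S^*]$ need not a priori respect the cluster boundaries of $\cl$; ruling this out requires noting that clusters of $\cl$ are indivisible (each is a union of indivisible clusters of the $\cl_i$ with intersecting hulls, so Lemma~\ref{lem:sum-of-indivisible-clusters} applies) and then appealing to Lemma~\ref{lem:cluster-contained} to force any coarsening to merge whole clusters of $\cl$.
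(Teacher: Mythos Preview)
Your approach is essentially the paper's: both establish that $\cl^*=\cl[\mathcal T]$ for some $\mathcal T\subseteq\cl$ by combining Lemma~\ref{lem:big-cluster-property} with indivisibility of the merger's clusters, and then conclude that the optimal cluster union recovers $\cl^*$. Your endgame via cost-minimality and uniqueness of the maximal optimal partition is more explicit than the paper's, which, after exhibiting the decomposition $\cl=\{D_1,\ldots,D_\ell,C_2,\ldots,C_k\}$ with $C_1=\bigcup D_i$, simply asserts that $\{D_1,\ldots,D_\ell\}$ is the optimal cluster union.

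One small gap: your parenthetical justification for why a non-big cluster $C$ of $\cl^*$ that is a cluster of some $\cl_j$ ``survives intact'' in the merger is not correct as written. You invoke that clusters of $\cl^*$ have pairwise disjoint hulls, but the merger may combine $C$ with a cluster $D$ of a different $\cl_{j'}$, and $D$ need not itself be a cluster of $\cl^*$, so disjointness of the hulls of $\cl^*$ does not directly apply. The fix is exactly what you sketch in your final paragraph (and what the paper does): each cluster of the merger $\cl$ is indivisible by Lemma~\ref{lem:sum-of-indivisible-clusters}, hence by Lemma~\ref{lem:cluster-contained} is contained in a single cluster of $\cl^*$; combined with the containment of each $C_i$ ($i\ge 2$) in a cluster of $\cl$, this forces equality. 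Once you plug that in, your argument goes through.
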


\begin{proof}
By Lemma~\ref{lem:big-cluster-property}, each cluster of $\cl^*$ is either a cluster of some $\cl_i$, or has non-empty intersection with each set $A_i$.
Since there is at most one cluster in $\cl^*$ of the latter kind, it follows that $\cl^*$ has the form $\{C_1,\ldots,C_k\}$, where each $C_i$, $i\geq 2$, is contained in a cluster of the partition $\cl$.
Each cluster $C$ of $\cl$ is indivisible by Lemma~\ref{lem:sum-of-indivisible-clusters}, so it is contained in some cluster $C_i\in\cl^*$ by Lemma~\ref{lem:cluster-contained}.
For each cluster $C_i$ where $i\geq 2$, there must be a cluster $C\in\cl$ contained in $C_i$, and it follows that $C=C_i$.
Hence, $\cl$ has the form $\{D_1,\ldots,D_\ell,C_2,\ldots,C_k\}$, where $C_1=\bigcup_{i=1}^\ell D_i$.
Therefore, the optimal cluster union for $\cl$ is $\mathcal S^*\mydef\{D_1,\ldots,D_\ell\}$, and $\cl[\mathcal S^*]$ is the partition $\cl^*$.
\end{proof}

\begin{lemma}\label{lem:pol-finding-two-cells}
Let $\pol$ be a non-basic convex polyomino.
There are two cells $\Gamma_1, \Gamma_2$ of $\pol$ with the following properties:
\begin{enumerate}
\item[(i)] $\Gamma_1, \Gamma_2$ are non-neighbouring,
\item[(ii)] $\pol \setminus \Gamma_1$ and $\pol \setminus \Gamma_2$ are convex, and either
\item[(iii.a)] the horizontal distance between $\Gamma_1$ and $\Gamma_2$ is at least as large as the vertical distance, $\Gamma_1$ is leftmost and $\Gamma_2$ is rightmost in $\pol$, or
\item[(iii.b)] the vertical distance between $\Gamma_1$ and $\Gamma_2$ is at least as large as the horizontal distance, $\Gamma_1$ is topmost and $\Gamma_2$ is bottommost in $\pol$.
\end{enumerate}
\end{lemma}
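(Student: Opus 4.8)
I would first fix coordinates so that the nonempty columns of $\pol$ are indexed $1,\dots,W$ from left to right and the nonempty rows $1,\dots,H$ from top to bottom, and assume $W\ge H$ (the case $W<H$ is symmetric under exchanging the roles of rows and columns and yields conclusion (iii.b) in place of (iii.a)). Recall that a convex polyomino is a simple polygon --- in particular it is edge-connected, has no hole, and has no boundary pinch point --- and that its rows and columns are each contiguous. Since $\pol$ is convex and not basic, its bounding box does not fit inside a $2\times 2$ square (every convex polyomino inside such a square is a monomino, a domino, an L-tromino, or a square-tetromino), and hence $W\ge 3$. The plan is to produce a cell $\Gamma_1$ in the leftmost column of $\pol$ and a cell $\Gamma_2$ in the rightmost column such that both $\pol\setminus\Gamma_1$ and $\pol\setminus\Gamma_2$ are convex polyominoes.

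The heart of the argument is the claim that the leftmost column of $\pol$ contains a cell $\Gamma_1$ with $\pol\setminus\Gamma_1$ a convex polyomino. If the leftmost column consists of a single cell $\Gamma_1$, then (since $\pol$ is edge-connected and $W\ge 3$) the only edge-neighbour of $\Gamma_1$ is the cell immediately to its right, so $\Gamma_1$ has degree one; deleting it cannot disconnect $\pol$, create a hole, or create a pinch, and every row and column stays contiguous. If the leftmost column has at least two cells, then deleting an interior cell of that column would split it, so $\Gamma_1$ must be the topmost cell $T$ or the bottommost cell $B$ of the leftmost column. I would show that deleting $T$ fails to leave a convex polyomino only in one specific ``bad'' configuration: when the cell immediately to the right of $T$ lies in $\pol$ while the cell immediately to the right of the cell directly below $T$ does not. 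In every other case $\pol\setminus T$ is a convex polyomino, whereas in this configuration deleting $T$ both disconnects $\pol\setminus T$ and creates a bowtie at the lower-right corner of $T$. Symmetrically, deleting $B$ fails only in the mirror-image configuration.

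It then remains to note that these two bad configurations are mutually exclusive: the bad configuration for $T$ forces the second column of $\pol$ to lie entirely on or above the top row of the leftmost column, while the bad configuration for $B$ forces it to lie entirely on or below the bottom row; since the leftmost column has at least two rows, the two together force the second column both to be empty and to contain the cell to the right of $T$, a contradiction. Hence at least one of $T,B$ works as $\Gamma_1$. Applying the left--right mirror of this claim to the rightmost column produces $\Gamma_2$. Finally, since $\Gamma_1$ lies in column $1$ and $\Gamma_2$ in column $W$ with $W\ge 3$, their columns differ by at least two, so $\Gamma_1$ and $\Gamma_2$ are non-neighbouring, giving (i); (ii) holds by the claim, the two deletions being independent; and $\Gamma_1$ is leftmost, $\Gamma_2$ is rightmost, with horizontal separation $W-1\ge H-1$, which bounds their vertical separation, so (iii.a) holds.

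I expect the main obstacle to be the case analysis underlying the core claim --- verifying that deleting an extreme cell of the leftmost column can break the convex-polyomino property only through the stated ``bad'' configuration. This means systematically ruling out the creation of a hole (impossible for a cell on the left boundary), checking when the deleted cell is a cut vertex (it has at most two edge-neighbours, so this is a short check), and ruling out boundary pinches (which can only occur at one of the four corners of the deleted cell, and only when two diagonally opposite cells around that corner both survive the deletion), and then confirming that these failure modes coincide exactly with the ``bad'' configuration, whose simultaneous impossibility for $T$ and $B$ is a one-line argument about the second column.
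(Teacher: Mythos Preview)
Your proposal is correct and follows essentially the same approach as the paper's proof: assume without loss of generality that the width is at least the height (hence the width is at least $3$), pick $\Gamma_1$ from the leftmost column and $\Gamma_2$ from the rightmost column, and argue that at least one of the top or bottom cells of each extreme column can be deleted while keeping the result a convex polyomino. Your write-up is considerably more detailed than the paper's --- you explicitly isolate the unique ``bad'' configuration in which deleting the top (resp.\ bottom) cell of the leftmost column fails and show the two bad configurations are mutually exclusive via the shape of column~$2$, whereas the paper simply asserts in one line that ``by convexity of $\pol$, at least one of them can be removed without disconnecting the polyomino'' --- but the skeleton and the choice of $\Gamma_1,\Gamma_2$ are identical.
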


\begin{proof}
Since $\pol$ is non-basic, it has either width of at least $3$ cells or height of at least $3$ cells.
Assume without loss of generality that the width of $\pol$ is at least as large as the height of $\pol$.
We will choose $\Gamma_1$ to be one of the leftmost cells of $\pol$, and $\Gamma_2$ to be one of the rightmost cells of $\pol$.
As we want $\pol \setminus \Gamma_1$ and $\pol \setminus \Gamma_2$ to be convex, $\Gamma_1$ and $\Gamma_2$ have to be topmost or bottommost in their columns.
If there is only one cell in the leftmost (rightmost) column, we take it as $\Gamma_1$ (respectively, $\Gamma_2$), as then clearly $\pol \setminus \Gamma_1$ (respectively, $\pol \setminus \Gamma_2$) remains a convex polyomino.
If there are at least two cells in the column, then, by convexity of $\pol$, at least one of them can be removed without disconnecting the polyomino.
\end{proof}

The following lemma states that we can find optimal cluster unions efficiently.
The proof is in sections \ref{S:findcurve}--\ref{S:findClusterCenter}.

\begin{lemma}\label{lem:finding-cluster-without-center}
Let $\pol$ be a collection of cells and $\Gamma_1, \Gamma_2$ two non-neighbouring cells of $\pol$ such that either
\begin{itemize}
\item[(i)] the horizontal distance between $\Gamma_1$ and $\Gamma_2$ is at least as large as the vertical distance, $\Gamma_1$ is leftmost and $\Gamma_2$ is rightmost in $\pol$, or

\item[(ii)] the vertical distance between $\Gamma_1$ and $\Gamma_2$ is at least as large as the horizontal distance, $\Gamma_1$ is topmost and $\Gamma_2$ is bottommost in $\pol$.
\end{itemize}
Let $\cl$ be a partition of the points $A_\pol$ such that for $i \in \{1,2\}$, $\cl$ restricted to the points of $A_{\pol \setminus \Gamma_i}$ is the maximal optimal partition of $A_{\pol \setminus \Gamma_i}$.
An optimal cluster union for $\cl$ can be found in time $O(n \cdot \polylog n)$, where $n$ is the number of points in $A_\pol$.
\end{lemma}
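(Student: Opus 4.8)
The plan is to reduce the problem to the geometric subroutine sketched in the introduction: given the "advice" that some point $p$ lies on the boundary of the big cluster in an optimal cluster union, perform an angular sweep to recover that cluster, and then show how to obtain such advice. First I would argue that an optimal cluster union $\mathcal S^*$ for $\cl$ corresponds to choosing a single "big cluster" $U_{\mathcal S^*}$; by the hypothesis on $\Gamma_1,\Gamma_2$ and Lemma~\ref{lem:only-one-big-cluster}, the only clusters that could be merged are those whose convex hull meets both $\Gamma_1$ and $\Gamma_2$ — and there is at most one such cluster in the optimal partition. Every other cluster of the optimal partition of $A_\pol$ already appears verbatim in $\cl$ (this is the content of Lemmas~\ref{lem:big-cluster-property} and~\ref{lem:merger-works}). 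So the task is: among all clusterings of $A_\pol$ obtained by uniting a subset of clusters of $\cl$ into one cluster, find the cheapest one. Equivalently, since uniting clusters only changes the cost through the big cluster, we must find the subset $\mathcal S$ minimizing $h(U_{\mathcal S}) - \openC\cdot(|\mathcal S|-1)$ (the saved opening costs), which is an optimal cluster union.

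Next I would set up the angular sweep. Suppose we are handed a point $x_0$ that lies in the big cluster together with a point $p$ on the boundary $\partial H(U_{\mathcal S^*})$. Shoot a ray from $x_0$ and rotate it; process the points of $A_\pol$ (really, one representative per cluster of $\cl$, or all points — a careful choice is needed) in angular order. For each processed point $v$, maintain the best "partial boundary" — a polygonal path from $p$ going around — minimizing length minus $\openC$ times the number of clusters of $\cl$ whose opening cost is saved by being swallowed inside. The key structural claim is that the optimal such path is polygonal with vertices at input points, and that the optimal path to $v$ extends the optimal path to some earlier point $u$ by the single segment $uv$; this gives a dynamic program over the angular order, with each step a near-linear (or amortized logarithmic) update using an appropriate data structure, yielding $O(n\,\polylog n)$ total. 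From the stored "last segment" pointers one reconstructs $H(U_{\mathcal S^*})$, hence $\mathcal S^*$ itself.

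**The hard part will be** two things. First, obtaining the advice $(x_0,p)$: one must show there are only $O(n\,\polylog n)$ (or even $O(1)$, after a clever argument) candidates for $x_0$ and $p$ and that a correct pair can be identified — this is exactly what is deferred to Section~\ref{S:findClusterCenter}, and it relies on the geometry of $\pol$ (that $\Gamma_1$ is extreme on one side and $\Gamma_2$ on the other, so the big cluster, if it exists, must stretch across $\pol$ and thus its hull boundary is pinned down near the extreme cells). Second, proving that the angular-sweep DP actually computes the global optimum over \emph{all} subsets $\mathcal S$, not just over paths of a restricted shape — this requires the convexity/"optimal substructure" argument that the best path is made of segments between points and that sub-paths of optimal paths are optimal (the inequality flavor is reminiscent of Lemma~\ref{lem:only-one-big-cluster}'s quadrilateral inequality, exploiting that replacing a detour by a chord both shortens length and cannot un-swallow a cluster).

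**Finally**, I would assemble: run the subroutine for each of the two choices of extreme-cell pair (equivalently, handle cases (i) and (ii) uniformly), compare the resulting candidate cluster unions against the "no big cluster" option (just $\cl$ itself), and output the cheapest; correctness follows from Lemma~\ref{lem:merger-works} together with the correctness of the sweep, and the running time is dominated by the $O(n\,\polylog n)$ sweep plus the $O(n\,\polylog n)$ advice-finding step, as claimed. I expect the bulk of the technical work, and the main obstacle, to be the advice-finding step and the amortized data-structure analysis of the sweep; the reduction from "optimal cluster union" to "find the cluster through $x_0$ with boundary point $p$" should be comparatively clean given the structural lemmas already in hand.
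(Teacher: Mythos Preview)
Your overall strategy matches the paper's: reduce to the subroutine of Lemma~\ref{lem:finding-cluster-with-center} (find the optimal cluster union once you know an interior point $x_0$), and argue that the number of candidate $x_0$'s is bounded by a constant. But your plan has a genuine gap precisely at the step you yourself flag as hard.

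You write that obtaining the advice ``is exactly what is deferred to Section~\ref{S:findClusterCenter}.'' That section \emph{is} the proof of this lemma, so this is circular: you have identified what must be done but not how. The paper's actual argument is a concrete geometric construction that your outline does not anticipate. Because the big cluster (if it exists) must meet both $\Gamma_1$ and $\Gamma_2$, its convex hull must cross a short vertical segment $I$ placed midway between them, of length $O(L)$ where $L$ is the cell side. One sprinkles a constant number $\alpha$ of equally spaced test points on $I$. Then a dichotomy: either $H(B)$ contains one of these test points (and one simply runs Lemma~\ref{lem:finding-cluster-with-center} with that point as $x_0$), or $H(B)$ passes between two consecutive test points, which by convexity forces $H(B)$ into a thin ``tube'' of height $O(L/\alpha)$ spanning from $\Gamma_1$ to $\Gamma_2$. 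In the tube case the perimeter of $H(B)$ is known up to a small multiplicative factor, and one finds an interior point via an \emph{approximate centerpoint} argument using weighted ham-sandwich cuts (Lemmas~\ref{lem:heavyline}--\ref{lem:centerpoint}): the weight outside any optimal cluster union is too small for the centerpoint to fall outside it. This is the missing idea.

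A second, smaller point: you bundle the advice as a pair $(x_0,p)$ and propose to find both here. The paper separates the layers cleanly. This lemma only needs to produce $x_0$; the boundary point $p$ is found inside the proof of Lemma~\ref{lem:finding-cluster-with-center} by a binary search over ``landmarks'' along a good ray from $x_0$, using the nested-curves structure (Lemmas~\ref{lem:nestedray}--\ref{lem:correctrecurs}). Conflating the two makes the advice-finding look harder than it is at this level and obscures where the $\polylog$ factors come from.
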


\paragraph*{Solving non-basic subpolyominoes.}
We can now describe how to find maximal optimal partitions of the points in non-basic subpolyominoes.

\begin{figure}
\includegraphics[width=\textwidth, trim={4cm 1cm 4cm 1cm},clip]{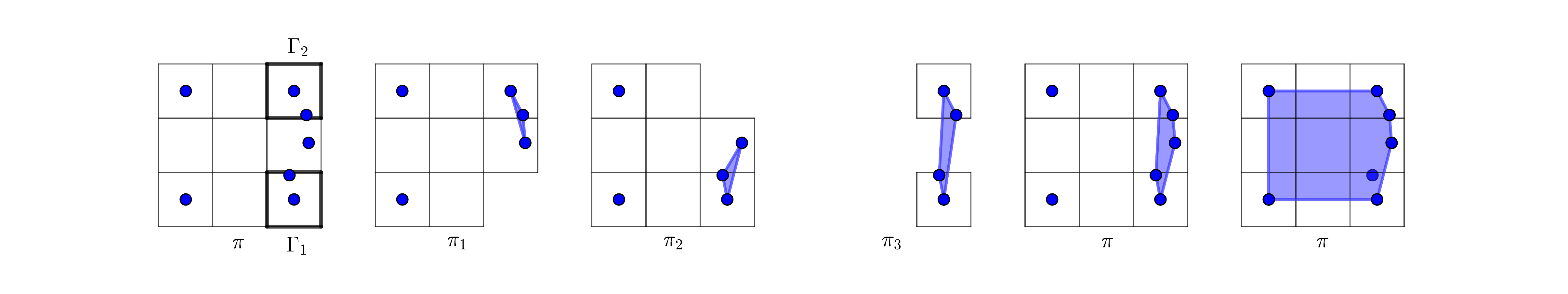}
\caption{A demonstration of how the optimal partition $\cl^*$ of the set $A_\pol$ of points in the non-basic $3\times 3$-cell polyomino $\pi$ is obtained from optimal partitions $\cl^*_1, \cl^*_2, \cl^*_3$ of points $A_{\pol_1}, A_{\pol_2}, A_{\pol_3}$ in smaller collections of cells $\pol_1,\pol_2,\pol_3$, as described the proof of Lemma~\ref{lem:solving-not-2-by-2}.
The cells have width $1$ and the opening cost is $\openC\mydef 5/4$.
The second, third and fourth figure show the optimal clusterings $\cl^*_1, \cl^*_2, \cl^*_3$.
The fifth figure shows the merger $\cl$ of these.
The final figure shows $\cl^*$ (the optimal cluster union consists of all the clusters of $\cl$).}
\label{fig:lem:solving-not-2-by-2}
\end{figure}

\begin{lemma}\label{lem:solving-not-2-by-2}
Let $\pol$ be a non-basic subpolyomino at level $i$.
Given maximal optimal partitions for basic polyominoes at level $i$, we can compute an optimal partition of $A_\pol$ in time $O(n \cdot \polylog n)$, where $n$ is the number of points in $A_\pol$.
\end{lemma}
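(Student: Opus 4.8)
The plan is to reduce the non-basic subpolyomino case to an application of Lemma~\ref{lem:merger-works} together with Lemma~\ref{lem:finding-cluster-without-center}, after choosing the two special cells $\Gamma_1,\Gamma_2$ via Lemma~\ref{lem:pol-finding-two-cells}. First I would invoke Lemma~\ref{lem:pol-finding-two-cells} on the non-basic convex polyomino $\pol$ to obtain two non-neighbouring cells $\Gamma_1,\Gamma_2$ such that $\pol\setminus\Gamma_1$ and $\pol\setminus\Gamma_2$ are again convex polyominoes, and such that $\Gamma_1,\Gamma_2$ are extremal in the long direction of $\pol$ (either leftmost/rightmost or topmost/bottommost). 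Set $I=\{1,2\}$, $A_i\mydef A_{\Gamma_i}$ for $i\in I$, so that $A^\complement_i = A_{\pol\setminus\Gamma_i}$. Since $\Gamma_1,\Gamma_2$ are non-neighbouring and in particular distinct, we have $A = A_{\pol\setminus\Gamma_1}\cup A_{\pol\setminus\Gamma_2}$, so the hypothesis $A=\bigcup_{i\in I}A^\complement_i$ of Lemma~\ref{lem:merger-works} is satisfied.

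Next I would verify the ``at most one big cluster'' hypothesis of Lemma~\ref{lem:merger-works}: letting $\cl^*$ be the maximal optimal partition of $A_\pol$, there is at most one cluster of $\cl^*$ whose convex hull intersects both $\Gamma_1$ and $\Gamma_2$ — this is exactly Lemma~\ref{lem:only-one-big-cluster}, applicable since $\pol$ is non-basic and $\Gamma_1,\Gamma_2$ are a non-neighbouring pair. A cluster intersecting each of $A_1=A_{\Gamma_1}$ and $A_2=A_{\Gamma_2}$ certainly has its hull meeting both cells, so at most one cluster of $\cl^*$ meets each $A_i$. Now, since each $\pol\setminus\Gamma_i$ is a convex subpolyomino of $\pol$ (hence of strictly smaller size, and a subpolyomino at level $i$ obtained from basic polyominoes at level $i$ by the inductive structure described in the ``At most one big cluster'' paragraph), we have already computed the maximal optimal partitions $\cl_i$ of $A^\complement_i$ — more precisely, this follows by a size-induction on subpolyominoes, with basic subpolyominoes as the base case. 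Form the merger $\cl$ of $\cl_1$ and $\cl_2$. By construction, for each $i\in\{1,2\}$ the restriction of $\cl$ to $A_{\pol\setminus\Gamma_i}$ refines $\cl_i$; in fact since $\cl_i$ is the maximal optimal partition and merging only unites clusters with overlapping hulls, the restriction of $\cl$ to $A_{\pol\setminus\Gamma_i}$ equals $\cl_i$ (any cluster of $\cl$ that meets $A_{\pol\setminus\Gamma_i}$ is a union of clusters of the $\cl_j$'s, but the clusters of $\cl_i$ already have pairwise disjoint hulls, so no two of them get merged unless joined through a cluster meeting $\Gamma_i$, which lies outside $\pol\setminus\Gamma_i$). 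This is precisely the hypothesis on $\cl$ required by Lemma~\ref{lem:finding-cluster-without-center}, whose conditions (i)/(ii) match conditions (iii.a)/(iii.b) of Lemma~\ref{lem:pol-finding-two-cells}.

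Finally, I would apply Lemma~\ref{lem:finding-cluster-without-center} to compute an optimal cluster union $\mathcal S^*$ for $\cl$ in time $O(n\polylog n)$, and then apply Lemma~\ref{lem:merger-works} to conclude that $\cl[\mathcal S^*]=\cl^*$, i.e.\ that $\cl[\mathcal S^*]$ is a (maximal) optimal partition of $A_\pol$. The running time is dominated by the single call to Lemma~\ref{lem:finding-cluster-without-center} plus the cost of forming the merger of $\cl_1,\cl_2$ and selecting $\Gamma_1,\Gamma_2$, all of which are $O(n\polylog n)$. The step I expect to require the most care is the claim that the restriction of the merger $\cl$ to $A_{\pol\setminus\Gamma_i}$ is \emph{exactly} the maximal optimal partition $\cl_i$ (not merely a coarsening or refinement of it) — this needs the fact, from Lemma~\ref{lem:big-cluster-property}, that clusters of the maximal optimal partitions of the two overlapping subinstances interact only through clusters meeting the removed cells, so that the merging operation does not accidentally fuse clusters internal to one $\pol\setminus\Gamma_i$. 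One must also be slightly careful that $\pol\setminus\Gamma_1$ and $\pol\setminus\Gamma_2$ together cover all cells of $\pol$, which holds because $\Gamma_1\neq\Gamma_2$; and that both are genuinely among the subpolyominoes whose solutions are available, which is where convexity of $\pol\setminus\Gamma_i$ (guaranteed by Lemma~\ref{lem:pol-finding-two-cells}(ii)) is essential.
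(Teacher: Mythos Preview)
Your overall strategy---choose $\Gamma_1,\Gamma_2$ via Lemma~\ref{lem:pol-finding-two-cells}, recurse on the smaller convex subpolyominoes, merge, then find the optimal cluster union---matches the paper's. But the step you yourself flagged as delicate is in fact wrong: the merger $\cl$ of $\cl_1$ and $\cl_2$, restricted to $A_{\pol\setminus\Gamma_i}$, need \emph{not} equal $\cl_i$ (nor even refine it), so the hypothesis of Lemma~\ref{lem:finding-cluster-without-center} is not met. Take a $1\times 3$ strip $\Gamma_1,M,\Gamma_2$ of unit cells with $\openC=1$ and points $c_1=(0.95,0.5)\in\Gamma_1$, $d_1=(1.05,0.1),d_2=(1.05,0.9)\in M$, $g=(2.5,0.5)\in\Gamma_2$. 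One checks $\cl_1=\{\{d_1\},\{d_2\},\{g\}\}$ and $\cl_2=\{\{c_1,d_1,d_2\}\}$; their merger is $\{\{c_1,d_1,d_2\},\{g\}\}$, whose restriction to $\{d_1,d_2,g\}$ is $\{\{d_1,d_2\},\{g\}\}\neq\cl_1$. The flaw in your parenthetical argument is the tacit assumption that a $\cl_2$-cluster ``meeting $\Gamma_1$'' lies entirely outside $\pol\setminus\Gamma_1$: here $\{c_1,d_1,d_2\}$ meets $\Gamma_1$ but also contains $d_1,d_2\in M$, and it fuses the two singleton $\cl_1$-clusters. Lemma~\ref{lem:big-cluster-property} does not help, since it concerns $\cl^*$, not how clusters of $\cl_1,\cl_2$ interact in the merger. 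There is also a formal mismatch with Lemma~\ref{lem:merger-works}, whose hypothesis is $A=\bigcup_i A_i$ \emph{and} $A=\bigcup_i A^\complement_i$; with $I=\{1,2\}$ and $A_i=A_{\Gamma_i}$ the former fails whenever $\pol$ has cells beyond $\Gamma_1,\Gamma_2$.

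The paper's substantive difference from your argument is that it introduces a third piece $\pol_3\mydef\Gamma_1\cup\Gamma_2$, setting $A_3\mydef A_{\pol\setminus(\Gamma_1\cup\Gamma_2)}$ so that $A_1,A_2,A_3$ genuinely partition $A_\pol$. It first computes the maximal optimal partition $\cl^*_3$ of $A_{\Gamma_1\cup\Gamma_2}$ by a separate application of Lemmas~\ref{lem:finding-cluster-without-center} and~\ref{lem:merger-works} to the two disjoint monominoes (where the restriction hypothesis holds trivially since the cells are disjoint), and then takes the merger of the \emph{three} partitions $\cl^*_1,\cl^*_2,\cl^*_3$ before invoking Lemma~\ref{lem:merger-works} with $I=\{1,2,3\}$ and Lemma~\ref{lem:finding-cluster-without-center} once more on $\pol$.
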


\begin{proof}
We first find cells $\Gamma_1, \Gamma_2$ of $\pol$ as in Lemma~\ref{lem:pol-finding-two-cells}.
Let $\pol_1 \mydef \pol \setminus \Gamma_1$, $\pol_2 \mydef \pol \setminus \Gamma_2$, and $\pol_3 \mydef \Gamma_1 \cup \Gamma_2$.
As each monomino $\Gamma_i$ is a basic polyomino at level $i$, we know the maximal optimal partition of $A_{\Gamma_i}$ by assumption.
Let $\cl_3$ be the merger of the maximal optimal partitions of $A_{\Gamma_1}$ and $A_{\Gamma_2}$ (which is in fact just the union of the partitions, as the cells are disjoint).
Then, $\Gamma_1 \cup \Gamma_2$ together with the partition $\cl_3$ satisfy the conditions of Lemma~\ref{lem:finding-cluster-without-center}, and we can find an optimal cluster union $\mathcal S_3$ for $\cl_3$ in time $O(n \cdot \polylog n)$.
Define $\cl^*_3\mydef\cl_3[\mathcal S_3]$.
By Lemma~\ref{lem:only-one-big-cluster}, the maximal optimal partition of $A_{\Gamma_1\cup\Gamma_2}$ contains at most one cluster the intersects $\Gamma_1$ and $\Gamma_2$.
Hence, by Lemma~\ref{lem:merger-works}, $\cl^*_3$ is the maximal optimal partition of $A_{\Gamma_1\cup\Gamma_2}$.

See Figure~\ref{fig:lem:solving-not-2-by-2}.
Denote the maximal optimal partition of $A_\pol$ as $\cl^*$.
Consider the sets $A_{\pol_1}, A_{\pol_2}, A_{\pol_3}$, and suppose for now that we know their optimal partitions $\cl^*_1, \cl^*_2, \cl^*_3$.
By Lemma~\ref{lem:big-cluster-property} (taking $A\mydef A_\pol$, $A_1 \mydef A_{\Gamma_1}$, $A_2 \mydef A_{\Gamma_2}$, and $A_3 \mydef A_{\pol \setminus (\Gamma_1 \cup \Gamma_2)}$), we get that a cluster of $\cl^*$ that is not a cluster of any $\cl^*_i$ must intersect both $\Gamma_1$ and $\Gamma_2$.
Due to Lemma~\ref{lem:only-one-big-cluster}, there is at most one such cluster in $\cl^*$.
Let $\cl$ be the merger of the partitions $\cl^*_1, \cl^*_2, \cl^*_3$.
Applying Lemma~\ref{lem:finding-cluster-without-center} for $\pol$, $\Gamma_1$, $\Gamma_2$, and $\cl$, we obtain an optimal cluster union $\mathcal S$ for $\cl$.
By Lemma~\ref{lem:merger-works}, we get that $\cl[\mathcal S]=\cl^*$.

As the polyomino $\pol$ has at most $16$ cells, we need to find optimal partitions for a constant number of subpolyominoes $\pol_i$ before we get down to the basic polyominoes.
That gives the total running time of $O(n \cdot \polylog n)$.
\end{proof}

\paragraph*{Summing up.}
Consider the subpolyominoes $\Pi_i$ at some level $i$.
By Lemma~\ref{lem:solving-not-2-by-2}, the total computation for level $i$ takes time
$$
\sum_{\pol\in\Pi_i} O(|A_\pi|\cdot\polylog |A_\pi|)\leq
\polylog n \cdot\sum_{\pol\in\Pi_i} O(|A_\pi|)=
O(n\cdot \polylog n),
$$
where the equality follows since each point belongs to $O(1)$ level $i$ subpolyominoes.
Due to the preprocessing described in Lemma~\ref{lem:aspectratio}, the number of levels is $O(\log n)$, so the total running time of the algorithm is $O(n \cdot \polylog n)$.
We have thus proven the following theorem:

\begin{theorem}[The \facilityname{}]
\label{thm:facility}
There is an algorithm running in $O(n \polylog(n))$ time that, given any set $A$ of $n$ points in the plane and an opening cost $\openC$, finds a set of $\ell$ closed curves such that each point in $S$ is enclosed by a curve and the total length of the curves plus $\openC \cdot \ell$ is minimized.
\end{theorem}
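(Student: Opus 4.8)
The plan is to assemble the tools developed above into a single bottom-up pass over a quadtree.

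First I would invoke Lemma~\ref{lem:aspectratio} to split $(A,\openC)$, in time $O(n\log^2 n)$, into independent subinstances $(A_i,\openC)$ with $A=\bigcup_i A_i$, each contained in an axis-parallel box of side length $\mathrm{poly}(|A_i|)\cdot\openC$, and whose optimal partitions union to an optimal partition of $(A,\openC)$. Since the $|A_i|$ sum to $n$, it suffices to solve each subinstance within the claimed bound; abusing notation I write $(A,\openC)$, $n=|A|$, for one such subinstance, with bounding square $S$. Next I would build the quadtree rooted at $S$ as in the Quadtree Construction section, stopping at level $L=O(\log n)$ where cells have side at most $\openC/8$; this takes $O(n\log n)$ time, yields $O(n\log n)$ nodes, and from it one constructs all $O(n\log n)$ non-empty basic polyominoes, with pointers to the quadtree nodes of their cells, in $O(n\log n)$ time. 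Recall that each input point lies in $O(1)$ basic polyominoes, and in $O(1)$ subpolyominoes, at each level.

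The heart of the argument is to compute, level by level from $L$ down to $1$, the (unique, by Lemma~\ref{lem:max-opt-part}) maximal optimal partition of $A_\pol$ for every non-empty basic polyomino $\pol$. For the base case, every basic polyomino at level $L$ lies in a $2\times 2$ block of level-$L$ cells and hence has diameter smaller than $\openC/2$; since any two points at distance smaller than $\openC/2$ share a cluster in every optimal partition, $A_\pol$ is indivisible and $\{A_\pol\}$ is its maximal optimal partition. For the inductive step, assume the maximal optimal partitions of all non-empty basic polyominoes at level $i$ are known. Any non-empty basic polyomino $\pol'$ at level $i-1$ is a convex polyomino of level-$i$ cells contained in a basic polyomino at level $i-1$, hence a subpolyomino at level $i$; it is either a square-tetromino at level $i$ (this happens exactly when $\pol'$ is a monomino at level $i-1$), in which case it is itself a basic polyomino at level $i$ and its maximal optimal partition is already known, or a non-basic subpolyomino at level $i$, in which case Lemma~\ref{lem:solving-not-2-by-2}, whose proof in fact produces the maximal optimal partition $\cl^*$, computes it from the level-$i$ data in time $O(|A_{\pol'}|\cdot\polylog|A_{\pol'}|)$. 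Iterating down to the level-$1$ cell $S$ yields the maximal optimal partition of $A$, and outputting the perimeters of the convex hulls of its clusters, an extra $O(n\log n)$, gives the required closed curves.

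For the running time, the reduction and the quadtree construction cost $O(n\log^2 n)$; for each of the $O(\log n)$ levels, summing the bound of Lemma~\ref{lem:solving-not-2-by-2} over all non-basic subpolyominoes at that level and using that each point lies in $O(1)$ of them gives $O(n\cdot\polylog n)$, so the total is $O(n\cdot\polylog n)$. Viewed at this level the theorem is really an assembly of earlier results, so the only genuine subtlety is checking that the recursion is well founded --- every basic polyomino at level $i-1$ must be reachable from the level-$i$ basic polyominoes, which is exactly the square-tetromino-versus-non-basic-subpolyomino dichotomy above together with the merger argument (Lemma~\ref{lem:only-one-big-cluster} feeding Lemma~\ref{lem:merger-works} inside Lemma~\ref{lem:solving-not-2-by-2}). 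The genuinely hard work --- the $O(n\cdot\polylog n)$ subroutine for computing an optimal cluster union, Lemma~\ref{lem:finding-cluster-without-center} --- is deferred to later sections and is used here as a black box.
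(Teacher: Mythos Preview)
Your proposal is correct and follows essentially the same approach as the paper: the theorem is proved by the ``Summing up'' paragraph immediately preceding it, which (like you) invokes Lemma~\ref{lem:aspectratio} for the preprocessing, computes maximal optimal partitions for basic polyominoes level by level via Lemma~\ref{lem:solving-not-2-by-2}, and sums the per-level cost $\sum_{\pol} O(|A_\pol|\cdot\polylog|A_\pol|)=O(n\cdot\polylog n)$ over $O(\log n)$ levels. Your explicit dichotomy (a level-$(i{-}1)$ monomino is already a level-$i$ square-tetromino, everything else is a non-basic level-$i$ subpolyomino) and your remark that the recursion is well founded spell out details the paper leaves implicit, but the argument is the same.
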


\subsection{Finding Optimal Cluster Unions}
\label{S:findcurve}
In order to find optimal cluster unions, we first solve a more specialized problem where we require a special point $x_0$ to be contained in the cluster union.
To be more precise, the optimal cluster union $\mathcal S^*$ for the pair $(\cl,x_0)$, where $\cl$ is a partition, is a maximal subset of the clusters of the partition $\cl_{x_0}\mydef\cl\cup\{\{x_0\}\}$ such that $\{x_0\}\in\mathcal S^*$ and the cost of $\cl_{x_0}[\mathcal S^*]$ is minimal.  Note that we use the terms point and vertex interchangeably in the following.

In this section, we prove the following result.

\begin{lemma}\label{lem:finding-cluster-with-center}
Let $\cl=\{C_1,\ldots,C_\ell\}$ be a partition of a given set $A$ of points such that $H(C_i) \cap H(C_j) = \emptyset$ for any $i \neq j$, and let $x_0$ be an arbitrary point. A maximal optimal cluster union for $(\cl,x_0)$ can be found in time $O(n \cdot \polylog n)$, where $n$ is the number of points in $A$.
\end{lemma}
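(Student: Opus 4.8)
The plan is to reduce the problem to computing, for a given point $x_0$ that we require to lie in the cluster union, a single "best boundary path" via an angular sweep, as suggested in the techniques overview (Figure~\ref{fig:curvetree}). First I would observe that since all convex hulls $H(C_i)$ are pairwise disjoint, the optimal cluster union $\mathcal S^*$ (a maximal one) is the set of clusters whose hulls lie inside, or are crossed by, the boundary of $H(U_{\mathcal S^*})$; equivalently, $\cl_{x_0}[\mathcal S^*]$ replaces some subset of small clusters containing $x_0$ by one big cluster, and the cost change is $h(U_{\mathcal S^*}) - \sum_{C_i \in \mathcal S^*} h(C_i) - (|\mathcal S^*|-1)\openC$, which we want to minimize (i.e., make as negative as possible). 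So the task is to find a convex polygon $H$ (the hull of the big cluster) that encloses $x_0$ and encloses or crosses a set $\mathcal S$ of clusters, minimizing $h(H) - \sum_{C_i\in\mathcal S} h(C_i) - (|\mathcal S|-1)\openC$. The boundary of $H$ is a convex closed curve; I would argue (as the overview states) that its vertices can be taken among the input points $A \cup \{x_0\}$, since any arc of $\partial H$ not touching an enclosed cluster can be straightened, and any portion passing "around" a cluster can be pushed to run along that cluster's hull between two of its vertices.

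Next I would fix an auxiliary point $p$ known (by advice, to be supplied in the next section via Lemma's companion argument) to lie on $\partial H$ for the optimal big cluster, and perform an angular sweep of a ray emanating from $x_0$. Sorting the points of $A\cup\{x_0\}$ by angle around $x_0$ in $O(n\log n)$ time, I process them in sweep order starting from $p$. For each vertex $v$ encountered, I maintain the "best" convex chain from $p$ to $v$ that stays on one side of $x_0$ — best in the combined sense of (length of the chain) minus $\openC$ times (number of distinct clusters fully enclosed so far between the chain and $x_0$). The key structural claim is a matroid-exchange / convexity argument showing the optimal such chain is "prefix-optimal": the best chain to $v$ extends the best chain to some earlier vertex $u$ by a single segment $uv$, where $uv$ must be tangent-respecting with respect to the clusters it would cross. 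Storing for each $v$ the predecessor $u$ (the red segments of Figure~\ref{fig:curvetree}) lets me reconstruct $\partial H$, and hence $\mathcal S^*$, by walking back from the final closing vertex. Maintaining the cluster-enclosure count incrementally under the sweep is where a balanced-BST / persistence structure is needed, contributing the $\polylog n$ factors; one sweep handles all "convex chains on one side," and I run it twice (two half-sweeps meeting at $p$ and at the far tangent point) to assemble the full closed curve.

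The main obstacle I expect is making the dynamic-programming recurrence over the angular sweep both correct and efficient: correctness requires proving that the globally optimal convex boundary decomposes into prefix-optimal chains, which is delicate because the "savings" term $-\openC\cdot(\text{clusters enclosed})$ is not a local function of the last segment — enclosing a cluster depends on the whole chain relative to $x_0$. I would handle this by charging each cluster's $\openC$-saving to the unique segment (or pair of segments) of $\partial H$ that "passes over" it as seen from $x_0$, turning the objective into a sum of per-segment weights $w(u,v) = \|uv\| - \openC\cdot(\#\text{clusters first enclosed by } uv)$, at which point a monotone/Monge-type inequality on these weights (analogous to inequality~\eqref{lem:only-one-big-cluster:ineq}, using disjointness of the hulls) yields the prefix-optimality needed for the sweep. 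Efficiency then follows since each of the $O(n)$ candidate predecessor segments for $v$ can be queried against a semi-dynamic lower-envelope structure in $O(\polylog n)$ time, giving the overall $O(n\cdot\polylog n)$ bound claimed in the statement.
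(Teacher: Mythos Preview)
Your proposal has a genuine structural gap: you have essentially sketched the subroutine of Lemma~\ref{lem:findcurve} (given $x_0$ \emph{and} a perimeter point $p$, find the best curve by an angular sweep), but you have punted on the actual content of Lemma~\ref{lem:finding-cluster-with-center}, which is to work given \emph{only} $x_0$. You write that $p$ is ``known by advice, to be supplied in the next section via Lemma's companion argument,'' but the dependency runs the other way: the companion Lemma~\ref{lem:finding-cluster-without-center} is about locating $x_0$ and \emph{calls} the present lemma as a black box. So the step you defer is precisely the step this lemma must supply, and without it you would need to try all $\Theta(n)$ candidate perimeter points, giving $O(n^2\polylog n)$ rather than $O(n\polylog n)$.

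The paper's proof fills this gap with a separate layer of structure on top of the sweep. It first proves a nesting property: for points $p_1,p_2$ on a common ray from $x_0$, the best curves $\bestc{p_1},\bestc{p_2}$ do not cross (Lemma~\ref{lem:nestedray}), and the optimal curve $\sigma^\ast$ crosses none of them (Lemma~\ref{lem:nestedopt}). This monotonicity lets one binary-search along a ``good ray'' over $O(n^{O(1)})$ landmarks to bracket where $\sigma^\ast$ crosses the ray, then a tube argument (Lemma~\ref{lem:technical}) and a second binary search over ``dominating points'' localize $\sigma^\ast$ further; one recurses on the two nested regions, with recursion depth $O(\log n)$. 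Each level makes $O(\log n)$ calls to the $O(n\log n)$ sweep of Lemma~\ref{lem:findcurve}, giving $O(n\log^3 n)$ overall. Your proposal contains none of this nesting/binary-search machinery. Secondarily, your sweep itself differs from the paper's: rather than a prefix-DP with a Monge argument, the paper maintains on the sweep ray a partition into intervals with common $(\parent,\fix)$ values whose boundaries are lines or parabolas, updated by vertex insertions and interval annihilations; correctness rests on the star-shapedness of $\children(v)$ (Lemma~\ref{lem:starchildren}), not on a Monge inequality.
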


Our first goal is to solve the following more specialized problem.
Given an ``internal'' point $x_0$ and a ``perimeter'' point $p\neq x_0$, and given a set of input points and pre-clustered input points, the goal is to find the optimal cluster containing $x_0$ with an angle-monotone perimeter seen from $x_0$ and with $p$ on its perimeter.
We present an algorithm to solve this problem.
The algorithm can take into account that the cluster must be contained within some delimiting outer boundary.

The idea is to make an angular sweep of a ray from $x_0$, and consider the points in the order the ray sweeps over them. For each point $v$, we calculate the \emph{best} path (see Section~\ref{sec:subprob}) from $p$ to $v$, in terms of both its length and how many clusters' opening costs it may save. In this process, we only store for each vertex its \emph{parent} $\parent(v)$, which is an input point with the property that a best path to $v$ ends with the line segment from $\parent(v)$ to $v$.
Finally, we calculate the parent of $p$, and have thus specified the entire cluster and may generate its convex hull by recursively outputting the parent until we end back at $p$.

\subsubsection{Preliminaries} 
\begin{definition} Any closed simple curve $\sigma$ divides the plane $\mathbb{R}^2$ into two regions: the bounded \emph{interior}, denoted $\inte(\sigma)$, and the unbounded \emph{exterior}, denoted $\exte(\sigma)$. We write $\overline{\inte(\sigma)}=\sigma \cup \inte(\sigma)$ and $\overline{\exte(\sigma)}=\sigma \cup \exte(\sigma)$.
\end{definition}

\begin{definition} A region $R$ of Euclidean space is \emph{star shaped} if there exists a point $s\in R$ such that for all $r\in R$, the line segment $sr$ is contained in $R$. 
We say that $R$ is star shaped \emph{seen from $s$}.\end{definition}

\begin{definition} Given a point $s$, a curve $\pi\colon [0,1]\longrightarrow\mathbb R^2$ is \emph{angle-monotone} with respect to $s$ if
$\measuredangle(\pi(0),s,\pi(x))<\measuredangle(\pi(0),s,\pi(x'))$ for $x<x'$.
Here, $\measuredangle(a,b,c)$ is the counterclockwise angle from the unit vector $\frac{a-b}{\|a-b\|}$ to the unit vector $\frac{c-b}{\|c-b\|}$ on the unit circle.
\end{definition}

When a cluster only contains one point, we call it \emph{trivial}, otherwise, it is \emph{non-trivial}.
If a curve intersects the interior of a cluster, we say that it \emph{dissects} the cluster.
Given a partition $\cl$ of some points, we denote by $V(\cl)$ the set of points, $V(\cl)\mydef \bigcup_{C\in \cl} C$.
%

We denote by $B_r(x)$ the ball of radius $r$ around the point $x$.

\subsubsection{Problem formalization}\label{sec:formal}
We are given: 
\begin{itemize}
	\item special points $x_0$ and $p$,
	\item clusters $\cl\mydef \{C_0,\ldots,C_{l-1}\}$ with costs $\cost(C_i)$,
	\item an \emph{outer limit}, corresponding to the perimeter of an unbounded cluster $C_l$, which the cluster containing $x_0$ is not allowed to intersect.
\end{itemize}
We may assume that no two clusters touch or intersect, and that each cluster is convex. Unclustered points $v$ are treated as trivial clusters $C_i=\{v\}$.

\begin{definition}\label{def:wholecurve}
$Q_{\cl}(p,x_0)$ is the set of all closed simple curves $\sigma$ with $p\in \sigma$, not dissecting any cluster in $\cl$, and such that $\sigma$ is angle-monotone from $x_0$.
	
The cost of a curve $\sigma\in Q_{\cl}(p,x_0)$ is:
\[\lengt(\sigma) + \sum_{C\in \mathcal{C}, C\subseteq \exte(\sigma)} \cost(C).\]
\end{definition}

Note that we sometimes omit the subscript $\cl$ when it is clear from context.

\begin{problem}\label{prob:bestcurve}
	 Given $\cl$, $C_l$, $p$, and $x_0$ as described above, compute $\inf_{\sigma \in Q_{p,x_0}} \cost(\sigma)$.\end{problem}

Note here that even if no outer limit is given as part of the construction, we may take any bounding box around $x_0$ containing all of $V(\cl)$ as an outer limit. 

\begin{lemma}\label{lem:findcurve}
We can compute $\inf_{\sigma \in Q_{p,x_0}} \cost(\sigma)$ and also output $\argmin_{\sigma \in Q_{p,x_0}} \cost(\sigma)$ in time $O(n \log n)$, where $n$ is the total number of vertices  in $V(\cl )$.
\end{lemma}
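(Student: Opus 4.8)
My plan is to reduce this to a purely combinatorial optimization over polygons, solve it by a dynamic program driven by a ray sweeping around $x_0$, and finally speed the dynamic program up with a geodesic data structure. Since a curve $\sigma\in Q_{\cl}(p,x_0)$ is simple, closed, and angle-monotone from $x_0$, it makes exactly one full turn around $x_0$; hence $x_0\in\inte(\sigma)$, the region $\inte(\sigma)$ is star shaped seen from $x_0$, and (using that $\sigma$ dissects no cluster) every cluster of $\cl$ and the outer limit $C_l$ lies entirely inside or entirely outside $\sigma$. Writing $T\mydef\sum_{C\in\cl}\cost(C)$, we have $\cost(\sigma)=\lengt(\sigma)-\sum_{C\subseteq\inte(\sigma)}\cost(C)+T$, so it suffices to minimize $\lengt(\sigma)-\sum_{C\subseteq\inte(\sigma)}\cost(C)$. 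A taut-string argument then shows the infimum is attained by a polygon whose corners lie in $V(\cl)\cup\{p\}$: between two consecutive points where an optimal $\sigma$ touches $p$ or the boundary of some $H(C_j)$, the curve is a straight segment, because a bend in free space can be shortcut without changing its homotopy class in the complement of the cluster interiors---hence without changing which clusters it encloses---and a straight segment is automatically angle-monotone from $x_0$; contact arcs along a convex-hull boundary follow its edges. If $Q_{\cl}(p,x_0)=\emptyset$, the algorithm reports $+\infty$.

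\emph{Angular-sweep dynamic program.} Sort $V(\cl)\cup\{p\}$ by angle around $x_0$, ties broken by distance, and view the sought polygon as a path that starts at $p$, visits its corners in increasing angle through one revolution, and returns to $p$. Call a transition $u\to v$ (with $\theta_u<\theta_v$ inside the revolution) \emph{feasible} if the segment $uv$ dissects no cluster and avoids $C_l$; the wedge between the rays $x_0u$ and $x_0v$ is then split by $uv$ into a \emph{near} part and a \emph{far} part. Give each cluster $C$ a canonical angle $\theta^\ast_C$ (say, the angle of its point nearest $x_0$) and let $\mathrm{save}(u,v)$ be the total $\cost$ of the clusters $C$ with $\theta^\ast_C$ in the half-open wedge that lie on the near side of $uv$. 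Then run the recurrence
\[
D(p)\mydef 0,\qquad D(v)\mydef\min_{u\,:\,u\to v\text{ feasible}}\bigl(D(u)+\|uv\|-\mathrm{save}(u,v)\bigr),
\]
recording an optimal $u$ as $\parent(v)$, and finally evaluate the same expression for $p$ re-inserted as the closing vertex at angle $\theta_p+2\pi$. This value equals $\min_\sigma\bigl(\lengt(\sigma)-\sum_{C\subseteq\inte(\sigma)}\cost(C)\bigr)$: in a feasible polygon the no-dissection constraint forces each cluster to be consistently near or far along its entire angular extent, so the objective telescopes over the segments, with each enclosed cluster counted once (at the segment whose slab contains $\theta^\ast_C$) and each excluded cluster counted zero times; conversely every $\parent$-chain describes a feasible polygon. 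Adding $T$ yields $\inf_\sigma\cost(\sigma)$, and following $\parent$ from $p$ around back to $p$ outputs the minimizer (its interior then determines the enclosed point set, whose convex hull can be produced if wanted).

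\emph{Speed-up to $O(n\log n)$.} Evaluated naively the recurrence costs $\Theta(n^2)$. Instead, process the sorted vertices from left to right while maintaining a dynamic structure that represents, over the already-swept vertices, the map $u\mapsto D(u)$ together with which of them are still \emph{visible} from the frontier (not permanently hidden behind a cluster) and the accumulated cluster discounts---a structure in the spirit of the funnel / shortest-path-tree technique for shortest paths amid convex obstacles, suitably augmented. Sweeping a new vertex $v$ then reduces to purging the vertices just hidden ($O(1)$ amortized removals by the standard funnel analysis), querying for the optimal $\parent(v)$ and $D(v)$ in $O(\log n)$ time, and inserting $v$. The discounts fold in incrementally because, as the sweep crosses a canonical angle $\theta^\ast_C$, the paths still entitled to claim $C$ are exactly those running on the near side of their segments in $C$'s slabs, which occupy a contiguous portion of the structure.

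\emph{Main obstacle.} The crux will be establishing correctness of this augmented structure. A priori the discounts could make a non-taut detour to some vertex $u$ cheaper than the taut route, breaking the monotonicity that the funnel relies on; one must therefore argue that the discount of a transition $u\to v$ depends only on which $\theta^\ast_C$ fall in $(\theta_u,\theta_v)$ and on the side of the line through $u$ and $v$---so it behaves as an additive, appropriately monotone edge weight that leaves the funnel invariants intact---and one must bound the total number of structural updates over the whole sweep by $O(n)$. Everything else is routine: $O(n\log n)$ for the angular sort and for locating the $\theta^\ast_C$, and $O(\log n)$ amortized work per swept vertex, giving $O(n\log n)$ overall.
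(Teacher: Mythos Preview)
Your high-level setup --- reducing to polygonal curves on $V(\cl)\cup\{p\}$, then running an angular-sweep dynamic program around $x_0$ with a recurrence of the form $D(v)=\min_u\bigl(D(u)+\|uv\|-\text{savings}\bigr)$ --- is exactly what the paper does. So far so good.

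The gap is in the speed-up, which is the whole content of the lemma. You invoke ``a structure in the spirit of the funnel / shortest-path-tree technique \dots\ suitably augmented'' and then immediately concede that the ``crux will be establishing correctness of this augmented structure.'' That crux is not addressed: you neither specify what the structure maintains, nor state an invariant, nor bound the number of updates. The funnel technique as usually stated governs unweighted Euclidean shortest paths in a simple polygon; here the transitions carry additive cluster discounts, and you yourself note this threatens the monotonicity funnels rely on. Without a concrete invariant replacing that monotonicity, the $O(n\log n)$ claim is unsupported.

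For comparison, the paper's actual mechanism is quite different from a funnel. It sweeps a ray $\R$ from $x_0$ and maintains the partition $P_\R$ of $\R$ into maximal intervals of points sharing the same optimal parent and the same accumulated fixed cost. The key structural fact is that for every vertex $v$, the region $\children(v)=\{t:\parent(t)=v\}$ is star shaped seen from $v$; this forces the intervals on $\R$ to form a hierarchy, so $|P_\R|=O(n)$ at all times and an interval, once annihilated, never returns. Boundaries between adjacent intervals are either visibility lines or the parabola obtained by equating $\fix(I)+d(\parent(I),t)=\fix(I')+d(\parent(I'),t)$, so they can be stored and intersected in $O(1)$. The events are (i) reaching a vertex (four local cases depending on whether it is $l(C)$, $r(C)$, or on the inner/outer hull arc), each creating at most three new intervals, and (ii) interval annihilations; each event costs $O(\log n)$ in a balanced BST over the intervals, and there are $O(n)$ events total. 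Cluster discounts are handled by a single lazy additive update to all intervals below the entry vertex $l(C)$, which does not move any boundary. None of this is a funnel argument, and the star-shapedness lemma is the ingredient that replaces the monotonicity you were worried about. Your proposal needs that ingredient (or an equivalent) to close.
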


The rest of this section is dedicated to a proof of the lemma above.

\subsubsection{Reduction to the case where every cluster is non-trivial.}\label{sec:reduction}

We reduce the original $\mathcal{C}$ to an instance $\mathcal{C'}$ where every cluster has a non-empty interior. 

This reduction follows the framework of symbolic perturbation \cite{EdelsbrunnerM90,Yap90,Seidel1998}. We introduce an infinitesimal $\varepsilon$, and replace each vertex $v\in V(\cl )\cup\{x_0\}$ by three vertices $v', v'', v'''$ in an equilateral triangle centered at $v$. Each precluster $C' \in \cl'$ will consist of the set $\{v',v'',v''' \ | \ v\in C\}$ for some $C\in \cl$.
We thus have that every cluster in $\cl '$ has a non-trivial interior.

Finally, we replace every vertex $v\in V(\cl')$ by $v+\varepsilon^2\cdot (\mathcal{N}(0,1),0)$, that is, we perturb each point by a very small random number, such that no three points lie on the same line. Therefore, in the following we can assume all the vertices are in general position.

Note that we may disregard any vertex that does not lie on the convex hull of its cluster.

\subsubsection{Subproblem structure.} \label{sec:subprob}
For any cluster $C$ not containing $x_0$, note that the set of angles  $\{\measuredangle(p,x_0,v) \ | \ v\in H(C)\}$ is either an interval $\left[a,b\right]$ with $0<a<b<2\pi$, or the union of intervals $\left[a,2\pi\right)\cup \left[0,b\right]$, with $0<b<a<2\pi$.
Because of the symbolic perturbation introduced in
Section~\ref{sec:reduction},
these values $a$ and $b$ are realized by unique
vertices on the convex hull of $C$. Denote by $l(C)$ the vertex realizing $a$, and by $r(C)$ the vertex realizing $b$.

\begin{definition}\label{def:legalcurve}
	For $t\notin\{p,x_0\}$, $Q_{\cl}(x_0,p,t)$ is the set of all simple angle-monotone curves $\pi$ from $p$ to $t$, not dissecting any cluster $C$.
	
	Denote by $\cone (p,t)$ the cone with apex $x_0$ through $p$ and $t$. Denote by $\inte(\pi)$ and $\exte(\pi)$ the bounded, and unbounded, region of $\cone(p,t)\setminus \pi$, respectively.

	The cost of a curve $\pi$ is: 
	\[\cost(\pi) = \lengt(\pi)+\sum_{C\in \cl,l(C)\in \overline{\exte(\pi)}}\cost(C).\]
\end{definition}

\begin{observation}Let $\cl$ and $\cl'$ be as in the reduction (Section~\ref{sec:reduction}). Then, since $\varepsilon$ was chosen to be infinitesimal, the difference between
$\inf_{\pi \in \cl} \cost (\pi)$ and $\inf_{\pi' \in \cl'} \cost(\pi')$ is also infinitesimal.
\end{observation}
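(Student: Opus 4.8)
The plan is to read this statement as a quantitative continuity property and to prove it by matching near‑optimal curves. Concretely, I will show that $\inf_{\pi'\in Q_{\cl'}(x_0,p,t)}\cost(\pi')$ and $\inf_{\pi\in Q_{\cl}(x_0,p,t)}\cost(\pi)$ differ by at most $c\varepsilon$, where $c$ depends only on the fixed finite instance (and will be polynomial in the number of points); since $\varepsilon$ is treated as smaller than every positive real arising in the instance, $c\varepsilon$ is then infinitesimal, and the same argument covers the closed‑curve functional of Problem~\ref{prob:bestcurve}. Two elementary facts about the reduction of Section~\ref{sec:reduction} will be used repeatedly. First, $H(C)\subseteq H(C')$ and $\dist_H(H(C),H(C'))=O(\varepsilon)$ for every cluster: each original vertex $v$ is the centre of the equilateral triangle of side $\Theta(\varepsilon)$ on $v',v'',v'''$, which all lie in a common cluster of $\cl'$, so $v\in\inte H(C')$ whenever $v\in C$, and conversely every point of $C'$ is $O(\varepsilon)$‑close to such a $v$; in particular $\inte H(C)\subseteq\inte H(C')$. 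Second, curve lengths, the cone $\cone(p,t)$, and each extreme vertex $l(C)$ (which becomes $l(C')$) change by only $O(\varepsilon)$.

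First I would prove $\inf_{\cl}\le\inf_{\cl'}+c\varepsilon$. Fix a near‑optimal curve $\pi'$ for $\cl'$. Because $\inte H(C)\subseteq\inte H(C')$, the curve $\pi'$ is automatically $\cl$‑legal and its length is unchanged, so $\cost_{\cl}(\pi')$ and $\cost_{\cl'}(\pi')$ can differ only through which clusters are charged, i.e.\ through clusters $C$ for which $l(C)$ and $l(C')$ lie on opposite closed sides of $\pi'$ inside the cone. Here the \emph{triangle blow‑up} pays off: $l(C)$ is the centre of the triangle $T_C=\mathrm{conv}\{l(C)',l(C)'',l(C)'''\}\subseteq H(C')$, whose interior is avoided by $\pi'$, so the half‑open segment from $l(C)$ to $l(C')\in\partial T_C$ is disjoint from $\pi'$; since $\cone(p,t)$ is convex this segment stays in the cone whenever both its endpoints do, and hence $\pi'$ cannot separate $l(C)$ from $l(C')$. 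Thus, apart from the $O(\varepsilon)$ change in length, the only possible discrepancies come from clusters whose $l(C)$ lies within $O(\varepsilon)$ of a boundary ray of the cone; for the closed‑curve functional of Problem~\ref{prob:bestcurve} there is no cone and this does not arise, and in general a short case analysis based on the same shielding caps the slack by $c\varepsilon$.

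For the reverse inequality $\inf_{\cl'}\le\inf_{\cl}+c\varepsilon$ I would instead take a near‑optimal curve $\pi$ for $\cl$ and surgically turn it into a $\cl'$‑legal curve of almost the same cost. Since each $H(C')$ lies in the $O(\varepsilon)$‑collar of $H(C)$ and $\pi$ avoids $\inte H(C)$, the only places where $\pi$ enters some $\inte H(C')$ are inside these thin collars; there I would reroute $\pi$ along the convex boundary $\partial H(C')$ so as to dodge the $\Theta(\varepsilon)$‑sized protrusion of $H(C')$ past $\pi$ --- excluding the protrusion if $H(C)$ lay outside $\pi$, enclosing it if $H(C)$ lay inside $\pi$ --- so that $l(C)$ keeps its side of the curve and each detour has length $O(\varepsilon)$ times its angular width, because on the rerouted stretches $\pi$ hugs $\partial H(C)$ to within $O(\varepsilon)$. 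Since $\varepsilon$ is infinitesimal, hence smaller than every positive inter‑cluster distance and every distance from a cluster to the outer limit, the collars are isolated and no reroute creates a new dissection; summing over all clusters the total length increase is $c\varepsilon$ with $c$ polynomial in $n$, the charged‑cluster sum stays the same, and the resulting curve $\pi'$ is angle‑monotone, $\cl'$‑legal, and of cost $\le\cost_{\cl}(\pi)+c\varepsilon$. Combining the two bounds gives $\lvert\inf_{\cl'}-\inf_{\cl}\rvert\le c\varepsilon$.

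The step I expect to be the main obstacle is exactly this surgery in the reverse direction: one must check that replacing sub‑arcs of $\pi$ inside a collar by arcs of $\partial H(C')$ preserves angle‑monotonicity from $x_0$ (a case distinction on whether the sub‑arc runs along the near or the far boundary of $H(C)$ as seen from $x_0$, each of which is angle‑monotone, together with matching at the two stitching points), that the exclude‑versus‑enclose choice never forces a detour around a whole hull of length $\Theta(h(C))$, and that no reroute introduces a dissection of another cluster or of the outer limit. All three hinge on $\varepsilon$ being infinitesimally small relative to the instance's $\Omega(1)$ gaps, and --- as does the same‑side argument used in both inequalities --- on the reduction's choice of a triangle blow‑up rather than a plain perturbation of each point: the blow‑up shields every extreme vertex $l(C)$ inside a fat cluster, which simultaneously keeps the detours $O(\varepsilon)$ and forces $l(C)$ and $l(C')$ onto the same side of every legal curve.
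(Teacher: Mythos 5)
The paper gives this observation no proof at all beyond the one-line appeal to the infinitesimality of $\varepsilon$, so there is no argument of the paper's to match; your two-sided quantitative comparison is a reasonable way to substantiate it, and the shielding idea (the blown-up triangle around $l(C)$ keeps the half-open segment from $l(C)$ to $l(C')$ off any $\cl'$-legal curve, forcing the two extreme vertices to the same side) is a genuinely nice way to control the opening-cost term, since that term is not a continuous function of the point positions and cannot be handled by a length estimate alone.

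That said, as written your argument has two real gaps. First, in the direction $\inf_{\cl}\le\inf_{\cl'}+c\varepsilon$ you dismiss clusters whose extreme vertex lies within $O(\varepsilon)$ of a boundary ray of $\cone(p,t)$ with the claim that ``a short case analysis caps the slack by $c\varepsilon$.'' Taken at face value this is false: a single flip of the membership test $l(C)\in\overline{\exte(\pi')}$ — for instance $l(C)$ lying exactly on the ray through $t$ beyond $t$, with $l(C')$ rotated just outside the cone by the perturbation — changes the cost by $\cost(C)$, not by $O(\varepsilon)$. Closing this requires showing that no unfavorable flip can occur, which needs information external to the reduction itself (non-degeneracy of the configuration in which the subproblem is actually invoked, e.g.\ the good-ray property keeping input points away from the ray through $p$, plus a separate argument for the ray through $t$); infinitesimality of $\varepsilon$ alone does not give it. Second, the reverse direction rests entirely on the rerouting surgery, and the three properties you yourself flag (angle-monotonicity of the spliced curve, the exclude-versus-enclose choice preserving the charged set, no new dissections) are exactly where the work lies. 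The missing ingredient that would make your case analysis go through is the observation that an angle-monotone curve avoiding $\inte(H(C))$ stays on one side (near or far, as seen from $x_0$) of the cluster throughout that cluster's entire angular span, so the rerouting direction is well defined and each detour follows a single angle-monotone boundary chain of $H(C')$; this is not in your write-up. So the proposal is a sound and much more explicit plan than the paper's treatment, but it is not yet a complete proof.
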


\begin{definition}
$P_{\cl}(x_0,p,t)\subseteq Q_{\cl}(x_0,p,t)$ is the subset of \emph{polygonal curves} consisting of line segments between points of $V(\cl)$.
\end{definition}

\begin{lemma}
\[\inf_{\pi\in P_{\cl}(x_0,p,t)}\cost(\pi)= \inf_{\pi'\in  Q_{\cl}(x_0,p,t)}\cost(\pi').\]
\end{lemma}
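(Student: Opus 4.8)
The inequality "$\le$" is trivial since $P_{\cl}(x_0,p,t)\subseteq Q_{\cl}(x_0,p,t)$. So the whole content is the "$\ge$" direction: given an arbitrary angle-monotone simple curve $\pi$ from $p$ to $t$ that does not dissect any cluster, I want to produce a polygonal curve $\pi'\in P_{\cl}(x_0,p,t)$ with $\cost(\pi')\le\cost(\pi)$ (or arbitrarily close). The plan is to show that $\pi$ can be replaced, region by region, by the "taut string" (geodesic) version of itself inside the cone $\cone(p,t)$, homotopic to $\pi$ while avoiding the interiors of all clusters, with the same endpoints $p,t$. This taut-string curve is automatically polygonal with bends only at convex-hull vertices of clusters (a shortest path in a polygonal domain bends only at reflex vertices of the obstacles, which here are exactly the convex-hull vertices of the clusters), so it lies in $P_{\cl}(x_0,p,t)$.

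The key steps, in order: (1) Fix $\pi\in Q_{\cl}(x_0,p,t)$ and consider the closed region $\overline{\exte(\pi)}\cap\cone(p,t)$ of the cone lying "outside" $\pi$; pull $\pi$ taut within this region keeping $p,t$ fixed, obtaining the shortest curve $\pi'$ from $p$ to $t$ homotopic to $\pi$ in $\cone(p,t)$ minus the open cluster interiors. Then $\lengt(\pi')\le\lengt(\pi)$. (2) Check $\pi'$ is still angle-monotone with respect to $x_0$: since $\pi$ was angle-monotone, the homotopy class is "monotone," and shortening a monotone curve keeps it monotone — more carefully, $\pi'$ is the lower boundary (toward $x_0$) of the region swept, and any local non-monotonicity could be shortcut, contradicting tautness; one argues that on each radial ray from $x_0$ the taut curve meets the ray in a single point. (3) Check the cost term does not increase: the penalty $\sum_{C:\,l(C)\in\overline{\exte(\pi)}}\cost(C)$ depends only on which clusters have their distinguished left vertex $l(C)$ on the exterior side. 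Since $\pi'$ is homotopic to $\pi$ rel endpoints in the cone minus cluster interiors, and no cluster interior is crossed, each cluster lies entirely on one side of $\pi'$ iff it lay entirely on that side of $\pi$; in particular $l(C)\in\overline{\exte(\pi')}\iff l(C)\in\overline{\exte(\pi)}$, so the penalty sum is unchanged. (4) Identify the bend points of $\pi'$: a taut path in a domain bounded by finitely many convex obstacles (the cluster hulls) together with the two rays of the cone touches an obstacle only along its boundary and turns only at obstacle vertices; after the symbolic perturbation of Section~\ref{sec:reduction} every hull is a nondegenerate polygon, so $\pi'$ is a polygonal curve whose vertices are points of $V(\cl)$, i.e., $\pi'\in P_{\cl}(x_0,p,t)$. (5) Handle the possibility that $\pi'$ grazes the cone boundary rather than a cluster vertex: the rays bound $\cone(p,t)$ through $p$ and $t$ respectively, and an angle-monotone curve from $p$ to $t$ meeting a bounding ray again would violate monotonicity, so this does not occur (or is removed by an infinitesimal adjustment, matching the "inf" on both sides).

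**Main obstacle.** The delicate point is Step (2)–(3): making precise that "pull taut" preserves angle-monotonicity and preserves the exterior-membership of each $l(C)$. Angle-monotonicity is not a convex or obviously homotopy-invariant property, so I expect the real work is a careful argument — probably by contradiction using a radial shortcut from $x_0$ — that the geodesic in the correct homotopy class is angle-monotone whenever some curve in that class is, and that it separates the clusters from $x_0$ in exactly the same pattern as $\pi$ does. A clean way to package both is to define $\pi'$ directly as the pointwise-nearest-to-$x_0$ boundary of $\overline{\exte(\pi)}\cap\cone(p,t)$ minus the cluster interiors, i.e., as a "staircase"/lower envelope construction, and then verify it is shorter, monotone, polygonal, and side-preserving — the monotonicity and polygonality then come almost for free from the envelope description, leaving only the length comparison, which follows from the taut-string characterization.
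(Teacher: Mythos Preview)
Your proposal is correct and follows essentially the same idea as the paper's proof: replace $\pi$ by the shortest curve in its ``class'' and observe that this curve is polygonal on $V(\cl)$, no longer than $\pi$, and leaves the same clusters on the exterior side. The paper frames this more tersely as ``the shortest curve separating $V^{\inte}$ from $V^{\exte}$'' (where $V^{\inte},V^{\exte}$ are the points of $V(\cl)$ on the interior, respectively exterior, side of $\pi$) rather than as a taut-string geodesic in a homotopy class; that framing makes your Step~(3) immediate by construction but leaves your Steps~(2), (4), and (5) (angle-monotonicity, non-dissection, polygonality with bends only at points of $V(\cl)$) entirely implicit. Your write-up is considerably more careful on exactly the points the paper glosses over, but the underlying argument is the same.
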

\begin{proof}
For any curve $\pi$ in $Q_{\cl}(x_0,p,t)$, let $V^{\inte }$ denote the set of internal points, and $V^{\exte}$ denote the set of external points. Then, the shortest curve separating $V^{\inte}$ and $V^{\exte}$ will consist of line segments between vertices of $V^{\inte} \cup V^{\exte}$. 
Since $\lengt(\pi') \le \lengt(\pi)$, and since they cover the same clusters, $\cost(\pi') \le \cost(\pi)$.
\end{proof}
  
\begin{definition}
For 
$p\neq t\neq x_0$, 
$P_{\cl}(x_0,p,t)\neq \emptyset$,
let $\pi(t) = \argmin_{\pi\in P_{\cl}(x_0,p,t)}\cost(\pi)$. Since $\pi(t)$ is polygonal, we may write it as the polygonal curve on the vertex set $p=\pi_1,\ldots ,\pi_z = t$.
We say that $\pi_{z-1}$ is the \emph{parent} $\parent(t)$ of $t$.
\end{definition}

The cost of $\pi(t)$ may be rewritten in the following way:
\begin{align*}
\cost & (\pi(t))\\
&=\lengt(\pi_1,\ldots,\pi_{z-1}) + \sum_{C\in\cl, l(C)\in \overline{\exte(\pi)}}\cost(C) + d(\pi_{z-1}, t).
\end{align*}
We denote by $\fix(t)$ the expression $\lengt(\pi_1,\ldots,\parent(t)) + \sum_{C\in\cl, l(C)\in \overline{\exte(\pi)}}\cost(C)$.

When $P_{\cl}(x_0,p,t) = \emptyset$ for some point not contained in any cluster $C\in\cl$, we set $\parent(t)=\bot$ and $\fix(t)=\infty$. When a point lies in the interior of some cluster, $t\in \inte(C)$, or when $t$ lies outside the outer limit, 
we write $\parent(t)=\emptyset$ and $\fix(t)=\infty$.

\subsubsection{Calculating parents via a data structure of intervals.}

For each point $v\in V(\cl)$, we may calculate its angle $\measuredangle(p,x_0,v)$. We may sort the vertices of $V(\cl)$ increasing according to their angle, $p=v_0,v_1,\ldots,v_l=p'$, where $p'$ represents a copy of $p$ with $\measuredangle(p,x_0,p')=2\pi$, for convenience.

We rotate a sweep-ray $\R$ from $x_0$ through $v_i$ for $i=0,1,\ldots,l$. While doing so, we maintain a data structure that encodes for each $r\in \R$ the values $\fix(r)$ and $\parent(r)$.
Here, if $\parent(v_i)=v_j$, then $j<i$.

The data structure maintains a partitioning of $\R$ into maximal intervals $\{I_0,\ldots, I_l\}$ with the same values of $(\fix(t),\parent(t))$ for all $t\in I_i$. 
The first and last intervals are special in the sense that the first, $I_0$, is dedicated to the cluster containing $x_0$, and the last, $I_l$, is dedicated to ensuring that we do not cross the outer limit specified in the problem instance.

We observe that the boundaries between intervals only change continuously, and we thus only store an implicit representation of them. We only consider the sweep-ray at discrete times when either
\begin{itemize}
	\item we reach a vertex $v_i$, corresponding to a vertex insertion in the data structure, or,
	\item an interval shrinks to $\emptyset$, corresponding to the interval being deleted from the data structure.
\end{itemize}

For any vertex $v\in V(\cl)$, it dominates a region of 
$\mathbb{R}^2$ in 
the sense that every point in that region would have $v$ as its parent. We 
call these points the \emph{children} of $v$:
\[\children(v)=\{t\in \mathbb{R}^2 \ | \ v=\parent(t)\}.\]

\begin{lemma}\label{lem:starchildren}
	The region $\children(v)$ is star shaped seen from $v$.
\end{lemma}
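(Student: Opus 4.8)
The plan is to show that if a point $t \in \children(v)$, then every point on the segment $vt$ also has $v$ as its parent, i.e. also lies in $\children(v)$. The natural approach is to argue via the optimality structure of best paths $\pi(t)$. Recall that $t \in \children(v)$ means $\parent(t) = v$, so the best polygonal curve $\pi(t) \in P_{\cl}(x_0,p,t)$ ends with the segment from $v$ to $t$, and $\cost(\pi(t)) = \fix(t) + d(v,t)$ where $\fix(t)$ is the cost of the best path $\pi(v)$ to $v$ (more precisely the "frozen" part: $\fix(t) = \length(\pi_1,\dots,v) + \sum_{C:\, l(C) \in \overline{\exte(\pi)}} \cost(C)$, which for the optimal choice equals the best path-prefix cost to $v$ given that the clusters covered along the way are exactly those separated by the path $\pi(v)$ extended to $t$).

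First I would fix $t \in \children(v)$ and an intermediate point $t'$ strictly between $v$ and $t$ on segment $vt$. I want to show $\parent(t') = v$. Consider the path $\pi'$ obtained from $\pi(t)$ by truncating the last segment at $t'$ instead of $t$; this is a valid curve in $P_{\cl}(x_0,p,t')$ (it is still simple, still angle-monotone since the terminal segment lies along a ray from $x_0$... actually one must be slightly careful: angle-monotonicity requires the angle strictly increasing, so truncating a final segment preserves it), not dissecting any cluster (a subsegment of a non-dissecting segment cannot dissect a cluster), and it covers the same clusters whose $l(C)$ lie in the exterior — here I would need the observation that the clusters with $l(C)$ in the angular range swept between $v$'s angle and $t$'s angle, and on the appropriate side, are the same for $t$ and $t'$ since $t$ and $t'$ are colinear with $x_0$ hence have the same angle $\measuredangle(p,x_0,t) = \measuredangle(p,x_0,t')$. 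This gives $\fix(t') \le \fix(t)$ and more importantly that the path through $v$ of length $\length(\pi(v)) + d(v,t')$ is available to $t'$.

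The key step — and the main obstacle — is the converse: showing no \emph{better} path to $t'$ exists, i.e. that $\parent(t')$ cannot be some vertex $u \ne v$. Suppose it were. Then $\pi(t')$ ends with segment $u t'$, and $\cost(\pi(t')) = \fix(t') + d(u,t') < \fix''(t') + d(v,t')$ where $\fix''$ is the cost of going via $v$. Now I would extend $\pi(t')$ past $t'$ to reach $t$: replace the last segment $ut'$ by the two segments $u t'$ and $t' t$ (or, since $t'$, $t$, $v$, $x_0$ are colinear in angle, note $t't$ continues along the same ray). This yields a valid curve in $P_{\cl}(x_0,p,t)$ covering the same clusters as $\pi(t)$, with cost $\cost(\pi(t')) + d(t',t) < \fix''(t') + d(v,t') + d(t',t) = \cost(\pi(t))$ — provided we can verify the extension is simple, angle-monotone, and cluster-respecting, and that the cluster-coverage bookkeeping matches up (this is where the colinearity of $v, t', t$ with respect to $x_0$'s angle is essential, since it guarantees the sets $\{C : l(C) \in \overline{\exte(\cdot)}\}$ agree). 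This contradicts the optimality of $\pi(t)$. Hence $\parent(t') = v$, so $t' \in \children(v)$, proving star-shapedness seen from $v$. The subtle points to nail down are (a) the tie-breaking / maximality conventions baked into the definition of $\pi(t)$ when there are equal-cost paths, and (b) the degenerate cases where the truncation or extension segment has zero length or where $t'$ happens to coincide with $v$ (then trivial) — but general position from Section~\ref{sec:reduction} rules out most coincidences. I would also handle the boundary cases where $\fix(t) = \infty$ separately: if $t$ has no valid path then it is not in $\children(v)$ at all, so there is nothing to prove.
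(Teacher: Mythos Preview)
Your approach is the same as the paper's: assume $t'\in vt$ has $\parent(t')=u\ne v$, extend the optimal path $\pi(t')$ by the segment $t't$, and derive a path to $t$ strictly cheaper than $\pi(t)$, contradicting optimality. The paper's proof compresses all of this into three sentences.

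There is, however, one genuine error in your write-up. You assert that $t$ and $t'$ ``are colinear with $x_0$ hence have the same angle $\measuredangle(p,x_0,t)=\measuredangle(p,x_0,t')$.'' This is false: $t'$ lies on the segment $vt$, not on the ray from $x_0$ through $t$, and $v\ne x_0$ in general. What actually holds is the strict ordering $\measuredangle(p,x_0,v)<\measuredangle(p,x_0,t')<\measuredangle(p,x_0,t)$, which is what you need for angle-monotonicity of the truncation and of the extension. The cluster-coverage bookkeeping you worry about does work out, but not for the reason you give: the correct justification is that when you extend the two competing paths to $t'$ (the one through $u$ and the one through $v$) by the \emph{same} terminal segment $t't$, the added length $d(t',t)$ is identical, and in the angular wedge between $t'$ and $t$ both extended curves coincide on $t't$, so any cluster $C$ with $l(C)$ in that wedge is classified identically by both. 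Hence the strict cost inequality at $t'$ survives the extension to $t$. Once you replace the same-angle claim with this observation, your proof is complete and matches the paper's.
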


\begin{proof}
	Let $t\in \children(v)$, 
	denote by $\pi$ the optimal curve from $p$ to $t$, 
	and assume for contradiction $t'\notin \children(v)$ for some $t'$ on the line segment between $t$ and $v$. 
	Then, $\parent(t')=v'\neq v$. 
	Let $\pi'$ be the optimal path from $p$ to $t'$. Note that $\pi'$ passes through $\parent(t') = v' \ne v$. But then, since both $\pi$ and $\pi'$ pass through $t'$, concatenating $\pi'$ with the line segment $t't$ is better than $\pi$.
\end{proof}

Denote by $P_\R$ the partitioning of the ray $\R$ into intervals consisting of points that have the same parent. For an interval $I\in P_\R$, denote by $\parent(I)$ the shared parent $p(t)$ of all points $t\in I$.

\begin{lemma}
	$|P_\R|\leq 4n+4$ for all angles.
\end{lemma}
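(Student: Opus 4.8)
The plan is to show that the sweep-ray $\R$ is partitioned into a bounded number of maximal intervals, each carrying a single parent. First I would partition $\R$ according to which cluster $C \in \cl \cup \{C_l\}$ the point $r \in \R$ lies inside (if any). Since each cluster is convex, the set of $r \in \R$ lying inside a given convex cluster $C$ is a single sub-interval of $\R$ (or empty); these contribute intervals on which $\parent(r)=\emptyset$. There are at most $\ell + 1 \le n+1$ such cluster-intervals, and they are disjoint since no two clusters intersect. They split the remaining part of $\R$ into at most $n+2$ ``free'' sub-rays on which every point lies outside all clusters and outside the outer limit, and hence has a well-defined (possibly $\bot$) parent in $V(\cl)$.

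Next I would bound the number of parent-changes within a single free sub-ray. The key is Lemma~\ref{lem:starchildren}: for each vertex $v \in V(\cl)$, the region $\children(v)$ is star-shaped seen from $v$. A ray $\R$ emanating from $x_0$ intersects a region star-shaped from $v$ in at most \emph{two} connected components — this is the standard fact that a line meets a star-shaped-from-$v$ region in an interval once you condition on which side of $v$ you are, and $\R$ is split into at most two pieces by the foot of the perpendicular from $v$ (more carefully: the two half-lines of the full line through $\R$ from the point closest to $v$ each meet $\children(v)$ in an interval, so $\R$ itself, being part of that line, meets it in at most two intervals). Summing over all $n$ vertices $v$, the total number of maximal constant-parent intervals coming from genuine vertex-parents is at most $2n$. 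Adding in the at most $n+1$ cluster-interior intervals, the two special boundary intervals $I_0$ (for the cluster containing $x_0$) and $I_l$ (for the outer limit), and accounting for the at most $n+2$ intervals where $\parent = \bot$ (no legal polygonal curve exists — which in fact can be absorbed, as these are exactly the gaps not covered by any $\children(v)$ and are bounded by the same breakpoints), the count is at most $4n + O(1)$, and a careful accounting gives $4n+4$.

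I expect the main obstacle to be the precise bookkeeping rather than any deep idea: one must verify that each breakpoint of $P_\R$ is charged to either (a) one of the $\le 2$ endpoints of an interval $\R \cap \children(v)$ for some vertex $v$, (b) the boundary of one of the $\le n+1$ convex cluster cross-sections, or (c) one of the two special intervals, and that these charges do not double-count. In particular one must argue that on each free sub-ray the parent is locally constant except at the finitely many star-shaped-region boundaries — this follows because $\fix$ and $\parent$ are determined by the combinatorial type of the optimal polygonal curve, which changes only when $r$ crosses from $\children(v)$ to $\children(v')$. A small subtlety is that two different vertices' children-regions might share a boundary (where the optimal path is ambiguous between ending at $v$ or $v'$); the symbolic-perturbation and general-position assumption from Section~\ref{sec:reduction} rules out such coincidences, so each breakpoint is charged exactly once, yielding the bound $|P_\R| \le 4n+4$.
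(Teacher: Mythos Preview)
Your overall decomposition into $\emptyset$-intervals, $\bot$-intervals, and vertex-parent intervals matches the paper, but the core step --- bounding the number of vertex-parent intervals by $2n$ --- rests on a false claim. You assert as a ``standard fact'' that a line (or each half-line from the foot of the perpendicular from $v$) meets a region star-shaped from $v$ in a single interval, hence a ray in at most two. This is simply not true of star-shaped regions. Take $v=(0,1)$, the line $\ell$ to be the $x$-axis, and let $R$ be the union of the segments from $v$ to $(1,0)$, $(2,0)$, $(3,0)$ (thickened slightly). Then $R$ is star-shaped from $v$, the foot of the perpendicular is the origin, and the right half-line meets $R$ in three components. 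Star-shapedness from $v$ only says that rays \emph{from $v$} meet the region in an interval; it says nothing about rays from $x_0\neq v$. You have not argued any additional structure of $\children(v)$ that would rescue the claim, so the bound of $2n$ is unsupported.

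The paper extracts a different consequence of Lemma~\ref{lem:starchildren}: it shows that the vertex-parent intervals along $\R$ form a \emph{hierarchy} (a laminar structure). Concretely, if $I_i$ and $I_k$ share parent $v$ and some $I_j$ with $i<j<k$ has parent $w\neq v$, then \emph{every} interval with parent $w$ lies strictly between $I_i$ and $I_k$. This nesting is what star-shapedness actually buys you (via an intersection argument between segments $v t$ and $w t'$), and a standard count on such laminar families gives at most $2n-1$ vertex-parent intervals. Combined with at most one $\emptyset$-interval and one $\bot$-interval per cluster crossing $\R$, plus the two special end intervals, this yields $4n+4$. So the missing idea is to replace ``few components per vertex'' (false) with ``laminar across vertices'' (true).
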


\begin{proof}
Enumerate the intervals of $\R$ consecutively by $I_0,I_1,\ldots$, such that $I_0$ has $x_0$ as an endpoint.

For each cluster in $\cl$ with a non-trivial intersection with $\R$, we have an unreachable interval $I$ with $\parent(I) = \emptyset$. A cluster intersecting with $\R$ can contribute with at most $1$ unreachable interval $I$ with $\parent(I)=\bot$, namely, the interval above or below the cluster. For any other interval $I$ we have that $\parent(I) \in \{v_0,\ldots,v_l\}$. It follows from Lemma~\ref{lem:starchildren} that the intervals form a hierarchy
, that is, if $i<j<k$, and $\parent(I_i) = \parent(I_k)\neq \parent(I_j)$, then for any $I_l$ with $\parent(I_l)=\parent(I_j)$, it must hold that $i<l<k$. It follows that there are at most $2 n$ such intervals.
\end{proof}

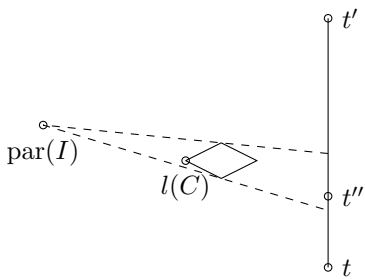
\begin{figure}[ht!]
\begin{center}
\begin{tikzpicture*}{0.3\textwidth}
\begin{scope}[
tinyvertex/.style={
draw,
circle,
minimum size=1mm,
inner sep=0pt,
outer sep=0pt%
},
every label/.append style={
rectangle,
font=\small,
}
]

\node[tinyvertex,label={below:$\parent(I)$}] (par) at (0,2) {};
\node[tinyvertex,label={below:$l(C)$}] (left) at (2,1.5) {};
\node[tinyvertex,label={right:$t''$}] (notin) at (4,1) {};
\node[tinyvertex,label={right:$t$}] (bot) at (4,0) {};
\node[tinyvertex,label={right:$t'$}] (top) at (4,3.5) {};

\draw plot coordinates {(2, 1.5) (2.5, 1.25) (3, 1.5) (2.5, 1.75) (2, 1.5)};

\draw [dashed] plot coordinates {(4,1.6) (0,2) (4,0.8)}; 
  
 \draw plot coordinates {(4,0) (4,3.5)}; 
\end{scope}
\end{tikzpicture*}
\end{center}
\caption{The value of $\fix(t)$ is fixed within an interval.}
\label{fig:fixedcost}
\end{figure}

\begin{lemma}
	For any $I\in P_\R$, for $t,t'\in I$, we have $\fix(t) = \fix(t')$.
\end{lemma}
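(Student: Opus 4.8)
I fix an interval $I\in P_\R$ and set $v\mydef\parent(I)$, the common parent of the points of $I$. If $v$ is not a vertex of $V(\cl)$ --- i.e.\ $I$ is an unreachable interval, an interval inside a cluster, or the interval beyond the outer limit --- then $\fix\equiv\infty$ on $I$ by definition and there is nothing to prove, so I assume $v\in V(\cl)$. Two preliminary facts drive the argument. First, for every angle-monotone (from $x_0$) curve $\rho$ from $p$ to $v$ and every point $s$ with $\measuredangle(p,x_0,v)<\measuredangle(p,x_0,s)$, appending the segment $vs$ to $\rho$ yields a curve $\rho'$ with
\[\inte(\rho')\;=\;\inte(\rho)\,\cup\,\triangle(x_0,v,s),\]
the two pieces meeting only along the segment $\overline{x_0v}$; this is a direct inspection of boundaries, using that $\inte(\rho)$ lies in the angular wedge up to $\measuredangle(p,x_0,v)$ while $\triangle(x_0,v,s)$ lies in the wedge from $\measuredangle(p,x_0,v)$ to $\measuredangle(p,x_0,s)$. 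Second, for every $t\in I$ the optimal polygonal curve $\pi(t)$ is obtained by appending $vt$ to $\pi(v)$; in particular its prefix up to $\parent(t)=v$ is exactly $\pi(v)$, and $vt$ is non-dissecting. Indeed, $\parent(t)=v$ forces $\pi(t)$ to end with the segment $vt$, its prefix to be some $\rho\in P_{\cl}(x_0,p,v)$, and $vt$ to be non-dissecting; by the first fact, $\cost(\rho\cdot vt)=(\text{cost of }\rho\text{ as a curve from }p\text{ to }v)+d(v,t)-\sum_{C:\,l(C)\in\inte(\triangle(x_0,v,t))}\cost(C)$, where the subtracted sum does not depend on $\rho$, and membership of $\rho\cdot vt$ in $Q_{\cl}(x_0,p,t)$ depends on $\rho$ only through $\rho\in Q_{\cl}(x_0,p,v)$ (the two pieces meet only at $v$, and $vt$ is already known non-dissecting); hence the cost-minimizing prefix is $\pi(v)$.

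Now take $t,t'\in I$ with $t$ closer to $x_0$ than $t'$ along $\R$. Since $t$ and $t'$ lie on the same ray $\R$ from $x_0$, we have $\cone(p,t)=\cone(p,t')$, so $\fix(t)$ and $\fix(t')$ are sums over the same family of clusters. Using $\pi(t)=\pi(v)$ followed by $vt$ (and likewise for $t'$) and the interior decomposition, split $\triangle(x_0,v,t')$ along the chord $\overline{vt}$ to get $\inte(\pi(t'))\setminus\inte(\pi(t))=\triangle(v,t,t')$ up to a null set, whence
\[\fix(t')-\fix(t)\;=\;-\!\!\sum_{\substack{C\in\cl\\ l(C)\in\inte(\triangle(v,t,t'))}}\!\!\cost(C).\]
So the lemma reduces to the geometric claim that for no cluster $C\in\cl$ does its leftmost vertex $l(C)$ lie in the open triangle with corners $v,t,t'$.

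I would prove this claim by contradiction. Suppose $l(C)\in\inte(\triangle(v,t,t'))$; in particular $l(C)\ne v$. The ray from $v$ through the interior point $l(C)$ leaves the triangle through the opposite side $\overline{tt'}\subseteq\R$ at a point $t^\ast\in(t,t')\subseteq I$, with $l(C)$ on the segment $\overline{vt^\ast}$. Since $t^\ast\in I$, the segment $vt^\ast$ is non-dissecting, so the line through $v$ and $l(C)$ is a supporting line of the convex region $H(C)$, touching it only at the vertex $l(C)$. I claim $\pi(t^\ast)=\pi(v)\cdot vt^\ast$ is then not optimal. Replace the portion of $vt^\ast$ grazing $l(C)$ by a short detour passing just on the far side (away from $x_0$) of $l(C)$: if $H(C)$ lies on the near side of the supporting line $vt^\ast$, an infinitesimal outward bulge at $l(C)$ suffices; otherwise route the detour tightly around $\partial H(C)$ on its far side, which stays angle-monotone since in that case $H(C)$ cannot reach in angle past $t^\ast$ (otherwise $vt$ would dissect $H(C)$ for $t$ slightly beyond $t^\ast$, contradicting $t\in I$). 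The resulting curve $\gamma$ is simple, angle-monotone and non-dissecting (the detour avoids $\inte H(C)$ and, for a small enough offset, avoids the other clusters too), goes from $p$ to $t^\ast$, and has $l(C)\in\inte(\gamma)$ --- hence $H(C)\subseteq\inte(\gamma)$ because $\gamma$ does not dissect $H(C)$ --- so $\gamma$ encloses $C$ whereas $\pi(t^\ast)$ does not. Its length exceeds $\lengt(\pi(t^\ast))$ by at most $h(C)$ (convexity of $H(C)$), while $\gamma$ saves $\cost(C)$, which strictly exceeds $h(C)$ (a cluster's cost is its perimeter plus a positive opening cost); thus $\cost(\gamma)<\cost(\pi(t^\ast))$. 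But $\gamma\in Q_{\cl}(x_0,p,t^\ast)$, and since $\inf_{P_{\cl}(x_0,p,t^\ast)}\cost=\inf_{Q_{\cl}(x_0,p,t^\ast)}\cost=\cost(\pi(t^\ast))$, this is a contradiction. Hence no such $l(C)$ exists and $\fix(t)=\fix(t')$.

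The routine parts are the interior decomposition $\inte(\pi(t))=\inte(\pi(v))\cup\triangle(x_0,v,t)$ and the consequent formula for $\fix(t')-\fix(t)$; the one genuinely delicate point is the last step, namely converting the ``grazing'' edge $vt^\ast$ into a valid enclosing detour that is strictly cheaper than $\pi(t^\ast)$ --- equivalently, ruling out that, as the endpoint moves along $I$, the side $vt$ of the triangle sweeps over the leftmost vertex of some cluster.
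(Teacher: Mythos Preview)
Your setup is correct and in fact more carefully justified than the paper's own proof: you explicitly establish that the prefix of $\pi(t)$ up to $v=\parent(I)$ equals $\pi(v)$ (the paper takes this for granted), and you correctly reduce the lemma to the claim that no $l(C)$ lies in the open triangle $\triangle(v,t,t')$.

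However, your contradiction step is considerably more complicated than necessary, and Case~2 has a real gap. The paper's argument is a single stroke: since $l(C)$ lies in the open triangle and (after the symbolic perturbation of Section~2.6.3) $H(C)$ has nonempty interior, some interior point $q$ of $H(C)$ also lies in that open triangle; the segment from $v$ through $q$ meets the side $\overline{tt'}$ at some $t''\in(t,t')\subseteq I$, and $vt''$ dissects $C$. Hence no legal curve to $t''$ can end with the segment $vt''$, so $\parent(t'')\neq v$, contradicting $t''\in I$. No case split, no detour construction, and no assumption on $\cost(C)$ is needed. You essentially discover this argument yourself in your parenthetical (``otherwise $vs$ would dissect $H(C)$ for $s$ slightly beyond $t^\ast$, contradicting $s\in I$''), but you only use it to constrain the angular extent of $H(C)$ rather than as the contradiction itself; applied directly, it finishes the proof in both of your cases at once.

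The gap in your Case~2 is the inequality $\cost(C)>h(C)$. This holds in the intended application (each pre-cluster carries cost $h(C)+\eta$), but it is not part of the abstract problem formulation of Section~2.6.2, where the values $\cost(C_i)$ are arbitrary inputs. In addition, the claim that your detour around $\partial H(C)$ is simple, non-dissecting, and angle-monotone is asserted rather than proved; when $H(C)$ straddles the ray $\R$, your angle-monotonicity justification does not obviously go through. All of this is bypassed by the one-line dissection argument above.
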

\begin{proof}
Let $t,t'\in I$. Then, $\parent(t)=\parent(t')=\parent(I)$. But then, the only way $\fix(t)$ could be different from $\fix(t')$ would be if the set of clusters $C\in \cl, l(C)\in \overline{\exte(\parent(t)t)}$ differed from the set of clusters $C\in \cl, l(C)\in \overline{\exte(\parent(t')t')}$.

Assume for contradiction that there exists a cluster $C$ with $l(C) \in \overline{\exte(\parent(I)t')}$ and $l(C) \notin \overline{\exte(\parent(I)t)}$ (see Figure~\ref{fig:fixedcost}). Then, there would exist a point $t''\in \R$ between $t$ and $t'$ with $\parent(t'')\neq \parent(I)$, since any line from $t''$ dissects $C$. But then, $t''\in I$ with $\parent(t'')\neq \parent(I)$; contradiction.
\end{proof}

We denote by $\fix(I)$ the constant value $\fix(t)$ for any $t\in I$.

\paragraph{Determining the boundaries of regions.}
We may categorize boundaries based on how they arise.

Some are lines due to visibility (see Figure~\ref{fig:boundary}, left), while
some are found by equating two costs:
\[\fix(I)+d(\parent(I),t) = \fix(I')+d(\parent(I'),t).\]
One may easily check that this yields a quadratic equation, and thus describes a parabola or a line (see Figure~\ref{fig:boundary}, right). 

\ifstoc
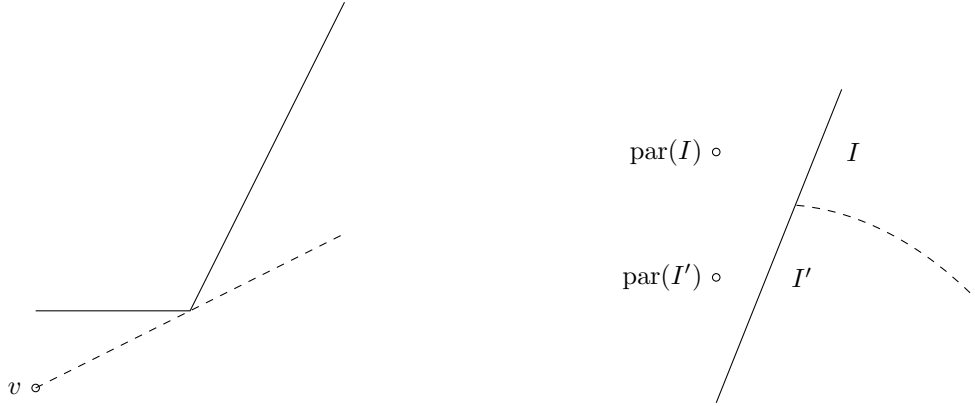
\begin{figure*}[ht!]
\else
\begin{figure}[ht!]
\fi
\begin{center}
\begin{tikzpicture*}{0.28\textwidth}
\begin{scope}[
tinyvertex/.style={
draw,
circle,
minimum size=1mm,
inner sep=0pt,
outer sep=0pt%
},
every label/.append style={
rectangle,
font=\small,
}
]

\node[tinyvertex,label={left:$v$}] (s) at (0,0) {};

\draw plot coordinates {(0, 1) (2, 1) (4, 5)};

\draw [dashed] plot coordinates {(0,0) (4,2)}; 
\end{scope}
\end{tikzpicture*}
\hspace{0.2\textwidth}
\begin{tikzpicture*}{0.28\textwidth}
	\begin{scope}[
		tinyvertex/.style={
			draw,
			circle,
			minimum size=1mm,
			inner sep=0pt,
			outer sep=0pt%
		},
		every label/.append style={
			rectangle,
			font=\small,
		}
		]
		
		\node[tinyvertex,label={left:$\parent(I)$}] (v) at (0,4) {};
		\node[tinyvertex,label={left:$\parent(I')$}] (vp) at (0,2) {};
		\node[label={right:$I'$}] (I) at (0.875,2) {};
		\node[label={right:$I$}] (Ip) at (1.75,4) {};

		\draw plot coordinates {(0,0) (2, 5)};
		
		\draw [dashed, rotate=-5] (1,3.25) parabola (4,2); 
	\end{scope}
\end{tikzpicture*}
\end{center}
\caption{Boundaries between intervals are either a line due to visibility (left) or described by a quadratic equation (right).}
\label{fig:boundary}
\ifstoc
\end{figure*}
\else
\end{figure}
\fi

Hence, we may store the intervals and the functions describing the boundaries between the intervals on the ray. We may build a balanced binary search tree over these intervals, such that each leaf is an interval, and each non-leaf node stores the function describing the boundary between its children. We call this the \emph{succinct representation} of $P_\R$. We note that this succinct representation is enough to determine the parent of a point on the sweep-ray:

\begin{lemma}\label{lem:findparent}
	Given a succinct representation of $P_{\R_{\alpha}}$ at some angle $\alpha$, and assuming no boundaries of intervals intersect between angles $\alpha$ and $\beta$, 
	then for any $t\in \R_{\beta}$ we can find $(\fix (t), \parent(t))$ in $O(\log n)$ time. 
\end{lemma}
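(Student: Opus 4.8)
The plan is to argue that the balanced binary search tree forming the succinct representation of $P_{\R_\alpha}$, built at angle $\alpha$, is still a valid search structure at angle $\beta$, so that a single root-to-leaf descent locates the interval of $\R_\beta$ containing $t$; reading that interval's stored label then yields $(\fix(t),\parent(t))$.

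First I would unpack what the hypothesis ``no boundaries of intervals intersect between angles $\alpha$ and $\beta$'' buys us. Each interval boundary moves continuously with the angle (it is a line or a parabola, as noted above), and the only combinatorial events that can change the list of intervals are a vertex crossing the sweep-ray or two boundaries coinciding so that an interval shrinks to $\emptyset$ — both of which are excluded on the angular range strictly between $\alpha$ and $\beta$. Hence, over this range, no interval is created or destroyed, the near-to-far order of the intervals along the ray is preserved, and each interval keeps the same label $(\fix(I),\parent(I))$: indeed $\fix(I)$ depends only on $\parent(I)$ and on the (already fixed) optimal path from $p$ to $\parent(I)$ together with the set of clusters whose $l(C)$ lies on the exterior side, and none of these changes while no cluster vertex crosses the ray and no interval vanishes. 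The only thing that moves is the numeric position of each boundary on the ray, which is recovered by evaluating the stored boundary function at the angle $\beta$.

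Given this, the search is immediate. Recall that the succinct representation stores, at every internal node $u$, the function $f_u$ describing the boundary between the last interval of $u$'s left subtree and the first interval of $u$'s right subtree (subtrees ordered by distance from $x_0$ along the ray). To locate $t\in\R_\beta$, descend from the root: at node $u$, evaluate $f_u(\beta)$ in $O(1)$ time to obtain the actual separating point on $\R_\beta$; if $t$ lies on the $x_0$-side of this point, recurse into the left subtree, otherwise into the right subtree. Since no boundaries crossed, the left-to-right leaf order still matches the near-to-far order of intervals on $\R_\beta$, so the descent terminates at the leaf whose interval $I$ contains $t$, and we output $(\fix(t),\parent(t))=(\fix(I),\parent(I))$. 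The tree is balanced over $|P_\R|=O(n)$ intervals (using the bound $|P_\R|\le 4n+4$ established above), so the descent has depth $O(\log n)$ and each step costs $O(1)$, giving the claimed $O(\log n)$ running time.

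The main obstacle, and the part deserving the most care, is the first step: rigorously justifying that the ``no boundary intersections'' hypothesis genuinely freezes the interval list, its order, and its labels — in particular, treating the two special intervals $I_0$ (the cluster containing $x_0$) and $I_l$ (enforcing the outer limit), verifying that a boundary cannot silently change which pair of adjacent labels it separates without two boundaries coinciding, and confirming that $\fix(I)$ is truly angle-independent on the range. Once that structural invariant is in place, the remainder is a routine binary search over a static tree.
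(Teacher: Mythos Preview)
Your proof is correct and follows essentially the same approach as the paper: descend the balanced binary search tree from the root, at each internal node evaluate the stored boundary function at angle $\beta$, and recurse into the appropriate child, for a total of $O(\log n)$ time. The paper's own proof is actually much terser than yours---it simply states the search mechanics without the careful justification you give that the interval list, its order, and its labels are frozen under the hypothesis---so your version is, if anything, more complete.
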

\begin{proof}
This can be done by searching the balanced binary search tree of $\R_{\alpha}$. At each node, starting with the root, we may evaluate the function describing the boundary between its children at the angle $\beta$, and recurse into the child whose range contains $t$. Since the tree is a balanced binary search tree over the $O(n)$ intervals, it has height $O(\log n)$, and thus the total time is $O(\log n)$. 
\end{proof}

\paragraph{Initializing the data structure.}

The sweep-ray $\R$ is initialized at angle $0$ (that is, with $p\in \R$). It is initialized with an interval $I_0$ containing $x_0$. For each cluster crossing the line segment between $x_0$ and $p$, we have two consecutive intervals: an unreachable interval $I_{2m-1}$ with $\parent({I_{2m-1}})=\bot$, and an interval corresponding to the inside of the cluster $I_{2m}$ with $\parent(I_{2m})=\emptyset$.
Then, there will be an interval $I_p$ with $\parent(I_p)=p$, and then, for each cluster intersecting the ray after $p$, there will be two consecutive intervals $I_{2m}$ and $I_{2m+1}$ with $\parent(I_{2m})=\emptyset$ and $\parent({I_{2m+1}})=\bot$.
Finally, the last interval will be delimited by the outer limit given as part of the input.

For any pair of consecutive intervals, exactly one of them corresponds to a cluster, and the boundary between them follows the line segment of the convex hull of the cluster.

For all intervals $I$, $\fix(I)$ is initialized to $0$.
 
\subsubsection{Updating the data structure.}
We now describe how we update the interval structure of the sweep-ray.

There are two types of updates. Vertex insertions, and interval annihilations. Whenever we update the data structure, we can foresee when the next interval annihilation will occur.  Thus, when we encounter a point, we know that the topology of $P_\R$ has not changed.
\paragraph{Interval annihilations.}
An interval $I_i$ shrinks to $\emptyset$ if and only if its boundaries to its two neighbors, $I_{i-1}$ and $I_{i+1}$, intersect (see Figure~\ref{fig:annihilation}).

For any triplet of consecutive intervals, $I_{i-1},I_{i},I_{i+1}$, we store the predicted annihilation-angle for $I_i$, that is, when $I_{i-1}$ and $I_{i+1}$ have dominated it. Whenever an interval disappears, only at most two such triplets are affected, namely, we must make a new predicted annihilation-angle for $I_{i-1}$ and for $I_{i+1}$, where they take each other into account. 	

Note that by star shapedness (Lemma~\ref{lem:starchildren}), $\parent(I_{i-1})\neq \parent(I_{i+1})$, and thus we never have to merge intervals.

\ifstoc
\begin{figure*}[ht!]
\else
\begin{figure}[ht!]
\fi
\begin{center}
\begin{tikzpicture*}{0.5\textwidth}
	\begin{scope}[
		tinyvertex/.style={
			draw,
			circle,
			minimum size=1mm,
			inner sep=0pt,
			outer sep=0pt%
		},
		every label/.append style={
			rectangle,
			font=\small,
		}
		]
	
		\node[tinyvertex,label={left:$\parent(I)$}] (v) at (-2,5) {};
		\node[tinyvertex,label={left:$\parent(I')$}] (vp) at (-2,3.2) {};
		\node[tinyvertex,label={left:$l(C)$}] (lC) at (1,1) {};
		\node[label={right:$I'$}] (Ip) at (1.5,2.4) {};
		\node[label={right:$I$}] (I) at (1.5,4) {};
		\node[label={right:$I^{\emptyset}$}] (Ie) at (1.5,1) {};

		\draw [dotted] plot coordinates {(1.5,0) (1.5,5)};
		
		\draw plot coordinates {(2,0) (1,1) (3, 3) (5,1) (4,0)};
		
	\draw[dashed, rotate=-6] (-0.1,3.35) parabola (2.9,2.1);
	\end{scope}
\end{tikzpicture*}
\hspace{0.1\textwidth}
\begin{tikzpicture*}{0.25\textwidth}
\begin{scope}[
tinyvertex/.style={
draw,
circle,
minimum size=1mm,
inner sep=0pt,
outer sep=0pt%
},
every label/.append style={
rectangle,
font=\small,
}
]

\node[label={left:$I_{i}$}] (It) at (2,2.2) {};
\node[label={left:$I_{i+1}$}] (I) at (2,3.5) {};
\node[label={left:$I_{i-1}$}] (Ib) at (2,0.5) {};
	
\draw [dotted] plot coordinates {(2,0) (2,4)};

\draw[dashed] plot [smooth] coordinates {(1,1) (2,1.4) (3,2)};
\draw[dashed] plot [smooth] coordinates {(1,3.5) (2,2.6) (3,2)};
\draw[dashed] plot [smooth]  coordinates {(3,2) (3.6,2.1) (4.2,2.3)};
\end{scope} 
\end{tikzpicture*}

\end{center}
\caption{An interval is deleted because it is dominated by its neighbors.}
\label{fig:annihilation}
\ifstoc
\end{figure*}
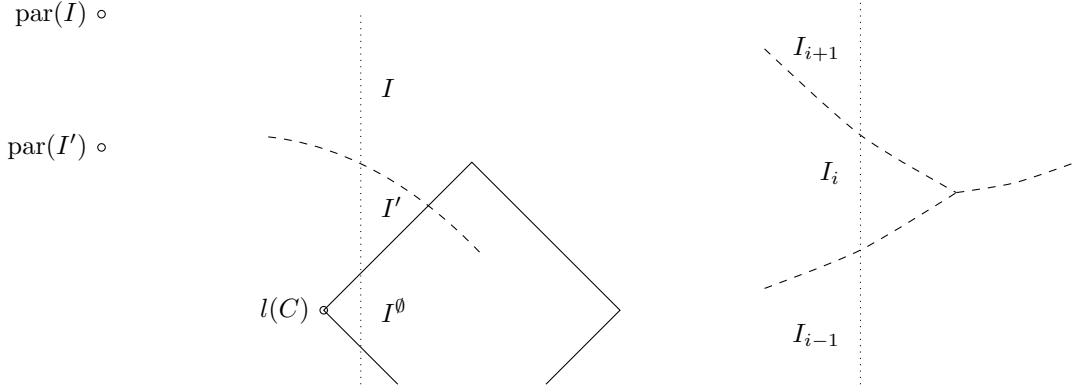
\else
\end{figure}
\fi

\paragraph{Vertex insertions.}
Assume we reach a vertex $t$ and update our sweep-ray such that $t\in \R$.
Due to Lemma~\ref{lem:findparent}, we can determine the interval $I$ containing $t$, and thus, determine $\parent(t)$ and $\fix(t)$.

Now, $t$ gives rise to at most three intervals in $\R$, which we add to the data structure. Recall that every vertex we consider lies on the convex hull of some non-trivial cluster $C$. In the following, denote by $e^-$ and $e^+$ the edges of the perimeter of $C$ incident to $t$ such that $e^-$ comes just before $e^+$ in a clockwise order. We divide our description into four cases, that depend on how the edges of the convex hull incident to $t$ relate to $\R$ (see Figure~\ref{fig:vertexcases}).
\begin{enumerate}
	\item $t=l(C)$,\label{it:left}
	\item $t$ lies on the outer perimeter of $C$, that is, $\exists t'\in C\cap \R$ with $t'$ closer to $x_0$ than $t$,\label{it:top}
	\item $t$ lies on the inner perimeter of $C$, that is, $\exists t'\in C\cap \R$ with $t'$ further from $x_0$ than $t$, \label{it:bot}
	\item $t=r(C)$.\label{it:right}
\end{enumerate}
Since $C$ is convex, these are the only cases.

\ifstoc
\begin{figure*}[ht!]
\else
\begin{figure}[ht!]
\fi
\begin{center}
\begin{tikzpicture*}{0.1\textwidth}
\begin{scope}[
tinyvertex/.style={
draw,
circle,
minimum size=1mm,
inner sep=0pt,
outer sep=0pt%
},
every label/.append style={
rectangle,
font=\small,
}
]
\node[tinyvertex,label={left:$t$}] (v) at (0,0) {};
\draw plot coordinates {(2,-1) (0,0) (2,1)};
\end{scope} 
\end{tikzpicture*}
\hspace{0.1\textwidth}
\begin{tikzpicture*}{0.2\textwidth}
	\begin{scope}[
		tinyvertex/.style={
			draw,
			circle,
			minimum size=1mm,
			inner sep=0pt,
			outer sep=0pt%
		},
		every label/.append style={
			rectangle,
			font=\small,
		}
		]
\node[tinyvertex,label={above:$t$}] (v) at (0,0) {};
\draw plot coordinates {(2,-1) (0,0) (-2,-1)};
	\end{scope} 
\end{tikzpicture*}
\hspace{0.1\textwidth}
\begin{tikzpicture*}{0.2\textwidth}
	\begin{scope}[
		tinyvertex/.style={
			draw,
			circle,
			minimum size=1mm,
			inner sep=0pt,
			outer sep=0pt%
		},
		every label/.append style={
			rectangle,
			font=\small,
		}
		]
		\node[tinyvertex,label={below:$t$}] (v) at (0,0) {};
		\draw plot coordinates {(2,1) (0,0) (-2,1)};
	\end{scope} 
\end{tikzpicture*}\hspace{0.1\textwidth}
\begin{tikzpicture*}{0.1\textwidth}
	\begin{scope}[
		tinyvertex/.style={
			draw,
			circle,
			minimum size=1mm,
			inner sep=0pt,
			outer sep=0pt%
		},
		every label/.append style={
			rectangle,
			font=\small,
		}
		]
		\node[tinyvertex,label={right:$t$}] (v) at (0,0) {};
		\draw plot coordinates {(-2,1) (0,0) (-2,-1)};
	\end{scope} 
\end{tikzpicture*}
\hspace{0.1\textwidth}
\end{center}
\caption{Cases \ref{it:left}, \ref{it:top}, \ref{it:bot}, and \ref{it:right}.}
\label{fig:vertexcases}
\ifstoc
\end{figure*}
\else
\end{figure}
\fi

In Case~\ref{it:left}, a new interval $I^{t,\emptyset}$ is created, with $\parent(I^{t,\emptyset})=\emptyset$ and $\fix(I^{t,\emptyset})=\infty$. The succinct representation of the boundaries of $I^{t,\emptyset}$ are just the straight lines prolonging $e^-$ and $e^+$ (see Figure~\ref{fig:creation}, left and middle).

In Cases~\ref{it:left}, \ref{it:top}, and~\ref{it:bot}, everything that is visible from $\parent(t)$ belongs to $\children(\parent(t))$, and everything else belongs to $\children(t)$ unless it belongs to $I^{t,\emptyset}$ (see Figure~\ref{fig:creation}).

\begin{lemma}\label{lem:firstcases}
	The intervals described above are the correct intervals for Cases \ref{it:left}, \ref{it:top}, and~\ref{it:bot}.
\end{lemma}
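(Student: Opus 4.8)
The plan is to verify that, once we reach a vertex $t$ in one of Cases~\ref{it:left}--\ref{it:bot} and perform the update, the three new pieces of $\R$ --- the part visible from $\parent(t)$ (which we assign parent $\parent(t)$), the interval $I^{t,\emptyset}$ (which we assign parent $\emptyset$ and fixed cost $\infty$), and everything else in the affected part of $\R$ (which we assign parent $t$) --- correctly record $\parent(\cdot)$ and $\fix(\cdot)$ on the sweep-ray at an angle infinitesimally larger than $\alpha_t\mydef\measuredangle(p,x_0,t)$. I would argue inductively along the sweep: just before $t$ becomes available, the data structure is correct, so $(\fix(t),\parent(t))$ was read off correctly by Lemma~\ref{lem:findparent}, and $\fix(t)=\fix(I)$ for the interval $I$ of $\R$ that contained $t$. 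The only new candidate parent created by processing $t$ is $t$ itself, and by general position no earlier vertex has angle strictly between $\alpha_t$ and the new ray; hence the optimal polygonal curve to any point $r$ on the new ray is either unchanged (using only earlier vertices) or ends with the segment $t\to r$, i.e.\ $\parent(r)=t$. Consequently the only part of $\R$ whose parent can change is the interval $I$, together with, in Case~\ref{it:left}, the newly intersected interior of $C$; before the update this ``affected region'' carried parent $\parent(t)$ (and parent $\emptyset$ on $\inte(C)$ in Cases~\ref{it:top}--\ref{it:bot}).

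Within the affected region I would then identify the three pieces. For $I^{t,\emptyset}$: a point of the new ray lies in $\inte(C)$ iff, near $t$, it is on the interior side of the supporting lines of both $e^-$ and $e^+$; since $C$ is convex and $t\in\partial C$, these two prolongations locally bound $\inte(C)\cap\R$, so $I^{t,\emptyset}$ is exactly the interior-of-$C$ part of $\R$, it is unreachable ($\parent=\emptyset$, $\fix=\infty$), and in Cases~\ref{it:top}--\ref{it:bot} it joins continuously onto the interior-of-$C$ interval that already existed. For points $r$ of the affected region visible from $\parent(t)$: the segment $\parent(t)\to r$ is still feasible and encloses the same representatives $l(C')$ as before, so a short exchange argument using $\fix(t)=\fix(I)$ shows that routing through $t$ is never cheaper; hence these points keep $\parent(r)=\parent(t)$ and their old $\fix(r)$, i.e.\ they form $\children(\parent(t))$, separated from $\children(t)$ by the appropriate visibility lines from $\parent(t)$.

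The remaining points --- in the affected region, outside $\inte(C)$, and shadowed from $\parent(t)$ by the chord $C\cap\R$ --- must get $\parent(r)=t$, and this is the crux. Going through the four configurations of Figure~\ref{fig:vertexcases} (which fix on which side of $\R$ the edges $e^-,e^+$ lie), one checks that for such an $r$: (i)~the segment $t\to r$ is angle-monotone and dissects neither $C$ (as $r$ is radially on the far side of the chord while $t$ is an endpoint of it) nor any other cluster (a cluster separating $t$ from $r$ would already have cast a shadow pushing $r$ out of the affected region); (ii)~any feasible path to such an $r$ must bend around the near vertex $t$ of the chord, so $\fix(t)+d(t,r)$ is optimal; and hence (iii)~$\parent(r)=t$, $\fix(r)=\fix(t)$, and the boundary between $\children(t)$ and each neighbouring interval is the claimed cost-equality curve. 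The main obstacle is precisely steps~(i)--(ii): the geometric verification, uniformly across the four sub-configurations, that ``shadowed from $\parent(t)$ but outside $\inte(C)$'' coincides with $\children(t)$ --- equivalently, that $t$, and not $l(C)$, $r(C)$, or a vertex of an adjacent cluster, is the correct parent there --- while confirming that the hierarchy/star-shapedness invariant of Lemma~\ref{lem:starchildren} still holds, so that the three new pieces slot in without any interval having to be merged.
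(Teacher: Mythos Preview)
Your proposal takes a global route---verifying the interval structure on the entire ray just past $\alpha_t$---whereas the paper's proof is purely local: by the symbolic perturbation $t$ lies strictly inside its interval $I$, so there is a small ball $B_{\delta'}(t)$ with $B_{\delta'}(t)\subseteq\children(\parent(t))\cup\children(t)\cup\children(\emptyset)$, and for any reachable $v$ in this ball the optimal polygonal curve passes through $\parent(t)$ as the second- or third-to-last vertex; hence $\parent(v)=\parent(t)$ if $\parent(t)$ is visible from $v$, and $\parent(v)=t$ otherwise. That is the entire argument. Local correctness is all that is needed, because the new intervals (with parent $t$ and, in Case~\ref{it:left}, parent $\emptyset$) are born as the single point $t$ and their boundary curves are initialised there; any subsequent interaction with intervals outside $I$ is handled by the interval-annihilation mechanism, not by this lemma.

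Your argument has a gap at the ``Consequently'' step. From ``the optimal curve to $r$ is either unchanged or ends with $t\to r$'' you conclude ``only the interval $I$ can change'', but this requires showing that $\parent(r)=t$ forces $r$ to have been in $I$---i.e.\ that $t$ cannot become the parent of a point that previously sat in some other interval $I'$. You never argue this, and it is not immediate once cluster costs enter: comparing $\cost(\pi(t))+d(t,r)$ (plus any cluster-cost adjustment along $tr$) against $\fix(I')+d(\parent(I'),r)$ is not settled by the triangle inequality alone. Your steps (i)--(ii) for the shadowed region are essentially the paper's reasoning, but you apply them on a larger set than you have justified. The paper sidesteps the whole issue by making no claim about the full ray; if the $t$-interval later grows to consume the remnant of $I$ and meet $I'$, that is an annihilation event handled separately.
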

\begin{proof}
Because of the symbolic perturbation of Section~\ref{sec:reduction}, we may assume that $t$ does not lie on the border between intervals. That is, there exists some $\delta > 0$ such that all of $B_{\delta}(t)\cap \R$ belongs to the interval $I$ containing $t$. But then, there exists some $\delta'$ with $\delta\geq \delta' >0$, such that $B_{\delta'}(t) \subseteq \children(\parent(t))\cup \children(t)\cup \children(\emptyset)$. 

Then, for any reachable point in $v\in B_{\delta'}(t)$ (i.e., $v\in B_{\delta'}(t)$ such that $P_{\cl}(x_0,p,v)\neq \emptyset$), the optimal path goes through $\parent(t)$ either as the second to last, or as the third to last vertex. Thus, if $\parent(t)$ is visible from $v$, then $\parent(v)=\parent(t)$. Otherwise, necessarily, $\parent(v)=t$.
\end{proof}
\ifstoc
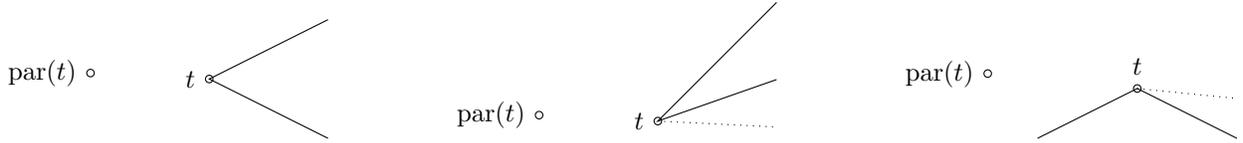
\begin{figure*}[ht!]
\else
\begin{figure}[ht!]
\fi
\begin{center}
\begin{tikzpicture*}{0.25\textwidth}
\begin{scope}[
tinyvertex/.style={
draw,
circle,
minimum size=1mm,
inner sep=0pt,
outer sep=0pt%
},
every label/.append style={
font=\small,
}
]
\node[tinyvertex,label={left:$\parent(t)$}] (pv) at (-2,0.1) {};
\node[tinyvertex,label={left:$t$}] (v) at (0,0) {};
\draw plot coordinates {(2,-1) (0,0) (2,1)};
\end{scope} 
\end{tikzpicture*}
\hspace{0.08\textwidth}
\begin{tikzpicture*}{0.25\textwidth}
	\begin{scope}[
		tinyvertex/.style={
			draw,
			circle,
			minimum size=1mm,
			inner sep=0pt,
			outer sep=0pt%
		},
		every label/.append style={
			font=\small,
		}
		]
\node[tinyvertex,label={left:$\parent(t)$}] (pv) at (-2,0.1) {};
\node[tinyvertex,label={left:$t$}] (v) at (0,0) {};
\draw plot coordinates {(2,2) (0,0) (2,0.7)};
\draw[dotted] plot coordinates {(0,0) (2,-0.1)};
	\end{scope} 
\end{tikzpicture*}
\hspace{0.08\textwidth}
\begin{tikzpicture*}{0.25\textwidth}
	\begin{scope}[
		tinyvertex/.style={
			draw,
			circle,
			minimum size=1mm,
			inner sep=0pt,
			outer sep=0pt%
		},
		every label/.append style={
			rectangle,
			font=\small,
		}
		]
\node[tinyvertex,label={left:$\parent(t)$}] (pv) at (-3,0.3) {};
\node[tinyvertex,label={above:$t$}] (v) at (0,0) {};
\draw plot coordinates {(2,-1) (0,0) (-2,-1)};
\draw[dotted] plot coordinates {(0,0) (2,-0.2)};
	\end{scope} 
\end{tikzpicture*}
\end{center}
\caption{The vertex $t$ gives rise to at most $3$ new intervals.}
\label{fig:creation}
\ifstoc
\end{figure*}
\else
\end{figure}
\fi

Finally, in Case~\ref{it:right}, $t$ lies on the border between the intervals $I,I'$. We may in constant time calculate and compare the two costs and determine whether $\parent(t) = \parent(I)$ or $\parent(t) = \parent(I')$. It follows from the symbolic perturbation in Section~\ref{sec:reduction} that there are no ties; one is strictly better than the other. Given $\parent(t)$, we may calculate $\fix(t)$.  

We now introduce a new interval $\widetilde{I}$ that has $\parent(\widetilde{I})=t$ and $\fix(\widetilde{I})=\fix(t)$. We need to initialize the boundaries of $\widetilde{I}$ correctly: note that an arbitrary number of intervals may be deleted because they are dominated by $\widetilde{I}$, however, they are consecutive intervals, and we may spend constant time on each deletion, since deleted intervals will never resurrect (this follows from the star shapedness property in Lemma~\ref{lem:starchildren}). 

\begin{lemma}
	The interval described above is the correct interval for Case~\ref{it:right}.
\end{lemma}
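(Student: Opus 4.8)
The plan is to run the same local argument used for Lemma~\ref{lem:firstcases}, adapted to the one new phenomenon in Case~\ref{it:right}: since $t=r(C)$, the two edges $e^-,e^+$ of $\partial H(C)$ incident to $t$ both leave $t$ toward the clockwise side, so for angles just past that of $t$ the sweep-ray $\R$ no longer meets $C$. Consequently the interval corresponding to $\inte(C)$ (parent $\emptyset$) and the unreachable interval of $C$ (parent $\bot$) both shrink to the single point $t$ at angle $\alpha\mydef\measuredangle(p,x_0,t)$, their boundaries — the prolongations of $e^-$ and $e^+$ — crossing exactly at $t$; this is precisely why $t$ ends up sitting on the common boundary of the two surviving intervals $I$ (nearer $x_0$) and $I'$ (farther from $x_0$). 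As in the earlier cases, I would first invoke the symbolic perturbation of Section~\ref{sec:reduction} to assume $t$ is not a triple point of interval boundaries, fixing a radius $\delta'>0$ with $B_{\delta'}(t)$ meeting only $\children(\parent(t))$, $\children(t)$, and the two degenerate intervals that are being annihilated.

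The second step is to pin down $\parent(t)$. Let $\pi(t)=p=\pi_1,\dots,\pi_z=t$ be an optimal polygonal curve; its penultimate vertex $\pi_{z-1}$ is visible from $t$, lies at an angle strictly less than $\alpha$, and the cost of reaching $t$ equals $\min_w\bigl(\fix(w)+d(w,t)\bigr)$ over admissible penultimate vertices $w$. I would argue that the minimum is attained at $\parent(I)$ or $\parent(I')$ only: taking points $v\in I$ with $v\to t$ along $\R_\alpha$ (so $\parent(v)=\parent(I)$ throughout) and using continuity of optimal costs gives $\cost(\pi(t))\le\fix(I)+d(\parent(I),t)$, and symmetrically $\cost(\pi(t))\le\fix(I')+d(\parent(I'),t)$; conversely, any candidate $w\notin\{\parent(I),\parent(I')\}$ would dominate a nonempty sub-interval of $\R_\alpha$ adjacent to $t$, contradicting that $I$ and $I'$ are the maximal intervals touching $t$. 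Hence $\parent(t)$ is the better of the two, which by symbolic perturbation is strict, and $\fix(t)$ follows in constant time.

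The third step verifies that the freshly inserted interval $\widetilde I$ with $\parent(\widetilde I)=t$ and $\fix(\widetilde I)=\fix(t)$ is exactly the set of points on the slightly rotated ray whose true parent is $t$, together with the correct boundaries. Inside $B_{\delta'}(t)$, a reachable point $v$ either sees $\parent(t)$, in which case the optimal path to $v$ extends the optimal path to $\parent(t)$ and $\parent(v)=\parent(t)$, or it does not see $\parent(t)$, in which case the optimal path to $v$ must use $t$ as its penultimate vertex and $\parent(v)=t$; this is the same dichotomy as Lemma~\ref{lem:firstcases}, now with the region formerly occupied by $\inte(C)$ and its shadow becoming part of $\children(t)$. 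The boundary of $\widetilde I$ is therefore, on one side, the visibility line through $t$ separating $\children(\parent(t))$ from $\children(t)$, and on the other side either the prolongation of the appropriate edge of $C$ or, where $\widetilde I$ abuts an older surviving interval $I''$, the cost-equality curve $\fix(\widetilde I)+d(t,\cdot)=\fix(I'')+d(\parent(I''),\cdot)$ — exactly the succinct data the algorithm records. Finally, the intervals destroyed when $\widetilde I$ is created are consecutive on $\R$ and never reappear: both facts are immediate from the hierarchy/star-shapedness property of Lemma~\ref{lem:starchildren}, so the deletions cost $O(1)$ amortized each.

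The step I expect to be the real obstacle is the middle one: cleanly ruling out that $\parent(t)$ is some vertex other than $\parent(I)$ or $\parent(I')$ while only having an implicit handle on $\fix$. The clean way to do it is to phrase it purely in terms of the interval structure — the penultimate vertex of an optimal path to a point $q$ lies in $\overline{\children(w)}$ for the unique $w$ owning the interval containing $q$, and $t\in\overline I\cap\overline{I'}$ with $I,I'$ maximal forces $w\in\{\parent(I),\parent(I')\}$ — after which the boundary formulas and the amortization are routine once the local picture near $t$ has been fixed.
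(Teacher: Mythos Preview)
Your proposal is correct and follows essentially the same approach as the paper: a local ball argument in the style of Lemma~\ref{lem:firstcases} to show necessity of the new interval $\widetilde I$, then star-shapedness (Lemma~\ref{lem:starchildren}) to show sufficiency and to justify the consecutive deletions.

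The one noteworthy difference is emphasis. You single out as ``the real obstacle'' the verification that $\parent(t)\in\{\parent(I),\parent(I')\}$, and devote your second step to it. The paper does not treat this as part of the lemma at all: it is dispatched in the paragraph \emph{before} the lemma statement, simply by observing that $t$ sits on the common boundary of $I$ and $I'$ (the intervals for $\inte(C)$ and its shadow having collapsed to $t$), so the only two candidate costs at $t$ are $\fix(I)+d(\parent(I),t)$ and $\fix(I')+d(\parent(I'),t)$, and one compares them in constant time. Your continuity/maximality argument for the same fact is fine, but unnecessary once you accept that the interval structure on $\R_\alpha$ is, by definition, the partition into maximal same-parent intervals --- a third candidate parent would already have its own interval adjacent to $t$. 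The paper's actual proof is then two lines: the $B_{\delta'}(t)$ inclusion as in Lemma~\ref{lem:firstcases}, and an appeal to Lemma~\ref{lem:starchildren}.
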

\begin{proof}
	Assume $\parent(t)=\parent(I)$, which is strictly better than $\parent(I')$. Then, following the same lines as the proof of Lemma~\ref{lem:firstcases}, there exists a small $\delta'>0$ such that $B_{\delta'}(t)\subseteq \children(\parent(t))\cup \children(t)\cup \children(\emptyset)$. Thus, it is necessary to introduce the interval $\widetilde{I}$. It follows from star shapedness (Lemma~\ref{lem:starchildren}) that introducing $\widetilde{I}$ is also sufficient.
\end{proof}
\ifstoc
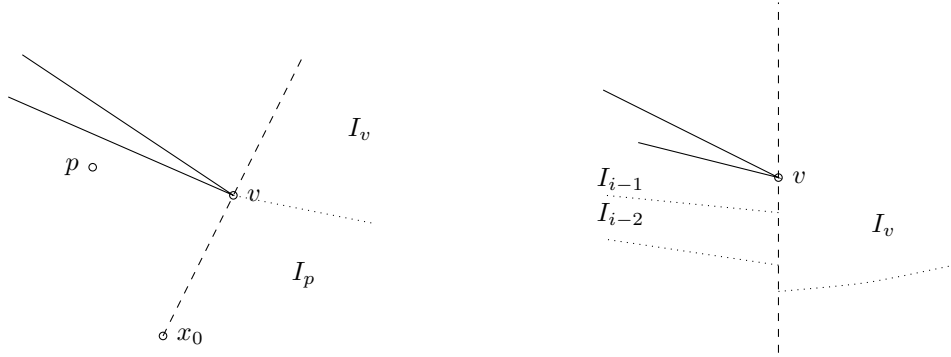
\begin{figure*}[ht!]
\else
\begin{figure}[ht!]
\fi
\begin{center}
\begin{tikzpicture*}{0.3\textwidth}
\begin{scope}[
tinyvertex/.style={
draw,
circle,
minimum size=1mm,
inner sep=0pt,
outer sep=0pt%
},
every label/.append style={
font=\small,
}
]
\node[label={$I_v$}] (cv) at (2.8,2.5) {};
\node[label={$I_p$}] (cp) at (2,0.4) {};
\node[tinyvertex,label={left:$p$}] (p) at (-1,2.4) {};
\node[tinyvertex,label={right:$x_0$}] (x) at (0,0) {};
\node[tinyvertex,label={right:$v$}] (v) at (1,2) {};
\draw plot coordinates {(-2,4) (1,2) (-2.2,3.4)};
\draw [dashed] plot coordinates {(0,0) (2,4)};
\draw [dotted] plot coordinates {(1,2) (3,1.6)};
\end{scope} 
\end{tikzpicture*}
\hspace{0.15\textwidth}
\begin{tikzpicture*}{0.3\textwidth}
	\begin{scope}[
		tinyvertex/.style={
			draw,
			circle,
			minimum size=1mm,
			inner sep=0pt,
			outer sep=0pt%
		},
		every label/.append style={
			font=\small,
		}
		]
	\node[label={$I_{i-1}$}] (i1) at (0.2,1.6) {};
	\node[label={$I_{i-2}$}] (i2) at (0.2,1.2) {};
	\node[label={$I_{v}$}] (i2) at (3.2,1.1) {};
	\node[tinyvertex,label={right:$v$}] (v) at (2,2) {};
	\draw plot coordinates {(0,3) (2,2) (0.4,2.4)};
	\draw [dashed] plot coordinates {(2,0) (2,4)};
	\draw [dotted] plot [smooth] coordinates {(2,1.6) (0,1.8)};
	\draw [dotted] plot [smooth] coordinates {(2,1) (0,1.3)};
	\draw [dotted] plot [smooth] coordinates {(2,0.7) (3,0.8) (4,1)};
\end{scope} 
\end{tikzpicture*}
\end{center}
\caption{The insertion of the vertex $v=r(C)$ may cause the deletion of several intervals.}
\label{fig:case4}
\ifstoc
\end{figure*}
\else
\end{figure}
\fi

What remains is to handle the changes to $\fix$ efficiently and correctly. 
The only change to $\fix$ of any cluster happens in Case~\ref{it:left}: the cost of $C$ needs to be added to $\fix(I)$ for every interval $I$ between $x_0$ and $t$ on $\R$. 

Note that no boundaries change because of this: if $I'$ and $I''$ are intervals that are not separated by $t$, then their $\fix$ either stays the same or is increased by the same amount, that is, $\cost(C)$. The only boundary that could be affected is one incident to $I^{t,\emptyset}$. However, the boundaries of $I^{t,\emptyset}$ are lines determined by the convex hull of $C$. 

Finally, we argue that $\fix$ can be added to all intervals between $x_0$ and $t$ in $O(\log n)$ time: since we store $P_\R$ in a balanced binary search tree, we simply update every vertex of the root path of $I^{t,\emptyset}$ with a lazy delta value that they should push down to their children upon inspection.




\subsection{Proof of Lemma~\ref{lem:finding-cluster-with-center}}
\label{S:findpoint}


In this section, we assume that we are given a polyomino,
a point $x_0$ belonging to an optimal cluster union, and a set of clusters
of points within the polyomino. We will be concerned with Jordan curves that
are not self-intersecting. When we consider a \emph{curve}
we only refer to such a curve. 

We say that a ray $\R$ is \emph{good} if it satisfies the following conditions:
\begin{itemize}
\item it starts at a point $x_0$ of an optimal cluster union, and
\item any input point is at distance at least $\openC/n^2$ from the ray.
\end{itemize}

\subsubsection{Finding a good ray.}
\begin{lemma}
  \label{lem:goodray}
  Given a point $x_0$ that belongs to an optimal cluster union, 
  there exists an $O(n\log n)$ time algorithm that finds a good ray.
\end{lemma}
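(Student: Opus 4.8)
The plan is to turn the search for a good ray into the problem of finding a direction that avoids a small union of arcs on the circle of directions around $x_0$, and to resolve that with a single angular sweep.

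First I would set up the geometry. Identify each ray out of $x_0$ with its direction $\theta\in[0,2\pi)$ and write $\R_\theta$ for this ray. For an input point $v$ put $r_v=\|x_0 v\|$ and let $d_v$ be the direction of $v$ seen from $x_0$; call $\theta$ \emph{bad for $v$} if $\dist(v,\R_\theta)<\openC/n^2$. If $|\theta-d_v|\ge\pi/2$ then the point of $\R_\theta$ nearest to $v$ is $x_0$, so $\theta$ is bad precisely when $r_v<\openC/n^2$; if $|\theta-d_v|<\pi/2$ then the nearest point is the foot of the perpendicular from $v$, at distance $r_v\,|\sin(\theta-d_v)|$, so $\theta$ is bad precisely when $|\sin(\theta-d_v)|<\openC/(n^2 r_v)$. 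Hence, as soon as $r_v\ge\openC/n^2$, the set $B_v$ of directions bad for $v$ is the open arc of angular length $2\arcsin\!\big(\openC/(n^2 r_v)\big)$ centred at $d_v$, and its two endpoints are computable in $O(1)$ time; $\theta$ is the direction of a good ray if and only if $\theta\notin\bigcup_v B_v$.

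Next I would show that a good direction exists, i.e.\ $\bigcup_v B_v\ne[0,2\pi)$. For this I would use that, by the structure of the instance and the way $x_0$ is obtained (Section~\ref{S:findClusterCenter}), every input point satisfies $r_v\ge\openC/n$. Then each $B_v$ is an arc of length at most $2\arcsin(1/n)\le\pi/n$ (using that $\arcsin x\le\tfrac{\pi}{2}x$ on $[0,1]$, by convexity of $\arcsin$), so the total angular length of $\bigcup_v B_v$ over the at most $n$ input points is at most $\pi<2\pi$ and its complement is nonempty. Equivalently, one may sort the points by angle around $x_0$, observe that the widest angular gap has width at least $2\pi/n$, and take its bisecting direction $\theta^*$: every point is then at angular distance at least $\pi/n$ from $\theta^*$, hence at Euclidean distance at least $r_v\sin(\pi/n)\ge(\openC/n)(2/n)>\openC/n^2$ from $\R_{\theta^*}$.

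Finally, the algorithm is: compute the at most $n$ arcs $B_v$, collect their $O(n)$ endpoints, sort them around the circle in $O(n\log n)$ time, and make one sweep maintaining a counter of how many arcs currently cover the sweep position (incremented at arc starts, decremented at arc ends); by the previous step the counter drops to $0$ on some sub-arc, and any direction there gives a good ray, which we output. Sorting dominates the cost, so the total running time is $O(n\log n)$. The one delicate step — and the real content of the lemma — is the separation bound $r_v\ge\openC/n$: without some such lower bound a single input point within $\openC/n^2$ of $x_0$ would make every direction bad, so the proof must extract this separation from how $x_0$ is chosen (and, should $x_0$ lie in a cluster containing several nearby points, restrict attention to the points the ray must in any case leave behind). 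Everything else — the arc computation and the circular sweep — is routine.
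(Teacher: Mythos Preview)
Your approach is essentially the paper's: assume a separation lower bound on $r_v$ (the paper asserts $r_v\ge\openC/2$ ``by definition of the point set'', your weaker $\openC/n$ also suffices), sort the points by angle around $x_0$, and use pigeonhole to locate a direction at distance $\ge\openC/n^2$ from every point. The paper's proof is a two-line sketch and does not spell out the arc computation or the sweep; you do, and you correctly identify the separation bound as the one non-routine ingredient.
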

\begin{proof}
  We know by definition of the point set that $x_0$ is at distance
  at least $\openC/2$ from any other point.
  Thus, we can perform 
  a clockwise scan around $x_0$ and, by the pigeonhole
  principle, there must be at least one ray out of $x_0$
  that is at distance at least $\openC/n^2$ from all the 
  other points. This ray is a good ray.
\end{proof}
%

\subsubsection{Nested curves.}
For any point $p\in \mathbb{R}^2$, denote by $\bestc p$ the optimal curve found in Lemma~\ref{lem:findcurve}.

We want to show that we can find a suitable $p$ such that the convex hull $H(C)$ of the cluster $C$ containing $x_0$ in an optimal clustering is the same curve as $\bestc p$.

To show this, we need some structural lemmas, stating that the curves for different values $p_1,p_2,\ldots$ along a ray are nested nicely around $x_0$ without crossings.

\begin{lemma}
  \label{lem:nestedgeneral}
  \label{lem:nestedray}
  Consider two points $p_1,p_2$ on the same ray from $x_0$, and assume that $d(p_2,x_0) > d(p_1,x_0)$. 
%
  Then, $\bestc{p_1}$ is completely contained in $\overline{\inte(\bestc{p_2})}$.	
%
%
\end{lemma}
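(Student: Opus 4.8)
The plan is to argue by contradiction, assuming that the two optimal curves $\bestc{p_1}$ and $\bestc{p_2}$ are not nested, i.e., that $\bestc{p_1}$ contains a point in $\exte(\bestc{p_2})$, and then produce a strictly better curve for one of the two points, contradicting optimality. Since both curves are angle-monotone from $x_0$ and pass through points on a common ray from $x_0$, the natural first step is to understand where they can cross. Because $p_1$ is strictly closer to $x_0$ than $p_2$ and both curves are angle-monotone seen from $x_0$, each curve meets every ray out of $x_0$ in exactly one point; hence $\bestc{p_1}$ and $\bestc{p_2}$ being ``non-nested'' means precisely that, parametrizing both by angle $\theta\in[0,2\pi)$, the radial distance functions $\rho_1(\theta)$ and $\rho_2(\theta)$ satisfy $\rho_1(\theta_0)>\rho_2(\theta_0)$ for some $\theta_0$, while $\rho_1 < \rho_2$ at the angle of $p_1,p_2$ (where $\rho_1(\theta_{p})=d(p_1,x_0)<d(p_2,x_0)=\rho_2(\theta_p)$). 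By continuity there are (at least) two crossing angles $\alpha<\beta$ between which $\bestc{p_1}$ lies outside $\bestc{p_2}$.

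The core of the argument is then a surgery/exchange step on the arcs between the crossing points. Let $q_1=q_2$ be a crossing point where both curves meet (at angle $\alpha$, say), and similarly the crossing at $\beta$. On the angular interval $[\alpha,\beta]$, $\bestc{p_1}$ runs outside $\bestc{p_2}$; outside this interval $\bestc{p_1}$ runs inside $\bestc{p_2}$. I would form a new curve $\sigma'$ for $p_2$ by keeping $\bestc{p_2}$ on the arc where it is the outer curve and swapping in the arc of $\bestc{p_1}$ on $[\alpha,\beta]$ (so $\sigma'$ takes the pointwise-outer of the two curves), and a new curve $\sigma''$ for $p_1$ by taking the pointwise-inner. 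Both $\sigma',\sigma''$ are still angle-monotone from $x_0$ (taking pointwise max/min of radial functions preserves this), both are simple closed curves, $\sigma'$ still passes through $p_2$ and $\sigma''$ still passes through $p_1$, and neither dissects a cluster — here I rely on the fact that each original curve avoids cluster interiors, and that the pointwise-outer (resp. inner) of two cluster-avoiding angle-monotone curves still avoids cluster interiors, since a cluster $C$ separated from $x_0$ by $\bestc{p_i}$ has $H(C)$ entirely on one side, convexity of $H(C)$ and angle-monotonicity together ruling out the swapped arc clipping into $H(C)$. The length accounting is the key inequality: $\length(\sigma')+\length(\sigma'') = \length(\bestc{p_1})+\length(\bestc{p_2})$ by construction (we are just repartitioning the same set of arcs), while for the cluster-cost term, every cluster lying outside $\sigma'$ lies outside both originals and vice versa, so the total cluster cost is also preserved; hence $\cost(\sigma')+\cost(\sigma'')=\cost(\bestc{p_1})+\cost(\bestc{p_2})$, so at least one of $\sigma',\sigma''$ is no worse than the corresponding optimum — and one must then upgrade ``no worse'' to a strict contradiction using the symbolic-perturbation general-position assumption from Section~\ref{sec:reduction}, which guarantees that distinct curves of this combinatorial type have distinct costs, or alternatively by invoking maximality/uniqueness of the optimum at $p_2$ (the swapped curve $\sigma'$ is genuinely different since it contains the arc of $\bestc{p_1}$ strictly inside $\bestc{p_2}$ somewhere, contradicting that $\bestc{p_2}$ was already optimal).

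A cleaner variant, which I would actually prefer to write, is to avoid the simultaneous two-curve exchange and instead argue directly: suppose $\bestc{p_1}\not\subseteq\overline{\inte(\bestc{p_2})}$. Then replacing the outside arc of $\bestc{p_1}$ on $[\alpha,\beta]$ by the corresponding arc of $\bestc{p_2}$ yields a curve through $p_1$ that is angle-monotone, cluster-avoiding, strictly shorter (the $\bestc{p_2}$-arc is the radially-inner one on that interval, and being the shorter side of the lens-shaped region bounded by the two arcs — this uses that both arcs are angle-monotone chords of the same annular wedge, so the inner one is shorter, a fact one proves by an angular integration $\int \sqrt{\rho^2+\rho'^2}\,d\theta$ comparison or by the triangle-inequality-style argument already used in Lemma~\ref{lem:only-one-big-cluster}), and covers at least as many clusters; this contradicts the optimality of $\bestc{p_1}$ directly. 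The one subtlety is ensuring $p_1$ stays on the new curve — since $p_1$ is at angle $\theta_p$ and we only modify the arc on $[\alpha,\beta]$, we just need $\theta_p\notin(\alpha,\beta)$, which holds because at $\theta_p$ we have $\rho_1<\rho_2$, i.e., $\theta_p$ is in the region where $\bestc{p_1}$ is the inner curve, not the outer one.

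The main obstacle I anticipate is twofold. First, making precise and rigorous the claim that the radially-inner of two angle-monotone cluster-avoiding arcs spanning the same angular wedge is (a) still cluster-avoiding and (b) no longer than the outer one — both are intuitively clear but require a careful argument, likely via the monotone parametrization by angle and a comparison of arc-length integrands, or a limiting/polygonal argument reusing the quadrilateral inequality technology from the proof of Lemma~\ref{lem:only-one-big-cluster}. Second, closing the ``$\le$ vs. $<$'' gap: the exchange argument naturally gives a non-strict inequality, and one must lean on general position (the symbolic perturbation) or on the explicit maximality in the definition of $\bestc{\cdot}$ to derive an actual contradiction; I would handle this by observing that if the exchange produced a curve of exactly equal cost, it would still be a distinct optimal curve, and then either the perturbation rules out ties or the chosen optimum was by definition the maximal (outermost) one, and the inner curve $\sigma''$ through $p_1$ with a strictly-inside arc violates that maximality.
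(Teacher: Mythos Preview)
Your first variant---the simultaneous swap producing the pointwise outer curve $\sigma'$ through $p_2$ and the pointwise inner curve $\sigma''$ through $p_1$---is correct and is essentially the paper's argument. The paper packages it a bit differently: it picks two crossing points $y_1,y_2$, lets $\bestc{y_1,y_2}$ denote the optimal legal arc from $y_1$ to $y_2$ (in the sense of Definition~\ref{def:legalcurve}), and observes that since the arcs $C_1^{|y_1,y_2|}$ and $C_2^{|y_1,y_2|}$ are distinct, the symbolic perturbation forces $\cost(\bestc{y_1,y_2})$ to be strictly smaller than one of them; splicing $\bestc{y_1,y_2}$ into the corresponding $C_i$ gives the contradiction directly. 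This is the same exchange you describe, with the ``$\le$ versus $<$'' issue absorbed into the observation that two distinct arcs cannot tie, rather than argued separately after a cost-preserving swap.

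Your second, preferred variant has a genuine error, however. You assert that replacing the outer arc of $\bestc{p_1}$ on $[\alpha,\beta]$ by the inner arc of $\bestc{p_2}$ yields a curve that is ``strictly shorter'' and ``covers at least as many clusters''. Neither claim is correct in general. The radially inner arc of a lens bounded by two angle-monotone curves need not be shorter than the outer one (think of a nearly straight outer arc versus a wiggly inner arc), so no comparison of $\int\sqrt{\rho^2+\rho'^2}\,d\theta$ will give this. And replacing the outer arc by the inner one moves the curve \emph{inward} on $[\alpha,\beta]$, so any cluster sitting in the lens is now in the exterior of the new curve: you cover \emph{fewer} clusters, not more, and the cluster-cost term goes up, not down. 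Neither single-direction swap is guaranteed to improve; the point is exactly that \emph{one} of the two directions must (strictly, by general position), which is what your first variant and the paper's argument both exploit. So drop variant~2 and finish variant~1 as you outlined, or adopt the paper's one-sided phrasing via the optimal arc $\bestc{y_1,y_2}$.
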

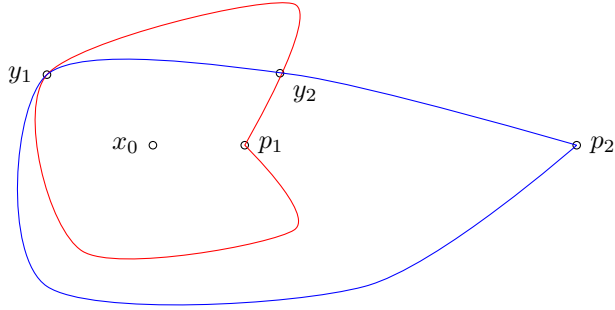
\begin{figure}[ht!]
\begin{center}
\begin{tikzpicture*}{\ifstoc 0.47 \else 0.5 \fi \textwidth}
\begin{scope}[
tinyvertex/.style={
draw,
circle,
minimum size=1mm,
inner sep=0pt,
outer sep=0pt%
},
every label/.append style={
font=\small,
}
]
\node[tinyvertex,label={left:$x_0$}] (x) at (0,0) {};
\node[tinyvertex,label={right:$p_1$}] (p1) at (1.3,0) {};
\node[tinyvertex,label={right:$p_2$}] (p2) at (6,0) {};
\node[tinyvertex,label={left:$y_1$}] (y1) at (-1.5,1) {};
\node[tinyvertex,label={below right:$y_2$}] (y2) at (1.8,1.02) {};

\draw [red] plot [smooth] coordinates {(1.3,0) (2,-1.2) (-1,-1.5) (-1.5,1) (2,2) (1.3,0)};
\draw [blue] plot [smooth] coordinates {(6,0) (3,-2) (-1.5,-2) (-1.5,1) (2,1) (6,0)};

\end{scope} 
\end{tikzpicture*}
\end{center}
\caption{When $p_1$ and $p_2$ are on the same ray (here, aligned with the $x$-axis), their best curves do not cross.}
\label{fig:crossings}
\end{figure}
\begin{proof}
Assume for contradiction that $C_1 = \bestc{p_1}$ intersects the exterior of $C_2 = \bestc{p_2}$ (see Figure~\ref{fig:crossings}). Since $p_1\in C_1$ and $p_1\in \inte(C_2)$, we have by Jordan's theorem that the curves cross at least twice. Consider two crossing points $y_1,y_2$.
Denote by $C_i^{\left|y_2,y_1\right|}$ the part of $C_i\setminus\{y_1,y_2\}$ containing $p_i$ (for $i=1,2$), and let $C_i^{\left|y_1,y_2\right|}$ denote the part not containing $p_i$.

Denote by $\bestc{y_1,y_2} = \argmin_{\pi \in Q_{\cl }(x_0,y_1,y_2)}$ the best curve from $y_1$ to $y_2$ (see Definition~\ref{def:legalcurve}). Since there was a crossing, we must have that $\bestc{y_1,y_2} \neq C_i^{\left|y_1,y_2\right|}$ for some $i\in \{1,2\}$, and thus  $\cost(\bestc{y_1,y_2})<\cost(C_i^{\left|y_1,y_2\right|})$. 

But then, $\cost(\bestc{p_i}) 
> 
\cost(C_i^{\left|y_2,y_1\right|} \circ \bestc{y_1,y_2})$, which is a contradiction since $C_i^{\left|y_2,y_1\right|} \circ \bestc{y_1,y_2}\in Q_{\cl }(p_i,x_0)$.
\end{proof}


  Denote by $\sigma ^{\ast}$ the best curve containing $x_0$, that is, the curve describing the convex hull of the cluster $C_{x_0}$ containing $x_0$ in the optimal clustering.
\begin{lemma}
  \label{lem:nestedopt}
  For any point $p$, $\bestc{p}$ and $\sigma ^\ast$ do not cross.
\end{lemma}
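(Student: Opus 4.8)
The plan is to argue by contradiction along the same lines as the proof of Lemma~\ref{lem:nestedray}, exploiting the fact that $\sigma^\ast$ is (the convex hull boundary of) an actual cluster in the optimal clustering, and that $\bestc p$ is a legal curve of minimal cost among those in $Q_{\cl}(p,x_0)$. Suppose $\bestc p$ and $\sigma^\ast$ cross. Since $\sigma^\ast$ is a closed simple curve and $\bestc p$ is a closed simple curve with both containing $x_0$ in $\overline{\inte(\cdot)}$ — indeed $x_0\in\inte(\sigma^\ast)$ since $x_0$ lies in the interior of the optimal cluster's hull, and $x_0$ is the apex from which $\bestc p$ is angle-monotone so $x_0\in\inte(\bestc p)$ — neither curve is contained in the interior of the other if they cross, so by the Jordan curve theorem they must cross an even number of times, in particular at least twice. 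Pick two consecutive crossing points $y_1, y_2$ (consecutive along $\sigma^\ast$, say), so that the sub-arc of $\sigma^\ast$ between them on one side, and the corresponding sub-arc of $\bestc p$, bound a lens-shaped region.

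First I would set up the notation exactly as in Lemma~\ref{lem:nestedray}: denote by $(\sigma^\ast)^{\left|y_1,y_2\right|}$ and $(\sigma^\ast)^{\left|y_2,y_1\right|}$ the two arcs of $\sigma^\ast\setminus\{y_1,y_2\}$, and similarly $\bestc p^{\left|y_1,y_2\right|}$, $\bestc p^{\left|y_2,y_1\right|}$, chosen so that $p$ lies on $\bestc p^{\left|y_2,y_1\right|}$. Let $\bestc{y_1,y_2}=\argmin_{\pi\in Q_{\cl}(x_0,y_1,y_2)}\cost(\pi)$ be the optimal angle-monotone curve from $y_1$ to $y_2$ inside the cone $\cone(y_1,y_2)$ from $x_0$, as in Definition~\ref{def:legalcurve}. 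The key point is that both $(\sigma^\ast)^{\left|y_1,y_2\right|}$ and $\bestc p^{\left|y_1,y_2\right|}$ are candidate curves in $Q_{\cl}(x_0,y_1,y_2)$ — one has to check angle-monotonicity of each arc, which follows because $\bestc p$ is globally angle-monotone from $x_0$ and $\sigma^\ast$, being a convex hull boundary with $x_0$ in its interior, is also angle-monotone from $x_0$. Since there is a genuine crossing, these two arcs are distinct, so $\bestc{y_1,y_2}$ is strictly cheaper than at least one of them; say $\cost(\bestc{y_1,y_2})<\cost((\sigma^\ast)^{\left|y_1,y_2\right|})$ (the other case, with $\bestc p^{\left|y_1,y_2\right|}$, is handled symmetrically and contradicts optimality of $\bestc p$ directly, exactly as in Lemma~\ref{lem:nestedray}).

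Now I would derive the contradiction with the optimality of the clustering. Form the spliced curve $\tau = (\sigma^\ast)^{\left|y_2,y_1\right|}\circ\bestc{y_1,y_2}$. This is a closed curve through $y_1,y_2$; after removing self-intersections (which only decreases length and keeps every cluster not dissected — a cluster not dissected by either arc is not dissected by a simple curve drawn within their union) it is simple, still angle-monotone from $x_0$, still does not dissect any cluster, and encloses $x_0$. The cost accounting is that $\cost(\tau)<\cost(\sigma^\ast)$: the length strictly decreases by the choice of $\bestc{y_1,y_2}$, and the set of clusters left outside changes only among clusters whose $l(\cdot)$-vertex lies in the lens between the two arcs, where the costs are balanced by Definition~\ref{def:legalcurve}'s bookkeeping. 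But $\tau$ is then a closed simple curve enclosing the point set $C_{x_0}$ (since $\sigma^\ast$ did, and we only moved boundary on the $\bestc{y_1,y_2}$ side, which stays outside $C_{x_0}$ as the clusters are not dissected), whose length plus opening costs of excluded clusters beats $\sigma^\ast$; replacing the hull of $C_{x_0}$ in the optimal clustering by the cluster enclosed by $\tau$ therefore yields a strictly cheaper partition of $A$, contradicting optimality.

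The main obstacle I anticipate is the bookkeeping in the cost comparison: one must be careful that the quantity $\cost(\pi)$ for sub-curves in Definition~\ref{def:legalcurve} (which charges a cluster $C$ exactly when $l(C)$ is in the closed exterior) composes correctly so that $\cost\big((\sigma^\ast)^{\left|y_2,y_1\right|}\circ\bestc{y_1,y_2}\big) - \cost(\sigma^\ast)$ equals $\cost(\bestc{y_1,y_2}) - \cost\big((\sigma^\ast)^{\left|y_1,y_2\right|}\big)$ with no double-counting of the clusters whose $l$-vertex sits near $y_1$ or $y_2$; this is essentially the same identity used implicitly in Lemma~\ref{lem:nestedray}, so I would cite that argument rather than redo it. A secondary subtlety is verifying that $\sigma^\ast$ is angle-monotone from $x_0$ and that $x_0 \in \inte(\sigma^\ast)$ — both follow from $\sigma^\ast$ being the boundary of a convex polygon containing $x_0$ in its interior (after the symbolic perturbation of Section~\ref{sec:reduction} guarantees $x_0$ is strictly interior), but it is worth stating explicitly.
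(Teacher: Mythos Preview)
Your plan takes a genuinely different route from the paper, and in one of its two branches it has a real gap.

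The paper's proof is a two-line reduction to Lemma~\ref{lem:nestedray}: let $p'$ be the (unique, by convexity) intersection of $\sigma^\ast$ with the ray from $x_0$ through $p$; observe that $\sigma^\ast=\bestc{p'}$; now $p$ and $p'$ lie on the same ray from $x_0$, so Lemma~\ref{lem:nestedray} applied to $p$ and $p'$ gives the conclusion immediately. The only nontrivial step is the identification $\sigma^\ast=\bestc{p'}$, which the paper states as ``clearly'' and which follows because any strictly cheaper curve through $p'$ would yield a strictly cheaper cluster union than the optimal one.

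Your plan instead reruns the crossing-and-splicing argument from scratch. The branch where $\cost(\bestc{y_1,y_2})<\cost\bigl(\bestc p^{|y_1,y_2|}\bigr)$ is fine and identical to Lemma~\ref{lem:nestedray}. The problem is the other branch, where you splice to form $\tau$ and then assert that $\tau$ encloses $C_{x_0}$ ``since \ldots\ $\bestc{y_1,y_2}$ \ldots\ stays outside $C_{x_0}$ as the clusters are not dissected.'' This reasoning is invalid: the non-dissection property in Definition~\ref{def:legalcurve} concerns the clusters of $\cl$, whereas $C_{x_0}$ is the big cluster in the \emph{optimal} partition, which in this setting is precisely a union of several $\cl$-clusters (that is the whole point of the cluster-union machinery). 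Nothing prevents $\bestc{y_1,y_2}$ from slipping between two $\cl$-clusters that both lie inside $H(C_{x_0})$, so $\tau$ may well fail to enclose all of $C_{x_0}$.

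The gap is repairable --- $\cost(\tau)<\cost(\sigma^\ast)$ still yields a contradiction, because merging all $\cl$-clusters in $\inte(\tau)$ and keeping the rest gives a clustering of cost at most $\openC+\cost(\tau)<\openC+\cost(\sigma^\ast)$, and the right-hand side equals the cost of the optimal partition (via the structure from Lemma~\ref{lem:merger-works}). But notice that this repair is exactly the computation showing $\sigma^\ast=\bestc{p'}$ for any $p'\in\sigma^\ast$; once you have that, the paper's reduction to Lemma~\ref{lem:nestedray} is both shorter and avoids all the bookkeeping you flag as an obstacle.
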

\begin{proof}
  Denote by $p'$ the intersection between $\sigma ^{\ast}$ and the ray from $x_0$ through $p$. It follows from convexity and boundedness of $C_{x_0}$ that $p'$ is uniquely defined. 
  
  Clearly, $\sigma ^{\ast} = \bestc{p'}$. But now, $p$ and $p'$ are points on the same ray from $x_0$, and thus, it follows directly from Lemma~\ref{lem:nestedray} that $\bestc p$ and $\bestc{p'}=\sigma ^{\ast}$ do not cross. In fact, we have that $\sigma ^\ast$ is completely contained in either $\overline{\inte (\bestc p)}$ or $\overline{\exte (\bestc p)}$. 
%
%
%
\end{proof}



\subsubsection{Discretizing the ray.}
Consider a good ray $\R$ with origin~$x_0$. For ease of 
exposition, we rotate the point set and assume that $\R$ is 
parallel to the $x$-axis.
Define the \emph{landmarks} of $\R$ to be
the points of $\R$ at distance $i \cdot \openC/n$ from
$x_0$, for any non-negative integer $0 \le i \le n^7$.
In particular, we call the point
on $\R$ at distance
$i \cdot \openC/n$ from $x_0 $ the $i^{th}$ landmark and denote it by
$x_i$.

Denote by $H_i$ the region of the plane containing all the
points having $x$-coordinate greater than the $x$-coordinate
of $x_{i}$. Note that $H_1 \supset H_2 \supset H_3 \supset
\ldots$.
Denote by $L$ (respectively $R$) the region of the plane that
contains all the points with $y$-coordinate greater
(respectively smaller) than the $y$-coordinate of $x_0$.
Furthermore, we define the $i^{th}$ curve $C_i$ to be the best curve
in $\Reals^2 \setminus H_{i}$ that intersects $\R$ at the $i^{th}$ landmark.

For any curve $C$, let $V^{\inter}(C)$ and $V^{\outside}(C)$
be the set of points in $\inter(C)$ and $\outside(C)$,
respectively.

\begin{lemma}
  \label{lem:depthrecurs1}
  There exists an index $i^*$ satisfying the property 
  $n/2 \ge |V^{\inter}(C_{i^*-1})|, |V^{\outside}(C_{i^*})|$.
\end{lemma}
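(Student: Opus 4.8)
The plan is to track how the quantities $|V^{\inter}(C_i)|$ and $|V^{\outside}(C_i)|$ evolve as $i$ increases along the good ray, and to find a crossover index. First I would observe that the curves $C_1, C_2, \ldots$ are nested: by construction $C_i$ is the best curve in $\Reals^2 \setminus H_i$ intersecting $\R$ at the landmark $x_i$, and since $x_1, x_2, \ldots$ lie on the same ray from $x_0$ with increasing distance, Lemma~\ref{lem:nestedray} (applied in the appropriate restricted instance, with $x_0$ playing the role of the apex) gives that $\overline{\inte(C_1)} \supseteq \overline{\inte(C_2)} \supseteq \cdots$. Consequently $|V^{\inter}(C_i)|$ is non-increasing in $i$, and $|V^{\outside}(C_i)|$ is non-decreasing in $i$ (up to points lying exactly on the curves, which the symbolic perturbation of Section~\ref{sec:reduction} lets us ignore — no input point lies on any $C_i$).

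Next I would pin down the two extreme landmarks. At $i = 0$ (or $i=1$), the curve is forced to pass through a landmark essentially at $x_0$, so the interior can be taken to contain just $x_0$'s cluster region — in any case, by choosing the smallest landmark we get $|V^{\outside}(C_1)|$ close to $n$, hence $|V^{\inter}(C_1)|$ small; more carefully, since the ray is good, every input point is at distance at least $\openC/n^2$ from $\R$, and all input points lie within the bounding box of side $\mathrm{poly}(n)\cdot\openC$, so for $i$ as large as $n^7$ the landmark $x_i$ lies strictly beyond all input points, forcing $|V^{\inter}(C_{n^7})| = 0$ and hence $|V^{\outside}(C_{n^7})| = n$. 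Thus as $i$ runs from small to large, $|V^{\inter}(C_i)|$ decreases from (nearly) $n$ down to $0$, while $|V^{\outside}(C_i)|$ increases from (nearly) $0$ up to $n$.

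Now I would apply a discrete intermediate-value argument. Since $|V^{\inter}(C_{i-1})|$ is non-increasing in $i$, let $i^*$ be the smallest index with $|V^{\inter}(C_{i^*})| \le n/2$. Then $|V^{\inter}(C_{i^*-1})| > n/2 \ge |V^{\inter}(C_{i^*})|$, but wait — I want the bound $n/2 \ge |V^{\inter}(C_{i^*-1})|$, so instead I would take $i^*$ to be the smallest index with $|V^{\inter}(C_{i^*-1})| \le n/2$; such an index exists because the quantity eventually reaches $0$. For this $i^*$ we get $|V^{\inter}(C_{i^*-1})| \le n/2$ directly. It remains to check $|V^{\outside}(C_{i^*})| \le n/2$. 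By minimality, $|V^{\inter}(C_{i^*-2})| > n/2$, so at most $n/2$ points lie outside $C_{i^*-2}$; but the nesting gives $V^{\outside}(C_{i^*-2}) \supseteq V^{\outside}(C_{i^*})$ only in the wrong direction — actually $V^{\outside}$ grows with $i$, so I should instead argue from $|V^{\inter}(C_{i^*-1})| \le n/2$ together with the fact that the jump from one landmark to the next moves at most... and here is the subtlety.

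The main obstacle, and the place where the choice of $n^7$ landmarks and the good-ray property matter, is controlling the "jump": between consecutive curves $C_{i-1}$ and $C_i$ we need that the set of input points switching from interior to exterior is small — ideally, that no more than one point can lie in the thin region $\overline{\inte(C_i)} \setminus \overline{\inte(C_{i-1})}$... no, that region is empty by nesting; rather the region $\overline{\inte(C_{i-1})}\setminus\overline{\inte(C_i)}$ — and to argue this region is thin we use that the landmarks are spaced only $\openC/n$ apart while every input point is at distance $\ge \openC/n^2$ from the ray, so a curve pinned at $x_{i-1}$ versus one pinned at $x_i$ cannot have their interiors differ by a region fat enough to contain an input point, unless... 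Actually the cleanest route avoids per-step control entirely: define $f(i) = |V^{\inter}(C_{i-1})| + |V^{\outside}(C_i)|$ — no. I think the honest plan is: show $|V^{\inter}(C_{i-1})|$ is non-increasing and $|V^{\outside}(C_i)|$ non-decreasing, note $|V^{\inter}(C_0)| + |V^{\outside}(C_0)| \le n$ isn't quite it either. I expect the intended argument is simply that these two counts are monotone in opposite directions, one starts at $\le n/2$-ish on one end (take the first landmark so the curve is tiny, $|V^{\outside}| \approx n$, and $|V^{\inter}|$ tiny) and at the other end ($i = n^7$, curve huge by the poly-size bounding box, $|V^{\inter}| = 0$, $|V^{\outside}| = n$), and somewhere in between, by taking $i^*$ the first index where $|V^{\inter}(C_{i^*-1})|$ drops to $\le n/2$, the previous bound $|V^{\inter}(C_{i^*-2})| > n/2$ forces $|V^{\outside}(C_{i^*-1})| < n/2$, and since $|V^{\outside}(C_{i^*})| \ge |V^{\outside}(C_{i^*-1})|$ we would need the single-step increase to stay below $n/2$ — so the crux is exactly bounding $|V^{\outside}(C_{i^*})| - |V^{\outside}(C_{i^*-1})| $, i.e., how many points the curve can "expel" in one landmark step, and this is where I would invoke that the good ray separates all points by $\ge \openC/n^2$ and the landmark spacing $\openC/n$ is small relative to the geometry, together with convexity/nesting, to conclude at most $O(1)$ (in fact, I'd aim for: at most the points on a single convex-hull edge, handled by perturbation) points change side per step. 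I would make this quantitative by a direct geometric estimate on the two nested curves pinned at landmarks $\openC/n$ apart.
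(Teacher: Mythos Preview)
Your nesting direction is reversed, and this single error is what sends the whole argument off the rails. Lemma~\ref{lem:nestedray} says that if $d(p_2,x_0)>d(p_1,x_0)$ then $\bestc{p_1}\subseteq\overline{\inte(\bestc{p_2})}$. Since the landmarks $x_i$ move \emph{away} from $x_0$ as $i$ grows, this gives $C_{i-1}\subseteq\overline{\inte(C_i)}$, i.e.\ $\inte(C_1)\subseteq\inte(C_2)\subseteq\cdots$, the opposite of what you wrote. Hence $|V^{\inter}(C_i)|$ is non-\emph{decreasing} and $|V^{\outside}(C_i)|$ is non-\emph{increasing}.

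With the correct direction there is no ``jump'' to control at all, because the two quantities in the statement are complementary: since (by perturbation / the good-ray property) no input point lies on any $C_i$, we have $|V^{\inter}(C_i)|+|V^{\outside}(C_i)|=n$ for every $i$. Now $|V^{\inter}(C_0)|=0$, and by Lemma~\ref{lem:aspectratio} the last landmark $x_{n^7}$ lies beyond the bounding box, so $|V^{\inter}(C_{n^7})|=n$. Let $i^*$ be the smallest index with $|V^{\inter}(C_{i^*})|\ge n/2$. Then $|V^{\inter}(C_{i^*-1})|<n/2$ by minimality, and $|V^{\outside}(C_{i^*})|=n-|V^{\inter}(C_{i^*})|\le n/2$ by complementarity. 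That is exactly the paper's one-line proof (``follows immediately from Lemmas~\ref{lem:nestedray} and~\ref{lem:aspectratio}''). All of your effort toward bounding how many points can switch sides in one landmark step is unnecessary; that difficulty was an artifact of the reversed monotonicity.
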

\begin{proof}
  The proof follows immediately from Lemmas~\ref{lem:nestedray}
  and~\ref{lem:aspectratio}.
  Since the curves are nested, it must be that there is such an index~$i^*$.
\end{proof}

\begin{lemma}
  \label{lem:correctness1}
  Consider the best curve $C^*$.
  We have that $C^*$ is contained in
  \begin{enumerate}
  \item $\overline{\inter(C_{i^*-1})}$
, or
  \item $\overline{\outside(C_{i^*})}$
, or
  \item $\overline{\inter(C_{i^*})} \cap \overline{\outside(C_{i^*-1})}$
.
  \end{enumerate}
\end{lemma}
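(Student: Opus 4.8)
The plan is to reduce the lemma to a single non-crossing statement and then finish with a short case analysis. Throughout, write $\sigma^\ast=C^\ast$ for the best curve containing $x_0$, i.e.\ the perimeter of the cluster $C_{x_0}$ of the optimal clustering. Being angle-monotone from $x_0$ (after the perturbation of Section~\ref{sec:reduction}), $\sigma^\ast$ has $x_0\in\inter(\sigma^\ast)$, and likewise $x_0\in\inter(C_{i^\ast-1})$ and $x_0\in\inter(C_{i^\ast})$, since each landmark curve $C_i$ is angle-monotone from $x_0$. We may also assume, by the perturbation, that $\sigma^\ast$ meets each $C_i$ only transversally.

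The key claim is: for $i\in\{i^\ast-1,i^\ast\}$, the curves $\sigma^\ast$ and $C_i$ do not cross, hence (both being simple closed curves) one lies in the closed interior or the closed exterior of the other. Granting this, the lemma is immediate: applying the claim to $i^\ast-1$, either $\sigma^\ast\subseteq\overline{\inter(C_{i^\ast-1})}$, which is case~(1), or $\sigma^\ast\subseteq\overline{\outside(C_{i^\ast-1})}$; in the latter case, applying the claim to $i^\ast$ gives either $\sigma^\ast\subseteq\overline{\inter(C_{i^\ast})}$, which together with $\sigma^\ast\subseteq\overline{\outside(C_{i^\ast-1})}$ is case~(3), or $\sigma^\ast\subseteq\overline{\outside(C_{i^\ast})}$, which is case~(2). (The defining property of $i^\ast$ from Lemma~\ref{lem:depthrecurs1} is not needed here; it is used later to bound the recursion depth, as cases~(1) and~(2) then confine $C^\ast$ to a region with at most $n/2$ points.)

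For the key claim I would run an uncrossing argument in the spirit of the proof of Lemma~\ref{lem:nestedray}. Suppose $\sigma^\ast$ crosses $C_i$. Since $x_0$ lies in both interiors, Jordan's theorem forces at least two crossing points, and the crossings partition $\sigma^\ast$ into arcs whose interiors lie alternately in $\inter(C_i)$ and $\outside(C_i)$; pick two consecutive crossings $y_1,y_2$ bounding an arc $(\sigma^\ast)^{|y_1,y_2|}$ of $\sigma^\ast$ whose interior lies in $\inter(C_i)$ (such an arc exists by transversality). Let $\bestc{y_1,y_2}$ be the optimal $y_1$-to-$y_2$ curve of Definition~\ref{def:legalcurve}, taken inside the halfplane $\Reals^2\setminus H_i$ (legitimate since $y_1,y_2\in C_i\subseteq\Reals^2\setminus H_i$). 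As in Lemma~\ref{lem:nestedray}, $\bestc{y_1,y_2}$ cannot coincide with both $(\sigma^\ast)^{|y_1,y_2|}$ and the arc $C_i^{|y_1,y_2|}$ of $C_i$ not through $x_i$ (these two arcs meet transversally at $y_1$, hence are not equal), so it is strictly cheaper than at least one of them; splicing it into the complementary arc of the corresponding curve produces either a legal curve through $x_0$ cheaper than $\sigma^\ast$, or a legal curve through $x_i$ avoiding $H_i$ cheaper than $C_i$ — contradicting optimality of $\sigma^\ast$, respectively of $C_i$.

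The main obstacle is the asymmetry in this last step: $C_i$ is optimal only among curves confined to $\Reals^2\setminus H_i$, whereas $\sigma^\ast$ is optimal without that restriction, so the two competitors must be checked against the correct constraint. This is exactly why we chose $y_1,y_2$ so that $(\sigma^\ast)^{|y_1,y_2|}$ dips into $\inter(C_i)$: since $C_i$ meets the boundary line of $H_i$ only at $x_i$, that arc's interior lies in $\Reals^2\setminus\overline{H_i}$, so the arc itself avoids $H_i$; hence $\bestc{y_1,y_2}$ is no more expensive than it, and combining the two optimality bounds ($\bestc{y_1,y_2}$ versus $(\sigma^\ast)^{|y_1,y_2|}$ and versus $C_i^{|y_1,y_2|}$) forces $\bestc{y_1,y_2}=(\sigma^\ast)^{|y_1,y_2|}=C_i^{|y_1,y_2|}$, contradicting transversality at $y_1$. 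It remains to check, as in Lemma~\ref{lem:nestedray}, that the spliced curves are simple and angle-monotone from $x_0$, which is routine since $y_1,y_2$ are consecutive crossings and the spliced pieces lie in a common cone from $x_0$.
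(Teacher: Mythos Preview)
The paper states Lemma~\ref{lem:correctness1} without proof, presumably intending it as an immediate corollary of Lemma~\ref{lem:nestedopt} (which says $\sigma^\ast$ and $\bestc{p}$ never cross). You correctly spot that this is not quite immediate: the landmark curves $C_i$ are defined as best curves \emph{constrained to the halfplane} $\Reals^2\setminus H_i$, whereas Lemma~\ref{lem:nestedopt} (via Lemma~\ref{lem:nestedray}) is stated for the unconstrained $\bestc{p}$, and its proof identifies $\sigma^\ast$ with an unconstrained $\bestc{p'}$. So the paper's implicit one-line appeal has exactly the asymmetry you worry about.

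Your fix---rerunning the uncrossing argument while choosing the crossing points so that the relevant arc of $\sigma^\ast$ lies in $\inter(C_i)\subseteq\Reals^2\setminus H_i$, hence is a legal competitor in the constrained problem---is the right repair and is correct. The case analysis that derives the trichotomy from the two non-crossing statements is also clean (and your remark that both curves have $x_0$ in their interior is exactly what rules out the ``separate'' configuration). One small cleanup: you justify that the $\sigma^\ast$-arc avoids $H_i$ by saying ``$C_i$ meets the boundary line of $H_i$ only at $x_i$,'' but that claim is neither obviously true nor needed. The fact you actually use is simply that a simple closed curve contained in the closed halfplane $\Reals^2\setminus H_i$ has its bounded interior in that same halfplane, so any arc inside $\inter(C_i)$ automatically avoids $H_i$. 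With that phrasing the argument is complete, at the same level of rigor as the paper's proof of Lemma~\ref{lem:nestedray} (which likewise glosses over simplicity of the spliced curve and uniqueness of $\bestc{y_1,y_2}$ via the symbolic perturbation).
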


Define $u$ to be the first input 
point met by a clockwise walk on $C_{i^*}$ 
starting at $x_{i^*}$ and similarly define $v$ to be the first input
point met by a counterclockwise walk on $C_{i^*}$ starting at $x_{i^*}$.
Consider the angle $\measuredangle(u, x_{i^*},v)$ and the bisector of this
angle. Let $s$ 
be the line perpendicular to the bisector that goes through $x_{i^*-1}$. 
We now define the \emph{tube} $T$ to be the set of points 
that are in the interior of $C_{i^*}$ and at distance less than 
$\openC/2$ from $s$. 

We say that a point $(w_x,w_y)$ is \emph{above} $s$ if 
$w_y \ge g(w_x)$, where $g$ is the equation of line $s$.
Define $r$ to be the point of $C_{i^*} \cap R$ 
that is above $s$ and the farthest away from $s$, and 
$\ell$ the point of  $ C_{i^*} \cap L$  that
is above $s$ and the farthest away from $s$.

\begin{lemma}
  \label{lem:technical}
  Consider the best curve $C^*$ that crosses $\R$ in the
  interval $[x_{i^*-1},x_{i^*})$. 
  We have that
  either both $\ell$ and $r$ belong to the tube or 
  there exists a curve with a better cost
  that lies in $\outside(C_{i^*}) \cup C_{i^*}$.
\end{lemma}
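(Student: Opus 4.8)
The statement is a dichotomy, so the plan is to prove the contrapositive of one horn: assuming that $C^*$ cannot be improved by any curve contained in $\overline{\exte(C_{i^*})}=\outside(C_{i^*})\cup C_{i^*}$, I will show that both $\ell$ and $r$ lie in the tube $T$. By the left/right symmetry of the construction it suffices to treat $\ell$; suppose toward a contradiction that $\ell$ is above $s$ but at distance more than $\openC/2$ from $s$.

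First I would pin down the position of $C^*$ relative to $C_{i^*}$. Since $C^*$ crosses $\R$ at a point $x^*$ with $x_{i^*-1}\le x^*<x_{i^*}$, Lemmas~\ref{lem:nestedray} and~\ref{lem:correctness1} (case~3) give $C^*\subseteq\overline{\inte(C_{i^*})}$, so in particular $C^*$ does not meet $H_{i^*}$. Next comes the main geometric observation, namely that near $\R$ the curve $C^*$ is \emph{pinched}: because $C^*$ crosses $\R$ only inside the short window $[x_{i^*-1},x_{i^*})$ of length $\openC/n$, stays inside $C_{i^*}$, and the ray $\R$ is good (so every input point is at distance $\ge\openC/n^2$ from $\R$, and the boundary of a convex hull bends only at input points), every point of $C^*$ lying on the $H_{i^*}$-side of the line $s$ is confined to an $O(\openC/n)$-neighbourhood of $x^*$. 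The choice of $u$, $v$ and of the bisector is exactly what makes $s$ separate this pinched part of $C^*$ — and of any competitor crossing the window — from the bulk of the curve that surrounds $x_0$.

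With this in hand I would perform surgery. Let $a,b$ be the two points where $C^*$ meets $s$, splitting $C^*$ into the short arc $A$ on the $H_{i^*}$-side of $s$ (which, by the pinching, has length $O(\openC/n)$ and lies near $x^*$) and the long arc $B$ containing $x_0$. Form the new closed curve $\sigma$ by replacing $A$ by the portion of $C_{i^*}$ that runs from near $a$ around $x_{i^*}$ to near $b$ on the $H_{i^*}$-side of $s$, together with the two short segments of $s$ inside $C_{i^*}$ needed to close up; then $\sigma$ lies in $\overline{\exte(C_{i^*})}$, encloses everything $C^*$ enclosed, and since $C^*\subseteq\overline{\inte(C_{i^*})}$ it can only have moved clusters from outside to inside, so its opening-cost term does not increase. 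For the lengths, the arc of $C_{i^*}$ that we glued in runs along a curve that, by convexity of $H(C_{i^*})$ near $x_{i^*}$ and the fact that $\ell$ — hence the whole left flank of $C_{i^*}$ up to $s$ — juts out past $s$ by more than $\openC/2$, can be shown to be no longer than, and, accounting for the detour that $C^*$ itself had to take around the same input points, strictly shorter than, the arc $A$ it replaces together with the two $s$-segments. Hence $\cost(\sigma)<\cost(C^*)$ with $\sigma\subseteq\overline{\exte(C_{i^*})}$, contradicting the standing assumption; therefore $\ell\in T$, and the argument for $r$ is identical.

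The step I expect to be the real obstacle is the length bookkeeping together with verifying admissibility: one must (i) show the spliced curve $\sigma$ is still simple and still angle-monotone seen from $x_0$ and dissects no cluster, so that it is a legal competitor in $Q_{\cl}(\cdot,x_0)$; (ii) control the two arcs being exchanged — the detour of $C^*$ on the $H_{i^*}$-side of $s$ versus the detour of $C_{i^*}$ on that side — using convexity estimates in the spirit of Lemma~\ref{lem:only-one-big-cluster}, so that the ``out-of-tube'' hypothesis $d(\ell,s)>\openC/2$ genuinely yields a strict gain; and (iii) make precise, via the good-ray bound $\openC/n^2$ and the window length $\openC/n$, the claim that the pinched part of $C^*$ sits in a tiny neighbourhood of $\R$, which is what prevents the glued-in detour from ever being the longer of the two. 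Once these geometric estimates are pinned down, the dichotomy follows as above.
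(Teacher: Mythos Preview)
Your construction has a fatal error: the curve $\sigma$ you build does \emph{not} lie in $\overline{\exte(C_{i^*})}$. You keep the long arc $B$ of $C^*$, and you yourself established that $C^*\subseteq\overline{\inte(C_{i^*})}$; hence $B$ sits inside $C_{i^*}$, not outside. The lemma requires the improving curve to be contained in $\outside(C_{i^*})\cup C_{i^*}$, so the spliced object $\sigma=B\cup(\text{arc of }C_{i^*})\cup(\text{segments of }s)$ is simply not an admissible witness. Moreover, even the length claim points the wrong way: the arc of $C_{i^*}$ ``around $x_{i^*}$ on the $H_{i^*}$-side of $s$'' is precisely the concave notch at $x_{i^*}$, which is \emph{longer}, not shorter, than the straight shortcut --- so swapping it in cannot give a gain.

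The paper's argument is both simpler and goes in the opposite direction. It never performs surgery on $C^*$. Instead it first observes that $C_{i^*}$ costs at most $\cost(C^*)+2\openC/n^{7}$, because $x_{i^*}$ is within distance $\openC/n^{7}$ of $C^*$. It then takes as the improving curve the boundary $C'$ of the convex hull of $C_{i^*}$; this $C'$ genuinely lies in $\overline{\exte(C_{i^*})}$. The saving $\delta=\cost(C_{i^*})-\cost(C')$ is at least $\dist(x_{i^*},\ell)+\dist(x_{i^*},r)-\dist(\ell,r)$, the triangle-inequality defect at the notch. The key quantitative point you are missing is this: if $\ell$ (or $r$) escapes the tube, then the angle $\Theta=\measuredangle(u,x_{i^*},v)$ must be noticeably below $\pi$ (since edges have length at most $n^{2}\openC$, escaping the half-width $\openC/2$ forces $\sin\Theta>1/(2n^{2})$), and a short cosine-law computation on the triangle $x_{i^*}\ell r$ then gives $\delta>2\openC/n^{7}$. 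Chaining, $\cost(C')<\cost(C_{i^*})-2\openC/n^{7}\le\cost(C^*)$, which is the desired better curve in $\overline{\exte(C_{i^*})}$. Your ``pinching'' heuristic and the admissibility worries (simplicity, angle-monotonicity) are unnecessary once you aim at $C'$ rather than at a hybrid of $C^*$ and $C_{i^*}$.
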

\begin{proof}
  Assume that $C^*$ is the best curve overall.
  By Lemma~\ref{lem:nestedray}, $C^*$ lies in
  $\inter(C_{i^*}) \cup C_{i^*}$.
  Consider $C_{i^*}$ and $C^*$. Observe that the distance
  from $x_{i^*}$ to $C^*$ is at most $\openC/n^7$. Therefore,
  the cost of $C_{i^*}$ is at most the cost of $C^*$ plus 
  $2\openC/n^7$.

  Now, observe that if the angle at $x_{i^*}$ is at least
  $\pi$, then the lemma holds immediately as
  the interval $[x_{i^*-1},x_{i^*})$ has size $\openC/n^7$ and
  the tube has size $\openC/2$ and $C^*$ crosses 
  $[x_{i^*-1},x_{i^*})$ and remains in
  $\inter(C_{i^*}) \cup C_{i^*}$.

  Thus, assume that the angle is at most $\pi$, and let
  $\Theta$ denote this angle.
  Observe that since the maximum length of an edge is bounded
  by $n^2 \openC$, we have that the angle has to be such that
  $n^2 \cdot \openC \cdot \sin(\Theta) > \openC/2$ in order for
  $r$ to be outside of the tube. Thus
  $\sin(\Theta) > 1/(2n^2)$.

  We first prove some useful facts regarding $\sin(x)$ and $\cos(x)$.
  First, we have $\sin(x) \leq x$
  for all $x \geq 0$, which holds as the function $f(x) = x - \sin(x)$
  satisfies $f(0) = 0$ and $f'(x) = 1 - \cos(x) \geq 0$.
  Moreover, for all $0 \leq x \leq \pi/2$, we have the property
  that $\cos(x) \leq 1 - x^2/\pi \leq 1 - x^2/4$.
  To see this, consider the function $g(x) = 1 - x^2/\pi - \cos(x)$.
  This function satisfies $g(0) = 0$ and $g'(x) =  -2x/\pi + \sin(x) \geq 0$,
  where the inequality holds since $\sin(x) \geq 2x/\pi$ for all
  $0 \leq x \leq \pi/2$ (since $\sin(x)$ is concave on the interval $[0,\pi/2]$,
  $\sin(0) = 0 = 2\cdot 0/\pi$, and $\sin(\pi/2) = 1 = 2 \cdot (\pi/2)/\pi$).

  Now, since $\sin(\Theta) > 1/(2n^2)$, we know that $\Theta \leq \pi - 1/(2n^2)$.
  Otherwise, we would get a contradiction:
  $1/(2n^2) < \sin(\Theta) = \sin(\pi - \Theta) \leq \pi - \Theta < \pi - (\pi - 1/(2n^2)) = 1/(2n^2)$,
  where the second to last inequality uses the fact that $\sin(x) \leq x$.
  We define $C'$ to be the convex hull of $C_{i^*}$. Since
  the angle is smaller than $\pi$, we have that
  $C'$ does not intersect $x_{i^*}$. We now compute the 
  cost difference between $C'$ and $C_{i^*}$, namely, we
  define $\delta = C_{i^*} - C' > 0$. 
  We have that the gain $\delta$ is at least 
  $\dist(x_{i^*}, r) + \dist(x_{i^*},\ell) - \dist(r,\ell)$.
  We claim that if the angle $\Theta$ is such that 
  $r$ is outside of the tube, then $\delta > 2\openC/n^7$, and so
  we have $C' + 2\openC/n^7 < C' + \delta \le C_{i^*} \le C^* + 2\openC / n^7$, and 
  therefore $C' <  C^*$. This means that $C^*$ is not 
  optimal and so we get a contradiction.

  Thus, we aim at showing that $\delta > 2\openC/n^7$.
  Now, consider the triangle formed by the three points
  $x_{i^*}$, $\ell$, and $r$.  Let $\beta$ denote the angle
  formed by the line segment between the points $x_{i^*}$ and $\ell$,
  along with the line segment between the points $\ell$ and $r$.
  Similarly, let $\gamma$ denote the angle formed by the line segment between
  the points $x_{i^*}$ and $r$, along with the line segment between $r$ and $\ell$.
  Since $\Theta \leq \pi - 1/(2n^2)$, and we have $\Theta + \beta + \gamma = \pi$,
  we must have $\beta + \gamma = \pi - \Theta \geq \pi - (\pi - 1/(2n^2)) = 1/(2n^2)$.
  Thus, we know that $\max\{\beta,\gamma\} \geq 1/(4n^2)$.
  
  For ease of notation, let
  $a = \dist(x_{i^*},\ell)$, $b = \dist(x_{i^*},r)$, and $c = \dist(r,\ell)$.
  Observe that $c = \dist(r,\ell)$ is precisely
  $\dist(x_{i^*},\ell)\cos(\beta) + \dist(x_{i^*},r)\cos(\gamma) = a\cos(\beta) + b\cos(\gamma)$.
  If $\Theta \leq \pi/2$, then we know $\max\{\beta,\gamma\} \geq \pi/4$,
  and hence $c = a\cos(\beta) + b\cos(\gamma) \leq \max\{a+b/\sqrt{2},a/\sqrt{2}+b\}$.
  This yields $\delta \geq a+b-c \geq a+b-\max\{a+b/\sqrt{2},a/\sqrt{2}+b\} \geq (1-1/\sqrt{2})\min\{a,b\}$.
  Hence, consider the case when $\Theta > \pi/2$, which implies
  $0 \leq \beta,\gamma \leq \pi/2$.  Now, since we have $\max\{\beta,\gamma\} \geq 1/(4n^2)$, we obtain
  $c = a\cos(\beta) + b\cos(\gamma) \leq \max\{a + b(1-1/(64n^4)),a(1-1/(64n^4)) + b\}$
  (the inequality follows from $\cos(x) \leq 1 - x^2/4$, which we
  can apply due to the fact that $\beta,\gamma \leq \pi/2$).
  This yields $\delta \geq a+b-c \geq \min\{a,b\}/(64n^4)$.

  In all, as long as
  $\min\{\dist(x_{i^*},r),\dist(x_{i^*},\ell)\}/(64n^4) > 2\openC/n^7$,
  we are done.  This holds as long as we ensure that the minimum distance
  of any point to the ray is sufficiently far (by definition of a good
  ray), as then every input
  point $p$ has the property that $\dist(x_{i^*},p)$ is sufficiently large.


\end{proof}


We define the 
\emph{right tube} as
$T^R = T \cap R \cap \inter(C_{i^*}) \cap \outside(C_{i^*-1})$. 
Consider the points of the right tube and
order them from their distance to the interval
$[x_{i^*-1},x_{i^*})$.
Let $ d^R_1,\ldots,d^R_{|T^R|}$ be the points of
the right tube in this ordering.
Define $S^R_j$ to be the best curve in $\inter(C_{i^*}) \cup C_{i^*}$
that has $d^R_j$ on its boundary, contains $x_0$, and 
crosses $\R$ in the interval $[x_{i^*-1}, x_{i^*})$.


We have the following lemma.
\begin{lemma}
  \label{lem:nestedfinal}
  Consider two integers $i,j$, $i < j$.
  For any points $d^{R}_i,d^{R}_j \in T^{R}$, 
  we have $S^R_i \subseteq S^R_j \cup \inter(S^R_j)$.
  
\end{lemma}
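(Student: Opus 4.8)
The plan is to reuse the crossing-and-exchange argument from the proof of Lemma~\ref{lem:nestedray}. Assume for contradiction that $S^R_i \not\subseteq S^R_j \cup \inter(S^R_j) = \overline{\inter(S^R_j)}$, so that $S^R_i$ contains a point $q \in \exte(S^R_j)$. Both curves enclose $x_0$, which (after the symbolic perturbation of Section~\ref{sec:reduction}) we may take to lie strictly in the interior of each; hence $x_0$ and $q$ lie in different components of $\Reals^2 \setminus S^R_j$, and since $S^R_i$ is a closed curve it crosses $S^R_j$ an even number of times, hence at least twice. Traversing $S^R_i$, its crossings with $S^R_j$ cut it into sub-arcs that lie alternately in $\overline{\inter(S^R_j)}$ and in $\overline{\exte(S^R_j)}$. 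I would pick $y_1,y_2$ to be the endpoints of one such ``excursion'' sub-arc lying in $\overline{\exte(S^R_j)}$, chosen (when possible) so that neither $d^R_i$ nor the point where $S^R_i$ meets $\R$ in $[x_{i^*-1},x_{i^*})$ lies on this excursion arc, so that both of these distinguished features stay on the complementary arc of $S^R_i$, which will be retained.

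Next I would carry out the exchange exactly as in Lemma~\ref{lem:nestedray}. For $\tau \in \{i,j\}$, split $S^R_\tau$ at $y_1,y_2$ into two arcs sharing these endpoints, and let $\pi^\ast$ be a cheapest arc from $y_1$ to $y_2$ that stays in $\overline{\inter(C_{i^*})}$ and does not dissect any cluster of $\cl$, with cost measured as in Definition~\ref{def:legalcurve}. Because $S^R_i$ and $S^R_j$ genuinely cross at $y_1$ and $y_2$, the two ``$[y_1,y_2]$-arcs'' are distinct, so $\pi^\ast$ equals at most one of them and, using the perturbation to break ties, is strictly cheaper than the corresponding arc of $S^R_\tau$ for some $\tau$. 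Replacing that arc of $S^R_\tau$ by $\pi^\ast$ yields a closed curve $\widehat\sigma$ that is simple and angle-monotone from $x_0$ (its two pieces span complementary angular intervals around $x_0$, as in Lemma~\ref{lem:nestedray}), stays in $\overline{\inter(C_{i^*})}$, and — by the choice of $y_1,y_2$ — still passes through $d^R_\tau$, still encloses $x_0$, and still meets $\R$ in $[x_{i^*-1},x_{i^*})$. Thus $\widehat\sigma$ is feasible for the optimization problem defining $S^R_\tau$ with $\cost(\widehat\sigma) < \cost(S^R_\tau)$, a contradiction; hence $S^R_i \subseteq S^R_j \cup \inter(S^R_j)$.

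The main obstacle is precisely the claim that $\widehat\sigma$ still encloses $x_0$ and still crosses $\R$ inside the prescribed interval $[x_{i^*-1},x_{i^*})$: unlike the curves of Definition~\ref{def:wholecurve}, the curves $S^R_\tau$ carry these two extra constraints, so I must ensure the swap leaves the $\R$-crossing and the enclosure of $x_0$ intact. The delicate subcase is when $S^R_i$ meets $S^R_j$ in exactly two points, so that $S^R_i$ splits into one inside-arc and one outside-arc and there is no freedom in the choice above. I would handle it by showing directly that both distinguished features lie on the inside-arc: first, $d^R_i \in \overline{\inter(S^R_j)}$, because $d^R_i$ is closer than $d^R_j$ to $[x_{i^*-1},x_{i^*})$ along the thin tube $T^R$ (which has width at most $\openC/2$ and lies in $R \cap \inter(C_{i^*}) \cap \exte(C_{i^*-1})$), forcing it onto the inside-arc; and second, $S^R_i$'s crossing with $\R$ lies in $\inter(S^R_j)$, because $S^R_j$ also crosses $\R$ in $[x_{i^*-1},x_{i^*})$ and encloses $x_0$, so after perturbation the portion of $\R$ from $x_0$ to $S^R_i$'s crossing lies strictly inside $S^R_j$. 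Then the unique outside-arc of $S^R_i$ is exactly the swappable one and retains both features. Verifying these two ``which side of $S^R_j$'' statements from the tube geometry is where the real work is; the remainder is a routine replay of Lemma~\ref{lem:nestedray}.
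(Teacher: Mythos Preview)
Your crossing-exchange plan diverges from the paper's argument, and it has a genuine gap that is not among the difficulties you flag.

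The exchange in Lemma~\ref{lem:nestedray} works because the two anchors $p_1,p_2$ lie on the \emph{same} ray from $x_0$: whichever curve $\bestc{p_\tau}$ ends up being improved, its anchor $p_\tau$ automatically lies on the retained arc (the complementary angular range). Here the anchors $d^R_i$ and $d^R_j$ are distinct points of $T^R$, in general at different angles from $x_0$, and each $S^R_\tau$ carries \emph{its own} constraint ``pass through $d^R_\tau$''. You choose $y_1,y_2$ so that the retained arc of $S^R_i$ keeps $d^R_i$ and $S^R_i$'s $\R$-crossing. But if the strictly improvable arc belongs to $S^R_j$ (i.e.\ $\tau=j$), you must instead show that the retained arc of $S^R_j$ contains $d^R_j$ and $S^R_j$'s $\R$-crossing. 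Your choice of excursion arc says nothing about $d^R_j$, and in the two-crossing subcase you explicitly argue only for $d^R_i$ and $S^R_i$'s crossing. There is no reason the angular range of the unique excursion arc should miss $d^R_j$; if it does not, the swapped curve fails the ``through $d^R_j$'' constraint and you get no contradiction.

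The paper avoids this by not doing a symmetric exchange at all. Its argument (whose text has evident typos: read $\calL$ as the segment from $d^R_j$ to the $\R$-crossing $p^\ast$ of $S^R_j$, and ``including $p^\ast$'' as ``including $d^R_i$'') is a one-sided improvement of $S^R_j$: by optimality the portion of $S^R_j$ near the tube from $d^R_j$ to $p^\ast$ is straight and lies in $T^R$; since $d^R_i$ is closer to $[x_{i^*-1},x_{i^*})$ than $d^R_j$, the point $d^R_i$ is within $\openC/2$ of $\calL$, so detouring $S^R_j$ through $d^R_i$ costs $<\openC$ in perimeter but absorbs the cluster of $d^R_i$, saving $\ge\openC$. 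This contradicts optimality of $S^R_j$ directly, and since only $S^R_j$ is modified, the constraints through $d^R_j$, across $\R$, and around $x_0$ are trivially preserved. Notice that the tube fact you isolate as ``where the real work is'' --- that $d^R_i$ must lie inside $S^R_j$ --- is essentially the paper's entire proof; your exchange wrapper is what introduces the unhandled $\tau=j$ case.
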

\begin{proof}
  Assume towards a contradiction that the lemma is not correct.
  By definition, $S^R_j$ 
  crosses $\R$ in $[x_{i^*-1},x_{i^*})$; let $p^*$ be the 
  crossing point. 
  By optimality, the segment $\calL$ from $d^R_i$ to $p^*$ 
  lies in~$T^R$.

  Now, observe that since $d^{R}_i$ is closer to 
  $[x_{i^*-1},x_{i^*})$ than $d^{R}_j$, the distance from $d^{R}_i$
  to $\calL$ is
  less than $\openC/2$.
  Thus, including $p^*$
  yields a better curve that lies in
  $C_{i^*} \cup \inter(C_{i^*})$, that has $d^{R}_j$ on its
  boundary, crosses
  $\R$ in $[x_{i^*-1},x_{i^*})$, and
  contains $x_0$, a contradiction.

  
\end{proof}

Define $S^R_j$ to be the best curve in
$(\inter(C_{i^*}) \cap \outside(C_{i^*-1})) \cup C_{i^*-1} \cup C_{i^*}$
starting at $d^R_j$ and
crossing $\R$ in $[x_{i-1},x_i)$.


The lemma below follows immediately by
combining
Lemma~\ref{lem:nestedgeneral} and
Lemma~\ref{lem:nestedfinal} with 
Lemma~\ref{lem:depthrecurs1}.
\begin{lemma}
  \label{lem:depthrecurs}
  There exists an index $j^R$ satisfying the property
  $|V^{\inter}(S_{j^R-1})|, |V^{\outside}(S_{j^R})| \le n/2$.
\end{lemma}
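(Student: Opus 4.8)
The plan is to reuse, almost verbatim, the mechanism behind Lemma~\ref{lem:depthrecurs1}: exhibit a family of curves that is totally ordered by nesting around $x_0$, observe that the number of input points it encloses changes monotonically along the family, and then pick the consecutive pair of curves whose interior and exterior each avoid more than half of the $n$ points. The whole content is assembling the two structural results about nesting with the counting statement of Lemma~\ref{lem:depthrecurs1}.

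\textbf{Step 1 (the $S^R_j$ form a nesting chain).} First I would show that for $i<j$ the curve $S^R_i$ is completely contained in $\overline{\inter(S^R_j)}$. This is where the two cited lemmas enter, in two complementary regimes. If $S^R_i$ and $S^R_j$ cross $\R$ at the \emph{same} point, then since $d^R_i,d^R_j$ are ordered by distance to $[x_{i^*-1},x_{i^*})$, Lemma~\ref{lem:nestedfinal} directly gives $S^R_i\subseteq S^R_j\cup\inter(S^R_j)$. If instead they cross $\R$ at two distinct points of $[x_{i^*-1},x_{i^*})$, those two crossing points lie on the common ray $\R$ emanating from $x_0$, so Lemma~\ref{lem:nestedgeneral} applies and forces the curve whose crossing point is nearer to $x_0$ to be contained in the closed interior of the other. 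Combining the two cases yields the full chain $S^R_1,\dots,S^R_{|T^R|}$. It is convenient to extend the chain at both ends by $C_{i^*-1}$ and $C_{i^*}$ themselves: every $S^R_j$ is, by its definition, confined to the closed annulus with $C_{i^*-1}$ on the inside and $C_{i^*}$ on the outside, so $C_{i^*-1},S^R_1,\dots,S^R_{|T^R|},C_{i^*}$ is still a nesting chain (write its members as $\sigma_0\subseteq\sigma_1\subseteq\cdots\subseteq\sigma_m$).

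\textbf{Step 2 (monotonicity and the choice of $j^R$).} Along the chain we have $\inter(\sigma_0)\subseteq\cdots\subseteq\inter(\sigma_m)$, hence $j\mapsto|V^{\inter}(\sigma_j)|$ is non-decreasing, $j\mapsto|V^{\outside}(\sigma_j)|$ is non-increasing, and $|V^{\inter}(\sigma_j)|+|V^{\outside}(\sigma_j)|\le n$ for every $j$ (only the $O(1)$ points lying exactly on $\sigma_j$, such as its anchor $d^R_j$, are counted in neither, and the symbolic perturbation of Section~\ref{sec:reduction} keeps this in check). Lemma~\ref{lem:depthrecurs1} says the two ends of the chain already lie on the correct side of the median: $|V^{\inter}(C_{i^*-1})|\le n/2$ and $|V^{\outside}(C_{i^*})|\le n/2$. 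Let $j^R$ be the smallest index with $|V^{\outside}(\sigma_{j^R})|\le n/2$; this exists since $\sigma_m=C_{i^*}$ has this property. Minimality of $j^R$ gives $|V^{\outside}(\sigma_{j^R-1})|>n/2$, and therefore $|V^{\inter}(\sigma_{j^R-1})|<n/2$; together with $|V^{\outside}(\sigma_{j^R})|\le n/2$ this is exactly the claimed property. Choosing $i^*$ as in the proof of Lemma~\ref{lem:depthrecurs1} (so that $|V^{\outside}(C_{i^*-1})|>n/2$) forces $j^R$ away from the artificial bottom index, so $\sigma_{j^R-1}$ is a genuine tube curve $S^R_\bullet$ — which is all the recursion needs (and one may simply read $S^R_0$ as $C_{i^*-1}$ if the index $j^R-1=0$ is ever reached).

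\textbf{Main obstacle.} There is essentially nothing deep here once Step 1 is in place, and that is the only point that takes care: one must notice that \emph{neither} Lemma~\ref{lem:nestedfinal} \emph{nor} Lemma~\ref{lem:nestedgeneral} alone certifies that the whole family $\{S^R_j\}$ is a chain — the tube ordering handles curves anchored at a common point of $\R$ and the ray ordering handles curves anchored at different points, and it is their combination that produces a single total order. The remaining work — monotonicity of the point counts and the pigeonhole choice of $j^R$, together with the mild bookkeeping of points lying on the curves — is identical to Lemma~\ref{lem:depthrecurs1}, which is why the statement ``follows immediately'' from the three cited lemmas.
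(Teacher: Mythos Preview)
Your proposal is correct and matches the paper's approach: the paper's proof is the single sentence ``follows immediately by combining Lemma~\ref{lem:nestedgeneral} and Lemma~\ref{lem:nestedfinal} with Lemma~\ref{lem:depthrecurs1},'' and you have spelled out the nesting-chain-plus-median argument that this sentence encodes.

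One small remark on Step~1: the case split into ``same crossing point on $\R$'' versus ``different crossing points'' is superfluous, and the second case is slightly off. Lemma~\ref{lem:nestedfinal} already asserts $S^R_i\subseteq S^R_j\cup\inter(S^R_j)$ for \emph{all} $i<j$ with $d^R_i,d^R_j\in T^R$, regardless of where the curves meet~$\R$; no appeal to Lemma~\ref{lem:nestedgeneral} is needed to order the $S^R_j$ among themselves. Moreover, Lemma~\ref{lem:nestedgeneral} is stated for the unconstrained optima $\bestc{p}$, not for the constrained curves $S^R_j$ (which are required to stay in $\overline{\inter(C_{i^*})}$ and pass through a prescribed tube point), so it does not apply to them as written. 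Where Lemma~\ref{lem:nestedgeneral} does enter is only at the level of the framing curves $C_{i^*-1},C_{i^*}$ via Lemma~\ref{lem:depthrecurs1}, exactly as you use it in Step~2. With that adjustment your argument is clean.
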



We proceed similarly with the left part of the tube.
Namely, we define 
$T^L = T \cap L \cap \inter(S_{j^R}) \cap \outside(S_{j^R-1})$.
Let $d^L_1,\ldots,d^L_{|T^L|}$ be the points in $T^L$ in 
order of non-decreasing distances to 
the interval $[x_{i^*-1},x_{i^*})$.

We denote by $S^L_j$ the best
curve starting at $d^L_j$ and staying in 
$\inter(S_{j^R}) \cap \outside(S_{j^R-1})$.
Mimicking the proof of Lemma~\ref{lem:nestedfinal}, we
have the following.
\begin{lemma}
  \label{lem:nestedfinal2}
  Consider two integers $i,j$, $i < j$.
  For any points $d^{L}_i,d^{L}_j \in T^{L}$, 
  we have $S^L_i \subseteq S^L_j \cup \inter(S^L_j)$.
  Moreover, there exists an index $j^L$ such that
  $|V^{\inter}(S_{j^L-1})|, |V^{\outside}(S_{j^L})| \le n/2$.
  
\end{lemma}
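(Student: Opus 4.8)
The plan is to establish the two assertions of Lemma~\ref{lem:nestedfinal2} one after the other, each time by transcribing the corresponding statement already proved for the right tube. For the nesting assertion this means repeating the argument of Lemma~\ref{lem:nestedfinal} almost verbatim, with the single enclosing curve $C_{i^*}$ replaced by the pair of curves $S_{j^R-1}$ and $S_{j^R}$ that were fixed after the right-tube phase, so that the competitor curves now have to live in the annular region $\inter(S_{j^R})\cap\outside(S_{j^R-1})$ rather than inside a single curve. For the existence of $j^L$ this means combining the new nesting assertion with the counting argument already used to prove Lemma~\ref{lem:depthrecurs1}, exactly as Lemma~\ref{lem:depthrecurs} was deduced for the right tube.

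For the nesting part, suppose $i<j$ but $S^L_i\not\subseteq\overline{\inte(S^L_j)}$. Both $S^L_i$ and $S^L_j$ enclose $x_0$ and both meet $\R$ inside $[x_{i^*-1},x_{i^*})$, so by the Jordan-curve reasoning of Lemma~\ref{lem:nestedray} they must cross; let $p^*$ be the point where $S^L_j$ crosses $\R$. By optimality of $S^L_i$, the sub-arc of $S^L_i$ from $d^L_i$ to its own crossing point with $\R$ is a shortest connection and hence stays in the left tube $T^L$; since that crossing point lies within $\openC/n^7$ of $p^*$, and $T^L$ is essentially a convex strip of width $\openC/2$ about the line $s$, the chord $\calL$ from $d^L_i$ to $p^*$ also lies in $T^L$. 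As $d^L_j$ is farther from $[x_{i^*-1},x_{i^*})$ than $d^L_i$ is, the distance from $d^L_j$ to $\calL$ is below $\openC/2$, so we may reroute the portion of $S^L_j$ between $d^L_j$ and $p^*$ through a small detour of $\calL$ that visits $d^L_j$. The resulting curve is simple, angle-monotone from $x_0$, still encloses $x_0$, still passes through $d^L_j$, still meets $\R$ in $[x_{i^*-1},x_{i^*})$, stays in $\inter(S_{j^R})\cap\outside(S_{j^R-1})$ because it never leaves $T^L$, and --- using the good-ray lower bound on distances from input points to $\R$, exactly as in Lemma~\ref{lem:technical} --- is strictly shorter than $S^L_j$. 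This contradicts the optimality of $S^L_j$.

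For the existence of $j^L$, extend the family by setting $S^L_0:=S_{j^R-1}$ and $S^L_{|T^L|+1}:=S_{j^R}$, so that by the nesting assertion at the interior indices and the definitions at the two ends we get $S^L_0\subseteq S^L_1\subseteq\cdots\subseteq S^L_{|T^L|+1}$. Hence $j\mapsto|V^{\inter}(S^L_j)|$ is non-decreasing and $j\mapsto|V^{\outside}(S^L_j)|$ is non-increasing. By Lemma~\ref{lem:depthrecurs} we have $|V^{\inter}(S^L_0)|=|V^{\inter}(S_{j^R-1})|\le n/2$ and $|V^{\outside}(S^L_{|T^L|+1})|=|V^{\outside}(S_{j^R})|\le n/2$, i.e.\ $|V^{\inter}(S^L_{|T^L|+1})|\ge n/2$. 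Letting $j^L$ be the least index with $j^L\ge 1$ and $|V^{\inter}(S^L_{j^L})|\ge n/2$ then gives $|V^{\inter}(S^L_{j^L-1})|\le n/2$ and $|V^{\outside}(S^L_{j^L})|=n-|V^{\inter}(S^L_{j^L})|\le n/2$, as required.

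The step I expect to be the real obstacle is verifying, inside the nesting argument, that the shortcut curve is a legitimate competitor for the problem defining $S^L_j$: unlike in Lemma~\ref{lem:nestedfinal}, it must be trapped between the two fixed boundaries $S_{j^R-1}$ and $S_{j^R}$, not merely inside one curve. This is exactly where one uses the width-$\openC/2$ tube together with the $\openC/n^7$-short crossing interval to keep the new curve inside the annulus, and the good-ray property to guarantee that the length saved by shortcutting dominates the $2\openC/n^7$ slack between $S^L_j$ and a genuinely optimal curve. Everything else is a routine copy of the right-tube arguments plus an integer intermediate-value count.
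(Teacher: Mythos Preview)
Your proposal takes essentially the same approach as the paper, which simply states that the proof is obtained by ``mimicking the proof of Lemma~\ref{lem:nestedfinal}'' together with the counting argument of Lemma~\ref{lem:depthrecurs}. You have in fact been more explicit than the paper on two points: you flag the extra two-sided annulus constraint $\inter(S_{j^R})\cap\outside(S_{j^R-1})$ that the competitor curve must respect, and you spell out the intermediate-value style selection of $j^L$ by sandwiching the family between $S_{j^R-1}$ and $S_{j^R}$; both of these are left implicit in the paper's one-line deferral.

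One small remark: in your nesting paragraph the roles of $d^L_i$ and $d^L_j$ in the shortcut get slightly tangled (you define $\calL$ from $d^L_i$ to $p^*$ and then bound the distance from $d^L_j$ to $\calL$, whereas the rerouting is meant to produce a cheaper competitor for $S^L_j$, which already passes through $d^L_j$). The paper's own proof of Lemma~\ref{lem:nestedfinal} has the analogous index slip, so you are faithfully mimicking it, but when you write this up you should make the segment $\calL$ go from $d^L_j$ to $p^*$ and bound the distance from $d^L_i$ to it, so that the shortcut genuinely improves $S^L_j$ (or symmetrically, improve $S^L_i$).
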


Before concluding the proof, we define $d^L_{j^L},d^L_{j^L-1},d^R_{j^R-1},d^R_{j^R}$
to be the \emph{dominating} points.

\begin{lemma}
  \label{lem:correctrecurs}
  Consider the best curve $C^*$.
  We have that $C^*$ is contained in
  \begin{enumerate}
  \item $\inter(S_{j^L-1}) \cup S_{j^L-1}$, or
  \item $\outside(S_{j^L}) \cup S_{j^L}$.
  \end{enumerate}
\end{lemma}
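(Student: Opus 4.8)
The plan is to prove the dichotomy by iterating, three times, the same trichotomy argument that underlies Lemma~\ref{lem:correctness1}, and to observe that the ``first two'' branches of each iteration are already absorbed into the two stated cases because the regions shrink monotonically along the chain $\{C_i\}\rightsquigarrow\{S^R_j\}\rightsquigarrow\{S^L_j\}$.

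First I would record the monotonicity. By the (refined) definitions, every curve $S^R_j$ lies in the closed annulus $\overline{\inter(C_{i^*})}\cap\overline{\outside(C_{i^*-1})}$ and every curve $S^L_j$ lies in the closed annulus $\overline{\inter(S_{j^R})}\cap\overline{\outside(S_{j^R-1})}$, so that
\[
\inter(C_{i^*-1})\subseteq\inter(S_{j^R-1})\subseteq\inter(S_{j^L-1})
\quad\text{and}\quad
\inter(S_{j^L})\subseteq\inter(S_{j^R})\subseteq\inter(C_{i^*}).
\]
Hence $C^*\subseteq\overline{\inter(C_{i^*-1})}$ implies conclusion~(1), and $C^*\subseteq\overline{\outside(C_{i^*})}$ implies conclusion~(2); likewise for the analogous containments relative to $S_{j^R-1}$ and $S_{j^R}$. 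So by Lemma~\ref{lem:correctness1}, and by its exact analogue for the nested family $\{S^R_j\}$ (proved as Lemma~\ref{lem:correctness1}, using the nesting of Lemma~\ref{lem:nestedfinal} in place of Lemma~\ref{lem:nestedray}), the only remaining case is that $C^*$ crosses $\R$ in $[x_{i^*-1},x_{i^*})$ and lies in the closed annulus bounded by $S_{j^R-1}$ and $S_{j^R}$. In that case Lemma~\ref{lem:technical}, together with the optimality of $C^*$ (which forbids the alternative of a strictly cheaper curve in $\outside(C_{i^*})\cup C_{i^*}$), guarantees that $\ell$ and $r$ lie in the tube, so the left-tube construction is non-vacuous.

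For that remaining case I would run the trichotomy once more, now for $\{S^L_j\}$, using the nesting of Lemma~\ref{lem:nestedfinal2}. If $\partial C^*$ meets no point of $T^L$, then---since $T^L$ consists of \emph{all} input points in $T\cap L\cap\inter(S_{j^R})\cap\outside(S_{j^R-1})$, the vertices of the convex curve $C^*$ are input points, and no $d^L_k$ lies strictly in the open annulus between $S^L_{j^L-1}$ and $S^L_{j^L}$ by star-shapedness (Lemmas~\ref{lem:starchildren},~\ref{lem:nestedfinal2})---the curve $C^*$ cannot run strictly between $S_{j^L-1}$ and $S_{j^L}$, and we land in case~(1) or~(2). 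Otherwise, let $d^L_m$ be the point of $T^L$ on $\partial C^*$ farthest from $[x_{i^*-1},x_{i^*})$. Then $C^*$ is a feasible competitor in the program defining $S^L_m$: it contains $x_0$, has $d^L_m$ on its boundary, crosses $\R$ in $[x_{i^*-1},x_{i^*})$, and lies in the required annulus. Hence $\cost(S^L_m)\le\cost(C^*)$; since $C^*$ is globally optimal and ties are broken by symbolic perturbation, $C^*=S^L_m$. Finally, by the nesting of Lemma~\ref{lem:nestedfinal2}: if $m\le j^L-1$ then $C^*=S^L_m\subseteq\overline{\inter(S_{j^L-1})}$, giving~(1); if $m\ge j^L$ then $C^*=S^L_m\subseteq\overline{\outside(S_{j^L})}$, giving~(2).

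The hard part will be the last case. The two delicate points are: (i) justifying that a curve whose boundary avoids every point of $T^L$ cannot thread strictly between the two nested curves $S_{j^L-1}$ and $S_{j^L}$ (this needs that $T^L$ captures exactly the input points of the relevant sub-region and the no-point-in-the-gap consequence of star-shapedness), and (ii) checking that $C^*$ genuinely satisfies the side constraints of the program defining $S^L_m$---that being in the $C_{i^*-1}/C_{i^*}$ annulus, crossing $[x_{i^*-1},x_{i^*})$, and the tube bound of Lemma~\ref{lem:technical} together place $C^*$ inside $\overline{\inter(S_{j^R})}\cap\overline{\outside(S_{j^R-1})}$ as the definition of $S^L_m$ requires. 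Everything else is a direct re-run of the Lemma~\ref{lem:correctness1}/Lemma~\ref{lem:nestedfinal} template.
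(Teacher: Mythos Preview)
Your approach is essentially the paper's, but organised as three nested trichotomies rather than the paper's single contradiction argument. The paper argues directly: assume $C^*$ meets the open annulus $\inte(S_{j^L})\cap\exte(S_{j^L-1})$; then by the non-crossing Lemma~\ref{lem:nestedopt} (and its analogue for the constrained curves $S^R_j,S^L_j$) the curve $C^*$ must lie entirely in the closed $S_{j^R-1}/S_{j^R}$ annulus, whence Lemma~\ref{lem:technical} forces $C^*$ to pass through a tube point there; but the only tube points in that region are the four \emph{dominating points} $d^R_{j^R-1},d^R_{j^R},d^L_{j^L-1},d^L_{j^L}$, so $C^*$ coincides with one of $S_{j^R-1},S_{j^R},S_{j^L-1},S_{j^L}$ and the lemma follows. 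This is shorter than your route because it avoids casing on a general index $m$ and handles both the $L$ and $R$ sides at once.

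Your case~(a) argument has a gap. You claim that if $\partial C^*$ meets no point of $T^L$ then $C^*$ cannot run strictly between $S_{j^L-1}$ and $S_{j^L}$, and you justify this by noting that no $d^L_k$ lies in the open $(S_{j^L-1},S_{j^L})$-annulus. But that observation is about points of $T^L$, which by hypothesis are \emph{not} on $C^*$, so it says nothing about where $C^*$ can go; a polygonal curve can thread through a region containing none of its vertices. The correct way to dispose of case~(a) is the paper's: Lemma~\ref{lem:technical} (read for $C^*$ rather than $C_{i^*}$, which is what the paper's invocation intends) guarantees that the first convex-hull vertex of $C^*$ on the $L$ side of the ray lies in the tube, and since $C^*$ is already confined to the $S_{j^R-1}/S_{j^R}$ annulus this vertex belongs to $T^L$---so case~(a) is vacuous.
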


\begin{proof}
  To show the lemma we need to prove that $C^*$ does not intersect
  $\inte(S_{j^L}) \cap \exte(S_{j^L-1})$. 
  Observe that 
  $\exte(C_{i^*}) \cup C_{i^*} \subseteq \exte(S_{j^R}) \cup S_{j^R} \subseteq \exte(S_{j^L}) \cup S_{j^L}$
  and 
  $\inte(C_{i^*-1}) \cup C_{i^*-1} \subseteq \inte(S_{j^R-1}) \cup S_{j^R-1} \subseteq \inte(S_{j^L-1}) \cup S_{j^L-1}$.
  Thus, for $C^*$ to intersect
  $\inte(S_{j^L}) \cap \exte(S_{j^L-1})$ we need that $C^*$ intersects
  $\exte(C_{i^*-1}) \cap \inte(C_{i^*})$ and 
  $\exte(S_{j^R-1}) \cap \inte(S_{j^R})$.
  Hence, by Lemma~\ref{lem:nestedopt}, for $C^*$ to intersect 
  $\exte(C_{i^*-1}) \cap \inte(C_{i^*})$, we need that $C^*$ lies in 
  $\exte(C_{i^*-1}) \cap \inte(C_{i^*}) \cup C_{i^*} \cup C_{i^*-1}$.
  Applying Lemma~\ref{lem:nestedopt} and the definition of $S_{j^R-1}, S_{j^R}$, 
  for $C^*$ to intersect $\inte(S_{j^R}) \cap \exte(S_{j^R-1})$, we have that
  $C^*$ lies in $\inte(S_{j^R}) \cap \exte(S_{j^R-1}) \cup S_{j^R} \cup S_{j^R-1}$.

  Finally, if $C^*$ intersects
  $\inte(S_{j^L}) \cap \exte(S_{j^L-1})$, then Lemma~\ref{lem:technical} applies 
  and we have that
  $C^*$ has to intersect a point in the tube. Now observe that the
  only points in the tube in
  $\inter(S_{j^R}) \cup S_{j^R} \cup \outside(S_{j^R-1}) \cup
  S_{j^R-1}$ are the dominating points. Hence, $C^*$ must be
  one of $S_{j^R},S_{j^R-1},S_{j^L},S_{j^L-1}$, and the lemma
  follows.
\end{proof}

\begin{proof}[Proof of Lemma~\ref{lem:finding-cluster-with-center}]
  We can apply the algorithm defined in Lemma~\ref{lem:goodray}
  to obtain a good ray $\R$ with origin $p$. 

  We can also apply a binary search on the landmarks
  to find $i^*$. By Lemma~\ref{lem:aspectratio}, the
  number of landmarks is bounded by $O(n^9)$ -- observe that
  the set of landmarks does not have to be computed explicitly.
  In particular, we apply a binary search procedure that makes $O(\log n)$ calls
  to the algorithm of Section~\ref{S:findcurve} (Lemma~\ref{lem:findcurve}) to find $i^*$.

  We compute $d^R_1,\ldots,d^R_{|T^R|}$ and apply a binary search procedure making
  $O(\log n)$ calls to the algorithm to find $j^R$.
  Proceed similarly to find $d^L_{j^L},d^L_{j^L-1}$, and
  proceed recursively on $S^L_{j^L} \cup \outside(S^L_{j^L})$ and
  $S^L_{j^L-1} \cup \inter(S^L_{j^L-1})$.

  By Lemma~\ref{lem:depthrecurs}, the depth of the recursion is at most $O(\log n)$ 
  and so, since each vertex appears in at most 2 regions of the plane at each 
  recursive call, each vertex contributes to the running time of 
  the algorithm of Section~\ref{S:findcurve} (Lemma~\ref{lem:findcurve}) at most $O(\log n)$ times.
  It follows that the overall complexity is at most $O(n \cdot \log^3 n)$.
  The correctness follows from Lemma~\ref{lem:correctrecurs}.
\end{proof}



\subsection{Proof of Lemma~\ref{lem:finding-cluster-without-center}}
\label{S:findClusterCenter}

Given a weighted set of points $P$, where for each $p \in P$ we have a weight $w_p$,
an \emph{approximate centerpoint} $c$ is a point in the plane such that, for an arbitrary line
containing the point, each of the open halfplanes bounded by the line contains a weight of at least
$\frac{\sum_{p \in P}w_p}{5}$.
Given a tube $T$ (i.e., quadrilateral), let $L_{\max}$ denote its perimeter, and let $L_{\min} \ge \frac{999}{1000} L_{\max}$
denote the 
assumed minimum perimeter of an optimal cluster union within $T$.
Consider the optimal clusters
that are completely contained within the tube, which we denote by $T_1,\ldots,T_m$.  For each
such cluster $T_i$, we choose an arbitrary point $p_i \in T_i$, and assign it a weight of
$w_{p_i} \mydef h(T_i) + \openC$.  Let $W \mydef \sum_{i=1}^m w_{p_i}$.  Moreover,
for any set of points $S \subseteq P$, we let $w(S) \mydef \sum_{p \in S}w_p$.
Since we assume the existence of an optimal cluster union, we assume that $W \geq L_{\min} + \openC$.

In the following, we seek to find an approximate centerpoint efficiently.

\begin{lemma}\label{lem:heavyline}
Let $\ell$ be a line such that the total weight of points on the line
is at least $\frac{L_{\min}+\openC}{100}$. Then there exists a point $p^*$ on $\ell$
that belongs to an optimal cluster union that we can find in $O(n \log n)$ time.
\end{lemma}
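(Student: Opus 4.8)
The plan is to output $p^{\ast}$ as a weighted median, along $\ell$, of the representative points that lie on $\ell$; the point of the argument is that almost all of the representative weight sits inside the optimal cluster union, so a weighted median cannot escape it. Write $\mathcal S^{\ast}$ for the optimal cluster union of $\cl$ and $C^{\ast}\mydef U_{\mathcal S^{\ast}}$ for the corresponding big cluster, which by assumption lies within $T$, so $h(C^{\ast})\ge L_{\min}$ and $H(C^{\ast})\subseteq T$. Every cluster $C\in\mathcal S^{\ast}$ then has $H(C)\subseteq H(C^{\ast})\subseteq T$, hence is one of the tube clusters; so $\mathcal S^{\ast}=\{T_i : i\in K\}$ for some $K\subseteq\{1,\dots,m\}$. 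Since the $T_i$ are pairwise disjoint and $p_i\in T_i$, the representatives contained in $C^{\ast}$ are exactly $\{p_i : i\in K\}$; I call these \emph{good} and the remaining $p_j$ (with $j\notin K$) \emph{bad}.

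The crux is to bound the total bad weight $\sum_{j\notin K}w_{p_j}$. For this I would compare $\mathcal S^{\ast}$ with the cluster union $\mathcal S^{\ast\ast}\mydef\{T_1,\dots,T_m\}$ that fuses \emph{all} tube clusters into a single cluster $C^{\ast\ast}\mydef\bigcup_{i=1}^{m}T_i$. The partitions $\cl[\mathcal S^{\ast}]$ and $\cl[\mathcal S^{\ast\ast}]$ coincide outside $T$, so minimality of $\cl[\mathcal S^{\ast}]$ gives
$$h(C^{\ast})+\openC+\sum_{j\notin K}\bigl(h(T_j)+\openC\bigr)\ \le\ h(C^{\ast\ast})+\openC .$$
Every point of $C^{\ast\ast}$ lies in the convex quadrilateral $T$, hence $H(C^{\ast\ast})\subseteq T$ and $h(C^{\ast\ast})\le L_{\max}$; together with $h(C^{\ast})\ge L_{\min}$ this yields $\sum_{j\notin K}w_{p_j}\le L_{\max}-L_{\min}\le\frac{1}{999}L_{\min}<\frac{L_{\min}+\openC}{100}$, using $L_{\min}\ge\frac{999}{1000}L_{\max}$ in the last steps. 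Consequently, for the line $\ell$ of the statement, whose representatives carry weight at least $\frac{L_{\min}+\openC}{100}$, at least one good representative lies on $\ell$. This step is the main obstacle: comparing against merging the tube clusters one at a time only bounds each bad cluster's weight individually, which is worthless since there can be many of them; one must fuse all of them simultaneously, and it is here that the constants $\frac{999}{1000}$ and $\frac{1}{100}$ are consumed, through the crude estimate $h(C^{\ast\ast})\le\mathrm{perim}(T)=L_{\max}$.

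Next I would show that the good representatives on $\ell$ form a contiguous block. First, $H(C^{\ast})\cap H(T_j)=\emptyset$ for every $j\notin K$: otherwise, fusing $T_j$ into $C^{\ast}$ would — as in the proof of Lemma~\ref{lem:sum-of-indivisible-clusters} — satisfy $h(C^{\ast}\cup T_j)\le h(C^{\ast})+h(T_j)$, so the cost would change by $h(C^{\ast}\cup T_j)-h(C^{\ast})-h(T_j)-\openC\le-\openC<0$, contradicting the minimality of $\cl[\mathcal S^{\ast}]$. Hence $H(C^{\ast})\cap\ell$ is an interval $I$ (nonempty, by the previous paragraph), every good representative on $\ell$ lies in $I$, and every bad representative on $\ell$ lies outside $I$. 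Ordering the representatives on $\ell$ along $\ell$, they therefore split into a left bad block of weight $w_L$, a nonempty good block of weight $w_I$, and a right bad block of weight $w_R$, with $w_L+w_R\le L_{\max}-L_{\min}$.

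Finally, the algorithm computes the representatives on $\ell$, their total weight $W_{\ell}=w_L+w_I+w_R$, sorts them along $\ell$, and returns the representative $p^{\ast}$ at which the running cumulative weight first reaches $W_{\ell}/2$. Using $w_L+w_R\le\frac{1}{999}L_{\min}$ and $W_{\ell}\ge\frac{L_{\min}+\openC}{100}\ge\frac{L_{\min}}{100}$, one checks $w_L<W_{\ell}/2$ and $w_L+w_I=W_{\ell}-w_R\ge W_{\ell}-\frac{1}{999}L_{\min}\ge W_{\ell}/2$, so the threshold $W_{\ell}/2$ is crossed strictly inside the good block; hence $p^{\ast}=p_i$ for some $i\in K$ and $p^{\ast}\in T_i\subseteq C^{\ast}$ belongs to an optimal cluster union. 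All steps are dominated by the sort, giving the claimed $O(n\log n)$ running time.
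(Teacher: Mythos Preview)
Your proof is correct and follows essentially the same strategy as the paper's: compute a weighted median of the representatives on $\ell$ and argue it must land inside $H(C^{\ast})$, using the same underlying comparison between the optimal cluster union and the alternative of merging all tube clusters into one (cost $\le L_{\max}+\openC$). The paper packages this as a direct contradiction---if the median were outside, then by convexity one half of the line weight lies outside, forcing the total partition cost in $T$ to exceed $L_{\max}+\openC$---whereas you first bound the total bad weight globally by $L_{\max}-L_{\min}$ and then verify the median lands in the contiguous good block; this is a cosmetic rearrangement of the same inequality.
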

\begin{proof}
We let $W_{\ell}$ denote the total weight of points that lie on the line $\ell$,
so that $W_{\ell} \geq \frac{L_{\min}+\openC}{100}$.
We first sort the points on the line by ascending $x$-coordinate.  Then, we can find a weighted
point $p^*$ on the line in this sorted ordering such that the total weight of points
(including the point $p^*$) that appear earlier in the sorted ordering is at least $\frac{W_{\ell}}{2}$,
and similarly the total weight of points appearing later in the sorted ordering is at least $\frac{W_{\ell}}{2}$.
This can essentially be done by computing the prefix sum of the sorted points, and then performing
a binary search.  In all, this process takes $O(n \log n)$ time.

We claim that $p^*$ belongs to an optimal cluster union.  In particular, suppose towards a contradiction
that this is not the case.  For ease of notation, let $P_1$ be the points appearing earlier
than (and including) $p^*$ in the ordering, and similarly define $P_2$ to be those that appear
later (including $p^*$).  Then it must be the case that either all of $P_1$ does not lie in an
optimal cluster union, or all of $P_2$ does not lie in an optimal cluster union.  Otherwise, if there is some point
$p_1 \in P_1$ in an optimal cluster union and some point $p_2 \in P_2$ in an optimal cluster union,
then by convexity the point $p^*$ must also belong to an optimal cluster union.
Hence, the total weight of points outside of an optimal cluster union must be at least
$\frac{W_{\ell}}{2} \geq \frac{L_{\min}+\openC}{200}$.  Now, an optimal cluster union has weight
at least $L_{\min} + \openC$, and hence the total cost of the optimal partition
in the tube $T$ is at least $L_{\min} + \openC + \frac{W_{\ell}}{2} \geq (L_{\min} + \openC)(1 + \frac{1}{200})$.
Since we have that $L_{\min} \geq \frac{999}{1000}L_{\max}$, we have that this quantity is
at least $(1 + \frac{1}{200})(\frac{999}{1000}L_{\max} + \openC) > L_{\max} + \openC$.
This yields a contradiction however, since the supposed optimal partition is not
optimal.
\end{proof}

In order to find an approximate centerpoint, we make use of the notion of weighted ham-sandwich
cuts as defined in~\cite{BL04}.  It is helpful to define the following notation, as developed in~\cite{BL04}.
For a line $\ell$, we denote by $\ell^+$ and $\ell^-$ the two open halfplanes
bounded by $\ell$.  We say that a line $\ell$ \emph{bisects} a weighted
set of points $S$ if $|w(S \cap \ell^+) - w(S \cap \ell^-)| \leq |w(S \cap \ell)|$
(note that, in our setting, $w(S \cap \ell)$ is always non-negative, and hence
$|w(S \cap \ell)| = w(S \cap \ell)$).
\begin{definition}[Weighted Ham-Sandwich Cut~\cite{BL04}]
Let $A,B$ be two sets of weighted points, where
for each point $p \in A \cup B$, we are given a weight $w_p$.  A \emph{weighted
ham-sandwich cut} for $A$ and $B$ is a line that bisects both weighted
sets $A$ and $B$ simultaneously.
\end{definition}

We make use of the following result.
\begin{lemma}[Theorem 1, \cite{BL04}]\label{lem:ham}
Given two sets $A,B \subseteq \mathbb{R}^2$ of weighted points,
with $|A| + |B| = n$, a weighted ham-sandwich cut can be computed
in $O(n \log n)$ time.
\end{lemma}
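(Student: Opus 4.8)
Since this lemma is precisely Theorem~1 of~\cite{BL04}, the plan is to invoke that result directly; for completeness I describe the argument one would give. I would first reduce to an existence claim by a sweeping/continuity argument. Fix a direction $\theta$ and sweep a line $\ell$ of direction $\theta$ across the plane: the signed weight difference $w(A\cap\ell^+)-w(A\cap\ell^-)$ is monotone in the sweep parameter, changes by at most $2w_p$ when the sweep passes a point $p$, and runs from $+w(A)$ to $-w(A)$, so some sweep position satisfies $|w(A\cap\ell^+)-w(A\cap\ell^-)|\le w(A\cap\ell)$; that is, there is a line of direction $\theta$ bisecting $A$ (and the set of such lines is a closed interval in the sweep parameter). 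Now track, as $\theta$ varies, the $B$-weight lying on a fixed side of an $A$-bisecting line of direction $\theta$: as $\theta$ increases by $\pi$ the two sides of the line interchange, so this quantity changes sign, and by continuity there is a direction $\theta^\ast$ at which the $A$-bisector also bisects $B$. Any such line is a weighted ham-sandwich cut, which establishes existence.

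The second step, which carries the real technical weight, is to make this constructive within the stated time bound. I would pass to the dual plane, where each weighted point becomes a line carrying the same weight and a line bisecting $A$ corresponds to a point on the weighted median level of the dual arrangement of $A$'s lines (and likewise for $B$); up to handling vertical cuts separately, these two $x$-monotone median curves must cross by the existence argument, and the crossing point is the dual of the desired cut. Since constructing even a single level explicitly is too expensive, I would instead run a prune-and-search in the spirit of Lo--Steiger--Matou\v{s}ek: repeatedly pick a vertical line in the dual, locate both median levels along it by a weighted selection in linear time, deduce on which side of the vertical the crossing lies, and discard a constant fraction of the dual lines whose position relative to the crossing is thereby fixed. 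Each round is linear and the number of surviving lines shrinks geometrically, so the standard analysis yields $O(n\log n)$ overall (a sharper version gives linear time, which is not needed here).

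The hard part is exactly this algorithmic step: the existence half is a soft topological argument, but reaching $O(n\log n)$ requires the weighted-selection and level-location primitives, a correct pruning invariant, and care with degeneracies where points lie on the candidate line. Since this is precisely the content of~\cite{BL04}, the cleanest route in the paper is to cite it rather than reprove it.
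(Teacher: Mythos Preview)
Your proposal is correct and matches the paper's approach: the paper does not prove this lemma at all but simply cites it as Theorem~1 of~\cite{BL04}, exactly as you suggest. Your added sketch of the existence and prune-and-search argument is a reasonable bonus for completeness, but the paper itself provides nothing beyond the citation.
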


Using Lemma~\ref{lem:ham}, we show how to obtain 
a point in an optimal cluster union. 

\begin{lemma}\label{lem:centerpoint}
Consider a tube $T$.  There exists an algorithm running in time $O(n \log n)$ that finds a point
belonging to an optimal cluster union contained in $T$, assuming it exists.
\end{lemma}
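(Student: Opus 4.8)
The plan is to combine the weighted ham-sandwich machinery of Lemma~\ref{lem:ham} with the ``heavy line'' escape hatch of Lemma~\ref{lem:heavyline} to produce an approximate centerpoint, and then argue that any approximate centerpoint of the weighted point set $P$ (with weights $w_{p_i} = h(T_i) + \openC$) must itself lie in an optimal cluster union. First I would recall the standard fact that a point $c$ is an approximate centerpoint — every open halfplane through $c$ carries weight at least $W/5$ — precisely when, for a ham-sandwich-type construction, no line can separate off more than $4W/5$ of the weight. The classical route (as in Theorem~1 of~\cite{BL04}) is: split $P$ arbitrarily into two halves $A, B$ of (roughly) equal weight, compute a weighted ham-sandwich cut $\ell_1$ for $A$ and $B$, then recurse on the four weighted pieces obtained, taking ham-sandwich cuts of opposite quadrant-pairs; the common point of the two cuts (or an appropriate intersection point) is an approximate centerpoint. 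By Lemma~\ref{lem:ham} each cut costs $O(n\log n)$ and only $O(1)$ cuts are needed, so the whole construction runs in $O(n\log n)$ time.

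Next I would handle the degenerate case that makes the centerpoint argument go through: it may happen that one of these ham-sandwich lines $\ell$ carries a large amount of weight on the line itself, i.e. $w(P \cap \ell) \ge \frac{L_{\min}+\openC}{100}$. In that case I invoke Lemma~\ref{lem:heavyline} directly on $\ell$ to extract, in $O(n\log n)$ time, a point $p^* \in \ell$ that belongs to an optimal cluster union, and we are done immediately. So assume henceforth that every line encountered carries less than $\frac{L_{\min}+\openC}{100}$ of the weight on it. Under this assumption, bisecting a weighted set genuinely splits its weight nearly in half, and the recursive ham-sandwich construction yields a point $c$ such that every closed halfplane through $c$ has weight at least, say, $W/5$ (the constant $5$ is what~\cite{BL04} gives; the exact constant is irrelevant as long as it is a fixed fraction bounded away from $1$, and the slack $\frac{1}{100}$ above is chosen generously enough to absorb the on-line weight).

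Finally I would prove that this approximate centerpoint $c$ lies in an optimal cluster union contained in $T$. Suppose not. Then $c$ lies strictly outside the convex hull $H$ of the (unique, by the structural lemmas) optimal cluster union in $T$, so there is a separating line $\ell$ through $c$ with all of $H$, hence all of the optimal cluster union's points, strictly on one side. The open halfplane $\ell^+$ not containing the cluster union then contains no point of the cluster union; but by the centerpoint property it contains weight at least $W/5 \ge (L_{\min}+\openC)/5$, and all of this weight comes from clusters $T_i$ entirely outside the optimal cluster union. Adding this to the weight $\ge L_{\min}+\openC$ of the optimal cluster union itself, the total cost of the optimal partition restricted to $T$ is at least $(L_{\min}+\openC)(1 + \tfrac15) > L_{\max}+\openC$, using $L_{\min} \ge \tfrac{999}{1000}L_{\max}$; this exceeds the cost of enclosing all of $A_T$ by the single tube-boundary curve plus one opening cost, contradicting optimality exactly as in the proof of Lemma~\ref{lem:heavyline}. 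Hence $c$ must belong to an optimal cluster union, completing the proof. The main obstacle is getting the bookkeeping of constants right — ensuring the centerpoint fraction, the ``heavy line'' threshold $\tfrac{1}{100}$, and the perimeter ratio $\tfrac{999}{1000}$ interlock so that the final inequality $(1+\tfrac15)(\tfrac{999}{1000}L_{\max}+\openC) > L_{\max}+\openC$ is strict — but this is purely arithmetic once the geometric separation argument is in place.
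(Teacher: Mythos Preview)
Your overall strategy---build an approximate centerpoint via weighted ham-sandwich, divert to Lemma~\ref{lem:heavyline} whenever a line carries too much weight, then use a separating-hyperplane argument to show the centerpoint lies in the optimal cluster union---is exactly what the paper does, and your final contradiction argument matches the paper's essentially verbatim.

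The one place where your proposal is shakier than the paper is the actual construction of the approximate centerpoint. You propose to ``split $P$ \emph{arbitrarily} into two halves $A,B$ of roughly equal weight,'' take a ham-sandwich cut $\ell_1$ for $(A,B)$, and then ``recurse on the four weighted pieces \ldots\ taking ham-sandwich cuts of opposite quadrant-pairs.'' But if the initial split $A,B$ is not spatial, a single ham-sandwich line $\ell_1$ produces only two halfplanes, not four spatial quadrants; the ``four pieces'' $A\cap\ell_1^\pm$, $B\cap\ell_1^\pm$ are colour classes, not regions, and it is not clear what a second cut of ``opposite quadrant-pairs'' means or why the intersection of the two cuts would have the $W/5$-in-every-halfplane property. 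The paper avoids this by making the first cut spatial: it takes a \emph{vertical median line} $\ell_v$ (found by sorting and prefix sums), sets $P_1,P_2$ to be the points strictly left and right of $\ell_v$, and then applies a single ham-sandwich cut $\ell_s$ to the pair $(P_1,P_2)$. Now $\ell_v$ and $\ell_s$ genuinely cut the plane into four quadrants, and a short calculation (each of $P_1,P_2$ has weight $\ge W/2 - W/100$, and $\ell_s$ bisects each up to $W/100$) shows every open quadrant has weight at least $W/4 - W/100 \ge W/5$. Any line through $p^* = \ell_v \cap \ell_s$ leaves a full quadrant in each open halfplane, giving the centerpoint property directly.

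So: same plan, same heavy-line escape, same endgame; but you should replace the vague ``arbitrary split plus recursion'' with the concrete two-line construction (one median line, one ham-sandwich cut on the resulting spatial halves), which is both simpler and what actually makes the quadrant argument go through.
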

\begin{proof}
Consider a vertical line $\ell_v$ defined by $x=u$ cutting through the tube $T$
with the property that the total weight of points in $T$ with an $x$-coordinate of at
most $u$ is at least $\frac{W}{2}$, and the total weight of points in $T$ with an $x$-coordinate
of at least $u$ is at least $\frac{W}{2}$.  Note that such a line can be found in
$O(n \log n)$ time by sorting the weighted points in the tube in ascending order of their
$x$-coordinate, computing the prefix sum of the ordered points, and performing a binary search.
By Lemma~\ref{lem:heavyline}, we can assume that the total weight of points on
any such line $\ell_v$ that belong to the tube is at most $\frac{L_{\min} + \openC}{100} \leq \frac{W}{100}$
(as otherwise, we have already found a point in an optimal cluster union).

For convenience, let $P_1$ be the set of weighted points with an $x$-coordinate
strictly less than $u$, and let $P_2$ be the set of weighted points with an $x$-coordinate
strictly more than $u$.  We can apply Lemma~\ref{lem:ham} to the two sets
$P_1,P_2$ to get a weighted ham-sandwich cut in time $O(n \log n)$ (note that
$|P_1| + |P_2| \leq n$).  We denote by $\ell_s$ the weighted ham-sandwich
cut.  Observe that if the total weight of points on the line $\ell_s$ is
at least $\frac{L_{\min}+\openC}{100}$, then by Lemma~\ref{lem:heavyline},
we are done since we can find a point in an optimal cluster union lying on $\ell_s$
in $O(n \log n)$ time.  Note that $\ell_s$ and $\ell_v$ intersect, and hence
partition the plane into four quadrants.  We claim that their point of
intersection, denoted by $p^*$, is an approximate centerpoint.

From now on we assume that the total weight of points lying on $\ell_v$
is at most $\frac{W}{100}$, and similarly the total weight of points on $\ell_s$
is at most $\frac{L_{\min}+\openC}{100} \leq \frac{W}{100}$.  Consider
any line that contains the point $p^*$, and consider the two open halfplanes
induced by such a line.  We argue that each such open halfplane has weight
at least $\frac{W}{5}$, which gives the lemma.

To this end, observe that $w(P_1),w(P_2) \geq \frac{W}{2} - \frac{W}{100}$.
This holds since $\ell_v$ was chosen to ensure that the total weight lying to
the left (and including) $u$ is at least $\frac{W}{2}$.  Since
the weight on line $\ell_v$ is at most $\frac{W}{100}$,
the total weight of points with an $x$-coordinate strictly less than $u$
(which is precisely $w(P_1)$)
must be at least $\frac{W}{2} - \frac{W}{100}$.  A symmetric
argument holds for $w(P_2)$.

Since $\ell_s$ is a weighted ham-sandwich cut,
we have $|w(P_1 \cap \ell_s^+) - w(P_1 \cap \ell_s^-)| \leq w(P_1 \cap \ell_s) \leq \frac{W}{100}$,
and similarly for the set $P_2$.
Consider the open quadrant to the left of $\ell_v$ and above $\ell_s$ given by
$\ell_s^+ \cap \ell_v^-$.  The total weight of points in this open quadrant is
given by $w(P_1 \cap \ell_s^+)$, which we know is at least $w(P_1 \cap \ell_s^-) - \frac{W}{100}$.
On the other hand, we also know $w(P_1 \cap \ell_s^-) + w(P_1 \cap \ell_s^+) \geq \frac{W}{2} - \frac{W}{100}$.
Hence, we get
$w(P_1 \cap \ell_s^+) \geq w(P_1 \cap \ell_s^-) - \frac{W}{100} \geq \frac{W}{2} - \frac{W}{100} - w(P_1 \cap \ell_s^+)$,
implying that $w(P_1 \cap \ell_s^+) \geq \frac{W}{4} - \frac{W}{100} \geq \frac{W}{5}$.
We can symmetrically show that each of the four open quadrants contains a weight of at least~$\frac{W}{5}$.

Finally, the two open halfplanes induced by an arbitrary line containing $p^*$
must fully contain an open quadrant that lies to the left of $\ell_v$, and the other open halfplane
must fully contain an open quadrant that lies to the right of $\ell_v$.
Hence, the weight of each such open halfplane is at least~$\frac{W}{5}$ and
$p^*$ is an approximate centerpoint.

It is now enough to show that the approximate centerpoint $p^*$ lies in an optimal cluster union. As an optimal cluster union is convex,
for any point $p$ that is outside of an optimal cluster union, we can draw a line $\ell_p$ that does not intersect it. As the weight
outside of an optimal cluster union is smaller than $\frac{W}{5}$, on one side of $\ell_p$ we have clusters of weight smaller than $\frac{W}{5}$
and $p$ is not an approximate centerpoint. Therefore, all approximate centerpoints are contained in an optimal cluster union.
\end{proof}

We are now able to prove Lemma~\ref{lem:finding-cluster-without-center}.

\begin{proof}[Proof of Lemma~\ref{lem:finding-cluster-without-center}]
Assume without loss of generality that we are in case (i), and $\Gamma_1$ is at least as high as $\Gamma_2$. 
As for $i \in \{1,2\}$, $\cl$ restricted to the points of $A_{\pol \setminus \Gamma_i}$ is a maximal optimal partition of $A_{\pol \setminus \Gamma_i}$, then for an optimal cluster union $B$ for $\cl$, $H(B)$ has to intersect both $\Gamma_1$ and $\Gamma_2$.

Consider a vertical line segment $I$ constructed as follows. The horizontal distance of $I$ to $\Gamma_1$ is equal to the horizontal distance of $I$ to $\Gamma_2$, the top endpoint of $I$ lies on a line $\ell_t$ connecting the top-right vertex of $\Gamma_1$ with the top-right vertex of $\Gamma_2$, and bottom endpoint of $I$ lies on a line $\ell_b$ connecting the bottom-left vertex of $\Gamma_1$ with the bottom-left vertex of $\Gamma_2$. Then, as $H(B)$ intersects both $\Gamma_1$ and $\Gamma_2$, and $H(B)$ is convex, it must intersect $I$. Let $L$ be the side length of the cells $\Gamma_1$ and $\Gamma_2$. Then, as the horizontal distance between $\Gamma_1$ and $\Gamma_2$ is at least as large as the vertical distance, the length of $I$ is at most $2L$. Similarly, we place vertical intervals $I_\ell$ and $I_r$ as follows. $I_\ell$ contains the left edge of $\Gamma_1$, has its top endpoint at the line connecting the bottom-left corner of $\Gamma_2$ and the top-right corner of $\Gamma_1$, and its bottom endpoint at the line connecting the bottom-left corner of $\Gamma_2$ and the bottom-left corner of $\Gamma_1$. $I_r$ contains the right edge of $\Gamma_2$, has its top endpoint at the line connecting the top-right corner of $\Gamma_1$ and the top-right corner of $\Gamma_2$, and its bottom endpoint at the line connecting the top-right corner of $\Gamma_1$ and the bottom-left corner of $\Gamma_2$. The lengths of $I_\ell$ and $I_r$ are upper bounded by $4L$.

Let $\alpha=100$ be a constant. We place $2 \alpha$ equidistant points $p_1, \ldots, p_\alpha$ on $I$, the first one at the top endpoint and the last one at the bottom endpoint of $I$. Similarly, we place $4 \alpha$ equidistant points at $I_\ell$ and $I_r$, denoted by $p^\ell_j$ and $p^r_j$, respectively. Then, the distance between two consecutive points at each of the segments is at most $L/\alpha$.

We will show that if an optimal cluster union $B$ exists, then either $H(B)$ contains in its interior at least one of the points $p_i$, or it is contained in a long and thin quadrilateral spanned by some four points $p^\ell_j, p^\ell_{j'}, p^r_k, p^r_{k'}$, where $|p^\ell_j  p^\ell_{j'}|, |p^r_k p^r_{k'}| \le 8 L/\alpha$.

Assume that an optimal cluster union $B$ exists, and $H(B)$ does not contain any point $p_i$. Then, $H(B) \cap I$ lies between some two points $p_i,p_{i+1}$. As $H(B)$ is convex, contains a point of both $\Gamma_1$ and $\Gamma_2$, and the horizontal distance from $\Gamma_1$ and $\Gamma_2$ to $I$ is at most $L/2$, the intersection of any vertical line with $H(B)$ is a line segment of width at most $3|p_i p_{i+1}| \le 3 L/\alpha$. Consider a line segment $J$ connecting a leftmost point of $H(B)$ with a rightmost point of $H(B)$. Due to convexity of $H(B)$, this line segment is contained in $H(B)$. Therefore, $H(B)$ is contained in a tube around $J$, of height $6 L/\alpha$. We can extend this tube to a tube of height at most $8 L/\alpha$, with endpoints at the points of $I_\ell$ and~$I_r$.

We start by running the algorithm from Lemma \ref{lem:finding-cluster-with-center} for finding an optimal cluster union for $(\cl,p_i)$ for $i \in \{1,\ldots,\alpha \}$. If an optimal cluster union exists and contains one of the points $p_i$, it will be found by this procedure. Then, we consider the set of $O(\alpha^2)$ tubes $\mathcal{T}$, each tube defined by a pair of points $p^\ell_j, p^r_k$, and of height at most $8 L/\alpha$. For each such tube $T \in \mathcal{T}$, we will find an optimal cluster union contained in $T$, if such a cluster exists. For this, we construct $\alpha$ equidistant vertical lines within $\Gamma_1$, and also $\alpha$ equidistant vertical lines within $\Gamma_2$, and for each such pair of lines we construct a sub-tube $T' \subseteq T$, by cutting of the corresponding left and right parts of $T$. Let $\mathcal{T'}$ be the resulting collection of $O(\alpha^4)$ sub-tubes. For each $T' \in \mathcal{T'}$, we will want to find an optimal cluster union of $A_\pol$, assuming that the cluster is contained in $T'$ and not contained in any shorter tube.

Consider a tube $T' \in \mathcal{T'}$, and assume that an optimal cluster union of $A_\pol$ is contained in $T'$ and not contained in any shorter tube. Let $p(T')$ be the perimeter of $T'$. Clearly, the perimeter of an optimal cluster union is at most $p(T')$. Also, as an optimal cluster union stretches nearly all the way from the left of the tube to the right of the tube, and the height of $T$ is at most $8 L/\alpha$, we have that the perimeter of an optimal cluster union is at least $p(T')-20 L/\alpha$. Additionally, we know that $p(T') \ge 2\alpha$. Therefore, by using the reasoning from Lemma~\ref{lem:centerpoint}, we can find a point in the center of an optimal cluster union. Then, it is enough to run the algorithm from Lemma~\ref{lem:finding-cluster-with-center} for this point.

This algorithm uses a constant number of invocations of the algorithm from Lemma \ref{lem:finding-cluster-with-center}, therefore its running time is $O(n \cdot \polylog n)$.
\end{proof}

\section{The $k$-Cluster Fencing Problem}
In the \fixedkname, we are given a set of $n$ points $S$ in the plane and an integer
$k$ as an input instance $I$, which we denote by the pair $I\mydef (S,k)$.
Our main approach is to use dynamic programming to solve this problem.  Our subproblems are
based on the notion of boxes (i.e., rectangles in the plane), which contain some subset of the input
points that we would like to cluster optimally.
Great care is needed in order to ensure that there are only polynomially many subproblems
we need to consider.  To this end, we prove some structural properties regarding an optimal
solution that enable us to reduce the complexity of the subproblems that we solve.
We first describe some relevant notation and preprocessing that we do
to the input.

\subsection{Preliminaries}
An \emph{edge} $e$ is defined by its distinct head and tail $p,q\in\Reals^2$ and is denoted by $e\mydef pq$. A loop $o$ is defined by its only point $p\in\Reals^2$ and is denoted by $o\mydef pp$.
An \emph{edge set} is defined to be a set of edges or a set containing a single loop.
For a convex polygon $P$, define $\bound P$ to be the boundary of $P$ and $\per(P)$ to be the perimeter of $P$. We use $\E(P)$ to denote the edge set defining $\bound P$ oriented in counterclockwise order. For the special case where $P$ is a single point $p\in \Reals^2$, $\E(P) \mydef \{pp\}$.

For a set of geometric objects $O$, define $\CH(O)$ to be the convex hull of $O$. Specifically, for a set $S$ of points in the plane,
$\CH(S)$ is the convex hull of $S$. With some abuse of notation, we define
$\bound S \mydef \bound \CH(S)$, $\per(S) \mydef\per(\CH(S))$, and $\E(S)\mydef\E(\CH(S))$.
For the special case where $S$ is a single point $p$, define $\E(\{p\}) \mydef \{pp\}$.
We denote by $\G(S)$ the set of the $O(n^2)$ oriented segments defined by any ordered pair of points in $S$.

We define a \emph{$k$-clustering} of $S$ as a partition $\C\mydef \{S_1,\ldots,S_k\}$ of $S$ into $k$ subsets or \emph{clusters} $S_1,\ldots,S_k\subseteq S$.
Let $\Part(S)$ be the set of all possible clusterings of a set $S$.
Let $\Phi(\C)\mydef \sum_{i=1}^k \per(S_i)$ be the cost of the clustering $\C$.
An \emph{optimal} $k$-clustering is a $k$-clustering $\COpt_k(S)$ such that $\Phi(\COpt_k(S))\leq\Phi(\C)$ for any $k$-clustering $\C$.
Here, $\COpt_k(S)$ denotes an arbitrary optimal $k$-clustering of $S$.
We slightly abuse notation and refer to the edges of $\COpt_k(S)$ as the edges induced by the convex hulls of the clusters in $\COpt_k(S)$.
Let $\Opt_k(S)\mydef \Phi(\COpt_k(S))$.

A clustering $\C\mydef \{S_1,\ldots,S_k\}$ is called a \emph{disjoint clustering} if the convex hulls of any two
clusters $S_i,S_j\in\C$  are disjoint, i.e., $\CH(S_i)\cap\CH(S_j)=\emptyset$.

\begin{observation}\label{lem:p-disjoint}
Given a set of points $S$ in the plane, $\COpt_k(S)$ is a disjoint clustering.
\end{observation}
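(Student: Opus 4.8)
The plan is a straightforward exchange argument: take an optimal $k$-clustering, and show that any two clusters with intersecting convex hulls could be profitably merged, contradicting optimality.

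Concretely, suppose for contradiction that $\COpt_k(S)=\{S_1,\dots,S_k\}$ is optimal but not disjoint, so $\CH(S_i)\cap\CH(S_j)\neq\emptyset$ for some $i\neq j$. First I would reduce back to $k$ clusters after merging: pick a vertex $q$ of the convex polygon $\CH(S_i\cup S_j)$ (such a vertex exists and lies in $S_i\cup S_j$, since $S_i,S_j$ are nonempty and disjoint, so $|S_i\cup S_j|\ge 2$), and form the clustering $\C'$ obtained from $\COpt_k(S)$ by replacing the two clusters $S_i,S_j$ with the two clusters $(S_i\cup S_j)\setminus\{q\}$ and $\{q\}$. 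Then $\C'$ is a valid $k$-clustering of $S$, and since it agrees with $\COpt_k(S)$ on the other $k-2$ clusters and $\per(\{q\})=0$,
\[
\Phi(\C')-\Phi(\COpt_k(S))=\per\bigl((S_i\cup S_j)\setminus\{q\}\bigr)-\per(S_i)-\per(S_j).
\]
So it suffices to show this quantity is negative.

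The two ingredients I would use are: (a) \emph{strict monotonicity of perimeter under strict inclusion of convex sets} --- since $q$ is an extreme point of $\CH(S_i\cup S_j)$, it is not contained in $\CH\bigl((S_i\cup S_j)\setminus\{q\}\bigr)$, so the latter hull is a proper convex subset of $\CH(S_i\cup S_j)$ and hence has strictly smaller perimeter; and (b) the fact that for intersecting convex polygons $P$ and $Q$ one has $\per(\CH(P\cup Q))\le\per(P)+\per(Q)$. Combining (b), applied with $P=\CH(S_i)$ and $Q=\CH(S_j)$ and using $\CH(P\cup Q)=\CH(S_i\cup S_j)$, with (a) gives
\[
\per\bigl((S_i\cup S_j)\setminus\{q\}\bigr)<\per(\CH(S_i\cup S_j))\le\per(S_i)+\per(S_j),
\]
so $\Phi(\C')<\Phi(\COpt_k(S))$, the desired contradiction.

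The only nontrivial step is fact (b), which I consider the main obstacle. I would prove it exactly as in the estimate on the polygon $Q$ in the proof of Lemma~\ref{lem:sum-of-indivisible-clusters}: the boundary of $\CH(P\cup Q)$ consists of arcs of $\partial P$ and of $\partial Q$ joined by at most two straight bridge segments, and each bridge is no longer than the boundary arc it short-cuts, so the total boundary length does not exceed $\per(P)+\per(Q)$. (If one prefers a one-line proof, the stronger submodular inequality $\per(\CH(P\cup Q))+\per(P\cap Q)\le\per(P)+\per(Q)$ follows directionwise from Cauchy's formula $\per(\cdot)=\int_0^\pi w(\theta)\,d\theta$ together with the identity $|I\cup J|+|I\cap J|=|I|+|J|$ for overlapping intervals, applied to the one-dimensional orthogonal projections --- here $P\cap Q\neq\emptyset$ is precisely what guarantees the projection intervals overlap.) Fact (a) is likewise classical, again a consequence of Cauchy's formula since $X\subsetneq Y$ forces $w_X(\theta)<w_Y(\theta)$ on a set of directions of positive measure. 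Everything else is bookkeeping.
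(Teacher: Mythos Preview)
Your proof is correct. The paper states this as an observation with no proof at all, so there is nothing to compare against; your argument---merge the two offending clusters, then split off an extreme point $q$ to restore exactly $k$ parts, and use the perimeter inequalities (a) and (b)---is a clean and complete justification, with the Cauchy--formula arguments for (a) and (b) being the right way to make those steps rigorous.
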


We first ensure that no two points in $S$ have the same $x$- or $y$-coordinate.
This is possible to do in $O(n^2)$ time by computing the slopes of segments between all pairs of points in $S$.
If one of the slopes is vertical or horizontal, we apply a slight rotation to the coordinate system to
eliminate all the horizontal and vertical slopes without introducing new ones.

Let $p_1,\ldots,p_n$ be the points in $S$ sorted by their $x$-coordinates $x_1<\cdots<x_n$.
We construct two vertical lines $v_i^-$ and $v_i^+$ with each $x$-coordinate $x_i$, such that
$v_i^-$ formally is to the left of $p_i$ and $v_i^+$ formally is to the right of $p_i$.
Thus, an edge $p_jp_i$ for $j<i$ intersects $v_i^-$ at $p_i$ but does not intersect $v_i^+$, whereas $p_ip_j$ for $i<j$
intersects $v_i^+$ at $p_i$ but does not intersect $v_i^-$.  The set of all lines $v_i^-,v_i^+$ for all $i\in\{1,\ldots,n\}$
are the \emph{vertical main lines}.  Between any two consecutive vertical main lines $v_i^+,v_{i+1}^-$, we define
$19999$ \emph{vertical help lines} with $x$-coordinates $x_i+\frac{x_{i+1}-x_i}{20000}\cdot j$ for $j\in\{1,\ldots,19999\}$.
That is, these $19999$ vertical help lines induce $20000$ intervals between $x_i$ and $x_{i+1}$, each of which
has the same length given by $\frac{x_{i+1} - x_i}{20000}$.
In a similar way, we define two horizontal main lines with the $y$-coordinate of each input point $p$, one formally below $p$ and the other formally above $p$.
Let $h_1^-,h_1^+,\ldots,h_n^-,h_n^+$ be the horizontal main lines sorted by ascending $y$-coordinate.
We also define $19999$ equidistant horizontal help lines between any two consecutive horizontal main lines.

\paragraph{Boxes.}
Let $\B(S)$ be the set of all rectangles with edges contained in main or help lines.
Note that the size of $\B(S)$ is $O(n^{4})$.
We use $S_B$ as an abbreviation for $B\cap S$ where $B\in \B(S)$.
We denote by $l(B)$ and $r(B)$ the left and right vertical edge of $B$ and by $b(B)$ and $t(B)$ the bottom and top edge of $B$.
We denote by $w(B)$ the \emph{width} of $B$, i.e., the difference in the $x$-coordinates of $r(B)$ and $l(B)$.
Similarly, we denote by $h(B)$ the \emph{height} of $B$, i.e., the difference in the $y$-coordinates of $t(B)$ and $b(B)$.
We define the \emph{length} of a box $B\in\B(S)$ as $\max\{w(B),h(B)\}$ and denote it by $\length(B)$.
Consider an arbitrary $k$-clustering $\C\mydef \{S_1,\ldots,S_k\}$.
For a box $B\in\B(S)$, consider for each $S_i\in\C$ the part of the boundary of $\CH(S_i)$ that is in $B$.
The cost of $\C$ in $B$ is denoted as $\Phi_B(\C)$, and we define it to be the total length of these parts for
all the clusters $S_1,\ldots,S_k$.

For a box $B$, let $\ell$ be a vertical line defined by an $x$-coordinate
lying strictly between the $x$-coordinates of $l(B)$ and $r(B)$.
We say that the vertical line segment $s\mydef \ell\cap B$ is a \emph{vertical separator} of $B$.
A vertical separator of a box $B$ is \emph{good} if it intersects at most two edges in $\COpt_k(S)$.
We also analogously define a \emph{horizontal separator}, along with the notion of a \emph{good horizontal
separator}.  We call a box \emph{elementary} if there are no vertical help or main lines that lie strictly
in between $l(B)$ and $r(B)$, and no horizontal help or main lines that lie strictly in between $b(B)$ and
$t(B)$.

Finally, we define a \emph{vertical strip} of a box $B$ to be a rectangle $T$ contained in $B$ where the bottom edge of $T$ is contained in $b(B)$ and
the top edge of $T$ is contained in $t(B)$ (i.e., the top and bottom edges of $T$ lie on the boundary of $B$).
Similarly, a \emph{horizontal strip} of a box $B$ is a rectangle $H$ contained in $B$ where the left edge
of $H$ is contained in $l(B)$ and the right edge of $H$ is contained in $r(B)$.  For a vertical strip $T$ of a box $B$,
we denote by $w(T)$ the length of the bottom (or top) edge of $T$, and for a horizontal strip $H$ of a box $B$, we denote by
$h(H)$ the length of the left (or right) edge of $H$.

We interpret vertical strips of a box $B$ as the portion of $B$ that lies between two consecutive vertical help lines,
or possibly between a vertical help line and a vertical main line.  We similarly interpret horizontal strips of a box
to be the portion of the box between two consecutive horizontal help lines, or possibly between a horizontal help line and
a horizontal main line.  Note that, with this interpretation, elementary boxes are precisely those that
consist of one vertical strip and one horizontal strip.

\subsection{Structural Properties}
The main structural property we aim to show is that in a subproblem for a box $B\in\B(S)$,
there is a good vertical or horizontal separator contained in a main or help line that divides
$B$ into two smaller boxes.  In particular, in order to ensure subproblems of low complexity,
we aim to maintain the following invariant.

\begin{definition}
A box $B\in\B(S)$ satisfies the \emph{box invariant} if each of its edges is intersected by at most
two edges of $\COpt_k(S)$.
\end{definition}

Hence, we make some useful observations and prove some lemmas that aid us in this goal.
The following observation exploits the general position assumption that no two points in
$S$ have the same $x$- or $y$-coordinate.

\begin{observation}
\label{obs:sumproperty}
Let $B,B_1,B_2\in\B(S)$ be boxes such that $B=B_1\cup B_2$ and $B_1$ and $B_2$ have disjoint interiors.
Let $\C$ be any clustering of $S$.
Then $\Phi_{B}(\C)=\Phi_{B_1}(\C)+\Phi_{B_2}(\C)$.
\end{observation}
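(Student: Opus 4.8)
The plan is to reduce the claim to the single fact that the boundary of any cluster's convex hull meets the shared edge of $B_1$ and $B_2$ in a set of length zero; once this is established, additivity of the relevant lengths is immediate. First I would observe that, since $B$, $B_1$, and $B_2$ are axis-parallel rectangles with $B=B_1\cup B_2$ and $B_1$, $B_2$ having disjoint interiors, the intersection $B_1\cap B_2$ is necessarily a single axis-parallel segment $s$ lying on some vertical or horizontal line $\ell$ (a rectangle tiled by two rectangles with disjoint interiors must arise from a single guillotine cut). Because $B_1,B_2\in\B(S)$, this line $\ell$ is one of the vertical or horizontal main or help lines.

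Next I would invoke the general position assumption stated in the preprocessing, namely that no two points of $S$ share an $x$- or a $y$-coordinate. For any cluster $S_i\in\C$, every edge of $\partial\CH(S_i)$ is a segment joining two distinct points of $S$, hence is neither vertical nor horizontal; consequently $\partial\CH(S_i)$ contains no subsegment of the axis-parallel line $\ell$. Since $\partial\CH(S_i)$ bounds a convex region, it follows that $\partial\CH(S_i)\cap\ell$ is a finite set of points, and in particular has length zero. (When $S_i$ is a single point the boundary is that point and the claim is trivial.)

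Finally, fixing a cluster $S_i$ and writing $\gamma_i\mydef\partial\CH(S_i)$, the portion of $\gamma_i$ lying inside $B=B_1\cup B_2$ is the union of the portion lying inside $B_1$ and the portion lying inside $B_2$, and these two portions overlap only within $\gamma_i\cap B_1\cap B_2\subseteq\gamma_i\cap\ell$, a set of length zero by the previous step. Hence the length of $\gamma_i\cap B$ equals the sum of the lengths of $\gamma_i\cap B_1$ and $\gamma_i\cap B_2$; summing this identity over all clusters $S_i\in\C$ yields $\Phi_{B}(\C)=\Phi_{B_1}(\C)+\Phi_{B_2}(\C)$, as claimed.

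The only place where any care is needed is the second step — ruling out a positive-length overlap of a hull boundary with the cut line $\ell$ — and this is exactly what the general position preprocessing was arranged to guarantee; without it, a vertical or horizontal hull edge lying along $\ell$ could be counted in both $\Phi_{B_1}(\C)$ and $\Phi_{B_2}(\C)$ and the identity would fail.
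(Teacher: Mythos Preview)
Your proof is correct and follows exactly the reasoning the paper intends: the paper states this as an observation without proof, merely prefacing it with the remark that it ``exploits the general position assumption that no two points in $S$ have the same $x$- or $y$-coordinate.'' You have simply written out the details of that hint, identifying that the shared boundary is an axis-parallel segment and that no convex-hull edge can lie along it.
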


We also have the property that no point lies on the boundary of a box $B$.
\begin{observation}
\label{obs:borderpoint}
Due to the way we have defined the main lines, it follows that no box $B \in \B(S)$ has a boundary containing a point from $S$.
Furthermore, if the convex hull $C$ of a cluster of $\COpt_k(S)$ (or an edge $e$ of such a convex hull) intersects the boundary
of a box $B\in\B(S)$, then $C$ (or the edge $e$) is not contained in $B$.
\end{observation}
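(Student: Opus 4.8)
The plan is to establish the two assertions separately, each relying only on how the main and help lines were placed in the preprocessing. First I would record the key fact about the line layout: the $x$-coordinate of every vertical main or help line differs from all of $x_1,\dots,x_n$. Indeed, each main line $v_i^-$ ($v_i^+$) was placed formally to the left (right) of $p_i$, so its $x$-coordinate is distinct from every $x_j$, and each vertical help line between $v_i^+$ and $v_{i+1}^-$ has $x$-coordinate $x_i+(x_{i+1}-x_i)j/20000$ with $1\le j\le 19999$, which lies strictly between $x_i$ and $x_{i+1}$ and hence, since the $x$-coordinates of $S$ are pairwise distinct, equals no $x_j$. Since the vertical edges $l(B)$ and $r(B)$ of any box $B\in\B(S)$ lie on vertical main or help lines, no point of $S$ shares an $x$-coordinate with $l(B)$ or $r(B)$; the symmetric statement for the horizontal main and help lines shows no point of $S$ shares a $y$-coordinate with $b(B)$ or $t(B)$. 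Since $\partial B$ is the union of these four segments, it contains no point of $S$, which is the first assertion.

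For the second assertion I would argue by contradiction: suppose $C$ — either the convex hull $\CH(S_i)$ of a cluster $S_i$ of $\COpt_k(S)$, or one of its edges $e=pq$ with $p,q\in S$ — intersects $\partial B$ while $C\subseteq B$. Choose $x\in C\cap\partial B$; then $x$ lies on one of the four edges of $B$, and without loss of generality it lies on the left edge $l(B)$, so the $x$-coordinate of $x$ equals that of $l(B)$, which by $C\subseteq B$ is the smallest $x$-coordinate occurring in $C$. In the convex-hull case, the minimum of the $x$-coordinate over $C=\CH(S_i)$ is attained at the element of $S_i$ with smallest $x$-coordinate, which is a vertex of $C$ and belongs to $S$; this point of $S$ therefore lies on $l(B)\subseteq\partial B$, contradicting the first assertion. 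In the edge case, the first assertion gives that the endpoints $p,q\in S$ of $e$ have $x$-coordinates strictly between those of $l(B)$ and $r(B)$ (they lie in $[l(B),r(B)]$ since $e\subseteq B$, and they cannot lie on $l(B)$ or $r(B)$), so every point of the segment $e$ — in particular $x$ — has $x$-coordinate strictly between those of $l(B)$ and $r(B)$, contradicting that $x$ lies on $l(B)$. The cases where $x$ lies on $r(B)$, $b(B)$, or $t(B)$ are identical after exchanging the roles of minimum and maximum and of the $x$- and $y$-coordinates.

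I do not anticipate a real obstacle: the content is the bookkeeping that the ``formal'' displacement of the main lines together with the strictly-interior placement of the help lines prevents any box edge from carrying an input point's coordinate, after which the convexity step is immediate. The one subtlety worth stating explicitly is that a point of $C$ lying on the edge $l(B)$ forces an actual \emph{vertex} of $C$ (not merely some interior point) onto $l(B)$ in the convex-hull case — since the leftmost point of a convex hull is attained at a generating point — which is precisely what lets the argument invoke the first assertion.
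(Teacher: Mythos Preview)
Your proof is correct; the paper states this as an observation without proof, and your argument fills in exactly the routine details one would expect. One minor wording issue: you say each main line's $x$-coordinate is ``distinct from every $x_j$'', but in the paper's setup $v_i^-$ and $v_i^+$ both have $x$-coordinate exactly $x_i$ --- the separation is only \emph{formal} (a symbolic convention fixing which side $p_i$ and incident edges lie on). This does not affect the argument, since under that convention $p_i$ lies strictly between $v_i^-$ and $v_i^+$ and hence on neither, which is all you use.
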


The following lemma states that, for an optimal solution, the total cost of the portion of its
edges that lie in a box $B$ cannot exceed the perimeter of the box.

\begin{lemma}\label{lemma:costbound}
Let $B\in\B(S)$.
Then $\Phi_{B}(\COpt_k(S))\leq 2 w(B)+2h(B)$.
\end{lemma}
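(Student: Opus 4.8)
The plan is to argue by contradiction. Assume $\C \mydef \COpt_k(S) = \{S_1,\dots,S_k\}$ satisfies $\Phi_{B}(\C) > 2w(B)+2h(B)$, and (abusing notation as elsewhere) write $C_i \mydef \CH(S_i)$.

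\medskip
\noindent\emph{Reductions.} By Observation~\ref{lem:p-disjoint} the hulls $C_1,\dots,C_k$ are pairwise disjoint. If $B \subseteq C_i$ for some $i$, then every other hull is disjoint from $B$, and since no cluster edge is axis-parallel (general position) the boundary $\partial C_i$ meets $B$ only in a finite set; hence $\Phi_B(\C)=0$ and we are done. So assume no hull contains $B$. Then for each $i$ either $C_i \cap B=\emptyset$ (contributing $0$ to $\Phi_B$) or $\partial C_i$ enters $B$; let $\mathcal I \mydef \{\,i : \partial C_i \cap B\neq\emptyset\,\}$, so $\Phi_B(\C)=\sum_{i\in\mathcal I}\length(\partial C_i\cap B)$. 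For a single $i$, the set $C_i\cap B$ is a convex region contained in $B$ and $\partial C_i\cap B$ is part of its boundary, so $\length(\partial C_i\cap B)\le \per(C_i\cap B)\le 2w(B)+2h(B)$ by monotonicity of perimeter for nested convex bodies; hence $|\mathcal I|\ge 2$.

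\medskip
\noindent\emph{Main step: a cost-decreasing merge.} Build a clustering $\C'$ from $\C$ by merging all clusters $S_i$ with $i\in\mathcal I$ into one cluster $S^\ast\mydef\bigcup_{i\in\mathcal I}S_i$, with hull $C^\ast\mydef\CH\!\big(\bigcup_{i\in\mathcal I}C_i\big)$, and leaving the other clusters unchanged; this uses at most $k$ clusters (if exactly $k$ are required, split off $|\mathcal I|-1$ singletons, which does not change the cost). Splitting the cost over $B$ and its complement (additivity of boundary length, cf.\ Observation~\ref{obs:sumproperty}) and cancelling the unchanged clusters,
\[
\Phi(\C')-\Phi(\C)=\Big(\length(\partial C^\ast\cap B)-\Phi_B(\C)\Big)+\Big(\length(\partial C^\ast\setminus B)-\sum_{i\in\mathcal I}\length(\partial C_i\setminus B)\Big).
\]
The first parenthesis is at most $2w(B)+2h(B)-\Phi_B(\C)<0$, since again $C^\ast\cap B$ is a convex subset of $B$. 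Hence it suffices to show that the ``gain outside the box''
\[
\Delta\mydef\length(\partial C^\ast\setminus B)-\sum_{i\in\mathcal I}\length(\partial C_i\setminus B)
\]
is strictly less than $\Phi_B(\C)-\big(2w(B)+2h(B)\big)$; then $\Phi(\C')<\Phi(\C)$, contradicting optimality.

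\medskip
\noindent\emph{The geometric heart, and the main obstacle.} Bounding $\Delta$ is the real content. The boundary of $C^\ast$ is an alternating concatenation of sub-arcs of the $\partial C_i$ (each $\partial C_i$ contributing at most one arc, since it meets $\partial C^\ast$ in a single arc) and straight ``bridge'' segments, each bridge being a common tangent of two of the $C_i$. Restricting to $\mathbb R^2\setminus B$, the $\partial C_i$-arcs contribute at most $\sum_{i\in\mathcal I}\length(\partial C_i\setminus B)$, so $\Delta$ is at most the total length of the bridges lying outside $B$. One then wants to charge each such bridge, joining $C_i$ to $C_j$, to the two arcs of $\partial C_i$ and $\partial C_j$ that it cuts off from $C^\ast$ (which also lie outside $B$, as the bridge does): by the triangle inequality the bridge is at most as long as those two arcs together with the arc of $\partial B$ between the points where they leave $B$. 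Since these ``crossing'' arcs of $\partial B$, together with the arcs $\partial B\cap C_i$, are pairwise disjoint on $\partial B$, their total length is at most $2w(B)+2h(B)$; combined with the slack $\Phi_B(\C)-\big(2w(B)+2h(B)\big)>0$ this should give $\Delta<\Phi_B(\C)-\big(2w(B)+2h(B)\big)$. I expect this paragraph to be the main obstacle: making the charging precise, bounding the number of bridges outside $B$ by $O(1)$ per side (using convexity and that each $C_i$ meets the convex region $B$), and in particular pinning down the exact constant $2w(B)+2h(B)$ rather than a larger multiple, will require a case analysis on how the protrusions $C_i\setminus B$ are arranged around the four sides of $B$. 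An alternative that avoids merging all of $\mathcal I$ at once is to show directly that, when $\Phi_B(\C)>2w(B)+2h(B)$, some \emph{pair} of clusters meeting $B$ is ``close and nearly parallel'' enough that merging just that pair already strictly decreases the cost (its two common tangents being shorter than the two inner arcs they cut off); this recasts the lemma as a pigeonhole-type statement about packing disjoint convex bodies of large total boundary into the bounded region $B$. The remaining ingredients — disjointness of hulls, the single-cluster bound, additivity of the cost over regions, and keeping exactly $k$ clusters — are routine given the earlier observations.
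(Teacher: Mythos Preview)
Your setup---assume $\Phi_B(\C) > 2w(B)+2h(B)$, merge all clusters whose hulls meet $B$, and derive a contradiction---is exactly right, and is what the paper does. But you have turned a two-line argument into a hard one by splitting the perimeter of $C^\ast$ into its inside-$B$ and outside-$B$ parts and then trying to control the outside part $\Delta$ via a bridge-charging argument. As you yourself flag, this is where the real difficulty lies, and you have not actually carried it out; in particular the parenthetical claim that the cut-off arcs ``also lie outside $B$'' is in general false, since each $C_i$ meets $B$ and those arcs may well pass through it.

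The trick that sidesteps all of this is to interpose the (connected, generally non-convex) region $P \coloneqq B \cup \bigcup_{i\in\mathcal I} C_i$. Its boundary is exactly the part of $\partial B$ not covered by any $C_i$ together with the parts of the $\partial C_i$ lying outside $B$, so the perimeter of $P$ is at most
\[
\bigl(2w(B)+2h(B)\bigr) + \sum_{i\in\mathcal I}\bigl(\per(C_i) - \length(\partial C_i \cap B)\bigr)
\;=\; \bigl(2w(B)+2h(B)\bigr) - \Phi_B(\C) + \sum_{i\in\mathcal I}\per(C_i)
\;<\; \sum_{i\in\mathcal I}\per(C_i).
\]
Since $P$ is connected and contains every $C_i$, the merged hull $C^\ast$ has perimeter at most that of $\CH(P)$, which in turn is at most the perimeter of $P$; the contradiction with optimality is immediate. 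There are no bridges to charge and no case analysis on the four sides of $B$: the box $B$ itself serves as a single blanket ``bridge'' connecting all the $C_i$ at once, and the sharp constant $2w(B)+2h(B)$ falls out for free.
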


\begin{proof}
Assume towards a contradiction that this is not the case.
Let $S_1,\ldots,S_k$ be the clusters of $\COpt_k(S)$ with convex hulls intersecting $B$.
Let $P\mydef B\cup \bigcup_{i=1}^k \CH(S_i)$.
Now, $P$ has a perimeter strictly smaller than $\sum_{i=1}^k \per(S_i)$.
However, the merged cluster $\bigcup_{i=1}^k S_i$ has a convex hull with a perimeter
at most as large as $P$, which is a contradiction.
\end{proof}

The following lemma is a key ingredient in our algorithm, and it is the main structural property
we show in this section.

\begin{lemma}\label{lemma:box-split}
If $B \in \B(S)$ is an elementary box, then there are at most two edges of $\COpt_k(S)$
intersecting $B$ (in particular, $B$ satisfies the box invariant).  Otherwise, let $B \in \B(S)$ be any box
satisfying the box invariant.  Then $B$ has a good vertical or horizontal separator $s$ contained in a main or help line
such that $s$ divides $B$ into two strictly smaller boxes $B_l$ and $B_r$, both of which satisfy the box invariant.
\end{lemma}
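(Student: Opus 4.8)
My plan is to handle the two halves of the statement separately, and I first record a reduction that makes the non-elementary half purely a matter of \emph{finding} the right line. Suppose $B$ satisfies the box invariant and $s$ is a separator of $B$ lying on a main or help line strictly between the two corresponding sides of $B$, and suppose $s$ is good. Then every edge of $B_l$ and $B_r$ is either a subsegment of an edge of $B$ — hence crossed by at most as many edges of $\COpt_k(S)$ as that edge of $B$, i.e.\ at most two by the invariant for $B$ — or is $s$ itself, crossed by at most two edges since $s$ is good; so $B_l$ and $B_r$ satisfy the invariant, and they are strictly smaller since $s$ is strictly interior. Hence the non-elementary half reduces to producing a good separator on a main or help line strictly interior to $B$, and the elementary half reduces to showing an elementary box is met by at most two edges of $\COpt_k(S)$. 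I will use throughout that $\COpt_k(S)$ is a disjoint clustering (Observation~\ref{lem:p-disjoint}), that no point of $S$ lies on the boundary of a box (Observation~\ref{obs:borderpoint}), that $\Phi_\bullet$ is additive over a subdivision (Observation~\ref{obs:sumproperty}), and the cost bound $\Phi_B(\COpt_k(S))\le 2w(B)+2h(B)$ (Lemma~\ref{lemma:costbound}).

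\paragraph{Elementary boxes.} If $B$ is the degenerate box whose $x$-range is a pair $v_i^-,v_i^+$ and whose $y$-range is the matching pair of horizontal main lines, then $B$ is a single input point $q$; $q$ lies in one cluster $C$ and is contained in at most two edges of $\CH(C)$ (two if it is a vertex, one if it lies in the relative interior of an edge, none if it is interior to $\CH(C)$), and no edge of another cluster contains $q$, so at most two edges meet $B$. Otherwise the open $x$-range of $B$ contains no $x$-coordinate of a point of $S$ (its sides are consecutive vertical main or help lines, and main lines sit only at coordinates of points of $S$, which help lines avoid), and symmetrically for $y$; hence $B$ contains no point of $S$. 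Then every edge $e$ of $\COpt_k(S)$ meeting $B$ has both endpoints outside $B$ with $x$- and $y$-coordinates outside the ranges of $B$, so $e$ spans $B$ in both coordinates and $e\cap B$ is a full chord of $B$ crossing it diagonally. Two such chords whose supporting lines met inside $B$ would intersect, contradicting either disjointness of two hulls or convexity of one hull; so the chords are pairwise non-crossing and linearly ordered across $B$. I then rule out a third such edge $e_1,e_2,e_3$ in this order. For edges of a single hull this is impossible: a short half-plane/orientation argument shows three diagonal full chords cut from one convex boundary force one of them onto the wrong side of another. If the three edges come from at least two clusters, two of them face one another across the thin box $B$; since the help lines cut each main-line interval into $20000$ parts, $B$ is far thinner than these two edges are long, so merging the two clusters they bound replaces two long facing edges by two short bridges and strictly decreases $\Phi(\COpt_k(S))$, by the polygon-covering estimate used in the proofs of Lemma~\ref{lem:only-one-big-cluster} and Lemma~\ref{lemma:costbound}, contradicting optimality.

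\paragraph{Non-elementary boxes.} Since $B$ is not elementary it spans at least two vertical strips or at least two horizontal strips; assume the former, so some vertical main or help line lies strictly inside $B$, and I look for a good one among these (the exceptional one-strip-wide case is treated below). By the box invariant at most two edges of $\COpt_k(S)$ cross each of the four sides of $B$, so at most eight edges meet $\partial B$; these belong to at most eight ``large'' clusters, while every other cluster meeting $B$ lies entirely inside $B$, and these ``small'' clusters have pairwise disjoint hulls, so by Lemma~\ref{lemma:costbound} their hulls have total width at most $w(B)+h(B)$. A vertical main/help line strictly inside $B$ is good exactly when it stabs the interior of at most one cluster. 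To find one, I sweep: the integral over abscissae $x$ of the number of edges of $\COpt_k(S)$ crossing the vertical line at $x$ inside $B$ equals $\sum_e(\text{width of }e\cap B)\le\Phi_B(\COpt_k(S))\le 2w(B)+2h(B)$; this crossing count is piecewise constant, with breakpoints only at $x$-coordinates of points of $S$ and the at most four abscissae where a boundary-crossing edge meets the top or bottom of $B$; since each main-line interval contains $19999$ help lines, a pigeonhole argument confines the crossing count to a small value on some help line.

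\paragraph{The main obstacle.} I expect the hardest step to be making this last argument quantitatively tight: the naive sweep only yields a main/help line with ``at most four'' crossings, and closing the gap to ``at most two'' seems to require a case analysis that (i) separates the contribution of the at most eight edges through $\partial B$ — each crossing a given vertical line over only a bounded range of abscissae, and, by the box invariant, not all able to span $B$ at once — from that of the small clusters; (ii) uses disjointness of the small clusters (hence the total-width bound $w(B)+h(B)$, at most twice the relevant dimension) to pick a line avoiding all but at most one of them; and (iii) handles the case where $B$ is only one strip wide, where one observes that $B$'s open $x$-range then avoids all coordinates of points of $S$, so $B$ contains no point of $S$ and at most four edges of $\COpt_k(S)$ meet $B$ at all, $B$ must span at least two horizontal strips, and the symmetric argument applies in the horizontal direction. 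Once a good separator on a main or help line strictly interior to $B$ is produced, the reduction in the first paragraph shows $B_l$ and $B_r$ satisfy the box invariant, completing the proof.
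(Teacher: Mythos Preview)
Your reduction in the first paragraph is correct and matches the paper's framing, and your averaging/sweep idea is the right starting point. But the proof has a genuine gap exactly where you flag ``the main obstacle'': your outlined resolution (i)--(iii) does not close it, and the paper's argument uses a different idea that you are missing.

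The issue is that a one-directional sweep cannot do better than an average crossing count of roughly $\frac{2(w(B)+h(B))}{w(B)}$, which is unbounded when $h(B)\gg w(B)$. Your fix (iii) only handles the extreme case of a single strip in one direction; it says nothing when $B$ spans many strips in both directions yet is, say, ten times taller than wide. Your fix (ii) bounds the \emph{total width} of interior clusters, but the boundary-crossing edges themselves can already contribute more than two crossings on every vertical line, so ``avoid all but one small cluster'' is not enough. The paper's key trick (Lemma~\ref{lem:contsep} and Lemma~\ref{lem:bidensesep}) is to average over vertical \emph{and} horizontal candidate separators simultaneously: an edge of length $\delta$ contributes at most $\sqrt{2}\,\delta$ to the combined integral $\int f(x)\,dx+\int g(y)\,dy$, so after discarding the $\le 10$ strips in each direction that touch the $\le 8$ boundary-crossing edges, the remaining strips cover at least $\tfrac{19}{20}(w(B)+h(B))$ of the combined range and the average crossing count is at most $2\sqrt{2}\cdot\tfrac{20}{19}<3$. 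This is what gets you to ``at most two'' when all strips are small in both directions. When some strip is wide (width $>\tfrac{1}{200}w(B)$, say), the paper instead proves directly (Lemmas~\ref{lem:unbalanceddensesep}, \ref{lem:balancedsparsesep}, \ref{lem:twoedgebox}) that the sides of that strip are crossed at most twice, via a delicate split-and-merge argument on the clusters---not a simple merge.

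Your elementary-box argument is also thinner than it looks. The ``three chords from one convex hull'' case is fine (any vertical line through $B$ would meet the boundary three times). But the two-cluster case is not just ``merge and save'': in the paper's Lemma~\ref{lem:twoedgebox} the right move is sometimes to \emph{split} one cluster at $x=u$ and then merge one half with the other cluster, and which operation works depends on the angles $\theta_{1,2},\theta_{2,3}$ between consecutive edges. The quantitative saving uses that the help lines cut each main interval into $20000$ pieces, so the edges are at least $\sim 1000$ times longer than the relevant vertical gaps; your appeal to ``the polygon-covering estimate'' of Lemmas~\ref{lem:sum-of-indivisible-clusters} and~\ref{lemma:costbound} does not supply this.
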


We prove the following sequence of lemmas, which we later show how to combine to yield a proof of
Lemma~\ref{lemma:box-split}.

\begin{lemma}\label{lem:contsep}
Consider any box $B \in \B(S)$.  Let $X$ be an arbitrary partition of $B$ into vertical strips, and let $Y$ be an
arbitrary partition of $B$ into horizontal strips.  Moreover, let $A \mydef \{A_1,\ldots,A_m\},Z = \{Z_1,\ldots,Z_q\}$ be
arbitrary nonempty subsets of $X,Y$, respectively.  Then either there is a vertical separator of $B$ contained in some vertical
strip $A_i \in A$ or a horizontal separator of $B$ contained in some horizontal strip $Z_j \in Z$ that intersects at most
$2\sqrt{2}\frac{w(B) + h(B)}{\sum_{A_i \in A}w(A_i) + \sum_{Z_j \in Z}h(Z_j)}$ edges of $\COpt_k(S)$.
\end{lemma}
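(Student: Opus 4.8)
The plan is to use an averaging (counting) argument over a family of candidate separators. The key quantity to control is the total number of (edge, separator)-incidences, summed over all the candidate separators we allow. First I would set up the candidate set: for each vertical strip $A_i \in A$, pick one vertical separator lying inside $A_i$ — concretely, a vertical line through $A_i$ that is not a main or help line and avoids the finitely many $x$-coordinates where edges of $\COpt_k(S)$ begin or end, so that every edge it crosses, it crosses transversally. Likewise pick one horizontal separator inside each $Z_j \in Z$. This gives a family of $|A| + |Z| = m + q$ candidate separators. I will bound the sum, over this family, of the number of edges of $\COpt_k(S)$ each separator intersects, and then conclude that some separator in the family meets at most the average number.

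Next I would bound the incidence sum. Fix an edge $e$ of $\COpt_k(S)$; say $e$ has horizontal extent (width) $w_e$ and vertical extent $h_e$, so its Euclidean length is $\sqrt{w_e^2 + h_e^2} \le w_e + h_e$, and note $\sqrt{w_e^2+h_e^2}\ge \tfrac{1}{\sqrt2}(w_e+h_e)$. A vertical separator inside strip $A_i$ can intersect $e$ only if $e$'s horizontal span crosses $A_i$; since the strips in $X$ are interior-disjoint and each candidate vertical separator sits in a distinct strip, the number of candidate vertical separators hit by $e$ is at most the number of strips of $X$ that $e$ spans horizontally, which is at most $1 + w_e / \mu_v$ where $\mu_v$ is the minimum width of a strip in $X$ — but more simply, since we only count separators in strips that lie in $A$, and the strips are disjoint, the horizontal portion of $e$ of length $w_e$ can be charged, and summing $w_e$ over all edges of $\COpt_k(S)$ that cross a fixed vertical line is at most $2h(B)$ by a convexity/crossing argument (each cluster's convex hull boundary crosses a vertical line in at most two points, contributing at most the height of $B$... ). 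Rather, the cleanest route: by Lemma~\ref{lemma:costbound}, $\Phi_B(\COpt_k(S)) \le 2w(B) + 2h(B)$, i.e.\ the total length inside $B$ of all edges of $\COpt_k(S)$ is at most $2w(B)+2h(B)$. Since each candidate vertical separator crossed by $e$ corresponds to a disjoint horizontal sub-extent of $e$ (one per distinct strip), the number of candidate vertical separators meeting $e$ is at most $1 + (\text{width of }e \text{ inside }B)/(\min_i w(A_i))$; summing over $e$ this is problematic, so instead I'd argue directly that the total incidence count with vertical separators in $A$ is at most $(\text{total horizontal extent of edges})/\min_i w(A_i)$ — and I expect the clean statement the authors want is: total incidences $\le \frac{\text{(total edge length in }B)}{\text{something}} \le \frac{2(w(B)+h(B))}{(\sum_{A_i}w(A_i)+\sum_{Z_j}h(Z_j))/(m+q)}\cdot(\text{const})$. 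Combining vertical and horizontal contributions and using $\sqrt{w_e^2+h_e^2}\ge \tfrac1{\sqrt2}(w_e+h_e)$ to convert axis-extents to Euclidean length yields, after dividing by the number $m+q$ of candidates, an average of at most $2\sqrt2\,\frac{w(B)+h(B)}{\sum_{A_i\in A}w(A_i)+\sum_{Z_j\in Z}h(Z_j)}$ edges, so at least one candidate achieves this bound.

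The main obstacle, and the step needing the most care, is making the charging argument precise: I must show that the total number of (edge, candidate-separator) incidences is bounded by $2\sqrt2(w(B)+h(B)) \cdot (m+q) / (\sum_i w(A_i) + \sum_j h(Z_j))$ — i.e.\ I need the $(m+q)$ in the numerator to come out exactly. The clean way is: for a fixed edge $e$, the candidate vertical separators it meets lie in distinct strips of $A$ whose widths sum to at most $w_e$ (the horizontal extent of $e$), so the count is at most $w_e / \min\{w(A_i)\}$; symmetrically for horizontal. Summing over all edges and using Lemma~\ref{lemma:costbound} to bound $\sum_e (w_e + h_e) \le \sqrt2 \sum_e \|e\|_{\text{in }B} \le \sqrt2(2w(B)+2h(B))$ gives total incidences $\le 2\sqrt2(w(B)+h(B)) / \min\{\text{strip size}\}$, and then bounding $\min\{\text{strip size}\}$ from below by the average $(\sum_i w(A_i)+\sum_j h(Z_j))/(m+q)$ — wait, that inequality goes the wrong way, so instead one keeps $\min$ and observes $(m+q)\cdot\min \le \sum w(A_i) + \sum h(Z_j)$ is false in general; the resolution is to pick the candidate separator inside the \emph{widest} available strip, or to average with weights proportional to strip size. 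I would therefore weight: choose the vertical separator in strip $A_i$ with "probability" proportional to $w(A_i)$ and the horizontal one in $Z_j$ proportional to $h(Z_j)$; then the expected number of edges crossed is $\sum_i w(A_i)\cdot(\#\text{edges crossing }A_i) / (\sum w(A_i)+\sum h(Z_j)) + (\text{horizontal term})$, and $\sum_i w(A_i)\cdot(\#\text{edges crossing a separator in }A_i) \le \sum_e \sum_{i: e \text{ spans }A_i} w(A_i) \le \sum_e w_e \le \sqrt2(2w(B)+2h(B))$. Dividing by the total weight $\sum w(A_i)+\sum h(Z_j)$ gives expected crossings $\le 2\sqrt2(w(B)+h(B))/(\sum_{A_i}w(A_i)+\sum_{Z_j}h(Z_j))$, so some separator in the family meets at most that many edges. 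That weighted-averaging trick is exactly the point I expect to be delicate, and it is where the constant $2\sqrt2$ is pinned down.
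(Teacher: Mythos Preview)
Your overall strategy is the right one, and you correctly identify all three ingredients: the perimeter bound $\Phi_B(\COpt_k(S))\le 2w(B)+2h(B)$ from Lemma~\ref{lemma:costbound}, the projection inequality $w_e+h_e\le\sqrt{2}\,\|e\|$, and an averaging over candidate separators. The gap is in how you set up the averaging.

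Picking one fixed separator $s_i$ per strip and then weighting by $w(A_i)$ does \emph{not} yield $\sum_i w(A_i)\cdot[\,e\text{ crosses }s_i\,]\le w_e$. For a concrete failure: take a single wide strip $A_1$ with $w(A_1)=100$ and an edge $e$ of horizontal extent $w_e=0.01$ that happens to cross the one chosen separator $s_1$. Then the left side is $100$ and the right side is $0.01$. Your line ``$\sum_i w(A_i)\cdot(\#\text{edges crossing a separator in }A_i)\le \sum_e\sum_{i:\,e\text{ spans }A_i} w(A_i)\le \sum_e w_e$'' is therefore false as written, and the constant $2\sqrt 2$ (which the paper actually uses downstream, where $2\sqrt 2\cdot\tfrac{20}{19}<3$) cannot be recovered this way.

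The fix---which is exactly what the paper does---is to make the averaging \emph{continuous}: consider every vertical separator position $x$ in $\bigcup_i A_i$ and every horizontal position $y$ in $\bigcup_j Z_j$, and let $f(x)$ (respectively $g(y)$) be the number of optimal edges crossed by the vertical (respectively horizontal) separator there. Then for each edge $e$ one has exactly
\[
\int_{\bigcup_i A_i}[\,e\text{ crosses the vertical line at }x\,]\,dx \;\le\; w_e,
\qquad
\int_{\bigcup_j Z_j}[\,e\text{ crosses the horizontal line at }y\,]\,dy \;\le\; h_e,
\]
so $\int f+\int g\le \sum_e(w_e+h_e)\le\sqrt 2\,\Phi_B(\COpt_k(S))\le 2\sqrt 2\,(w(B)+h(B))$. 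Dividing by the total measure $\sum_i w(A_i)+\sum_j h(Z_j)$ shows some separator meets at most the claimed number of edges. In other words, your ``probability proportional to $w(A_i)$'' should be interpreted not as selecting one of finitely many pre-chosen separators but as selecting a uniformly random $x$-coordinate inside $\bigcup_i A_i$ (and similarly for $y$); once you do that, your argument and the paper's coincide.
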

\begin{proof}
Define $t_1,t_2$ to be the values such that the edges $l(B)$ and $r(B)$ are contained in the vertical lines $x=x_1$ and $x=x_2$,
respectively.  Similarly, define $y_1,y_2$ to be the values such that the edges $b(B)$ and $t(B)$ are contained in the horizontal
lines $y = y_1$ and $y=y_2$, respectively.  That is, $x_1,x_2,y_1,y_2$ define the coordinates of the bounding box $B$.
In addition, let $x_1 \mydef t_1 < t_2 < \cdots < t_{m+1} \mydef x_2$ be the values such that, for each vertical strip $A_i$,
the left and right edges of $A_i$ are contained in the vertical lines $x = t_{i}$ and $x = t_{i+1}$, respectively.
Similarly, let $y_1 \mydef s_1 < s_2 < \cdots < s_{q+1} \mydef y_2$ be the values such that, for each horizontal strip $Z_i$,
the lower and upper edges of $Z_i$ are contained in the horizontal lines $y = s_i$ and $y = s_{i+1}$, respectively.

Assume towards a contradiction that all vertical separators that are contained in some vertical strip $A_i \in A$ and
all horizontal separators that are contained in some horizontal strip $Z_j \in Z$ intersect
strictly more than $2\sqrt{2}\frac{w(B) + h(B)}{\sum_{A_i \in A}w(A_i) + \sum_{Z_j \in Z}h(Z_j)}$ edges of $\COpt_k(S)$.
Define the function $f(x)$ to be the number of edges of $\COpt_k(S)$ that have at least some portion in $B$
and intersect the vertical separator at $x$ (for each $x_1 \leq x \leq x_2$).  Similarly, define the function $g(y)$
to be the number of edges of $\COpt_k(S)$ that have at least some portion in $B$ and intersect the horizontal separator at $y$
(for each $y_1 \leq y \leq y_2$).

Consider an edge $e$ (or portion of an edge) of $\COpt_k(S)$ that lies in $B$, and suppose it has length $\delta$.
We want to understand its contribution to the sum of integrals $\int_{x_1}^{x_2} f(x)dx + \int_{y_1}^{y_2}g(y)dy$ (e.g.,
a vertical segment contributes $\delta$ to the sum, along with a horizontal segment).  In particular, we want
to understand the maximum contribution possible by such a line segment (i.e., edge).  To this end, consider the right triangle formed
by the two endpoints of the line segment (which is the hypotenuse) along with the point defined by the intersection of the vertical line
passing through the upper endpoint and the horizontal line passing through the lower endpoint.  Let $\delta_1$ denote the length
of one leg and $\delta_2$ denote the length of the other leg, and observe that $\delta_1 + \delta_2$ is precisely the contribution
of edge $e$ to the sum of integrals.  Hence, we wish to maximize $\delta_1 + \delta_2$ subject to the constraint $\delta^2 = \delta_1^2 + \delta_2^2$,
which yields $\delta_1 = \delta_2 = \frac{\delta}{\sqrt{2}}$ (obtained when the triangle is a right isosceles triangle).
This implies that an edge of length $\delta$ contributes at most $\sqrt{2}\delta$ to the sum of integrals.

By Lemma~\ref{lemma:costbound}, we know that $\Phi_{B}(\COpt_k(S))\leq 2w(B)+2h(B)$.  Hence, we obtain a contradiction as follows:
\begin{align*}
2w(B)+2h(B) &\geq \Phi_{B}(\COpt_k(S)) \geq \frac{\int_{x_1}^{x_2} f(x)dx + \int_{y_1}^{y_2} g(y)dy}{\sqrt{2}}\\
&\geq \frac{\sum_{i=1}^m \int_{t_i}^{t_{i+1}}f(x)dx + \sum_{j=1}^{q}\int_{s_j}^{s_{j+1}}g(y)dy}{\sqrt{2}}\\
&> \left(2\frac{w(B) + h(B)}{\sum_{A_i \in A}w(A_i) + \sum_{Z_j \in Z}h(Z_j)}\right)\left(\sum_{A_i \in A}w(A_i) + \sum_{Z_j \in Z}h(Z_j)\right)\\
&= 2w(B)+2h(B).
\end{align*}
Here, the second inequality follows from the maximum amount that an edge (or portion of an edge) of $\COpt_k(S)$ can contribute to the sum of integrals,
the third inequality follows since we are integrating over a smaller domain, and the last inequality follows from our assumption that each vertical
separator that belongs to some vertical strip intersects many edges (and similarly for such horizontal separators).
\end{proof}

We first consider boxes that have many small vertical strips and many small horizontal strips (i.e., many main lines and help lines).
We argue that such boxes have either a good vertical separator or a good horizontal separator in the following lemma.

\begin{lemma}\label{lem:bidensesep}
Let $B$ be any box and consider the vertical strips induced by the vertical help (and main) lines going through $B$,
along with the horizontal strips induced by the horizontal help (and main lines) going through $B$.  Moreover,
suppose that for every such vertical strip $T$ we have $w(T) \leq \frac{1}{200}w(B)$ and for all such horizontal strips
$H$ we have $h(H) \leq \frac{1}{200}h(B)$.  Then there exists a good vertical separator of $B$
that is contained in a vertical help or main line that splits $B$ into two smaller boxes (i.e., the separator
does not lie on $l(B)$ or $r(B)$), or there exists a good horizontal separator of $B$ with similar properties.
\end{lemma}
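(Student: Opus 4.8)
The plan is to deduce the lemma from Lemma~\ref{lem:contsep} by feeding it suitably trimmed sub‑collections of the strips cut out of $B$ by the help and main lines, and then to \emph{snap} the separator it returns onto a help line. Throughout I use that $B$ satisfies the box invariant (the setting in which this lemma is invoked): at most two edges of $\COpt_k(S)$ cross each of $t(B),b(B),l(B),r(B)$.

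First I would observe that the density hypothesis $w(T)\le\frac1{200}w(B)$ for every vertical strip $T$ forces $B$ to be split into at least $200$ vertical strips, and symmetrically into at least $200$ horizontal strips; let $X$ (resp.\ $Y$) be these vertical (resp.\ horizontal) strip partitions. Mark the at most four $x$-coordinates at which an edge of $\COpt_k(S)$ crosses $t(B)$ or $b(B)$, and the at most four $y$-coordinates at which an edge crosses $l(B)$ or $r(B)$. Let $A\subseteq X$ be all vertical strips except the leftmost, the rightmost, and any strip whose closure meets a marked $x$-coordinate, and let $Z\subseteq Y$ be defined analogously with the marked $y$-coordinates. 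Each of these discards at most $2+2\cdot4=10$ strips, each of width at most $\frac1{200}w(B)$ (resp.\ height at most $\frac1{200}h(B)$), so $A,Z$ are nonempty, $\sum_{A_i\in A}w(A_i)\ge\frac{190}{200}w(B)$ and $\sum_{Z_j\in Z}h(Z_j)\ge\frac{190}{200}h(B)$.

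Next I would invoke Lemma~\ref{lem:contsep} with these choices of $X,Y,A,Z$. It produces either a vertical separator of $B$ inside some $A_i\in A$ or a horizontal separator inside some $Z_j\in Z$ that intersects at most
\[
2\sqrt2\cdot\frac{w(B)+h(B)}{\sum_{A_i\in A}w(A_i)+\sum_{Z_j\in Z}h(Z_j)}\;\le\;2\sqrt2\cdot\frac{w(B)+h(B)}{\tfrac{190}{200}\,(w(B)+h(B))}\;=\;2\sqrt2\cdot\tfrac{200}{190}\;<\;3
\]
edges of $\COpt_k(S)$; being an integer this number is at most $2$, so the separator is good. (If both $w(B)=h(B)$ were zero the statement is vacuous, and otherwise the denominator is nonzero.)

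Finally I would snap this good separator onto a help line, which I expect to be the delicate step. Say we are in the vertical case, with the good separator at coordinate $x^*$ contained in the closed strip $A_i=[\tau_j,\tau_{j+1}]$ between consecutive help/main lines. Define $N(x)$ to be the number of edges of $\COpt_k(S)$ met by the vertical separator at $x$; its breakpoints occur only (a) at $x$-coordinates of points of $S$ — because the horizontal extent of each edge is an interval with endpoints in $S$ — and (b) at the marked $x$-coordinates, where an edge enters or leaves the horizontal slab spanned by $B$ (using the box invariant to see there are at most four such). Type-(a) breakpoints are exactly the vertical main lines, hence none lies strictly inside $A_i$, and $A_i$ was chosen to avoid every marked $x$-coordinate, so $N$ is constant on $A_i$. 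Since $A_i$ is neither the leftmost nor the rightmost strip, both $\tau_j$ and $\tau_{j+1}$ lie strictly between $l(B)$ and $r(B)$; and since there are $19999$ equally spaced vertical help lines between consecutive $x$-coordinates of $S$, at least one of $\tau_j,\tau_{j+1}$, call it $\tau$, is a help line (carrying no point of $S$ and, by construction of $A$, equal to no marked coordinate). Then $N$ is continuous at $\tau$, so $N(\tau)=N(x^*)\le 2$ and $\tau$ is a good vertical separator contained in a help line that splits $B$ into two strictly smaller boxes. If instead $x^*$ already equals a help or main line there is nothing to prove, and if the separator returned is horizontal the argument is symmetric with the roles of $l(B),r(B)$ and $t(B),b(B)$ interchanged. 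The main obstacle is precisely making this snapping rigorous: one must argue that the edge-count function is genuinely constant on the chosen strip, which is exactly what forces the extra removal in Step~1 of the $O(1)$ strips meeting $t(B)$ or $b(B)$ (resp.\ $l(B)$ or $r(B)$) and the use of the box invariant to keep that number — and hence the lost width — small enough that the bound stays below $3$.
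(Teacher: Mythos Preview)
Your proposal is correct and follows essentially the same approach as the paper: discard the two extreme strips plus at most eight strips touched by edges that cross $t(B)$ or $b(B)$ (resp.\ $l(B),r(B)$), apply Lemma~\ref{lem:contsep} to the surviving $\ge\frac{19}{20}$ fraction of width and height to get a separator meeting fewer than $3$ edges, and then snap to a strip boundary. The only difference is stylistic: for the snapping, the paper argues directly that any edge of $\COpt_k(S)$ meeting the left or right edge of the surviving strip must also meet the good separator inside it (since the edge cannot exit through the strip's top or bottom), whereas you phrase this as constancy of the crossing-count function $N(x)$ on the strip---these are equivalent, and the paper's formulation avoids the small detour about help versus main lines.
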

\begin{proof}
Consider any such box $B$ (i.e., $B$ has many horizontal and vertical lines that are close).  We consider
each portion of $B$ that lies between two consecutive vertical help lines (or between a vertical help line and a vertical
main line) to be a vertical strip of $B$.  We similarly consider the horizontal strips of $B$ induced by the corresponding horizontal
help and main lines.  Throughout the proof, we assume inductively that each of the four edges of $B$ is intersected
by at most two edges of $\COpt_k(S)$ (and hence, there are at most eight such edges in total intersecting the boundary of $B$).

One of the vertical strips has a left edge
that is precisely $l(B)$, and another one of the vertical strips has a right edge that is precisely $r(B)$.
In particular, since we aim to split the box $B$ into two smaller boxes with our separator, we discard these
two vertical strips.  Moreover, we also discard any vertical strip $T$ that has an edge of $\COpt_k(S)$ intersecting
the top edge of $T$ or the bottom edge of $T$.  By our inductive assumption, there are at most four such edges
of $\COpt_k(S)$ (since at most two such edges intersect $t(B)$, and at most two such edges intersect $b(B)$).
Each such edge of $\COpt_k(S)$ can result in discarding at most two vertical strips (in case the edge intersects
the corner of a vertical strip, in which case two strips are affected).  Hence, in all, we discard at most 10
vertical strips.  Similar reasoning applies to horizontal strips, resulting in discarding at most $10$ horizontal
strips.  Note that the total width of vertical strips that are not discarded is at least $(1 - \frac{10}{200})w(B) = \frac{19}{20}w(B)$,
and similarly the total height of horizontal strips that are not discarded is at least $\frac{19}{20}h(B)$.

By Lemma~\ref{lem:contsep}, either there exists a vertical separator of $B$ contained in some vertical
strip that is not discarded or a horizontal separator of $B$ contained in some horizontal strip that is not
discarded that intersects at most $2\sqrt{2}\frac{w(B) + h(B)}{\frac{19}{20}(w(B) + h(B))} < 3$ edges of $\COpt_k(S)$.
Suppose there is some such remaining (i.e., not discarded) vertical strip $T$ containing a good vertical separator
(as the other case is symmetric).  Since $T$
was not discarded, its left edge is not $l(B)$ and its right edge is not $r(B)$.  Moreover, since there
are no edges of $\COpt_k(S)$ intersecting the top or bottom edge of $T$, there can be no edge
of $\COpt_k(S)$ that intersects the left or right edge of $T$ without also intersecting the good vertical
separator contained in $T$.  Hence, either the left or right edge of $T$ serves as a good vertical separator of $B$.
\end{proof}

We now argue that boxes that consist of many vertical strips (induced by the corresponding vertical help or main lines)
and are much wider than they are tall have a good vertical separator.  A similar result holds in the horizontal direction.

\begin{lemma}\label{lem:unbalanceddensesep}
Let $B$ be any box and consider the vertical strips induced by the vertical help (and main) lines going through $B$,
along with the horizontal strips induced by the horizontal help (and main lines) going through $B$.  Moreover,
suppose that $w(B) > 4h(B)$ and all such vertical strips $T$ have $w(T) \leq \frac{1}{200}w(B)$.
Then there exists a good vertical separator of $B$ that is contained in a vertical help or main line that splits
$B$ into two smaller boxes (i.e., the separator does not lie on $l(B)$ or $r(B)$).
Likewise, if $h(B) > 4w(B)$ and all such horizontal strips
$H$ have $h(H) \leq \frac{1}{200}h(B)$, then there exists a good horizontal separator of $B$ with similar properties.
\end{lemma}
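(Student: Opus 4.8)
The plan is to adapt the proof of Lemma~\ref{lem:bidensesep}, exploiting the hypothesis $w(B) > 4h(B)$ so that it suffices to discard \emph{vertical} strips only. As in that proof, I assume inductively that each of the four edges of $B$ is intersected by at most two edges of $\COpt_k(S)$; I treat the case $w(B) > 4h(B)$, the case $h(B) > 4w(B)$ being symmetric after interchanging the roles of rows and columns. Let $X_0$ be the partition of $B$ into vertical strips induced by all vertical help and main lines passing through $B$; by hypothesis every strip of $X_0$ has width at most $\tfrac{1}{200}w(B)$. I discard from $X_0$: (a)~the at most two strips whose left or right edge is $l(B)$ or $r(B)$; and (b)~every strip whose top or bottom edge is intersected by some edge of $\COpt_k(S)$ having a portion inside $B$. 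Since the top (resp.\ bottom) edge of every strip lies on $t(B)$ (resp.\ $b(B)$), and by the box invariant at most two edges of $\COpt_k(S)$ cross $t(B)$ and at most two cross $b(B)$, step~(b) discards at most $8$ strips (each relevant edge of $\COpt_k(S)$ kills at most two strips, when its crossing point happens to lie on a strip boundary). Thus at most $10$ strips are discarded, so the surviving set $A$ of strips has total width at least $w(B)-\tfrac{10}{200}w(B)=\tfrac{19}{20}w(B)>0$; moreover each strip in $A$ has both vertical edges strictly between $l(B)$ and $r(B)$, and no edge of $\COpt_k(S)$ with a portion inside $B$ crosses the top or bottom edge of a strip in $A$.

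For the counting step, for a real number $x$ strictly between the $x$-coordinates of $l(B)$ and $r(B)$, let $f(x)$ be the number of edges of $\COpt_k(S)$ that have at least some portion in $B$ and intersect the vertical separator of $B$ at $x$. Since the $x$-extent of the portion inside $B$ of any edge is at most its length, $\int f(x)\,dx \le \Phi_B(\COpt_k(S)) \le 2w(B)+2h(B)$ by Lemma~\ref{lemma:costbound}, and using $w(B) > 4h(B)$ this is strictly less than $\tfrac{5}{2}w(B)$. Restricting the integral to the union of the strips in $A$, which has measure at least $\tfrac{19}{20}w(B)$, yields a value $x^*$ in the interior of some strip $T^*\in A$ with $f(x^*) < \tfrac{5/2}{19/20}=\tfrac{50}{19}<3$, so $f(x^*)\le 2$. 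If $T^*$ contains a point of $S$ in its interior --- which is possible only for an infinitesimally narrow strip between some $v_i^-$ and $v_i^+$ --- I additionally pick $x^*$ different from that point's $x$-coordinate.

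It remains to ``snap'' the good separator at $x^*$ onto a help or main line. I claim that one of the two vertical edges $\ell_L,\ell_R$ of $T^*$, regarded as a vertical separator of $B$, is intersected only by edges of $\COpt_k(S)$ that are counted by $f(x^*)$, hence by at most two of them. Suppose some edge $e$ of $\COpt_k(S)$ intersects $\ell_L$ but not the vertical separator at $x^*$, where $\ell_L$ has $x$-coordinate smaller than $x^*$ (the case of $\ell_R$ with $x$-coordinate larger than $x^*$ is symmetric; which of the two edges of $T^*$ to use is dictated by the side of a possible interior point of $S$ on which $x^*$ lies). Then $e\cap B$ has an endpoint $q$ whose $x$-coordinate is at least that of $\ell_L$ and strictly smaller than $x^*$. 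The point $q$ cannot lie on $\partial B$: the only part of $\partial B$ with $x$-coordinate in that range lies in $t(B)\cup b(B)$ within the $x$-range of $T^*$, so $q\in\partial B$ would force $e$ to intersect the top or bottom edge of $T^*$, contradicting $T^*\in A$. And $q$ cannot be a point of $S$ lying on a strip boundary, since no point of $S$ lies on a vertical help or main line (Observation~\ref{obs:borderpoint}); hence the $x$-coordinate of $q$ lies in the open interior of the $x$-range of $T^*$, contradicting the choice of $x^*$. Therefore $\ell_L$ (or $\ell_R$) is a good vertical separator of $B$, it lies on a vertical help or main line strictly between $l(B)$ and $r(B)$, and it splits $B$ into two strictly smaller boxes, as required.

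The delicate point is precisely this snapping step: passing from a good separator at an arbitrary interior value $x^*$ to one located on a help or main line relies crucially on $T^*$ having no edge of $\COpt_k(S)$ crossing its top or bottom, and needs care with the degenerate, infinitesimally narrow strips carrying a point of $S$ that the symbolic perturbation of the main lines introduces. The counting step itself is routine once the hypothesis $w(B)>4h(B)$ is invoked to replace the symmetric denominator $\tfrac{19}{20}(w(B)+h(B))$ of Lemma~\ref{lem:bidensesep} by $\tfrac{19}{20}w(B)$, which keeps the average number of crossings strictly below $3$.
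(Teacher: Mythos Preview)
Your proof is correct and follows essentially the same approach as the paper's: discard at most ten vertical strips (the two extremal ones and those hit on top/bottom by an optimal edge), bound the total crossing count over the surviving width $\ge \tfrac{19}{20}w(B)$ by $\Phi_B(\COpt_k(S))\le 2w(B)+2h(B)<\tfrac52 w(B)$, and snap the resulting good interior separator to a strip boundary using that no optimal edge crosses the top or bottom of the surviving strip. The paper phrases the counting step as a direct contradiction (``if every separator met $\ge 3$ edges then $\Phi_B\ge \tfrac{57}{20}w(B)$'') rather than via an average, and dispatches the snapping in one sentence; your extra worry about infinitesimal strips between $v_i^-$ and $v_i^+$ is in fact unnecessary, since the averaging argument picks $x^*$ from a set of positive measure, hence inside a strip of positive width, which therefore contains no point of $S$ in its $x$-range.
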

\begin{proof}
Consider any such box $B$, and suppose $w(B) > 4h(B)$, as the other case is proved similarly.  We consider
each portion of $B$ that lies between two consecutive vertical help lines (or between a vertical help line and a vertical
main line) to be a vertical strip of $B$.  Assume that all such vertical strips $T$ satisfy $w(T) \leq \frac{1}{200} w(B)$.
Moreover, throughout the proof, we assume inductively that each of the
four edges of $B$ is intersected by at most two edges of $\COpt_k(S)$.  We proceed in a manner
similar to the proof of Lemma~\ref{lem:bidensesep}.

One of these vertical strips has a left edge
that is precisely $l(B)$, and another one of these vertical strips has a right edge that is precisely $r(B)$.
In particular, since we aim to split the box $B$ into two smaller boxes with our separator, we discard these
two vertical strips.  Moreover, we also discard any vertical strip $T$ that has an edge of $\COpt_k(S)$ intersecting
the top edge of $T$ or the bottom edge of $T$.  By our inductive assumption, there are at most four such edges
of $\COpt_k(S)$ (since at most two such edges intersect $t(B)$, and at most two such edges intersect $b(B)$).
Each such edge of $\COpt_k(S)$ can result in discarding at most two vertical strips (in case the edge intersects
the corner of a vertical strip, in which case two strips are affected).  Hence, in all, we discard at most $10$
vertical strips.  Note that the total width of vertical strips that are not discarded is at least $\frac{19}{20}w(B)$.

Now, suppose towards a contradiction that all vertical separators in the remaining vertical strips intersect at least three
edges of $\COpt_k(S)$.  This implies that the total length of such edges is at least $\frac{57}{20}w(B)$.
By Lemma~\ref{lemma:costbound}, we know that $\Phi_{B}(\COpt_k(S))\leq 2w(B)+2h(B)$.  Hence, we obtain a contradiction:
$$\frac{57}{20}w(B) \leq \Phi_{B}(\COpt_k(S)) \leq 2w(B) + 2h(B) \leq 2w(B) + \frac{w(B)}{2} = \frac{5}{2}w(B).$$
Hence, there is some remaining vertical strip $T$ that contains a good vertical separator.  Since $T$
was not discarded, its left edge is not $l(B)$ and its right edge is not $r(B)$.  Moreover, since there
are no edges of $\COpt_k(S)$ intersecting the top or bottom edge of $T$, there can be no edge
of $\COpt_k(S)$ that intersects the left or right edge of $T$ without also intersecting the good vertical
separator contained in $T$.  Hence, either the left or right edge of $T$ serves as a good vertical separator.
\end{proof}

We now consider the case regarding boxes that either have some wide vertical strip and are not much taller than
they are wide, or have some tall horizontal strip and are not much wider than they are tall.

\begin{lemma}\label{lem:balancedsparsesep}
Let $B$ be any box and consider the vertical strips induced by the vertical help (and main) lines going through $B$,
along with the horizontal strips induced by the horizontal help (and main lines) going through~$B$.  Moreover,
suppose that $h(B) \leq 4w(B)$ and there exists some vertical strip $T$ with $w(T) > \frac{1}{200}w(B)$.
Then the left edge of $T$ has at most two edges of $\COpt_k(S)$ intersecting it, and the same holds for the right edge of $T$.
Similarly, if $w(B) \leq 4h(B)$ and there exists some horizontal strip $H$ with $w(H) > \frac{1}{200}h(B)$,
then each of the bottom and top edges of $H$ has at most two edges of $\COpt_k(S)$ intersecting them.
\end{lemma}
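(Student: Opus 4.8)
The plan is to prove the statement by a width-counting argument very similar in spirit to the proofs of Lemmas~\ref{lem:bidensesep} and~\ref{lem:unbalanceddensesep}, but now exploiting the fact that a single \emph{wide} strip carries a lot of ``horizontal budget'' all by itself. Assume without loss of generality that $h(B) \le 4w(B)$ and that there is a vertical strip $T$ with $w(T) > \frac{1}{200}w(B)$; the horizontal case is symmetric. As in the earlier proofs, we keep the inductive invariant that each of the four edges of $B$ is intersected by at most two edges of $\COpt_k(S)$, so at most eight edges of $\COpt_k(S)$ meet $\bound B$ in total. I will argue about the two vertical lines carrying the left edge $\ell_T$ and the right edge $r_T$ of $T$. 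Note first that neither $\ell_T$ nor $r_T$ can coincide with $l(B)$ or $r(B)$ simultaneously being useless, but in any case the claim we must prove is only that each of $\ell_T, r_T$ is crossed by at most two edges of $\COpt_k(S)$ — we are \emph{not} here required to split $B$.

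The key step is the following dichotomy for the strip $T$ itself. If some edge of $\COpt_k(S)$ intersects the top or bottom edge of $T$ (equivalently $t(B)$ or $b(B)$ inside the $x$-range of $T$), then by the box invariant there are at most four such edges. Otherwise, no edge of $\COpt_k(S)$ enters $T$ through its horizontal boundary, so every edge of $\COpt_k(S)$ that crosses $\ell_T$ and has a portion inside $T$ must also cross $r_T$, and vice versa — it traverses $T$ all the way across. Now I bound the number of such ``traversing'' edges using the cost bound of Lemma~\ref{lemma:costbound}: each traversing edge has horizontal extent at least $w(T) > \frac{1}{200}w(B)$, hence length at least $\frac{1}{200}w(B)$, and all such edges lie inside $B$, so if there were at least, say, $3$ of them the total length would be at least $\frac{3}{200}w(B)$ — this alone is not yet a contradiction, so I need to be more careful and instead integrate the crossing function across $T$ exactly as in Lemma~\ref{lem:contsep}: a vertical separator placed anywhere in $T$ is crossed by \emph{all} traversing edges (since they span $T$), so if $f(x)$ counts crossings at $x$-coordinate $x$, then $\int_{\ell_T}^{r_T} f(x)\,dx \ge (\#\text{traversing edges})\cdot w(T)$. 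Combining $\Phi_B(\COpt_k(S)) \ge \frac{1}{\sqrt 2}\int_{l(B)}^{r(B)} f(x)\,dx \ge \frac{1}{\sqrt2}(\#\text{traversing})\, w(T) > \frac{1}{200\sqrt 2}(\#\text{traversing})\,w(B)$ with $\Phi_B(\COpt_k(S)) \le 2w(B)+2h(B) \le 2w(B)+8w(B)=10w(B)$ (using $h(B)\le 4w(B)$) gives $\#\text{traversing} < 2000\sqrt 2$, which is the wrong order of magnitude. So the naive integration is too lossy; the correct route is the one from the earlier lemmas: a traversing edge crossing \emph{both} $\ell_T$ and $r_T$ within $T$ has horizontal span exactly $w(T)$ and so length $\ge w(T)$, and since these edges are pairwise non-crossing (they are edges of disjoint convex hulls) and all traverse the same strip, they are ``stacked'' and their total vertical footprint is at most $h(B)$; hence there are at most roughly $h(B)/w(T) < \frac{h(B)}{w(B)/200} \le 800$ of them — still a constant but not $\le 2$.

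Because a bound of $800$ is not what the lemma claims, the real argument must go the other way: I will show the lemma by contradiction assuming $\ell_T$ (say) is crossed by at least \emph{three} edges of $\COpt_k(S)$ that have a portion in $B$, and derive a length lower bound that violates Lemma~\ref{lemma:costbound} after also accounting for the part of the perimeter contributed inside $T$ and inside the complementary wide piece $B\setminus T$. Concretely: if three edges cross $\ell_T$, at least one of them, call it $e$, does not touch $t(B)\cup b(B)$ within $T$ (since at most two edges of $\COpt_k(S)$ touch $t(B)$ and at most two touch $b(B)$ — wait, that gives up to four, so I instead use the help-line spacing: between two consecutive vertical main lines there are $19999$ help lines creating $20000$ equal intervals, each of width $\frac{x_{i+1}-x_i}{20000} \le \frac{1}{20000}\cdot(\text{distance between main lines})$, and a wide strip $w(T) > \frac{1}{200}w(B)$ forces $T$ to span many such intervals, in fact $T$ must contain at least one full ``interval between consecutive main lines'' region which contains a point of $S$ on a main line). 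The cleanest finish is: since $w(T) > \frac1{200}w(B)$ and help lines subdivide each inter-main-line gap into $20000$ tiny equal pieces, $T$ cannot lie strictly between two consecutive main lines (that would make $w(T)$ at most one such gap, and a box $B$ whose width equals one gap has no interior help lines — excluded since $T$ is a proper strip), so $T$ straddles a vertical main line $x_i$; the point $p_i\in S$ with that $x$-coordinate lies on neither boundary line of $B$, and the argument of Observation~\ref{obs:borderpoint} plus disjointness of the convex hulls pins down which clusters' hulls can cross $\ell_T$ versus $r_T$. I expect the main obstacle to be exactly this bookkeeping: reducing ``at most a large constant'' to the sharp bound ``at most two,'' which must be done by carefully tracking, via the box invariant on $t(B), b(B)$ and the non-crossing/convexity structure, that any edge crossing $\ell_T$ and entering $T$ is forced either to exit through $r_T$ (contributing width $\ge w(T) > \frac1{200}w(B)$ each, so at most two can fit before $2w(B)+2h(B) \le 10w(B)$ is exhausted, once one combines with the obligatory perimeter elsewhere in $B$) or to exit through $t(B)$ or $b(B)$ (of which there are at most two each, and these are the \emph{same} edges counted on $\bound B$, hence already at most two cross $\ell_T$). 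Assembling these two subcounts and checking the constants $\frac{1}{200}$ and the slack in $2w(B)+2h(B)$ against $w(T)>\frac1{200}w(B)$, $h(B)\le 4w(B)$ closes the argument; the symmetric statement for a tall horizontal strip follows by exchanging the roles of the axes.
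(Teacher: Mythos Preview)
Your proposal has a genuine gap: the counting/integration approach cannot yield the sharp bound ``at most two,'' and you essentially acknowledge this yourself (``still a constant but not $\le 2$''). Your closing paragraph does not repair this. For instance, you claim that at most two traversing edges ``can fit before $2w(B)+2h(B)\le 10w(B)$ is exhausted,'' but three traversing edges contribute only length $>3w(T)>\tfrac{3}{200}w(B)$, which is nowhere near $10w(B)$. Likewise, edges that exit $T$ through $t(B)$ or $b(B)$ can number up to four (two on each side), not two, so that subcount does not close either. No amount of sharpening constants in the perimeter inequality $\Phi_B(\COpt_k(S))\le 2w(B)+2h(B)$ will turn a bound like $800$ into $2$.

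The paper's proof uses a completely different idea: a \emph{transformation} argument rather than a counting one. The crucial structural fact is that the strip $T$ lies between two consecutive vertical \emph{main} lines at $x_1<x_2$, so there are no points of $S$ with $x$-coordinate in $(x_1,x_2)$, and since $w(T)=\tfrac{x_2-x_1}{20000}$ the hypotheses force $h(B)<\tfrac{x_2-x_1}{25}$. Now suppose three edges $e_1,e_2,e_3$ of $\COpt_k(S)$ cross (say) the left edge of $T$ at $x=u$; not all three can bound the same convex hull, so two clusters $C,C'$ are involved. Because the interval $(x_1,x_2)$ contains no input points, one can cut $C$ along the vertical lines $x=u$ and $x=x_2$ (or $x=x_1$), and then re-merge the pieces with $C'$, keeping the total number of clusters equal to $k$. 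A careful case analysis on the angles between $e_1,e_2,e_3$ shows the perimeter saved (roughly $(1-\tfrac{1}{\sqrt2})\cdot\tfrac{x_2-x_1}{2}$) strictly exceeds the perimeter paid (a small multiple of $h(B)<\tfrac{x_2-x_1}{25}$), contradicting optimality of $\COpt_k(S)$. This split-and-merge idea, exploiting the absence of points in $(x_1,x_2)$, is the missing ingredient; your length-budget arguments never touch it.
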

\begin{proof}
Consider any such box $B$, and suppose $h(B) \leq 4w(B)$, as the other case is proved similarly.  We consider
each portion of $B$ that lies between two consecutive vertical help lines (or between a vertical help line and a vertical
main line) to be a vertical strip of $B$.  Assume that there exists some vertical strip $T$ with $w(T) > \frac{1}{200} w(B)$.

Now, consider the two points $p_1,p_2$
that give rise to the vertical strip $T$ induced by the corresponding vertical help and main lines, and let
$x_1,x_2$ be the $x$-coordinates of $p_1,p_2$ (respectively), with $x_1 < x_2$.  Observe that
$h(B) \leq 4w(B) < 800w(T) = 800\frac{x_2 - x_1}{20000} = \frac{(x_2-x_1)}{25}$, where the inequalities follow from our assumptions
in the lemma and the equality follows from the fact that we divide the interval $[x_1,x_2]$ into $20000$ segments
of equal length (induced by the vertical help lines).

Suppose towards a contradiction that the left or right edge of $T$ has at least three edges of $\COpt_k(S)$ intersecting it.  Let
$s$ denote this vertical edge, and let $u$ denote the $x$-coordinate induced by the line $s$.  We must have that
at least one of $x_1,x_2$ must be far away from $u$ (note that $x_1 \leq u \leq x_2$).  In particular, either $x_2 \geq u + \frac{x_2 - x_1}{2}$
or $x_1 \leq u - \frac{x_2 - x_1}{2}$.  We consider the former case as the latter is symmetric, so suppose
$x_2 \geq u + \frac{x_2 - x_1}{2}$.

Let $e_1,e_2,e_3$ denote three consecutive edges of $\COpt_k(S)$ that intersect the line segment $s$, sorted in
decreasing order according to the height of the point of intersection with $s$ (e.g., $e_1$ is above $e_2$, which is above $e_3$ at $x=u$).
Note that it cannot be the case that all three edges are contained in the boundary of the convex hull of the same cluster, and hence the three edges lie on the
boundaries of at least two convex hulls (induced by at least two clusters of $\COpt_k(S)$).  Without loss of generality,
assume that there exists an optimal cluster $C$ such that $\CH(C)$ contains both $e_1$ and $e_2$
(the case when $e_2$ and $e_3$ are on the same boundary is symmetric).  Moreover, let $C'$ denote the cluster giving rise
to the boundary on which $e_3$ lies.

We first consider the case when the slope of $e_1$ is strictly more than
that of $e_2$, and the slope of $e_2$ is strictly more than that of $e_3$.
We imagine extending the line segments $e_1,e_2$ to the left until they meet at some point $r_{1,2}$ (they must meet at some point), and denote
by $\theta_{1,2}$ the angle in radians formed as a result of extension.
We can do a similar process for edges $e_2,e_3$ to obtain the angle $\theta_{2,3}$ (again, extending line segments
$e_2$ and $e_3$ to the left must result in an intersection), and also for edges $e_1,e_3$ to obtain an angle~$\theta_{1,3}$.
Observe that $\theta_{1,3} \leq \pi$, implying that either $\theta_{1,2} \leq \frac{\pi}{2}$ or $\theta_{2,3} \leq \frac{\pi}{2}$.

Consider the scenario where $\theta_{1,2} \leq \frac{\pi}{2}$.  We transform the solution as follows.  First, we split the cluster $C$ into two clusters,
where we take all points in $C$ with an $x$-coordinate of at most $u$ to one cluster $C_1$, and all points in $C$ with an $x$-coordinate
of at least $x_2$ to another cluster $C_2$ (note that there are no points with an $x$-coordinate in the open interval
$(u,x_2)$).  Lastly, we merge the clusters $C_1$ and $C'$ by taking the union $C_1 \cup C'$.  We note that the total number of clusters,
after performing the split and merge, remains unchanged.  Hence, we need only argue that the cost after splitting and merging
does not result in an increase in cost.

To this end, observe that the points in $C_1$ all lie inside the polygon $P_1$ obtained by considering the original cluster $C$,
cutting it with the vertical line $x=u$, and taking the left portion (i.e., all points in $\CH(C)$ that have an $x$-coordinate of at most $u$).
By a similar argument, the points in $C_2$ all lie inside the polygon $P_2$ obtained by considering the original cluster
$C$, cutting it with the vertical line $x = x_2$, and taking the right portion (i.e., all points in $\CH(C)$
that have an $x$-coordinate of at least $x_2$).  Now, consider the polygon $Q$ obtained by merging the polygons $P_1$ and $\CH(C')$ as follows:
we consider a vertical line segment at $x=u$ going from edge $e_2$ to edge $e_3$ (note that both edges intersect line $s$ at $x = u$).
Observe that $Q$ contains the points in $C_1 \cup C'$ and $P_2$ contains the points in $C_2$, and hence $\per(C_1 \cup C')$ is at most
the perimeter of $Q$ and $\per(C_2)$ is at most the perimeter of $P_2$.  Hence, we need only bound the perimeter of $Q$ and $P_2$.

Observe that the combined perimeter of $Q$ and $P_2$ is at most $\per(C) + \per(C')$, plus the lengths of the vertical line segments at $x=u$
and $x=x_2$ going from edge $e_1$ to $e_2$, plus twice the length of the vertical line segment at $x=u$ going
from edge $e_2$ to $e_3$, minus the lengths of the portion of edges $e_1$ and $e_2$ with $x$-coordinates in between $x = u$ and $x=x_2$.
For ease of notation, we denote by $b_{1,2}$ the length of the vertical line segment at $x=x_2$ (going from $e_1$ to $e_2$),
by $\ell_1$ the length of $e_1$ in the interval $[u,x_2]$, and by $\ell_2$ the length of $e_2$ in the interval $[u,x_2]$.  First, the vertical
line segments at $x=u$ (which, joined together, go from $e_1$ to $e_3$ at $x = u$) contribute at most $2h(B)$ to the perimeter of $Q$.
Now, we upper bound $b_{1,2}$ by considering the triangle formed by the following three points: $r_{1,2}$, the intersection of $e_1$
with the vertical line $x = x_2$, and the intersection of $e_2$ with the vertical line $x = x_2$.  Let $\beta$ denote the angle formed
at the intersection of $e_1$ with $x=x_2$, and $\gamma$ denote the angle formed at the intersection of $e_2$ with $x=x_2$.  We have
that $b_{1,2} = h(B) + \ell_1 \cos(\beta) + \ell_2 \cos(\gamma)$.  Since $\theta_{1,2} \leq \frac{\pi}{2}$, we know that
$\max\{\beta,\gamma\} \geq \frac{\pi}{4}$.  This implies
$b_{1,2} = h(B) + \ell_1 \cos(\beta) + \ell_2 \cos(\gamma) \leq h(B) + \max\{\ell_1 + \frac{\ell_2}{\sqrt{2}}, \frac{\ell_1}{\sqrt{2}} + \ell_2\}$.
Hence, the new solution's cost, given by the sum of the perimeters of $Q$ and $P_2$, is at most
$\per(C) + \per(C') + 3h(B) + \max\{\ell_1 + \frac{\ell_2}{\sqrt{2}}, \frac{\ell_1}{\sqrt{2}} + \ell_2\} - \ell_1 - \ell_2$.  We bound this expression
as follows:
\begin{align*}
3h(B) &+ \max\left\{\ell_1 + \frac{\ell_2}{\sqrt{2}}, \frac{\ell_1}{\sqrt{2}} + \ell_2 \right\} - \ell_1 - \ell_2\\
&\leq 3h(B) - \left(1 - \frac{1}{\sqrt{2}}\right)\min\{\ell_1,\ell_2\}
\leq 3h(B) - \left(1 - \frac{1}{\sqrt{2}}\right)(x_2 - u),
\end{align*}
where the last inequality follows from the fact that both $\ell_1$ and $\ell_2$ are at least $x_2 - u$.  Hence, as long
as $3h(B) - (1 - \frac{1}{\sqrt{2}})(x_2 - u) < 0$, we have a contradiction.  This holds as $h(B) < \frac{(x_2 - x_1)}{25}$ and
$x_2 - u \geq \frac{x_2-x_1}{2}$, implying $3h(B) - (1 - \frac{1}{\sqrt{2}})(x_2 - u) < 0$.

In the case that $\theta_{1,2} > \frac{\pi}{2}$, then we know $\theta_{2,3} \leq \frac{\pi}{2}$.  In this setting, we
obtain a new solution by merging the two clusters $C$ and $C'$ to get $C \cup C'$, and argue that this new solution is cheaper.
Consider the polygon $P$ obtained by removing the portion of edges $e_2$ and $e_3$ in between $x = u$ and $x = x_2$,
and then joining $C$ and $C'$ on the left by a vertical line segment going from the point at which $e_2$ intersects
$s$ to the point at which $e_3$ intersects $s$ (at $x = u$).  Similarly, we join $C$ and $C'$ on the right by a vertical line segment between
the two points at which edges $e_2$ and $e_3$ intersect at $x=x_2$.  Now, since $P$ contains all points in $C \cup C'$, we have $\per(C \cup C')$
is at most the perimeter of $P$.  Moreover, we have that the perimeter of $P$ is at most $\per(C) + \per(C')$, plus the lengths of the
left and right vertical segments we added, minus $\ell_2$ and $\ell_3$.  The left vertical segment contributes at most $h(B)$ to
the perimeter (since the left vertical line segment is contained in $s$).  By a similar argument as earlier, the right vertical segment
is at most $h(B) + \max\{\ell_2 + \frac{\ell_3}{\sqrt{2}},\frac{\ell_2}{\sqrt{2}} + \ell_3\}$ (using the fact that $\theta_{2,3} \leq \frac{\pi}{2}$ in this case).
Hence, using similar reasoning as before, the perimeter of $P$ is at most
$\per(C) + \per(C') + 2h(B) + \max\left\{\ell_2 + \frac{\ell_3}{\sqrt{2}},\frac{\ell_2}{\sqrt{2}} + \ell_3\right\} - \ell_2 - \ell_3$.
We bound this expression as follows:
\begin{align*}
2h(B) + \max\left\{\ell_2 + \frac{\ell_3}{\sqrt{2}},\frac{\ell_2}{\sqrt{2}} + \ell_3\right\} - \ell_2 - \ell_3
&\leq 2h(B) - \left(1 - \frac{1}{\sqrt{2}}\right)\min\{\ell_2,\ell_3\}\\
&\leq 2h(B) - \left(1 - \frac{1}{\sqrt{2}}\right)(x_2 - u).
\end{align*}
We again have that this quantity is strictly less than zero, a contradiction.

In the case when either the slope of $e_1$ is at most the slope of $e_2$, or the slope of $e_2$ is at most the slope of $e_3$,
the proof is simpler.  In particular, in the former case, we can more easily bound the vertical line segment we add along the line $x = x_2$
by $h(B)$ (instead of some function of $\ell_1,\ell_2$).  This holds since the gap between edges $e_1$ and $e_2$ shrinks when
going from $x=u$ to $x=x_2$.  The same holds in the latter case when the slope of $e_2$ is at most the slope of $e_3$.
\end{proof}

Finally, as a type of base case, we argue that boxes that are not much taller than they are wide and consist of one
vertical strip have at most two edges of $\COpt_k(S)$ intersecting the whole box.  Likewise, boxes that are not much wider than they are tall and consist of
one horizontal strip satisfy the same property.

\begin{lemma}\label{lem:twoedgebox}
Let $B$ be any box with $h(B) \leq 4w(B)$ such that no vertical help or main line has an $x$-coordinate strictly in between the $x$-coordinates
induced by $l(B)$ and $r(B)$.  Then there are at most two edges of $\COpt_k(S)$ intersecting box $B$.
Similarly, if $B$ is any box with $w(B) \leq 4h(B)$ such that no horizontal help or main line has a $y$-coordinate strictly in between the
$y$-coordinates induced by $b(B)$ and $t(B)$, then there are at most two edges of $\COpt_k(S)$ intersecting box $B$.
\end{lemma}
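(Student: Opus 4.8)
The plan is to assume, for contradiction, that three edges of $\COpt_k(S)$ meet $B$, and to derive a cheaper clustering by a split‑and‑merge operation essentially identical to the one in the proof of Lemma~\ref{lemma:box-split}'s ingredient Lemma~\ref{lem:balancedsparsesep}. First I would record the relevant geometry of $B$. Since $B$ is a single vertical strip, its left and right edges lie on two consecutive vertical main/help lines, so there is an index $i$ with $w(B)=\frac{x_{i+1}-x_i}{20000}$, no point of $S$ has $x$‑coordinate strictly between $x_i$ and $x_{i+1}$, and $h(B)\le 4w(B)=\frac{x_{i+1}-x_i}{5000}$. By Observation~\ref{obs:borderpoint} no point of $S$ lies on $\bound B$, and no point of $S$ lies in the interior of $B$; hence every edge $e$ of $\COpt_k(S)$ meeting $B$ has both endpoints outside $B$, and since these endpoints lie in $S$ and therefore have $x$‑coordinate $\le x_i$ or $\ge x_{i+1}$ while $e$ reaches $x$‑coordinates inside $(x_i,x_{i+1})$, one endpoint of $e$ has $x$‑coordinate $\le x_i$ and the other $\ge x_{i+1}$. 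Writing $[a,a+w(B)]$ for the $x$‑range of $B$, we have $(a-x_i)+(x_{i+1}-(a+w(B)))=19999\,w(B)$, so one of the two summands — say the right one, after reflecting the plane if necessary — is at least $\tfrac{19999}{2}w(B)$; thus for every edge $e$ meeting $B$, its right endpoint has $x$‑coordinate at least $x_{i+1}\ge (a+w(B))+\tfrac{19999}{2}w(B)$.

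Next I would fix three edges $e_1,e_2,e_3$ of $\COpt_k(S)$ meeting $B$. By Observation~\ref{lem:p-disjoint} the clusters of $\COpt_k(S)$ have pairwise disjoint convex hulls, so no two of its edges cross: two edges on one hull cannot cross because that hull is convex, and two edges on different hulls cannot cross because a crossing point would lie in both disjoint hulls. Since each $e_j$ spans the whole interval $[x_i,x_{i+1}]\supseteq[a,a+w(B)]$, the three edges are affine functions on this interval, pairwise non‑crossing, hence totally ordered there; relabel so that $e_1$ lies above $e_2$, which lies above $e_3$, throughout, and choose the three edges consecutive in this order among the edges meeting $B$. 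Finally, each $e_j$ crosses the vertical line $x=a+w(B)$, and the boundary of a convex polygon crosses a line at most twice, so the three edges lie on the boundaries of at least two distinct clusters of $\COpt_k(S)$.

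Now I would run the split‑and‑merge argument exactly as in the proof of Lemma~\ref{lem:balancedsparsesep}, with its vertical line $s$ replaced by the right edge of $B$ (at $x$‑coordinate $u\mydef a+w(B)$) and its far coordinate $x_2$ replaced by $x_{i+1}$. A consecutive pair among $e_1,e_2,e_3$ lies on two different hulls $\CH(C),\CH(C')$; one splits $C$ along the lines $x=u$ and $x=x_{i+1}$ — which is free of points, since $S$ has no point with $x$‑coordinate in $(u,x_{i+1})\subseteq(x_i,x_{i+1})$ — and merges the part of $C$ left of $u$ with $C'$, keeping the number of clusters equal to $k$. Bounding the perimeters of the resulting polygons by adding the two short vertical bridges (each of length at most $h(B)$, up to the $\cos$‑terms) and deleting the swallowed portions of the two relevant edges, and splitting into the subcases according to whether the angle between those two edges is at most or larger than $\tfrac{\pi}{2}$ as in Lemma~\ref{lem:balancedsparsesep}, the net change in cost is at most $3h(B)-(1-\tfrac{1}{\sqrt 2})(x_{i+1}-u)\le 12\,w(B)-(1-\tfrac{1}{\sqrt 2})\tfrac{19999}{2}w(B)<0$, contradicting optimality of $\COpt_k(S)$. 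Hence at most two edges of $\COpt_k(S)$ meet $B$. The symmetric statement, for a box with $w(B)\le 4h(B)$ consisting of a single horizontal strip, follows by exchanging the roles of the $x$‑ and $y$‑axes.

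I expect the main obstacle to be the same one already present in Lemma~\ref{lem:balancedsparsesep}: carefully handling the configuration of the three ordered edges among the clusters — in particular ensuring the split‑and‑merge can always be performed on a consecutive pair of edges lying on distinct hulls, including when all three edges lie on three different hulls — and tracking the geometry of the added vertical bridges so the perimeter estimate with the right constants goes through. The quantitative slack here is far more generous than in Lemma~\ref{lem:balancedsparsesep} (because $B$ is a single help‑line strip, so $x_{i+1}-x_i=20000\,w(B)$ while $h(B)\le 4w(B)$), so the final inequality is comfortable; the delicacy lies entirely in the case analysis on which edges share a hull.
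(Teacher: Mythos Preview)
Your overall strategy matches the paper's: assume three edges of $\COpt_k(S)$ meet $B$, order them along a vertical line, and perform a split-and-merge to produce a strictly cheaper $k$-clustering. However, there is a genuine gap in the step where you bound the ``short vertical bridges'' at the near line $x=u$.

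In Lemma~\ref{lem:balancedsparsesep} the three edges are assumed to intersect the separator \emph{segment} $s$, which has length $h(B)$; hence their intersections with the line $x=u$ all lie in the $y$-range of $B$, and the bridge lengths are trivially at most $h(B)$. Here the hypothesis is only that the three edges meet the \emph{box} $B$. An edge can enter $B$ through its top and leave through its bottom, so its intersection with the line $x=u$ (the right edge of $B$, in your setup) may lie far outside the $y$-range of $B$. Consequently the vertical distance $a_{i,j}$ between two such intersections is \emph{not} bounded by $h(B)$, and your displayed estimate $3h(B)-(1-\tfrac{1}{\sqrt 2})(x_{i+1}-u)<0$ does not follow from the preceding discussion.

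The paper repairs this by introducing $g_i$, the length of $e_i$ over any horizontal interval of width $w(B)$, and proving $a_{i,j}\le g_i+g_j+h(B)$ (with sharper bounds $a_{i,j}\le\min\{g_i,g_j\}+h(B)$ in the slope-ordered subcases) via a simple three-segment path argument. The payoff term then becomes $-(1-\tfrac{1}{\sqrt 2})\min\{\ell_1,\ell_2\}$ with $\ell_i\ge 1000\,g_i$, so the positive $g$-contributions are dominated and the contradiction closes. Your intuition that the slack is generous is correct, but the slack must be spent against the $g_i$ terms, not merely against $h(B)$; this also forces an additional case split on the relative slopes of $e_1,e_2,e_3$, which your proposal omits.
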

\begin{proof}
Consider any such box $B$ with $h(B) \leq 4w(B)$ (as the proof of the other case is symmetric).  We consider
each portion of $B$ that lies between two consecutive vertical help lines (or between a vertical help line and a vertical
main line) to be a vertical strip of $B$.  In particular, the assumption in the lemma regarding vertical help and
main lines implies that $B$ consists of exactly one vertical strip $T$ (induced by $l(B)$ and $r(B)$), and hence we have
$w(B) = w(T)$.  The following proof uses similar ideas as in the proof of Lemma~\ref{lem:balancedsparsesep}.

Now, consider the two points $p_1,p_2$
that give rise to the vertical strip $T$, and let
$x_1,x_2$ be the $x$-coordinates of $p_1,p_2$ (respectively), with $x_1 < x_2$.  Observe that
$h(B) \leq 4w(B) = 4w(T) = 4\frac{x_2 - x_1}{20000} < \frac{x_2-x_1}{200}$, where the first inequality and the first equality follow
from our assumptions in the lemma, and the second equality follows from the fact that we divide the interval $[x_1,x_2]$ into $20000$ segments
of equal length (induced by the vertical help lines).

Suppose towards a contradiction that there are (at least) three edges of $\COpt_k(S)$ intersecting the interior of $B$.  Let
$u_1,u_2$ denote the $x$-coordinates induced by the line segments $l(B)$ and $r(B)$, respectively.  We must have either
$x_2 \geq \frac{u_1 + u_2}{2} + \frac{x_2 - x_1}{2}$
or $x_1 \leq \frac{u_1 + u_2}{2} - \frac{x_2 - x_1}{2}$.  Otherwise, we get a contradiction:
$x_2 - x_1 < \frac{u_1 + u_2}{2} + \frac{x_2 - x_1}{2} - \frac{u_1 + u_2}{2} + \frac{x_2 - x_1}{2} = x_2 - x_1$.
We consider the former case as the latter is symmetric, so suppose
$x_2 \geq \frac{u_1 + u_2}{2} + \frac{x_2 - x_1}{2}$.  In the former case, we focus on the vertical line $x=u_1$, namely
the line containing $l(B)$ (in the latter case, we focus on the vertical line $x=u_2$, namely the line containing $r(B)$).

Let $e_1,e_2,e_3$ denote three edges of $\COpt_k(S)$ that intersect the interior of box $B$.  Each such edge must also
intersect the vertical lines $x=u_1$ and $x=x_2$.  We consider three such edges that are consecutive, sorted in
decreasing order according to the height of the point of intersection with the line $x=u_1$
(e.g., $e_1$ is above $e_2$, which is above $e_3$ at $x=u_1$).
Note that it cannot be the case that all three edges are contained in the boundary of the convex hull of the same cluster, and hence the three edges lie on the
boundaries of at least two convex hulls (induced by at least two clusters of $\COpt_k(S)$).  Without loss of generality,
assume that there exists an optimal cluster $C$ such that $\CH(C)$ contains both $e_1$ and $e_2$
(the case when $e_2$ and $e_3$ are on the same boundary is symmetric).  We also denote by $C'$ the cluster giving rise
to the boundary on which $e_3$ lies.

For ease of notation, we denote by $\ell_1,\ell_2,\ell_3$ the lengths of $e_1,e_2,e_3$
in the interval $[u_1,x_2]$, and by $g_1,g_2,g_3$ the lengths of $e_1,e_2,e_3$ in an arbitrary interval of length
$\frac{x_2-x_1}{20000}$ in $[u_1,x_2]$.  We also denote by $a_{1,2}$ the length of the vertical line segment at $x=u_1$ connecting $e_1$
and $e_2$, by $a_{2,3}$ the segment at $x=u_1$ connecting $e_2$ and $e_3$, and $a_{1,3} = a_{1,2} + a_{2,3}$ (i.e.,
the length of the vertical line segment at $x = u_1$ going from $e_1$ to $e_3$).  We define analogous lengths
of vertical line segments $b_{1,2}$ and $b_{2,3}$ at $x = x_2$ between edges $e_1,e_2$ and edges $e_2,e_3$, respectively.
We note that $\ell_i \geq 1000 g_i$ for $1 \leq i \leq 3$, since there are at least $1000$ disjoint intervals of length $\frac{x_2-x_1}{20000}$
in the interval $[u_1,x_2]$ (using the fact that $x_2 \geq \frac{u_1 + u_2}{2} + \frac{x_2 - x_1}{2}$), and in each such
interval the length of $e_i$ is precisely $g_i$.

We first argue that $a_{i,j} \leq g_i + g_j + h(B)$ for all $1 \leq i < j \leq 3$.
For any two such edges $e_i,e_j$ we argue this by considering the following three line segments.
We can go along $e_i$ from the intersection of edge $e_i$ with $x= u_1$ to any intersection point of $e_i$
with box $B$, and then to any intersection point of $e_j$ with box $B$, and finally go along $e_j$ back to $x = u_1$.
The cost of connecting $e_i$ to $e_j$ along $x = u_1$ is at most the lengths of the projection of these three
line segments onto $x = u_1$, which sum to at most $g_i + g_j + h(B)$.

Moreover, we show a tighter bound of $a_{i,j} \leq \min\{g_i,g_j\} + h(B)$ for all $1 \leq i < j \leq 3$ satisfying the property that the
slope of $e_i$ is strictly more than that of $e_j$.
If the slopes of $e_i,e_j$ are positive and negative, respectively, then the intersections of $e_i,e_j$ with $x = u_1$ must both lie in $B$, and
hence the cost of connecting them via the vertical line segment at
$x = u_1$ is $a_{i,j} \leq h(B)$.  If the slope of $e_i$ is negative, in which case the slope of $e_j$ is also negative,
then the intersection points of both edges with
$x = u_1$ must lie above $b(B)$, and hence we can obtain the bound via the following two line segments.
We can go from the intersection of edge $e_i$ with $x= u_1$ to any point at which it intersects
box $B$, and then go to the intersection of $b(B)$ with $x = u_1$.  The cost of connecting $e_i$ to $e_j$ along $x = u_1$
is at most the lengths of the projection of these two line segments onto $x = u_1$, which sum to at most $g_i + h(B)$ (note that
$g_i = \min\{g_i,g_j\}$, since both slopes are negative and the slope of $e_i$ is strictly more than that of $e_j$).
A symmetric argument shows that, in the case that $e_j$ is positive (in which case the slope of $e_i$ is also positive),
we have $a_{i,j} \leq g_j + h(B)$ (we symmetrically have $g_j = \min\{g_i,g_j\}$).

We first consider the case when the slope of $e_1$ is strictly more than
that of $e_2$, and the slope of $e_2$ is strictly more than that of $e_3$.
This implies $a_{1,2} \leq \min\{g_1,g_2,g_3\} + h(B)$ and $a_{2,3} \leq \min\{g_1,g_2,g_3\} + h(B)$.
In particular, we know $a_{1,2} \leq \min\{g_1,g_2\} + h(B)$, and we also know $a_{1,2} \leq a_{1,3} \leq \min\{g_1,g_3\} + h(B)$,
implying $a_{1,2} \leq \min\{g_1,g_2,g_3\} + h(B)$.  Symmetrically, we have $a_{2,3} \leq \min\{g_1,g_2,g_3\} + h(B)$.

Now, we imagine extending the line segments $e_1,e_2$ to the left until they meet at some point $r_{1,2}$ (they must meet at some point), and denote
by $\theta_{1,2}$ the angle in radians formed as a result of extension.
We can do a similar process for edges $e_2,e_3$ to obtain the angle $\theta_{2,3}$ (again, extending line segments
$e_2$ and $e_3$ to the left must result in an intersection), and also for edges $e_1,e_3$ to obtain an angle $\theta_{1,3}$.
Observe that $\theta_{1,3} \leq \pi$, implying that either $\theta_{1,2} \leq \frac{\pi}{2}$ or $\theta_{2,3} \leq \frac{\pi}{2}$.

Consider the scenario where $\theta_{1,2} \leq \frac{\pi}{2}$.  We transform the solution as follows.  First, we split the cluster $C$ into two clusters,
where we take all points in $C$ with an $x$-coordinate of at most $u_1$ to one cluster $C_1$, and all points in $C$ with an $x$-coordinate
of at least $x_2$ to another cluster $C_2$ (note that there are no points with an $x$-coordinate in the open interval
$(u_1,x_2)$).  Lastly, we merge the clusters $C_1$ and $C'$ by taking the union $C_1 \cup C'$.  We note that the total number of clusters,
after performing the split and merge, remains unchanged.  Hence, we need only argue that the cost after splitting and merging
does not result in an increase in cost.

To this end, observe that the points in $C_1$ all lie inside the polygon $P_1$ obtained by considering the original cluster $C$,
cutting it with the vertical line $x=u_1$, and taking the left portion (i.e., all points in $\CH(C)$ that have an $x$-coordinate of at most $u_1$).
By a similar argument, the points in $C_2$ all lie inside the polygon $P_2$ obtained by considering the original cluster
$C$, cutting it with the vertical line $x = x_2$, and taking the right portion (i.e., all points in $\CH(C)$
that have an $x$-coordinate of at least $x_2$).  Now, consider the polygon $Q$ obtained by merging the polygons $P_1$ and $\CH(C')$
via a vertical line segment at $x=u_1$ going from edge $e_2$ to edge $e_3$.
Observe that $Q$ contains the points in $C_1 \cup C'$ and $P_2$ contains the points in $C_2$, and hence $\per(C_1 \cup C')$ is at most
the perimeter of $Q$ and $\per(C_2)$ is at most the perimeter of $P_2$.
Moreover, the combined perimeters of $Q$ and $P_2$ is at most $\per(C) + \per(C') + a_{1,2} + 2a_{2,3} + b_{1,2} - \ell_1 - \ell_2$.

Now, we upper bound $b_{1,2}$ by considering the triangle formed by the following three points: $r_{1,2}$, the intersection of $e_1$
with the vertical line $x = x_2$, and the intersection of $e_2$ with the vertical line $x = x_2$.  Let $\beta$ denote the angle formed
at the intersection of $e_1$ with $x=x_2$, and $\gamma$ denote the angle formed at the intersection of $e_2$ with $x=x_2$.  We have
that $b_{1,2} = a_{1,2} + \ell_1 \cos(\beta) + \ell_2 \cos(\gamma)$.  Since $\theta_{1,2} \leq \frac{\pi}{2}$, we know that
$\max\{\beta,\gamma\} \geq \frac{\pi}{4}$.  This implies
$b_{1,2} = a_{1,2} + \ell_1 \cos(\beta) + \ell_2 \cos(\gamma) \leq a_{1,2} + \max\{\ell_1 + \frac{\ell_2}{\sqrt{2}}, \frac{\ell_1}{\sqrt{2}} + \ell_2\}$.
Hence, we obtain
\begin{align*}
a_{1,2} + 2a_{2,3} + b_{1,2} - \ell_1 - \ell_2 &\leq
2(a_{1,2} + a_{2,3}) + \max\left\{\ell_1 + \frac{\ell_2}{\sqrt{2}}, \frac{\ell_1}{\sqrt{2}} + \ell_2\right\} - \ell_1 - \ell_2\\
&\leq 4(\min\{g_1,g_2\} + h(B)) - \left(1 - \frac{1}{\sqrt{2}}\right)\min\{\ell_1,\ell_2\}\\
&\leq 4(\min\{g_1,g_2\} + 4w(B)) - 1000\left(1 - \frac{1}{\sqrt{2}}\right)\min\{g_1,g_2\} < 0,
\end{align*}
where the first inequality follows from substituting our upper bound on $b_{1,2}$,
the second inequality follows from substituting for our upper bounds on $a_{1,2}$ and $a_{2,3}$ and simplifying,
the third inequality follows by the assumption $h(B) \leq 4w(B)$, along with $\ell_1 \geq 1000 g_1$ and $\ell_2 \geq 1000 g_2$,
and the last inequality follows from the observation that $w(B) \leq \min\{g_1,g_2\}$.  Hence,
the perimeter sum of $Q$ and $P_2$ is at most $\per(C) + \per(C') + a_{1,2} + 2a_{2,3} + b_{1,2} - \ell_1 - \ell_2 < \per(C) + \per(C')$,
a contradiction.

In the case that $\theta_{1,2} > \frac{\pi}{2}$, then we know $\theta_{2,3} \leq \frac{\pi}{2}$.  In this setting, we
obtain a new solution by merging the two clusters $C$ and $C'$ to get $C \cup C'$, and argue that this new solution is cheaper.
Consider the polygon $P$ obtained by removing the portion of edges $e_2$ and $e_3$ in between $x = u_1$ and $x = x_2$,
and then joining $C$ and $C'$ on the left by a vertical line segment going from the point at which $e_2$ intersects $x=u_1$
to the point at which $e_3$ intersects $x=u_1$.  Similarly, we join $C$ and $C'$ on the right by a vertical line segment between
the two points at which edges $e_2$ and $e_3$ intersect at $x=x_2$.  Now, since $P$ contains all points in $C \cup C'$, we have $\per(C \cup C')$
is at most the perimeter of $P$.  Moreover, we have that the perimeter of $P$ is at most $\per(C) + \per(C') + a_{2,3} + b_{2,3} - \ell_2 - \ell_3$.

By a similar argument as earlier, we have $b_{2,3} \leq a_{2,3} + \max\{\ell_2 + \frac{\ell_3}{\sqrt{2}},\frac{\ell_2}{\sqrt{2}} + \ell_3\}$
(using the fact that $\theta_{2,3} \leq \frac{\pi}{2}$ in this case).
Hence, using similar reasoning as before, we obtain
\begin{align*}
a_{2,3} + b_{2,3} - \ell_2 - \ell_3 &\leq 2a_{2,3} + \max\left\{\ell_2 + \frac{\ell_3}{\sqrt{2}},\frac{\ell_2}{\sqrt{2}} + \ell_3\right\} - \ell_2 - \ell_3\\
&\leq 2(\min\{g_2,g_3\} + h(B)) - \left(1 - \frac{1}{\sqrt{2}}\right)\min\{\ell_2,\ell_3\}\\
&\leq 2(\min\{g_2,g_3\} + 4w(B)) - 1000\left(1 - \frac{1}{\sqrt{2}}\right)\min\{g_2,g_3\} < 0,
\end{align*}
where the first inequality follows from substituting for $b_{2,3}$, the second inequality follows from substituting for $a_{2,3}$
and simplifying, the third inequality follows from the assumption $h(B) \leq 4w(B)$, along with
$\ell_2 \geq 1000 g_2$ and $\ell_3 \geq 1000 g_3$, and the last inequality follows from $w(B) \leq \min\{g_2,g_3\}$.
Thus, the perimeter of $P$ is at most $\per(C) + \per(C') + a_{2,3} + b_{2,3} - \ell_2 - \ell_3 < \per(C) + \per(C')$, a contradiction.

In the case that the slope of $e_2$ is at most that of $e_3$, then we also argue that merging $C$ and $C'$ to get $C \cup C'$ yields
a cheaper solution.  In particular, in this case, we know that $b_{2,3} \leq a_{2,3}$ (as we go from $x = u_1$ to $x=x_2$,
the vertical gap between the two edges shrinks since the slope of $e_2$ is smaller than the slope of $e_3$).  Moreover, we have
$a_{2,3} \leq g_2 + g_3 + h(B)$.  By similar reasoning as before,
we obtain
\begin{align*}
a_{2,3} + b_{2,3} - \ell_2 - \ell_3 \leq 2a_{2,3} - \ell_2 - \ell_3 &\leq 2(g_2 + g_3 + 4w(B)) - \ell_2 - \ell_3 \\
&\leq 12\max\{g_2,g_3\} - 1000 g_2 - 1000 g_3 < 0,
\end{align*}
which again yields a contradiction.

In the case that the slope of $e_1$ is at most that of $e_2$, but the slope of $e_2$ is strictly more than that of $e_3$,
we do a similar transformation as previously seen by splitting $C$ into two clusters $C_1$ and $C_2$, and then merging
$C_1$ with $C'$ to obtain the cluster $C_1 \cup C'$.  As before, we need to argue $a_{1,2} + 2a_{2,3} + b_{1,2} - \ell_1 - \ell_2 < 0$.
Note that $b_{1,2} \leq a_{1,2}$, since the vertical gap between $e_1$ and $e_2$ shrinks when going from $x = u_1$ to $x = x_2$.
Moreover, we have $a_{2,3} \leq \min\{g_2,g_3\} + h(B) \leq g_2 + h(B)$ and $a_{1,2} \leq \min\{g_1,g_2\} + h(B) \leq g_2 + h(B)$.
Hence, we have
\begin{align*}
a_{1,2} + 2a_{2,3} + b_{1,2} - \ell_1 - \ell_2 &\leq 2(a_{1,2} + a_{2,3}) - \ell_1 - \ell_2 \leq 4(g_2 + h(B)) - \ell_1 - \ell_2\\
&\leq 4(g_2 + 4w(B)) - 1000 g_1 - 1000 g_2 \leq 20g_2 - 1000 g_1 - 1000 g_2 < 0,
\end{align*}
yielding a contradiction.
\end{proof}

We are now ready to prove Lemma~\ref{lemma:box-split}.

\begin{proof}[Proof of Lemma~\ref{lemma:box-split}]
Given any box $B \in B(S)$, we consider each portion of $B$ that lies between two consecutive vertical help lines
(or between a vertical help line and a vertical main line) to be a vertical strip of $B$.
Similarly, we consider the horizontal strips induced by the horizontal help and main lines in $B$.

First, we consider the case when $B$ is an elementary box.
In this case, $B$ is composed of precisely one vertical strip and one horizontal strip.  In particular, we can apply
Lemma~\ref{lem:twoedgebox} to get that at most two edges in $\COpt_k(S)$ intersect the elementary box $B$ (clearly,
we must have either $h(B) \leq w(B) \leq 4w(B)$ or $w(B) \leq h(B) \leq 4h(B)$).

Now, suppose $B$ is a box satisfying the box invariant.
If $B$ is not an elementary box, then it has at least two vertical
strips or at least two horizontal strips.  If all vertical strips $T$ satisfy $w(T) \leq \frac{w(B)}{200}$
and all horizontal strips $H$ satisfy $h(H) \leq \frac{h(B)}{200}$, then we can apply Lemma~\ref{lem:bidensesep}
to get that there exists either a good vertical separator of $B$ that splits it into two smaller boxes
or a good horizontal separator of $B$ that splits it into two smaller boxes (and is contained
in a help or main line).  Since the separator is good,
at most two edges of $\COpt_k(S)$ intersect it, and hence the box invariant continues to be satisfied
for each of the two smaller boxes.

Hence, suppose there is some vertical strip $T$ satisfying $w(T) > \frac{w(B)}{200}$ or there is some
horizontal strip $H$ satisfying $h(H) > \frac{h(B)}{200}$.  If the box $B$ is much wider than it is tall
and has many small vertical strips (i.e., $w(B) > 4h(B)$ and $w(T) \leq \frac{w(B)}{200}$ for all vertical strips $T$),
then we can apply Lemma~\ref{lem:unbalanceddensesep}.  We get that there is a good vertical separator
(contained in a vertical help or main line) that splits $B$ into two smaller boxes such that the box invariant
continues to hold on each of the smaller boxes (since at most two edges of $\COpt_k(S)$ intersect the vertical separator).
Likewise, if the box is much taller than it is wide and has many small horizontal strips (i.e.,
$h(B) > 4w(B)$ and $h(H) \leq \frac{h(B)}{200}$ for all horizontal strips $H$),
then we can again apply Lemma~\ref{lem:unbalanceddensesep}.  We get that there is a good
horizontal separator (contained in a horizontal help or main line)
that splits $B$ into two smaller boxes, each satisfying the box invariant.

The only other case to consider is when either the box is not much taller than it is wide and
there is some wide vertical strip $T$, or the box is not much wider than it is tall
and there is some tall horizontal strip $H$.  That is, either $h(B) \leq 4w(B)$ and there is some
vertical strip $T$ satisfying $w(T) > \frac{w(B)}{200}$, or $w(B) \leq 4h(B)$ and there
is some horizontal strip $H$ satisfying $w(H) > \frac{h(B)}{200}$.  Suppose the former is the case,
as the proof of the latter is symmetric.  Hence, we can apply
Lemma~\ref{lem:balancedsparsesep} to get that either the left edge of $T$ is intersected
by at most two edges of $\COpt_k(S)$ and the right edge of $T$ is intersected by at most
two edges of $\COpt_k(S)$ (in the latter case, we know the bottom and top edges of $H$ satisfy
a similar property).

If either the left edge of $T$ or the right edge of $T$ serves as a good vertical separator of $B$
(i.e., if either edge splits $B$ into two strictly smaller boxes), we are done since we already
know such an edge of $T$ is intersected by at most two edges of $\COpt_k(S)$.  Hence, we need
only consider the case when the box $B$ consists of one vertical strip (i.e., the left and right edges
of $T$ are $l(B)$ and $r(B)$, respectively).  Since $h(B) \leq 4w(B)$, we can apply
Lemma~\ref{lem:twoedgebox} to get that at most two edges of $\COpt_k(S)$ intersect the entire
box $B$.  Since box $B$ is not elementary, there is either a vertical help or main line
that lies strictly in between $l(B)$ and $r(B)$, or there is a horizontal help or main
line that lies strictly in between $b(B)$ and $t(B)$.  Regardless, the separators
induced by such lines are good.

Tying everything together, for boxes $B$ that are not elementary and satisfy the box invariant,
we get that we can always find a good horizontal or vertical separator (contained
in a help or main line) that splits $B$ into two strictly smaller boxes, both of which also satisfy the box invariant.
\end{proof}

\subsection{Coverings and Signatures}

In the rest of this section, we assume we are given a set of points $S$ on the plane, and a number $k$ and we want to solve the $k$-clustering problem for $S$. Moreover, we assume the boxes we work with are in $\B(S)$, the clusterings we work with are clusterings in $\Part(S)$, the clusters we work with are subsets of $S$, and the edges we work with are in $\G(S)$.
For brevity, we define $\tilde{\C} \coloneqq \COpt_k(S)$.

\begin{figure}
\centering
\includegraphics[scale=1]{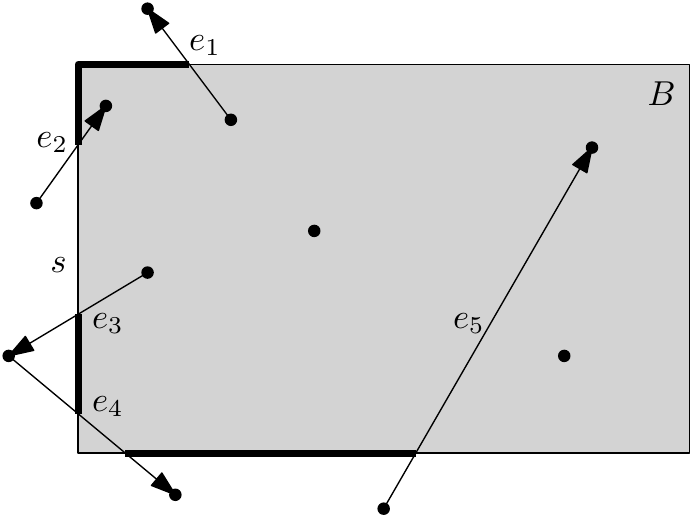}
\caption{The cut number of the set of edges $E\mydef \{e_1,\ldots,e_5\}$ shown in this figure is $2$, i.e.~$\Theta_{B,s}(E)=2$. Note that in this example the border set of $E$ is $E$, i.e.~$\M_B(E) = E$, and it is alternating.}
\label{fig:fixedk:cutnumber}
\end{figure}

Let $B$ be a box and $E$ a set of edges and loops. $E$ is called \emph{interior-disjoint} if the interior of any edge $e\in E$ is not intersected by any other edge in $E$. Indeed, all sets of edges (and loops) we work with in this section are interior-disjoint. The \emph{border set} of $E$ on $B$ is denoted by $\M_B(E)$ and is defined as the set of all edges in $E$ intersecting both the boundary and the interior of $B$. An intersection $q$ of an edge $pp'\in\M_B(E)$ and $\bound B$ is called \emph{entering} if $pq$ does not intersect the interior of $B$, and is called \emph{exiting} if $qp'$ does not intersect the interior of $B$.
Assuming $E$ is interior-disjoint, all the intersections of the edges in its border set and $\bound B$ are distinct and we can define a cyclic order on these intersections by sorting them in counterclockwise order. A border set is called \emph{alternating} if no two consecutive intersections in this cyclic order are both entering or both exiting. Moreover, every two consecutive exiting and entering intersection define an interval on the boundary of $B$. For an edge $s$ of $\bound B$ the \emph{cut number} of $E$ on the edge $s$ is denoted by $\Theta_{B,s}(E)$ and is defined as the number of such intervals (if any) intersecting $s$ (see Figure~\ref{fig:fixedk:cutnumber}).\footnote{To handle the degenerate cases where an edge intersects a corner of a box, we consider $s$ as an open edge by excluding its end points.}
The \emph{border sequence} of $E$ on $B$ is defined as the sequence $e_1,\ldots,e_t$ of all edges in $\M_B(E)$ sorted in counterclockwise order according to their intersection with the boundary of $B$. Note that some edges might appear twice in this sequence. Moreover, this sequence is not unique as the first edge in this sequence can be chosen arbitrarily.

Let $B$ be a box.
A nonempty set of edges $E$ is called a \emph{convex chain} on $B$ of size $t$, where $t\mydef |E|$, if there exists a sequence of distinct points $p_1,\ldots,p_{t+1}$ such that
\begin{itemize}
\item $p_2,\ldots,p_{t}\in S\cap B$ and $p_1,p_{t+1}\in S\setminus B$,
\item \label{def:chain:edges} $E= \{p_1p_2,p_2p_3,\ldots,p_tp_{t+1}\}$,
\item \label{def:chain:simple} $\bigcup_{i=1}^t p_ip_{i+1}$ is a simple open curve intersecting the interior of $B$, and
\item \label{def:chain:convex} for any $i$, where $1\leq i < t$, $p_ip_{i+1}p_{i+2}$ is a left turn.
\end{itemize}
We call this sequence the \emph{vertex sequence} of the convex chain $E$.
The edge $p_1p_2$ is called the \emph{starting} edge and the edge $p_tp_{t+1}$ is called the \emph{ending} edge of the convex chain.
Note that the starting and ending edges of a convex chain are the same when $t=1$.

Let $B$ be a box.
A \emph{cycle} on $B$ is defined to be a nonempty set $E$ with the following properties. We call the set $E$ an \emph{inner} cycle on $B$ of size $t$ if either $t=1$ and $E$ contains a single loop $pp$, where $p\in S\cap B$, or if there exists a sequence of distinct points $\sigma\mydef (p_1,\ldots,p_t)$ such that
\begin{itemize}
\item $p_1,\ldots,p_t\in S\cap B$,
\item \label{def:cycle:edges} $E=\{p_1p_2,p_2p_3,\ldots,p_{t-1}p_t,p_tp_1\}$, and therefore $t=|E|$, and
\item \label{def:cycle:poly} $p_1,\ldots,p_t$ are the vertices of a convex polygon given in counterclockwise direction.
\end{itemize}
An inner cycle consisting of a single loop is called a \emph{loop} cycle on $B$. Moreover, we call the set $E$ a \emph{border} cycle on $B$ if there exists a partition $E=E_1\cup\cdots\cup E_x$, where $1\leq x \leq 4$, such that
\begin{itemize}
\item each $E_i$, where $1\leq i \leq x$, is a convex chain, and
\item $e_1,e_1',\ldots, e_x,e'_x$ is a border sequence of $E$, where $e_i$ and $e'_i$ are the starting and ending edges of the convex chain $E_i$.
\end{itemize}
Note that by the definition of main and helper lines, no point in $S$ can be on the boundary of $B$. It is not hard to verify that the border set of a border cycle is alternating. The \emph{coverage} of a cycle $E$ is denoted by $\R_B(E)$ and is defined either as the set $\{p\}$ if $E$ is a loop cycle $\{pp\}$, or the intersection of all half planes defined by edges in $E$ and the box $B$, i.e.,
$$
\R_B(E)\coloneqq B \cap \bigcap_{e\in E}\hp(e),
$$
where $\hp(e)$ is the half-plane defined by the line passing through $e$ located on the left of $e$.
We say a point $b\in B$ is \emph{covered} by $E$ if it is in the coverage of $E$.

For a given box $B$, an interior-disjoint set $E$ of edges and loops is called a \emph{disjoint covering} or simply a \emph{covering} on $B$ if $E$ satisfies the following properties. The set $E$ is called a \emph{nonempty covering} on $B$ of size $t$ if there exists a partition $E=E_1\cup\cdots\cup E_t$ such that
\begin{itemize}
\item each $E_i$ is a cycle,
\item when $1\leq i<j\leq t$, the coverages of $E_i$ and $E_j$ are disjoint, i.e., $\R_B(E_i)\cap \R_B(E_j) = \emptyset$,
\item $S\cap B \subset \R_B(E_1)\cup \cdots \cup \R_B(E_t)$.
\end{itemize}
It is not hard to verify that the border set of a covering is alternating. We denote the size of $E$ by $\kappa_B(E)$.
By definition, the empty set is a covering on $B$ of size zero if $S\cap B = \emptyset$ and we call it the \emph{empty covering} on $B$.
The following lemma shows that the size of a covering is well defined.
\begin{lemma}
For a nonempty covering $E$ defined on an arbitrary box $B$, the satisfying partition, i.e.~the partition satisfying the conditions for a nonempty covering on $B$, is unique.
\end{lemma}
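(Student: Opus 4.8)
Suppose $E=E_1\cup\cdots\cup E_t$ and $E=E_1'\cup\cdots\cup E_{t'}'$ are two partitions, each witnessing that $E$ is a nonempty covering on $B$. I will show the two partitions are equal. The idea is to give an \emph{intrinsic} description of each cycle as ``the set of edges of $E$ bounding a certain face'', where the relevant faces are those of the planar subdivision of $B$ cut out by $E$ that lie locally on the left of all their bounding edges. Since this description refers only to $E$ and $B$, both partitions must produce it.

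\textbf{Structural facts.} From the definitions of cycle and coverage, for each cycle $E_i$ the set $\R_B(E_i)=B\cap\bigcap_{e\in E_i}\hp(e)$ is a closed convex subset of $B$; each $e\in E_i$ lies on the bounding line of $\hp(e)$ and on the correct side of every other $\hp(e')$ (convexity of a polygon / convex chain system), so the portion $e\cap B$ lies on $\partial\R_B(E_i)$ and no edge of $E_i$ meets $\inte(\R_B(E_i))$. Combined with the hypothesis $\R_B(E_i)\cap\R_B(E_j)=\emptyset$ for $i\neq j$ this yields: (i) no edge of $E$ meets $\inte(\R_B(E_i))$, because an edge of $E_j$ lies in $\R_B(E_j)$, which is disjoint from $\R_B(E_i)$; and (ii) a given edge $e\in E$ has $e\cap B\subseteq\partial\R_B(E_i)$ for exactly the index $i$ with $e\in E_i$ --- if $e\cap B\subseteq\partial\R_B(E_i)$ with $e\in E_j$, then $e\cap B$ is nonempty and lies in $\R_B(E_i)\cap\R_B(E_j)$, forcing $i=j$. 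Hence $E_i=\{\,e\in E : e\cap B\subseteq\partial\R_B(E_i)\,\}$, and these sets partition $E$.

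\textbf{Faces and the main argument.} Let $\Sigma\mydef\partial B\cup\bigcup_{e\in E}(e\cap B)$ and let the \emph{faces} of $B$ be the connected components of $\inte(B)\setminus\Sigma$. Fix a non-loop cycle $E_i$; then $\R_B(E_i)$ is full-dimensional. By (i) its interior is open, connected, and disjoint from $\Sigma$, and since $\partial\R_B(E_i)\subseteq\Sigma$ it is also relatively closed in $\inte(B)\setminus\Sigma$, so $\inte(\R_B(E_i))$ is a single face $C_i$; distinct non-loop cycles give distinct faces, as their disjoint full-dimensional closed coverages determine their interiors. Now take any $e\in E_i$. Since $E$ is interior-disjoint and the relative interior of $e\cap\inte(B)$ avoids $\partial B$, exactly two faces are incident to $e$, one on each side of the line of $e$; and $C_i$, being contained in $\hp(e)$ with $e$ on its boundary, is the face immediately to the \emph{left} of $e$. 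In the other partition $e\in E_j'$ for a unique $j$, and the same reasoning shows $C_j'$ is also the face immediately to the left of $e$. Hence $C_i=C_j'$, and therefore $E_i=\{e'\in E:e'\cap B\subseteq\partial C_i\}=E_j'$. By symmetry the partitions agree on all non-loop cycles. Loop cycles are forced: if $pp\in E$ then, by the cycle definitions, $pp$ can belong only to a loop cycle $\{pp\}$, whose coverage is $\{p\}$, and disjointness of coverages prevents $p$ from lying in any other cycle's coverage; so $\{pp\}$ is a cycle in both partitions. This exhausts $E$, so the two partitions coincide.

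\textbf{Main obstacle.} The delicate part is justifying the structural facts uniformly over all cycle types: that the edges of a cycle lie on the boundary of its coverage without entering the interior, and that every non-loop cycle has a full-dimensional coverage (so that it is a genuine face rather than a lower-dimensional sliver). For inner cycles these are immediate from convexity of the polygon; for border cycles one must unwind the convex-chain conditions and the alternation of the border sequence, and a border cycle whose coverage happens to be lower-dimensional would have to be handled exactly as a loop cycle is. One must also keep in mind that the starting and ending edges of a convex chain reach outside $B$, so throughout one works with the in-box portions $e\cap B$ and with incidences to $e\cap\inte(B)$.
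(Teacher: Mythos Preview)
Your proof is correct and takes a genuinely different route from the paper's. The paper argues combinatorially in two stages: first, since any two edges of $E$ sharing an endpoint inside $B$ must lie in the same cycle (that shared point would otherwise belong to two disjoint coverages), the decomposition of $E$ into its atomic pieces---inner cycles, loop cycles, and individual convex chains---is forced by $E$ alone; second, two convex chains grouped into the same border cycle in one partition must stay together in any other, because separating them would make the resulting border cycles' coverages intersect. Your approach is topological: you realise each non-loop cycle as the planar face of $\inte(B)\setminus\Sigma$ lying immediately to the left of each of its edges, and recover the cycle as the set of edges bounding that face.

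Your route gives a clean intrinsic description (``the cycle containing $e$ is the set of edges bounding the face on $e$'s left'') and handles inner and border cycles uniformly, at the cost of needing the claim $\partial\R_B(E_i)\cap\inte(B)\subseteq\Sigma$---that the in-box boundary of a coverage is exactly the union of its edges' in-box portions---which is where the convex-chain and alternating-border-sequence conditions do real work, especially for border cycles with several chains. You flag this honestly in your final paragraph. The paper's combinatorial argument avoids planar topology but leaves its second stage (why chains cannot be regrouped) equally informal. Both arguments ultimately rest on the same underlying structural fact that $e\cap B\subseteq\partial\R_B(E_i)$ for every $e\in E_i$, and both treat it at the same level of rigor.
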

\begin{proof}
Note that the coverages of each cycle the satisfying partition is a convex polygon containing all parts of its edges in $B$. Therefore as all the coverages are disjoint, no two inner cycles or convex chains that belong to different border cycles intersect inside $B$. Moreover because of the condition on the border sequence of a border cycle, different convex chains of a border cycle cannot intersect in $B$. Hence, any other satisfying partition (if exists) consists of the same convex chains and inner cycles. Additionally, any two convex chain that are in the same border cycle in the satisfying partition must be in the same border cycle in any other satisfying partition as otherwise the coverages of their border cycles intersect. This shows that any other satisfying partition must have the same inner cycles and border cycles which completes the proof.
\end{proof}
It is not hard to verify that for an edge $s$ of $\bound B$ the \emph{cut number} of $E$ on an edge $s$ of $B$, i.e.~$\Theta_{B,s}(E)$, is equal to the number of the coverages of $E$ intersecting $s$.

We say a cluster $C$ \emph{intersects} a box $B$ if its convex hull intersects $B$, and we say it \emph{intersects} the boundary of $B$ or simply intersects $\bound B$ if its convex hull intersects $\bound B$. The \emph{signature} of $C$ on $B$ or its boundary is denoted by $C\sqcap B$ and $C\sqcap \bound B$ respectively and is defined as the set of all edges of its convex hull intersecting $B$ or the boundary of it, i.e.,
\begin{align*}
C\sqcap B&\coloneqq  \{e\in \E(C)\mid e \cap B \neq \emptyset\},&
C\sqcap \bound B&\coloneqq \{e\in \E(C)\mid e \cap \bound B \neq \emptyset\}.
\end{align*}

\begin{figure}
\centering
\includegraphics[scale=1]{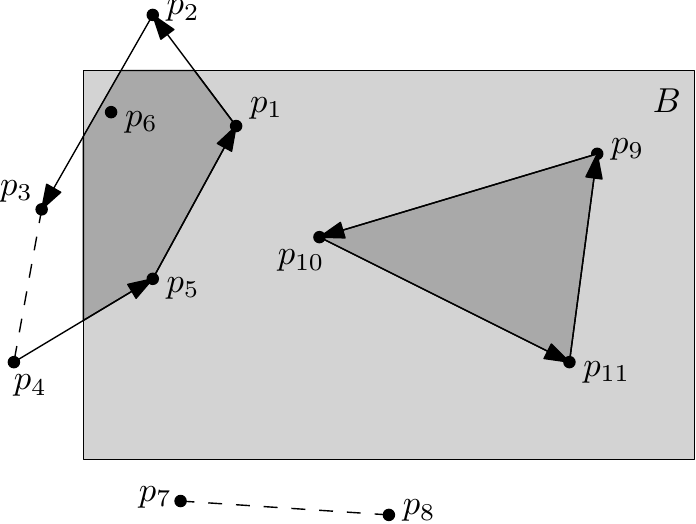}
\caption{The signature of the clustering $\C\mydef \{\{p_1,\ldots,p_6\},\{p_7,p_8\},\{p_9,p_{10},p_{11}\}\}$ on $B$.
The edges of the covering $\C\sqcap B$ are solid and the coverage of each cycle is shown in dark grey.}
\label{fig:fixedk:clusteringcovering}
\end{figure}

With some abuse of notation, the \emph{signature} of a disjoint clustering $\C$ on $B$ or its boundary is denoted by $\C\sqcap B$ and $\C\sqcap \bound B$ respectively and is defined as the set of all edges of convex hulls of its clusters intersecting $B$ or the boundary of it, i.e.,
\begin{align*}
\C\sqcap B&\coloneqq\bigcup_{C\in\C} C\sqcap B,&
\C\sqcap \bound B&\coloneqq\bigcup_{C\in\C} C\sqcap \bound B.
\end{align*}
Note that for any disjoint clustering $\C$, $\C\sqcap B$ is a covering on $B$ (see Figure~\ref{fig:fixedk:clusteringcovering}).

\subsection{Subproblems}

Let $B$ be a box, $M$ be an interior-disjoint set of edges with $|M|\leq 8$, $k'\in\{0,\ldots, n\}$, and $\delta\in\{T,F\}$ a boolean value. The tuple $I\mydef \langle B,M, k',\delta\rangle$ is called a \emph{subproblem} on $B$ or simply a subproblem if it satisfies the following properties.
The tuple $I$ is called a \emph{special} subproblem if
\begin{itemize}
\item $\delta=T$,
\item $M=\emptyset$, and
\item $k'=1$.
\end{itemize}
The empty covering (on $B$) is the only \emph{solution} to a special subproblem $\langle B, \emptyset, 1,T\rangle$.
The subproblem $I$ is called a \emph{normal} subproblem if
\begin{itemize}
\item $\delta=F$,
\item $M$ is equal to its border set, i.e.~$M=\M_B(M)$, and
\item border set of $M$, i.e.~$M$, is alternating.
\end{itemize}
A covering $E$ on $B$ is called a \emph{solution} to a normal subproblem $\langle B, M, k',F\rangle$ if $\M_B(E)=M$ and $\kappa_B(E)=k'$.
The \emph{cost} of a solution is denoted by $\Phi_B(E)$ and is defined as the sum of the length of $e\cap B$ over all edges $e\in E$. A solution to $I$ is called \emph{optimal} if there is no solution to $I$ with smaller cost. We say a point $b\in B$ is \emph{covered} by a solution $E$ if either $b$ is covered by $E$ or $\delta=T$. Note that the coverage status of any point on the boundary of $B$ is the similar for all solutions.

The \emph{cut number} of a subproblem $I\mydef \langle B,M, k',\delta\rangle$ on an edge $s$ of $B$ is denoted by $\Theta_{B,s}(M,\delta)$ and is defined as
\begin{equation*}
\Theta_{B,s}(M,\delta)\coloneqq
\begin{cases}
1, &\delta=T,\\
\Theta_{B,s}(M), & \text{otherwise}.
\end{cases}
\end{equation*}
A disjoint clustering $\C$ \emph{respects} a subproblem $I\mydef \langle B,M, k',\delta\rangle$ if either
\begin{itemize}
\item $\delta=T$ and $B$ is in the convex hull of a cluster in $\C$, or
\item $\delta=F$ and $\C \sqcap B$ is a solution to $I$.
\end{itemize}

\begin{observation}
\label{obs:unique-subproblem}
Given a box $B$, the subproblem $I\mydef \langle B,k',\tCC\sqcap \bound B,\delta\rangle$ is the unique subproblem on $B$ respected by $\tCC$, where $k'$ is the number of clusters in $\tCC$ intersecting $B$ and $\delta$ is $T$ if $B$ is contained in the convex hull of a cluster in $\tCC$ and is $F$ otherwise.
Moreover, $\tCC \sqcap B$ is a solution to $I$.
\end{observation}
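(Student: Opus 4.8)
The plan is to handle the two possible values of $\delta$ separately, checking in each case that $I$ is a well-formed subproblem, that $\tCC$ respects $I$, and that $\tCC\sqcap B$ is a solution to $I$; uniqueness then follows quickly from the definition of ``respects''. Throughout I use that $\tCC=\COpt_k(S)$ is a disjoint clustering (Observation~\ref{lem:p-disjoint}), that $\tCC\sqcap B$ is a covering on $B$ whose border set is alternating, that no point of $S$ lies on $\bound B$ (Observation~\ref{obs:borderpoint}), and that no two points of $S$ share an $x$- or $y$-coordinate, so no segment between two points of $S$ is horizontal or vertical and hence none can run along an edge of $B$. A recurring elementary fact is the following crossing principle: if an edge $e$ of a cluster hull of $\tCC$ meets $\bound B$ at a point $q$ that is not a corner of $B$, then $q$ lies in the relative interior of $e$ (its endpoints are in $S$, hence off $\bound B$), and since $e$ is not parallel to the edge of $B$ through $q$, the segment $e$ crosses that edge transversally and therefore enters $\inte(B)$.

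Suppose first that $\delta=T$, i.e.\ $B\subseteq\CH(C)$ for some $C\in\tCC$. For every other cluster $C'$ disjointness gives $\CH(C')\cap B\subseteq\CH(C')\cap\CH(C)=\emptyset$, so $C$ is the unique cluster whose hull meets $B$ and $k'=1$. I claim no edge of $\CH(C)$ meets $B$ except possibly at corners of $B$. An edge $e$ of $\CH(C)$ bounds a closed half-plane $\hp(e)$ with $\CH(C)\subseteq\hp(e)$, hence $\inte(B)\subseteq\hp(e)$; a point of $e\cap\inte(B)$ would then lie on the bounding line of $\hp(e)$ yet have a full neighbourhood inside $\hp(e)$, which is impossible, so $e\cap\inte(B)=\emptyset$; but by the crossing principle any non-corner point of $e\cap\bound B$ would force $e$ into $\inte(B)$, so $e\cap\bound B$ consists only of corners. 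Discarding corner tangencies by the degenerate-case convention, we get $\tCC\sqcap\bound B=\emptyset$ and $\tCC\sqcap B=\emptyset$. Thus $I=\langle B,\emptyset,1,T\rangle$ is a special subproblem whose unique solution is the empty covering, which is exactly the set $\tCC\sqcap B=\emptyset$, and $\tCC$ respects $I$ by the first clause of the definition of ``respects''.

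Now suppose $\delta=F$, and set $M\mydef\tCC\sqcap\bound B$. By the crossing principle, every edge of a cluster hull that meets $\bound B$ away from corners also meets $\inte(B)$, so $M$ coincides with the border set $\M_B(\tCC\sqcap B)$ of the covering $\tCC\sqcap B$; in particular $M$ is alternating and $\M_B(M)=M$, so $I=\langle B,M,k',F\rangle$ is a normal subproblem. To see that $\tCC\sqcap B$ solves $I$, I would exhibit its (necessarily unique) satisfying partition as $\{\,C\sqcap B : C\in\tCC,\ \CH(C)\cap B\neq\emptyset\,\}$: each nonempty $C\sqcap B$ is a cycle on $B$ (a loop cycle if $C$ is a single point of $B$, an inner cycle if $\CH(C)\subseteq B$, and a border cycle obtained by splitting $\bound\CH(C)\cap B$ into its at most four convex arcs otherwise), distinct such cycles have disjoint coverages because the underlying convex hulls are disjoint, and together the coverages contain all of $S\cap B$. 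Hence $\kappa_B(\tCC\sqcap B)$ equals the number of clusters of $\tCC$ meeting $B$, namely $k'$, so $\tCC\sqcap B$ is a solution to $I$ and $\tCC$ respects $I$ by the second clause. For uniqueness, let $I'=\langle B,M',k'',\delta'\rangle$ be any subproblem on $B$ respected by $\tCC$. If $\delta'=T$, then $B$ lies inside a cluster's hull, so $\delta=T$, both $I$ and $I'$ are special on $B$, and $I'=\langle B,\emptyset,1,T\rangle=I$. If $\delta'=F$, then $\tCC\sqcap B$ must be a solution to $I'$, which pins down $M'=\M_B(\tCC\sqcap B)$ and $k''=\kappa_B(\tCC\sqcap B)$; these equal the values computed above, and moreover $\delta=F$ here, for if $B$ were contained in a cluster's hull then $\tCC\sqcap B=\emptyset$ and (once $S\cap B\neq\emptyset$) this is not even a covering, hence not a solution to a normal subproblem. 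So $I'=I$.

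The step I expect to be the main obstacle is the identification of the satisfying partition of $\tCC\sqcap B$ in the case $\delta=F$: one must verify that $C\sqcap B$ is genuinely a cycle --- in particular that $\bound\CH(C)\cap B$ decomposes into at most four convex chains whose starting and ending edges interleave correctly along $\bound B$ --- that the coverages $\R_B(C\sqcap B)$ for distinct clusters are pairwise disjoint and contain the respective clusters' points in $B$, and then to appeal to uniqueness of the satisfying partition to conclude that $\kappa_B$ counts exactly the clusters meeting $B$. Everything else is routine bookkeeping together with the convention on corner tangencies.
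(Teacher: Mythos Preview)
The paper states this as an Observation without supplying any proof, so there is nothing in the paper to compare your argument against. Your verification is essentially correct and follows the natural route: you split on $\delta$, in the special case you use convexity to show no hull edge can meet $\inte(B)$ and hence $\tCC\sqcap\bound B=\emptyset$, and in the normal case you identify the satisfying partition of the covering $\tCC\sqcap B$ as the per-cluster signatures $\{C\sqcap B:\CH(C)\cap B\neq\emptyset\}$, from which $\kappa_B(\tCC\sqcap B)=k'$ and $\M_B(\tCC\sqcap B)=\tCC\sqcap\bound B$ follow. The ``main obstacle'' you anticipate (that each $C\sqcap B$ is a cycle with at most four chains) is indeed routine once one notes that a convex boundary crosses each side of $B$ at most twice.

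One genuine wrinkle you already half-flag: uniqueness as literally stated can fail when $B\subseteq\CH(C)$ and $S\cap B=\emptyset$, since then $\tCC\sqcap B=\emptyset$ is simultaneously the solution to the special subproblem $\langle B,\emptyset,1,T\rangle$ and to the normal subproblem $\langle B,\emptyset,0,F\rangle$, and $\tCC$ respects both. This is a minor imprecision in the paper's statement rather than a defect in your argument; it is harmless for the later use in Lemma~\ref{lemma:fixed-k}, which only needs existence of some respected subproblem on each sub-box together with the fact that $\tCC\sqcap B$ solves it.
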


Note that the original clustering problem is also a subproblem $I_0\coloneqq\langle B_0,\emptyset,k,F\rangle$, where $B_0$ is defined as the box with edges on $v_1^-,v_n^+,h_1^-,h_n^+$.

\begin{observation}\label{obs:globalSol}
For any solution $E$ to $I_0$, there is a disjoint clustering $\cl\in\Part(S)$ such that $\cl\sqcap B_0=E$.
Likewise, for any disjoint clustering $\cl\in\Part(S)$, it holds that $\cl\sqcap B_0$ is a solution to $I_0$.
\end{observation}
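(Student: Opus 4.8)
The plan is to establish a correspondence between the cycles of a covering on $B_0$ with empty border set and the clusters of a disjoint clustering of $S$, and then to observe that under this correspondence solutions of $I_0$ match up precisely with disjoint $k$-clusterings. The facts I would rely on are: $B_0$ contains all of $S$ strictly in its interior, so $S\cap B_0=S$ and $S\setminus B_0=\emptyset$; the coverage of an inner cycle whose vertices all lie in the interior of $B_0$ is exactly the convex hull of those vertices; and the uniqueness of the satisfying partition of a covering proved earlier in this section.

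For the first direction I would start from a solution $E$ to $I_0=\langle B_0,\emptyset,k,F\rangle$, that is, a covering on $B_0$ with $\M_{B_0}(E)=\emptyset$ and $\kappa_{B_0}(E)=k$, and write $E=E_1\cup\cdots\cup E_k$ for its (unique) satisfying partition into cycles. Since a convex chain requires an endpoint in $S\setminus B_0=\emptyset$, no $E_i$ can be a border cycle, so each $E_i$ is an inner cycle: a loop $\{pp\}$ with $p\in S$, or the edge set of a convex polygon on a vertex sequence $V_i\subseteq S$ in convex position. As $V_i$ lies in the interior of $B_0$, its coverage is $\R_{B_0}(E_i)=\CH(V_i)$. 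I would then define $S_i\mydef\{p\in S:p\in\R_{B_0}(E_i)\}$; because the coverages are pairwise disjoint and cover $S\cap B_0=S$, the family $\cl\mydef\{S_1,\ldots,S_k\}$ is a partition of $S$ with each $S_i\supseteq V_i$ nonempty. From $V_i\subseteq S_i\subseteq\CH(V_i)$ and $V_i$ in convex position I get $\CH(S_i)=\CH(V_i)=\R_{B_0}(E_i)$, so the hulls $\CH(S_i)$ are pairwise disjoint (hence $\cl$ is a disjoint clustering) and $\E(S_i)=\E(\CH(V_i))=E_i$; since every edge of $E_i$ has both endpoints in $S\subseteq B_0$, this gives $S_i\sqcap B_0=E_i$ and therefore $\cl\sqcap B_0=E$.

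For the converse I would take a disjoint clustering $\cl=\{C_1,\ldots,C_k\}$ of $S$ into $k$ clusters. By the remark that the signature of any disjoint clustering is a covering, $\cl\sqcap B_0$ is a covering on $B_0$. Each $\CH(C_i)\subseteq\CH(S)$ lies in the open interior of $B_0$ (the boundary of $B_0$ is placed strictly outside $S$), so no convex-hull edge meets $\bound B_0$ and thus $\M_{B_0}(\cl\sqcap B_0)=\emptyset$. Moreover $\cl\sqcap B_0=\bigcup_{i=1}^{k}\E(C_i)$, and $\{\E(C_1),\ldots,\E(C_k)\}$ is a satisfying partition into inner cycles, since the coverages $\R_{B_0}(\E(C_i))=\CH(C_i)$ are pairwise disjoint and their union contains $S\cap B_0=S$; by uniqueness of the satisfying partition, $\kappa_{B_0}(\cl\sqcap B_0)=k$. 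Hence $\cl\sqcap B_0$ is a solution to $I_0$.

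The main obstacle I anticipate is the ``rigidity'' step: arguing carefully that a covering of $B_0$ with empty border set can decompose only into inner cycles, that each inner cycle is forced to be the full convex-hull edge set of its vertex set, and that its coverage equals that convex hull rather than a proper subregion. This rests on $S$ avoiding $\bound B_0$ together with the uniqueness lemma, and also needs the degenerate small cases handled directly from the definitions (a singleton cluster is a loop cycle, and, if two-point clusters are permitted, a two-element cluster is a size-two inner cycle). A minor notational point is that, because $I_0$ fixes the parameter $k$, the converse statement should be read for disjoint clusterings with exactly $k$ parts; in general the very same argument matches coverings of size $k'$ with disjoint clusterings into $k'$ clusters.
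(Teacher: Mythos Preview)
The paper states this as an observation without proof, so there is no ``paper's own proof'' to compare against; your write-up is a correct and careful justification of exactly the intended correspondence. In particular, your key observations---that $S\setminus B_0=\emptyset$ forces every cycle in a solution of $I_0$ to be an inner cycle, that the coverage of an inner cycle contained in the interior of $B_0$ equals the convex hull of its vertex set, and that the uniqueness lemma for satisfying partitions pins down $\kappa_{B_0}$---are precisely the points one needs. Your final caveat is also well taken: as written, $\Part(S)$ in the paper denotes all clusterings, so the converse direction is literally correct only for disjoint clusterings into exactly $k$ parts (which is how the observation is used downstream, e.g.\ in the proof of Theorem~\ref{thm:fixed-k}).
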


The following property shows that the optimal clustering $\tCC$ is also optimal for subproblems.
\begin{lemma}
\label{lemma:optimality}
Let $I\mydef \langle B,M,k',\delta\rangle$ be a subproblem on a box $B$ respected by $\tCC$. Then, for any solution $E$ to $I$, $\Phi_B(\tCC)\leq \Phi_B(E)$.
\end{lemma}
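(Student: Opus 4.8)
The plan is to argue by contradiction using a cut‑and‑paste (exchange) argument: starting from a solution $E$ to $I$ that would be cheaper inside $B$ than $\tCC$, I would splice $E$ into the global picture — keeping $\tCC$ outside $B$ and replacing its part inside $B$ by $E$ — and obtain a $k$‑clustering of $S$ of cost strictly below $\Opt_k(S)$, which is impossible.

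First I would dispatch the case $\delta=T$. Then $I$ is a special subproblem, $E$ must be the empty covering, so $\Phi_B(E)=0$, and it is enough to note that $\Phi_B(\tCC)=0$ as well: since $\tCC$ respects $I$, the box $B$ lies inside the convex hull of some cluster $C\in\tCC$; by disjointness of the hulls (Observation~\ref{lem:p-disjoint}) no other cluster's hull meets $B$, and no point of $\bound\CH(C)$ can lie in the open set $\inte(B)\subseteq\CH(C)$, so no hull boundary contributes length in $B$.

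Now suppose $\delta=F$. By Observation~\ref{obs:unique-subproblem} we have $M=\tCC\sqcap\bound B$, and $\tCC\sqcap B$ is a solution to $I$; hence $E$ and $\tCC\sqcap B$ are coverings on $B$ with the same border set $M$ and the same size $\kappa_B=k'$. Let $B_0$ be the root box and let $E^+$ be the edge set obtained from $\tCC\sqcap B_0$ by deleting all edges meeting $\inte(B)$ and inserting instead all edges of $E$; this is consistent on $\bound B$ because the edges of $M$ are simultaneously the edges of $\tCC\sqcap B_0$ crossing $\bound B$ and the starting and ending edges of the convex chains of $E$. A direct length accounting — using additivity of $\Phi$ (Observation~\ref{obs:sumproperty}), the fact that every edge of $E$ other than a chain end lies in $\inte(B)$ (as $B$ is convex and the relevant endpoints lie in $S\cap B\subseteq\inte(B)$), and the fact that the chain ends of $E$ are exactly the edges of $M$ — yields
\[
\Phi_{B_0}(E^+) \;=\; \Phi_{B_0}(\tCC)\;-\;\Phi_B(\tCC)\;+\;\Phi_B(E).
\]
If $E^+$ is a solution to $I_0=\langle B_0,\emptyset,k,F\rangle$, then by Observation~\ref{obs:globalSol} there is a disjoint clustering $\cl\in\Part(S)$ with $\cl\sqcap B_0=E^+$, and since $\kappa_{B_0}(E^+)=k$ this clustering has exactly $k$ clusters; therefore $\Phi_{B_0}(E^+)=\Phi(\cl)\ge\Opt_k(S)=\Phi_{B_0}(\tCC)$. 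Combined with the displayed identity (and with $\Phi_B(\tCC)=\Phi_B(\tCC\sqcap B)$), this gives $\Phi_B(\tCC)\le\Phi_B(E)$, as required.

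The heart of the argument is thus the claim that $E^+$ really is a solution to $I_0$: that it is an interior‑disjoint covering on $B_0$, that $\M_{B_0}(E^+)=\emptyset$, and — the delicate point — that $\kappa_{B_0}(E^+)=k$. The first two points are routine: the chains of $E$ stay inside $B_0$, so nothing crosses $\bound B_0$, and interior‑disjointness follows since $E$ is interior‑disjoint inside $B$, the hull edges of $\tCC$ are pairwise interior‑disjoint outside $\inte(B)$, and the two families meet only at the crossings of $M$ with $\bound B$. For $\kappa_{B_0}(E^+)=k$ one splits the clusters of $\tCC$ into the $k-k'$ whose hulls avoid $B$ (these survive verbatim as $k-k'$ cycles of $E^+$) and the $k'$ whose hulls meet $B$; one then has to show that the exterior‑of‑$B$ arcs of these $k'$ clusters, reassembled with the $k'$ cycles of $E$ along the common interface data $M$, form exactly $k'$ cycles of $E^+$. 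This is exactly where the structural facts about coverings are needed — that the border set of a covering is alternating, that the satisfying partition of a covering into cycles is unique, that the portion of $\bound B$ covered by a covering with border set $M$ depends only on $M$, and that coverages of distinct cycles are disjoint. I expect this bookkeeping — showing that the splice preserves the cycle count — to be the main obstacle, with everything else reducing to additivity of $\Phi$ and the already‑established correspondence (Observation~\ref{obs:globalSol}) between solutions of $I_0$ and disjoint $k$‑clusterings.
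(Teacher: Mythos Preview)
Your proposal is correct and follows essentially the same cut-and-paste exchange argument as the paper: form $E'=((\tCC\sqcap B_0)\setminus(\tCC\sqcap B))\cup E$, check that it is a solution to $I_0$ with $k$ cycles, and derive a contradiction from the resulting global cost inequality. The paper disposes of the verification that $E'$ is a solution to $I_0$ with ``it is easy to verify'', whereas you (rightly) flag the cycle-count preservation as the main bookkeeping step and outline the structural ingredients needed for it.
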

\begin{proof}
Assume there exist a solution $E$ to $I$ such that
$$
\Phi_B(E) < \Phi_B(\tCC).
$$
It is easy to verify that $E' \mydef ((\tCC\sqcap B_0) \setminus (\tCC\sqcap B)) \cup E$ is a solution to $I_0$ such that $\Phi_{B_0}(E')<\Phi_{B_0}(\tCC)$.
As $E'$ consists of $k$ inner cycles on $B_0$, the set $E'$ is the boundary of a clustering $\cl'$ with cost less than $\tCC$, which is a contradiction.
\end{proof}

\subsection{Split and Merge}

We say that a box $B$ \emph{splits} into $B_l$ and $B_r$ if there is a separator $s$ that separates $B$ to $B_l$ and $B_r$.
Here, the box $B_l$ is the upper box if $s$ is horizontal and the left box if $s$ is vertical.

Consider two boxes $B_1$ and $B_2$ whose boundaries share an edge.
We say two subproblems $\langle B_1,M_1,k'_1,\delta_1\rangle$ and $\langle B_2,M_2,k'_2,\delta_2\rangle$ are \emph{compatible} if their edges match on their shared boundary, i.e.,
\begin{equation*}
\{e\in M_1| e\cap \bound B_2 \neq \emptyset\} = \{e\in M_2| e\cap \bound B_1 \neq \emptyset\}.
\end{equation*}

Let $B$ be a box that splits into $B_l$ and $B_r$ by a separator $s$. We say a subproblem $I \mydef \langle B,M,k',\delta\rangle$ \emph{splits} into $I_l \mydef \langle B_l,M_l,k'_l,\delta_l\rangle$ and $I_r \mydef \langle B_r,M_r,k'_r,\delta_r\rangle$ if $I_l$ and $I_r$ are compatible, $\Theta_{B,s}(M_l,\delta_l)=\Theta_{B,s}(M_r,\delta_r)$, $k'=k'_l+k'_r-\Theta_{B,s}(M_l,\delta_l)$, and $\delta=\delta_l \land \delta_r$.
Equivalently, we say that $I_l$ and $I_r$ \emph{merge} to $I$.

\begin{observation}
\label{obs:sub-split}
Let $B$ be a box that splits into boxes $B_l$ and $B_r$ by a separator $s$, and $\cl$ be a disjoint clustering.
If $\cl$ respects all subproblems $I\mydef \langle B,M,k',\delta\rangle$, $I_l\mydef \langle B_l,M_l,k'_l,\delta_l\rangle$, and $I_r\mydef \langle B_r,M_r,k'_r,\delta_r\rangle$, then the subproblem $I$ splits into $I_l$ and $I_r$.
\end{observation}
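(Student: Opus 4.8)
The plan is to expand the definition of ``$I$ splits into $I_l$ and $I_r$'' into its four constituent requirements and verify each, exploiting throughout that all three subproblems are respected by the \emph{same} clustering $\cl$. Concretely I must check that (a) $I_l$ and $I_r$ are compatible, (b) $\Theta_{B,s}(M_l,\delta_l)=\Theta_{B,s}(M_r,\delta_r)$, (c) $k'=k'_l+k'_r-\Theta_{B,s}(M_l,\delta_l)$, and (d) $\delta=\delta_l\wedge\delta_r$. The first step is to use Observation~\ref{obs:unique-subproblem} to pin down the three subproblems as the canonical ones induced by $\cl$: on each of $B$, $B_l$, $B_r$ whose boolean is $F$, the second coordinate is the signature $\cl\sqcap\bound(\cdot)$, the signature $\cl\sqcap(\cdot)$ is a solution, and the third coordinate is the number of clusters of $\cl$ whose convex hull meets that box; on each box whose boolean is $T$, the box lies inside the convex hull of one (hence exactly one) cluster of $\cl$, the second coordinate is $\emptyset$, and the third coordinate is $1$.

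I would then split on whether $\delta=T$ or $\delta=F$. If $\delta=T$, some cluster $C\in\cl$ has $B\subseteq\CH(C)$, and since $B_l,B_r\subseteq B$ this forces $B_l,B_r\subseteq\CH(C)$, so $\delta_l=\delta_r=T$, $M=M_l=M_r=\emptyset$, and $k'=k'_l=k'_r=1$; then (a) holds since the relevant edge sets are empty, (b) holds because the cut number of a subproblem with boolean $T$ on an edge is $1$ by definition, (c) is $1=1+1-1$, and (d) is $T=T\wedge T$. If $\delta=F$, I first get (d): were $\delta_l=\delta_r=T$ we would have $B_l\subseteq\CH(C)$ and $B_r\subseteq\CH(C')$ for clusters $C,C'$, and either $C=C'$ forces $B=B_l\cup B_r\subseteq\CH(C)$ and hence $\delta=T$, a contradiction, or $C\neq C'$ so that $s\subseteq B_l\cap B_r\subseteq\CH(C)\cap\CH(C')$ contradicts the disjointness of $\cl$; thus at least one of $\delta_l,\delta_r$ is $F$ and $\delta_l\wedge\delta_r=F=\delta$. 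For (a), whenever the boolean on $B_\bullet$ is $F$ its edge set $M_\bullet$ is exactly the set of convex-hull edges of $\cl$ meeting $\bound B_\bullet$, and whenever it is $T$ the edge set is empty and no hull edge meets $\bound B_\bullet$ anyway (hull edges lie on the boundaries of the pairwise disjoint hulls, one of which contains $B_\bullet$); in every case $\{e\in M_l: e\cap\bound B_r\neq\emptyset\}$ and $\{e\in M_r: e\cap\bound B_l\neq\emptyset\}$ both equal the set of hull edges of $\cl$ meeting both $\bound B_l$ and $\bound B_r$, so $I_l$ and $I_r$ are compatible.

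The substance is the numerical pair (b) and (c), which I would reduce to a single cluster count via a geometric bridge. The key geometric fact is that a cluster $C\in\cl$ meets both $B_l$ and $B_r$ if and only if $\CH(C)\cap s\neq\emptyset$: if $\CH(C)$ contains a point of $B_l$ and a point of $B_r$, the segment joining them lies in the convex set $\CH(C)\cap B$ and must cross the separator $s$, and conversely $s\subseteq B_l\cap B_r$. On the subproblem side I would invoke the identity stated just before the observation — that the cut number of a covering on an edge equals the number of its coverages meeting that edge — together with the fact that the coverages of $\cl\sqcap B_\bullet$ are in bijection with the clusters of $\cl$ meeting $B_\bullet$, the coverage of such a cluster $C$ being exactly $\CH(C)\cap B_\bullet$, so that it meets $s$ precisely when $\CH(C)\cap s\neq\emptyset$. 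This yields $\Theta_{B,s}(M_\bullet,\delta_\bullet)=\#\{C\in\cl:\CH(C)\cap s\neq\emptyset\}$, with the boolean-$T$ case collapsing to the single cluster containing $B_\bullet$ and matching the stipulated value $1$; in particular $\Theta_{B,s}(M_l,\delta_l)=\Theta_{B,s}(M_r,\delta_r)$, giving (b). For (c) I would then apply inclusion--exclusion to $B=B_l\cup B_r$: the number of clusters of $\cl$ meeting $B$ equals the number meeting $B_l$ plus the number meeting $B_r$ minus the number meeting both, which is exactly $k'=k'_l+k'_r-\Theta_{B,s}(M_l,\delta_l)$ (a box with boolean $T$ contributing $1=k'_\bullet$, since the unique cluster containing it is then the only one meeting it).

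I expect the main obstacle to be this geometric bridge — making precise that the combinatorial cut number $\Theta_{B,s}$ of the respected subproblem coincides with the geometric number of clusters of $\cl$ straddling $s$ — since it requires knowing that each cluster meeting a box contributes a single cycle to the signature and that its coverage is the intersection of its hull with the box (in particular, coverages do not ``overshoot'' beyond a hull edge that meets the box). A second, lower-order issue is degeneracies: a hull edge through a corner of a box, a box whose boundary touches the boundary of a cluster hull, or an empty box lying inside a hull (where two subproblems are respected on the same box); these are absorbed by the general-position preprocessing and the corner convention recorded in the footnote, and in each such sub-case one checks directly that the four conditions still hold.
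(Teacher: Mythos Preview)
The paper states this as an observation without proof; your write-up supplies the natural argument and is essentially what the authors presumably had in mind. The core of your proof --- identifying the respected subproblems via Observation~\ref{obs:unique-subproblem}, reading off compatibility from $M_\bullet=\cl\sqcap\bound B_\bullet$, and reducing (b) and (c) to the single geometric count ``number of clusters of $\cl$ whose hull meets $s$'' via the coverages-meeting-$s$ identity --- is exactly right.

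One point deserves tightening. In the case $\delta=T$ you write ``$B_l,B_r\subseteq\CH(C)$, so $\delta_l=\delta_r=T$''. Containment in $\CH(C)$ shows that the \emph{special} subproblem on $B_l$ is respected by $\cl$, but it does not by itself force the \emph{given} $I_l$ to be that subproblem: in the degeneracy you yourself flag (a point-free $B_l$ lying strictly inside $\CH(C)$), both $\langle B_l,\emptyset,1,T\rangle$ and $\langle B_l,\emptyset,0,F\rangle$ are respected by $\cl$. In that configuration the split conditions actually \emph{fail} (e.g.\ $\delta=T$ but $\delta_l\wedge\delta_r=F$, and the two cut numbers are $0$ and $1$), so your remark that one ``checks directly that the four conditions still hold'' is not correct. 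The same non-uniqueness bites your contradiction in the $\delta=F$ branch (``hence $\delta=T$''). This is not a flaw in your strategy but an imprecision in the observation as literally stated; the only way it is invoked (in Lemma~\ref{lemma:fixed-k}) is for the canonical respected subproblems produced by Observation~\ref{obs:unique-subproblem}, which by construction set the boolean to $T$ whenever the box lies in a hull, and under that reading your argument goes through verbatim.
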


The following lemma provides sufficient conditions for a subproblem to have a solution and helps us to find a solution for a subproblem satisfying those conditions.
\begin{lemma}
\label{lemma:join-iff}
Let $I \mydef \langle B,M,k',\delta\rangle$ be a subproblem that splits into subproblems $I_l \mydef \langle B_l,M_l,k'_l,\delta_l\rangle$ and $I_r \mydef \langle B_r,M_r,k'_r,\delta_r\rangle$.
Suppose that $E_l$ and $E_r$ are solutions to $I_l$ and $I_r$, respectively.
Then, $E \mydef E_l\cup E_r$ is a covering on $B$ and a solution to $I$.
Furthermore, $\Phi_B(I)=\Phi_{B_l}(E_l)+\Phi_{B_r}(E_r)$.
\end{lemma}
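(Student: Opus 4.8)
The plan is to verify each of the three assertions in turn: that $E \mydef E_l \cup E_r$ is a covering on $B$, that it is a solution to $I$ (i.e.\ $\M_B(E) = M$ and $\kappa_B(E) = k'$), and finally the cost identity.

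First I would establish that $E$ is interior-disjoint and that it admits a satisfying partition into cycles on $B$. Since $E_l$ lives inside $B_l$ and $E_r$ inside $B_r$ and these boxes meet only along the separator $s$, the only possible conflicts are along $s$; but compatibility of $I_l$ and $I_r$ says exactly that the edges of $M_l$ and $M_r$ crossing $s$ coincide, so edges crossing $s$ are shared rather than conflicting, and interior-disjointness is preserved. To build the satisfying partition for $E$, take the cycles of $E_l$ (from its satisfying partition, which is unique by the lemma following the definition of covering) together with the cycles of $E_r$, and then repeatedly merge any two whose coverages intersect. The point is that two coverages can only intersect near $s$: a coverage of a cycle in $E_l$ and one in $E_r$ meet iff the corresponding convex chains share their crossing edges on $s$. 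One checks that after merging, the result is a cycle on $B$ — a border cycle assembled from the convex chains of the two pieces, glued along $s$ — so all conditions for a nonempty covering (disjoint coverages, and $S \cap B \subseteq$ union of coverages, which follows since $S \cap B = (S \cap B_l) \cup (S \cap B_r)$ and each piece is covered) hold. This merging-across-$s$ step is where I expect the main difficulty: one must argue carefully that gluing convex chains from the two sides along $s$ yields a genuine convex chain / border cycle (convexity of the turn at the gluing vertex, simplicity of the resulting curve), and that the number of merges is controlled by $\Theta_{B,s}(M_l,\delta_l) = \Theta_{B,s}(M_r,\delta_r)$.

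Next, for the border set: an edge of $E$ intersects $\bound B$ and $\inte B$ iff it does so for $E_l$ within $B_l$ or for $E_r$ within $B_r$, after discarding edges whose only boundary intersection with $\bound B$ was actually on the interior separator $s$ (such edges are interior to $B$). A short case analysis — using that $M_l = \M_{B_l}(M_l)$, $M_r = \M_{B_r}(M_r)$, and the compatibility condition identifying the $s$-crossing edges — gives $\M_B(E) = M$, matching how $M$, $M_l$, $M_r$ are related under the splitting definition. For the size count, $\kappa_B(E)$ equals $\kappa_{B_l}(E_l) + \kappa_{B_r}(E_r)$ minus the number of cycle-pairs that got merged across $s$; since each such merge corresponds to a coverage straddling $s$, the number of merges is precisely $\Theta_{B,s}(M_l,\delta_l)$ (the number of coverages meeting $s$), and the splitting definition gives $k' = k'_l + k'_r - \Theta_{B,s}(M_l,\delta_l)$, so $\kappa_B(E) = k'$ as required. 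The boolean $\delta = \delta_l \wedge \delta_r$ is only relevant when some $\delta_i = T$, i.e.\ the special-subproblem case; I would handle that degenerately at the start (a special subproblem has empty covering as its unique solution, and $\delta = T$ forces $M = \emptyset$, $k' = 1$, so the claim is immediate).

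Finally, the cost identity $\Phi_B(E) = \Phi_{B_l}(E_l) + \Phi_{B_r}(E_r)$ is essentially Observation~\ref{obs:sumproperty} applied at the level of edge sets: for each edge $e \in E$, the length of $e \cap B$ is the length of $e \cap B_l$ plus the length of $e \cap B_r$ since $B = B_l \cup B_r$ with disjoint interiors and $s$ has measure zero; summing over $e$ and noting that edges of $E_l$ contribute nothing in $B_r$ and vice versa yields the identity. (I note the statement writes $\Phi_B(I)$ for the cost of the solution $E$; I would phrase it as $\Phi_B(E) = \Phi_{B_l}(E_l) + \Phi_{B_r}(E_r)$ to be unambiguous.) This last part is routine; the real content is the gluing argument in the first step.
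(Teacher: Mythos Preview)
Your proposal is correct and follows essentially the same route as the paper's own proof: handle the special subproblems degenerately, then in the normal case use compatibility and the equality of cut numbers to pair up border cycles of $E_l$ and $E_r$ crossing $s$, merge each pair into a single cycle on $B$, count the merges as $\Theta_{B,s}(M_l,\delta_l)$ to get $\kappa_B(E)=k'$, read off $\M_B(E)=M$ from the split definition, and obtain the cost identity by additivity of lengths over $B_l\cup B_r$. The paper's argument is terser (it asserts the gluing step in one sentence where you flag it as the main difficulty), but the structure and content are the same; your observation that the statement's $\Phi_B(I)$ should read $\Phi_B(E)$ is also well taken.
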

\begin{proof}
First assume that none of the subproblems is special. As $I_l$ and $I_r$ are compatible and $\Theta_{B,s}(E_l,\delta_l)=\Theta_{B,s}(E_r,\delta_r)$, each border cycle in $E_l$ intersected by $s$ has a corresponding border cycle in $E_r$. The set $E$ is a covering on $B$ because each pair of corresponding border cycles intersected by $s$ merge to a cycle on $B$, and all other cycles from $E_l$ and $E_r$ are also cycles in $B$. Moreover, as the number of the corresponding pairs of border cycles is $\Theta_{B,s}(E_l,\delta_l)$, the number of cycles in $E$ is $\kappa_B(E)=k'_l+k'_r-\Theta_{B,s}(E_l,\delta_l)$. By the definition of a solution, the covering $E$ is a solution to a subproblem $\langle B, \M_B(E), k'_l+k'_r-\Theta_{B,s}(E_l,\delta_l),F\rangle$. However, as $E=E_l\cup E_r$, the border set $\M_B(E)$ is simply all the edges in $M_l$ and $M_r$ that intersect the boundary of $B$. As $I$ splits into $I_l$ and $I_r$, we have $M=\M_B(E)$ and $k'=k'_l+k'_r-\Theta_{B,s}(E_l,\delta_l)$. Moreover, as the edges of $E$ are simply all the edges of $E_l$ and $E_r$ combined, the cost of the covering $E$ satisfies the condition mentioned in the statement of the lemma.

It is easy to verify that in case one or both of the subproblems are special, the claim still holds, which completes the proof of the lemma.
\end{proof}

\subsection{Elementary Subproblems}

We call a subproblem \emph{elementary} if it is defined on an elementary box. In this section, we show that any elementary subproblem $\langle B, M, k',\delta\rangle$ with at most two edges, i.e., $|M|\leq2$, has at most one solution.
Moreover, if such a subproblem has a solution we can recognize it and find its unique solution in constant time.

An elementary box contains either one point from $S$ or is empty. Therefore, if an elementary box $B$ is empty, all cycles on $B$ are border cycles.
If, on the other hand, $B\cap S=\{p\}$, then the only inner cycle on $B$ is the loop cycle $\{pp\}$.
Furthermore, as $B$ is defined by two pairs of consecutive main lines, $v_i^-,v_i^+$ and $h_j^-,h_j^+$, and its width and height is zero, any edge intersecting $B$ contains $p$.
The only covering on $B$ with an inner cycle is $\{pp\}$, and all other coverings consists of border cycles only. Note that if $\{pp\}$ is a solution to a subproblem, there is no other solution, as in that case $M=\emptyset$ and the loop cycle $\{pp\}$ is the only way $p$ can possibly be covered.

Now, consider a solution without any inner cycles to an elementary subproblem. As an elementary box $B$ contains at most one point, the vertex sequence of a convex chain on $B$ has a size at most three, which means it has at most two edges, and both edges must intersect the boundary of $B$. Therefore, for a covering $E$ on an elementary box $B$ that does not have any inner cycles, we have $\M_B(E)=E$.

By considering the possible cases, we make the following observation.

\begin{observation}
\label{obs:elementary-iff}
An elementary subproblem $I \mydef \langle B,M,k',\delta\rangle$ has a solution if and only if either
\begin{itemize}
\item $I=\langle B,\emptyset, 1,F\rangle$ and $S \cap B=\{p\}$,
\item $I=\langle B,\emptyset, 0,F\rangle$ and $S \cap B=\emptyset$,
\item $I=\langle B,\emptyset, 1,T\rangle$, or
\item $I=\langle B,M,k',F\rangle$, where $M$ is a covering on $B$ of size $k'$.
\end{itemize}
Moreover, if a solution exists, it is unique and is equal to $\{pp\}$ in the first case and $M$ otherwise.
\end{observation}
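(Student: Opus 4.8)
The plan is to split on whether $I$ is special or normal, and, in the normal case, on whether the elementary box $B$ contains a point of $S$. Throughout I will use the facts recorded just before the observation: $B$ contains at most one point of $S$; the only inner cycle on $B$ is a loop $\{pp\}$, which is possible only when $S\cap B=\{p\}$; in that case $B$ is the infinitesimal box around $p$, so every edge meeting $B$ passes through $p$; and a covering $E$ on $B$ that has no inner cycle satisfies $\M_B(E)=E$. I will also use that the border set of a covering is automatically alternating, so that the listed cases are genuinely subproblems.

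If $I$ is special, then it equals $\langle B,\emptyset,1,T\rangle$ (the third listed case), whose unique solution is by definition the empty covering, which equals $M=\emptyset$; so the claim holds here. Now suppose $I=\langle B,M,k',F\rangle$ is normal, so $M=\M_B(M)$ and $M$ is alternating. If $S\cap B=\emptyset$, every cycle on $B$ is a border cycle, hence every covering $E$ on $B$ satisfies $\M_B(E)=E$; therefore a solution $E$ must satisfy $E=M$ and $\kappa_B(E)=k'$, forcing $M$ to be a covering of size $k'$, and conversely if $M$ is a covering of size $k'$ then $M$ itself is a solution and is the only one. This is exactly the fourth listed case (with $M=\emptyset$, $k'=0$ degenerating to the second case).

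It remains to treat $S\cap B=\{p\}$ with $I$ normal, and here I will split on whether a solution $E$ contains $\{pp\}$. If $\{pp\}\in E$: its coverage is $\{p\}$, while every border cycle in $E$ also has $p$ in its coverage (its edges pass through $p$, so $p$ lies in each closed half-plane $\hp(e)$ and in $B$), so disjointness of coverages forces $E$ to contain no further cycle, and since the loop is the only inner cycle and there is one point, $E=\{pp\}$; then $M=\emptyset$ and $k'=1$, the first listed case, and conversely in that case $\{pp\}$ is a loop cycle of size $1$ covering $p$ with empty border set, hence a solution. If $\{pp\}\notin E$: then $E$ has no inner cycle at all, so $\M_B(E)=E=M$ and $\kappa_B(E)=k'$, forcing $M$ to be a covering of size $k'$ (which, being a covering, covers $p$), and conversely any covering $M$ of size $k'$ with $M=\M_B(M)$ --- so $\{pp\}\notin M$, as forced by normality --- is itself a solution. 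In every one of these situations the solution is pinned down, either to $\{pp\}$ or to $M$, which simultaneously yields the uniqueness claim.

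The hard part, such as it is, is the one-point subcase: showing that a loop-containing solution must actually equal $\{pp\}$. This is the only place where I need more than a direct unfolding of the definitions of covering, border set $\M_B$, and size $\kappa_B$ on an elementary box; it rests on $B$ having zero width and height, so that every edge meeting $B$ passes through $p$, together with the convention that each $\hp(e)$ is a closed half-plane. Without these, a border cycle could fail to cover $p$ and coexist with the loop, and uniqueness would break.
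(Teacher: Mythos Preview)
Your proof is correct and follows the same case analysis the paper sketches in the paragraphs preceding the observation; since the paper states this as an observation with only informal justification (``By considering the possible cases''), your argument is in fact more detailed than what appears there. The one place you go beyond the paper is in justifying why a loop-containing covering must equal $\{pp\}$ via coverage-disjointness under a closed-half-plane reading of $\hp(e)$; the paper simply asserts the equivalent fact (``The only covering on $B$ with an inner cycle is $\{pp\}$'') without spelling out why, so your explicit treatment of this point, including the acknowledgment of the half-plane convention it rests on, is a reasonable and welcome elaboration rather than a departure.
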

Note that when $|M|\leq 2$, we can decide if a set of edges is a solution to an elementary subproblem in constant time.

\subsection{Dynamic Programming Algorithm}

\begin{definition}
A subproblem $I\mydef \langle B,M,k',\delta\rangle$ is \emph{compact} if for each edge of $B$, at most two edges in $M$ intersect it.
\end{definition}

\begin{observation}
\label{obs:compact-valid}
Let $B$ be a box satisfying the box invariant and $I$ be a subproblem on $B$ such that $\tCC$ respects $I$.
Then $I$ is compact.
\end{observation}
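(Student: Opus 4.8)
The plan is to reduce the statement directly to the box invariant by unfolding the definitions of ``respects'', of a solution to a subproblem, and of the border set $\M_B$. First I would dispose of the trivial case: recall that a subproblem is by definition either special (with $\delta=T$, $M=\emptyset$, $k'=1$) or normal (with $\delta=F$). Hence if $\delta=T$ then $I$ is special and $M=\emptyset$, so $I$ is compact vacuously (no edge of $B$ is intersected by any edge of $M$). It therefore remains only to treat the case $\delta=F$, where $I=\langle B,M,k',F\rangle$ is a normal subproblem.

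In that case, ``$\tCC$ respects $I$'' means by definition that $\tCC\sqcap B$ is a solution to $I$, and the definition of a solution to a normal subproblem gives $M=\M_B(\tCC\sqcap B)$. By definition of the border set we have $\M_B(\tCC\sqcap B)\subseteq \tCC\sqcap B$, and by definition of the signature $\tCC\sqcap B=\bigcup_{C\in\tCC}\{e\in\E(C)\mid e\cap B\neq\emptyset\}$ consists precisely of edges of convex hulls of clusters of $\tCC=\COpt_k(S)$; that is, every edge of $M$ is one of the edges of $\COpt_k(S)$ in the sense used by the box invariant. (Alternatively one may invoke Observation~\ref{obs:unique-subproblem}, which already identifies $M$ with a set of hull edges of clusters of $\tCC$.)

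Now fix an arbitrary edge $s$ of $\bound B$. Every edge of $M$ intersecting $s$ is in particular an edge of $\COpt_k(S)$ intersecting $s$, so the number of edges of $M$ meeting $s$ is at most the number of edges of $\COpt_k(S)$ meeting $s$, which is at most two since $B$ satisfies the box invariant. As $s$ was an arbitrary edge of $B$, each edge of $B$ is intersected by at most two edges of $M$, i.e.\ $I$ is compact. I do not expect any genuine obstacle here: the substance of the statement lies entirely in the box invariant (established through Lemma~\ref{lemma:box-split}), and the remaining work is the definitional bookkeeping of checking that the edges being counted in the two notions — hull edges of clusters of $\tCC$ versus members of the border set $M$ — are literally the same objects, which is immediate from $\M_B(E)\subseteq E$ and the definition of $\tCC\sqcap B$.
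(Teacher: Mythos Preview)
Your proposal is correct. The paper states this as an observation without supplying a proof, and your argument is exactly the intended straightforward verification: handle $\delta=T$ via the definition of a special subproblem (forcing $M=\emptyset$), and for $\delta=F$ use $M=\M_B(\tCC\sqcap B)\subseteq \tCC\sqcap B$ together with the box invariant to bound the number of edges of $M$ meeting each side of $B$.
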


\begin{algorithm}
\LinesNumbered
\DontPrintSemicolon
\SetArgSty{}
\SetKwIF{If}{ElseIf}{Else}{if}{}{else if}{else}{end if}
    \For {$b=1,\ldots,(20000n)^2$} { \label{alg:b-for}
        \For {all boxes $B$ consisting of $b$ elementary boxes} { \label{alg:box-for}
            \For {all compact subproblems $I \mydef \langle B,M,k',\delta\rangle$} { \label{alg:sub-for}
                $\Table(I)\coloneqq\text{Impossible}$\;
                \uIf {$b=1$} {
			\uIf {$M=\emptyset$ and $k'=1$ and $\delta=F$ and $B \cap S = \{p\}$} { \label{alg:noedge-if}
				$\Table(I)\coloneqq \{pp\}$\; \label{alg:point-store}
			} \ElseIf {$|M|\leq 2$ and $M$ is a solution to $I$} { \label{alg:edge-if}
                    	$\Table(I)\coloneqq M$\;\label{alg:edge-store}
			}
                } \Else {
                    \For {all compact subproblems $I_l$ and $I_r$ that merge to $I$} { \label{alg:split-for}
			      \If {$\Table(I_l)\neq\text{Impossible}$ and $\Table(I_r)\neq\text{Impossible}$} {
				    $E\coloneqq\Table(I_l)\cup \Table(I_r)$\; \label{alg:join}
  				     \If  {$\Table(I)=\text{Impossible}$ or $\Phi_B(E) \leq \Phi_B(\Table(I))$} {
                                 $\Table(I)\coloneqq E$\;\label{alg:merge-store}
				     }
			     }
                    }
                }
            }
        }
    }
    \caption{}
\label{ALG1}
\end{algorithm}

Algorithm~\ref{ALG1} solves the minimum perimeter sum problem. For a subproblem $I\mydef \langle B,M,k',\delta\rangle$, the entry $\Table(I)$ either stores a solution to $I$ or the value `Impossible', which means we have not found any solution to $I$, though it does not mean a solution to $I$ does not exist. After the execution of the algorithm, $\Table(\langle B_0,\emptyset, k,F\rangle)$ contains the convex hull of an optimal $k$-clustering.
The algorithm iterates over all compact subproblems sorted ordered by the size of the defining box (i.e., the number of elementary boxes contained in the box) and stores a solution for some of them in $\Table$.
\begin{itemize}
\item For an elementary subproblem $I$ that has a solution, the algorithm stores the unique solution in $\Table(I)$ only if $|M|\leq 2$. The algorithm uses the conditions stated in Observation~\ref{obs:elementary-iff} to find the unique solution to such subproblems. In the proof of Lemma~\ref{lemma:fixed-k}, we show that we do not need to find a solution to elementary subproblem with $|M|>2$.
\item For a subproblem $I$ defined on a non-elementary box, the algorithm finds all possible splits of $I$ to two compact subproblems $I_l$ and $I_r$. For each such split, it combines the coverings (if such exist) in $\Table(I_l)$ and $\Table(I_r)$ to get a covering $E$. Then, it stores $E$ in $\Table(I)$ only if it is a better solution to $I$ than already stored in $\Table(I)$. By Lemma~\ref{lemma:join-iff}, the union of the two solutions to $I_l$ and $I_r$ is a solution to $I$.
\end{itemize}

\begin{lemma}
\label{lemma:fixed-k}
Let $I$ be a subproblem on a box $B$.
Consider the value of $\Table(I)$ as Algorithm~\ref{ALG1} terminates.
If $B$ satisfies the box invariant and $\tCC$ respects $I$, the entry $\Table(I)$ contains a solution to $I$ of cost $\Phi_B(\tCC)$.
In particular, $\Table(I_0)$ contains a solution to $I_0\coloneqq\langle B_0,\emptyset,k,F\rangle$ of cost $\Phi_{B_0}(\tCC)$, and hence, the algorithm solves the \fixedkname.
\end{lemma}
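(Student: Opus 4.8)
The plan is to prove, by strong induction on the number $b$ of elementary boxes contained in $B$, the following two-part statement for every \emph{compact} subproblem $I\mydef\langle B,M,k',\delta\rangle$ with $B\in\B(S)$: \textbf{(a)} if $\Table(I)\neq\text{Impossible}$ when Algorithm~\ref{ALG1} terminates, then $\Table(I)$ is a solution to $I$; and \textbf{(b)} if moreover $B$ satisfies the box invariant and $\tCC$ respects $I$, then $\Table(I)\neq\text{Impossible}$ and $\Phi_B(\Table(I))=\Phi_B(\tCC)$. Since the number of elementary boxes in any $B\in\B(S)$ is bounded by the loop bound $(20000n)^2$ of line~\ref{alg:b-for}, the induction covers every subproblem the algorithm ever considers. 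Applying part~(b) to $I_0=\langle B_0,\emptyset,k,F\rangle$ — which is respected by $\tCC$, and whose box $B_0$ trivially satisfies the box invariant since no edge of $\tCC$ meets $\bound B_0$ — then gives that $\Table(I_0)$ is a solution to $I_0$ with $\Phi_{B_0}(\Table(I_0))=\Phi_{B_0}(\tCC)=\Opt_k(S)$, and Observation~\ref{obs:globalSol} turns this covering back into an optimal $k$-clustering of $S$; this is the ``in particular'' claim.

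For the base case $b=1$, the box $B$ is elementary. Part~(a) is immediate from Observation~\ref{obs:elementary-iff}: at line~\ref{alg:point-store} the algorithm only stores $\{pp\}$ precisely when that is the unique solution, and at line~\ref{alg:edge-store} it stores $M$ only after explicitly verifying that $M$ is a solution to $I$. For part~(b), Lemma~\ref{lemma:box-split} tells us that at most two edges of $\tCC$ intersect the elementary box $B$, so the border set $M$ of $\tCC\sqcap\bound B$ satisfies $|M|\le 2$. Hence the test of line~\ref{alg:noedge-if} or line~\ref{alg:edge-if} succeeds, the algorithm stores the unique solution furnished by Observation~\ref{obs:elementary-iff}, and that solution is exactly $\tCC\sqcap B$, whose cost is $\Phi_B(\tCC)$.

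For the inductive step $b>1$, the box $B$ is non-elementary. Part~(a): whenever the algorithm writes into $\Table(I)$ at line~\ref{alg:merge-store}, the stored value is $E=\Table(I_l)\cup\Table(I_r)$ for compact subproblems $I_l,I_r$ on strictly smaller boxes that merge to $I$, with both table entries different from Impossible; by the induction hypothesis~(a) these are solutions to $I_l$ and $I_r$, so by Lemma~\ref{lemma:join-iff} $E$ is a solution to $I$. Part~(b): assume $B$ satisfies the box invariant and $\tCC$ respects $I$. By Lemma~\ref{lemma:box-split}, $B$ has a good vertical or horizontal separator $s$, contained in a main or help line, that splits $B$ into strictly smaller boxes $B_l,B_r\in\B(S)$, both satisfying the box invariant; each has size at least $1$ and strictly less than $b$. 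Let $I_l$ and $I_r$ be the unique subproblems on $B_l$ and $B_r$ respected by $\tCC$ (Observation~\ref{obs:unique-subproblem}); they are compact by Observation~\ref{obs:compact-valid}, and $I$ splits into $I_l$ and $I_r$ by Observation~\ref{obs:sub-split}. By the induction hypothesis~(b), $\Table(I_l)$ and $\Table(I_r)$ are non-Impossible with $\Phi_{B_l}(\Table(I_l))=\Phi_{B_l}(\tCC)$ and $\Phi_{B_r}(\Table(I_r))=\Phi_{B_r}(\tCC)$, and since $B_l,B_r$ have size $<b$ these values are final before $I$ is processed. Thus, in the loop of line~\ref{alg:split-for}, the algorithm considers this very split, forms $E=\Table(I_l)\cup\Table(I_r)$, and by Lemma~\ref{lemma:join-iff} together with Observation~\ref{obs:sumproperty} obtains a solution to $I$ of cost $\Phi_{B_l}(\tCC)+\Phi_{B_r}(\tCC)=\Phi_B(\tCC)$. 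Hence $\Table(I)$ becomes a solution to $I$ of cost at most $\Phi_B(\tCC)$ and never worsens afterwards. Finally, part~(a) says the stored value is a genuine solution to $I$, and Lemma~\ref{lemma:optimality} says every solution to $I$ costs at least $\Phi_B(\tCC)$; so $\Phi_B(\Table(I))=\Phi_B(\tCC)$, completing the induction.

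The substantive content is already isolated in the earlier results: the existence of a good splitting separator preserving the box invariant on both sides (Lemma~\ref{lemma:box-split}) and the additive gluing of side solutions (Lemma~\ref{lemma:join-iff}). The main obstacle in this proof is therefore the bookkeeping of the two-part invariant — in particular verifying that the \emph{specific} split produced by Lemma~\ref{lemma:box-split} is realized by subproblems that are compact and respected by $\tCC$ (via Observations~\ref{obs:unique-subproblem}, \ref{obs:sub-split}, and~\ref{obs:compact-valid}), so that the algorithm is guaranteed to try it — combined with the use of Lemma~\ref{lemma:optimality} to upgrade the inequality ``$\Phi_B(\Table(I))\le\Phi_B(\tCC)$'' to equality.
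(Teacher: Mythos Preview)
Your proof is correct and follows essentially the same route as the paper: strong induction on the number of elementary boxes, with the base case handled via Lemma~\ref{lemma:box-split} and Observation~\ref{obs:elementary-iff}, and the inductive step combining Lemma~\ref{lemma:box-split}, Observations~\ref{obs:unique-subproblem}, \ref{obs:sub-split}, \ref{obs:compact-valid}, \ref{obs:sumproperty}, Lemma~\ref{lemma:join-iff}, and Lemma~\ref{lemma:optimality}. Your explicit two-part invariant (soundness of any non-Impossible entry, plus optimality for boxes satisfying the invariant and respected by $\tCC$) and your remark that $I_l,I_r$ must be compact so that the algorithm actually enumerates them at line~\ref{alg:split-for} make the argument slightly more transparent than the paper's version, but the content is the same.
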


\begin{proof}
The proof is by induction on the number $b$ of elementary boxes contained in $B$.
Consider first the case $b=1$.
If $I$ is respected by $\tCC$, Lemma~\ref{lemma:box-split} yields that $B$ is intersected by at most two edges of the signature of $\tCC$. By Observation~\ref{obs:borderpoint}, there are no points from $S$ on the boundary of $B$,  which shows $|M|\leq 2$ and therefore $I$ is compact and will be generated. Depending on the subproblem, the algorithm stores either $\{pp\}$ or $M$ as a solution to $I$ in $\Table(I)$ only if $M$ is a solution to $I$. As by Observation~\ref{obs:elementary-iff} the solution to $I$ is unique, it must be $\tCC \sqcap B$. If $|M|> 2$, the algorithm stores `Impossible' in $\Table(I)$ which completes the proof, for the base case.

Suppose the claim holds for boxes consisting of up to $b-1$ elementary boxes for some $b\geq 2$ and consider a subproblem $I\mydef \langle B,M,k',\delta)$ where $B$ consists of $b$ elementary boxes. Note that if $\Table(I)$ is not $Impossible$, $\Table(I)$ must contain a value $\Table(I_l)\cup\Table(I_r)$ for some subproblems $I_l$ and $I_r$ that merge to $I$ and are defined on boxes consisting of less than $b$ elementary boxes. By the induction hypothesis, $\Table(I_l)$ and $\Table(I_r)$ contain solutions to $I_l$ and $I_r$ and by Lemma~\ref{lemma:join-iff}, $\Table(I)=\Table(I_l)\cup\Table(I_r)$ is a solution to $I$.

If $\tCC$ respects $I$ and $B$ satisfies the box invariant, by Observation~\ref{obs:compact-valid}, $I$ is compact. By Lemma~\ref{lemma:box-split}, $B$ has a separator $s$ contained in a main or helper line and $s$ separates $B$ into two boxes $B_l$ and $B_r$ both satisfying the box invariant.
By Observation~\ref{obs:unique-subproblem}, there are unique subproblems $I_l$ and $I_r$, on $B_l$ and $B_r$, such that
$\tCC$ respects them. By Observation~\ref{obs:sub-split}, subproblem $I$ splits into $I_l$ and $I_r$.
As $B_l$ and $B_r$ consists of less than $b$ elementary boxes, by the induction hypothesis, $\Table(I_l) = \Phi_{B_l}(\tCC)$ and $\Table(I_r) = \Phi_{B_r}(\tCC)$.  By Lemma~\ref{lemma:join-iff},
\begin{equation}
\label{eq:fixed-k-1}
\Phi(\Table(I))\leq\Phi_{B_l}(\tCC)+\Phi_{B_r}(\tCC)=\Phi_{B}(\tCC),
\end{equation}
where the equality comes from Observation~\ref{obs:sumproperty}. Furthermore, by the above mentioned argument, $\Table(I)$ is a solution to $I$ and by Lemma~\ref{lemma:optimality},
\begin{equation}
\label{eq:fixed-k-2}
\Phi(\Table(I))\geq\Phi_{B}(\tCC),
\end{equation}
which completes the proof.
\end{proof}

\subsection{Run-time Analysis}\label{sec:datastructures}

\paragraph{Data structures.} We implement $\Table$ as an array, so that we can update and insert values at a given entry $\Table(I)$ in time $O(1)$.

We use a special representation for solutions (set of edges and loops) by making a directed acyclic graph (DAG) to support constant time union operation on solutions (sets).
Each solution $E$ to a subproblem $\langle B,M,k',\delta\rangle$ is either the set $M$, where $|M| \leq 2$, or a set containing a single loop for elementary subproblems, or is the union of two solutions $E_l$ and $E_r$ to two subproblems defined on smaller boxes $B_l$ and $B_r$ for non-elementary subproblems. For elementary subproblems we simply make a node and store if the solution is a loop or not, and if it is a loop we store the loop in the node. For non-elementary subproblems we make a node and we store the (at most two) edges in the solution that intersect the shared edge of $B_l$ and $B_r$ but do not intersect $\bound B$. We also store two pointers to $E_l$ and $E_r$ in this node; these pointers form a DAG where the number of nodes visible to each node (solution) is linear and therefore it is possible to find the edges and loops of a solution in linear time by simply adding up all the loops and edges inside the visible nodes to $M$.

For a box $B$ and an edge $s$ of $B$, we can compute the value of $\Theta_{B,s}(M,\delta)$ for a subproblem $I\mydef \langle B, M,k',\delta\rangle$ in $O(1)$ time as follows. For each subproblem, we store the coverage status of the four corners of the defining box. It is easy to compute those values for elementary subproblems and after merging two solutions we can easily merge the coverage status of their corners as well.\footnote{To handle degenerate cases where an edge intersects a corner of a box, we can store two coverage values for each corner. One for the coverage of the corner point and a bit more of the boundary in counterclockwise direction and one for the coverage of the corner point and a bit more of the boundary in clockwise direction.} Let $m_s$ be the number edges in $M$ intersecting $s$.
As we just work with compact subproblems, $0\leq m_s \leq 2$. Having the coverage status of the four corners of $B$, assuming $s_1$ and $s_2$ are the endpoints of $s$, the cut number is given by
\begin{equation*}
\Theta_{B,s}(M,\delta) =
\begin{cases}
2, &  \text{$m_s=2$ and  $s_1$ and $s_2$ are covered},\\
0, &  \text{$m_s =0$ and $s_1$ and $s_2$ are not covered},\\
1, &  \text{otherwise.}
\end{cases}
\end{equation*}

\begin{observation}
\label{obs:subproblems-count}
Given a box $B$,
the number of compact subproblems $I\mydef \langle B,M,k',\delta\rangle$ is at most
$$(n+1)\cdot \binom{O(n^2)}{8}=O(n^{17}).$$
Here, $n+1$ is a bound on the number $k'$ and the factor $\binom{O(n^2)}{8}$ is a bound on the number of ways the edges of $M$ can be selected for a compact subproblem. There is also one special subproblem on box $B$ and we can consider that case as well.
\end{observation}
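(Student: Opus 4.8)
The plan is to prove this bound by a direct counting argument over the components of a subproblem $I=\langle B,M,k',\delta\rangle$ on the fixed box $B$. Since $B$ is held fixed, such a subproblem is completely determined by the triple $(M,k',\delta)$, so it suffices to count the admissible choices for each of $M$, $k'$, and $\delta$ separately and take the product.

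First I would invoke the standing conventions of this section: the edges occurring in any subproblem lie in $\G(S)$, and $|\G(S)|=O(n^2)$ since $\G(S)$ consists of the segments defined by ordered pairs of points of $S$. By definition a subproblem has $|M|\le 8$, so $M$ is a subset of $\G(S)$ of size at most $8$, and hence the number of choices for $M$ is at most
$$\sum_{i=0}^{8}\binom{|\G(S)|}{i}=O\!\left(\binom{O(n^2)}{8}\right).$$
Next, $k'$ ranges over $\{0,1,\ldots,n\}$, contributing a factor of $n+1$, and $\delta$ ranges over $\{T,F\}$, contributing a constant factor of $2$. There is in addition exactly one special subproblem $\langle B,\emptyset,1,T\rangle$ on $B$, and this is already subsumed by the triple count. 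Multiplying, the number of compact (indeed, of all) subproblems on $B$ is at most $2(n+1)\cdot O\!\left(\binom{O(n^2)}{8}\right)=(n+1)\cdot O\!\left(\binom{O(n^2)}{8}\right)$, and expanding $\binom{O(n^2)}{8}=O(n^{16})$ yields the stated bound of $O(n^{17})$.

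The argument is purely combinatorial and presents no real obstacle; the only point deserving care is to recognize that the edge set $M$ of a subproblem may be taken from the fixed, polynomially sized family $\G(S)$ rather than from an a priori arbitrary collection of planar segments — which is exactly what the section's convention that ``the edges we work with are in $\G(S)$'' ensures, and is what keeps the count polynomial in $n$. Note also that the compactness condition could be used to sharpen the per-box-edge count of $M$, but this is not needed for the stated upper bound.
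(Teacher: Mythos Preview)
Your argument is correct and matches the paper's own justification, which is given inline in the statement itself: count the choices for $k'$ (at most $n+1$), for $M$ (a subset of $\G(S)$ of size at most $8$, giving $\binom{O(n^2)}{8}$), and absorb the constant factor from $\delta$ and the single special subproblem. Your explicit mention of the factor $2$ for $\delta$ and your remark that compactness is not actually needed for the upper bound are minor elaborations, but the approach is identical.
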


\begin{observation}
\label{obs:splits-count}
Let $B$ be a box that splits into $B_l$ and $B_r$. The number of ways to split a compact subproblem $I\mydef \langle B,M,k',\delta\rangle$ to two compact subproblems $I_l\mydef \langle B_l,M_l,k'_l,\delta_l\rangle$ and $I_r\mydef \langle B_r,M_r,k'_r,\delta_r\rangle$ is at most
$$n\cdot \binom{O(n^2)}{2}=O(n^{5}).$$
Here $n$ is the number of ways we can choose a value for $k'_l$ and $\binom{O(n^2)}{2}$ is the number subsets of at most two edges intersecting $\bound B_l \cap \bound B_r$.
\end{observation}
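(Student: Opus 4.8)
The plan is to fix the subproblem $I\mydef\langle B,M,k',\delta\rangle$ together with the separator $s$ that splits $B$ into the given boxes $B_l$ and $B_r$, and to show that a compatible pair $(I_l,I_r)$ is determined by only three small pieces of data: the value $k'_l\in\{0,\ldots,n\}$, the set $M^{\mathrm{sep}}$ of at most two edges of $\G(S)$ crossing the separator $s$, and the Boolean pair $(\delta_l,\delta_r)$. Multiplying the number of possibilities for these three — namely $n+1$, $\binom{O(n^2)}{2}+O(n^2)+1=O(n^4)$, and $O(1)$ — gives the claimed $n\cdot\binom{O(n^2)}{2}=O(n^5)$ bound.

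First I would decompose the border set $M_l=\M_{B_l}(M_l)$ according to where its edges meet $\bound B_l=s\cup(\bound B_l\cap\bound B)$. Since $B_l\subseteq B$ and $\inte(B_l)\subseteq\inte(B)$ (the separator $s$ lies strictly inside $B$), any edge of $M_l$ meeting the part $\bound B_l\cap\bound B$ of the boundary meets both $\bound B$ and $\inte(B)$, hence lies in $\M_B(E)=M$ for the covering $E=E_l\cup E_r$ realising the split. Conversely, whether a given edge $e\in M$ meets $\bound B_l\cap\bound B$ and $\inte(B_l)$ is a purely geometric condition depending only on $e$, $B_l$, and $s$, so (using the structural fact, established by the split/merge machinery of Section~\ref{S:findpoint}'s analogue here, that the part of $E$ lying in $B_l$ is exactly $E_l$) the set $M_l^{\mathrm{out}}$ of edges of $M_l$ touching the old boundary is a function of $M$ and $s$ alone; likewise for $M_r^{\mathrm{out}}$. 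The only remaining edges of $M_l$ are those crossing the separator; by compactness of $I_l$ there are at most two of them, each lies in $\G(S)$, and by compatibility of $I_l$ with $I_r$ the same edges occur in $M_r$ across $s$. Hence $M_l=M_l^{\mathrm{out}}\cup M^{\mathrm{sep}}$ and $M_r=M_r^{\mathrm{out}}\cup M^{\mathrm{sep}}$ are both fixed once $M^{\mathrm{sep}}$ is chosen, and there are $\binom{O(n^2)}{2}+O(n^2)+1=O(n^4)$ choices for $M^{\mathrm{sep}}$ (an over\-count is harmless, since we only want an upper bound).

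It remains to pin down $k'_l,k'_r,\delta_l,\delta_r$. The pair $(\delta_l,\delta_r)$ must satisfy $\delta=\delta_l\wedge\delta_r$, giving at most three possibilities when $\delta=F$ and exactly one when $\delta=T$; the additional requirement that a special $I_l$ (or $I_r$) have empty $M_l$ and $k'_l=1$ only cuts the count further. Once $M_l$, $\delta_l$, and $s$ are known, the cut number $\Theta_{B,s}(M_l,\delta_l)$ is determined by the case rule of Section~\ref{sec:datastructures}, so the split condition $k'=k'_l+k'_r-\Theta_{B,s}(M_l,\delta_l)$ recovers $k'_r$ from $k'_l$; thus $k'_l$ contributes at most $n+1$ choices and $k'_r$ none, while the remaining split constraints $\Theta_{B,s}(M_l,\delta_l)=\Theta_{B,s}(M_r,\delta_r)$ and compatibility are mere checks. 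Multiplying $O(n^4)\cdot(n+1)\cdot O(1)=O(n^5)$ finishes the proof.

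The step I expect to need the most care is the claim that $M_l^{\mathrm{out}}$ is genuinely determined by $M$ and $s$: one must verify that no edge of $M$ can be ``half in'' $M_l$ in an unforced way — i.e.\ that the predicates ``$e$ meets $\bound B_l\cap\bound B$'' and ``$e$ meets $\inte(B_l)$'' are unambiguous for the fixed geometry — and one must account for edges of $E_l$ meeting $\bound B_l$ only on $s$ yet still meeting $\bound B$ (these are already among the at most two separator-crossing edges counted in $M^{\mathrm{sep}}$, so they are not double-counted). The degenerate corner cases (an edge through a corner of $B_l$ or through an endpoint of $s$) are handled by the footnote conventions on open edges and two-sided corner coverage, and do not affect the asymptotics.
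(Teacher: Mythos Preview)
Your proposal is correct and follows essentially the same approach as the paper: the paper's justification (given partly in the observation itself and partly in the paragraph immediately following it) also argues that $M_l$ and $M_r$ are determined by $M$ together with at most two separator-crossing edges, that $(\delta_l,\delta_r)$ contributes $O(1)$ choices, and that $k'_r$ is forced by $k'_l$ via $k'_r=k'-k'_l+\Theta_{B_l,s}(M_l,\delta_l)$. Your treatment is somewhat more explicit about why the ``outer'' part $M_l^{\mathrm{out}}$ is determined by $M$ and $s$ and about degenerate corner cases, but the decomposition and the counting are the same.
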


We now describe how Algorithm~\ref{ALG1} at line~\ref{alg:split-for} iterates through the $O(n^5)$ pairs of subproblems $(I_l,I_r)$ merging to $I$.
At first, assume $\delta_l$ and $\delta_r$ have the value $F$.
The edges $M$ specify all edges in $M_l$ and $M_r$ except possibly some edges intersecting the edge $s\mydef \partial M_l\cap\partial M_r$ and not other edges of $\partial M_l$ or $\partial M_r$.
As $I_l$ and $I_r$ are compact and compatible, $M_l$ and $M_r$ can have at most two such edges.
Consider now one of these $\binom{O(n^2)}{2}$ choices of edges crossing $s$.
We iterate through each value $k'_l\in\{0,\ldots,k'\}$.
For a given value of $k'_l$, the subproblem $I_l$ is completely specified.
If $I_l$ does not have any solution in $\Table(I_l)$, we can proceed with another choice of $I_l$.
Otherwise, note that $I_r$ is also specified except for the value $k'_r$.
However, we can compute $\Theta_{B_l,s}(E,\delta_l)$, where $E$ is the solution stored as $\Table(I_l)$, in constant time. In order for $I_l$ and $I_r$ to merge to $I$, we now define $k'_r\mydef k'-k'_l+\Theta_{B_l,s}(M_l,\delta_l)$.
There are four different choices for the values $\delta_l$ and $\delta_r$ and it is not hard to consider those cases as well.

\begin{theorem}
\label{thm:fixed-k}
Algorithm~\ref{ALG1} solves the minimum perimeter sum problem in time $O(n^{27})$.
\end{theorem}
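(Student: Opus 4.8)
The plan is to prove the theorem in two parts: correctness, which is essentially immediate from the structural development, and the running-time bound, which is where the work lies. For correctness I would invoke Lemma~\ref{lemma:fixed-k} at the box $B_0$: its four edges lie on the extreme main lines $v_1^-,v_n^+,h_1^-,h_n^+$, all points of $S$ are strictly interior to $B_0$, so no edge of $\tCC$ crosses $\bound B_0$ and $B_0$ trivially satisfies the box invariant; moreover $\tCC$ respects $I_0\mydef\langle B_0,\emptyset,k,F\rangle$, since $\tCC\sqcap B_0$ is a covering on $B_0$ consisting of $k$ inner cycles with empty border set. Hence the ``in particular'' clause of Lemma~\ref{lemma:fixed-k} gives that, at termination, $\Table(I_0)$ stores a solution to $I_0$ of cost $\Phi_{B_0}(\tCC)=\Opt_k(S)$. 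By Observation~\ref{obs:globalSol} this solution equals $\cl\sqcap B_0$ for some disjoint clustering $\cl\in\Part(S)$; because its edges are the $k$ convex-hull boundaries lying entirely inside $B_0$, the clustering $\cl$ has cost $\Opt_k(S)$ and is therefore optimal, and it is read off from the DAG representation of $\Table(I_0)$ in $O(n)$ time.

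For the running time I would account for the work of Algorithm~\ref{ALG1} factor by factor. The input preprocessing (enforcing general position and constructing the main and help lines) costs $O(n^2)$. The two outer loops enumerate each box of $\B(S)$ exactly once, and $|\B(S)|=O(n^4)$. For a fixed box $B$, Observation~\ref{obs:subproblems-count} bounds the number of compact subproblems on $B$ by $O(n^{17})$, so the body of the subproblem loop at line~\ref{alg:sub-for} is entered $O(n^{21})$ times over the whole run. For an elementary box the work per subproblem is $O(1)$: by Observation~\ref{obs:elementary-iff} one decides in constant time whether $M$ (which then has $|M|\le 2$) is the unique solution and stores $\{pp\}$ or $M$ accordingly; this contributes only $O(n^{19})$ overall and is dominated.

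The dominant term is the handling of non-elementary subproblems. For such a subproblem $I$, I would argue, following the description after Algorithm~\ref{ALG1}, that the pairs $(I_l,I_r)$ merging to $I$ that are enumerated at line~\ref{alg:split-for} can be listed in $O(n^5)$ total time: the set $M$ fixes all edges of $M_l$ and $M_r$ except at most two edges crossing the shared separator $s$, giving $\binom{O(n^2)}{2}=O(n^4)$ choices, and for each of them iterating over the $O(n)$ admissible values of $k'_l$ fixes $I_l$ and, via $k'_r\mydef k'-k'_l+\Theta_{B_l,s}(M_l,\delta_l)$, also $I_r$, while the four possibilities for $(\delta_l,\delta_r)$ are absorbed into the constant. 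Inside one such iteration, $\Theta_{B_l,s}(\cdot)$ is read off in $O(1)$ from the stored corner-coverage bits, the union $\Table(I_l)\cup\Table(I_r)$ is formed in $O(1)$ by allocating a single DAG node pointing to the two operand solutions, and comparing $\Phi_B(E)$ against the value currently in $\Table(I)$ requires recovering the solutions' edges from the DAG, which by the data-structure invariant costs $O(n)$. Thus each split iteration is $O(n)$, each non-elementary subproblem is $O(n^6)$, and the grand total is $O(n^4)\cdot O(n^{17})\cdot O(n^6)=O(n^{27})$.

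The main obstacle is not any single deep step — correctness is inherited from Lemma~\ref{lemma:fixed-k} and the structural lemmas, and the heavy geometric content is already spent there — but the bookkeeping that makes the exponent come out to exactly $27$: one must check that the compatible split pairs really enumerate in $O(n^5)$ rather than a larger bound, and that every primitive in the innermost loop (DAG union, cut-number lookup, corner-coverage propagation, cost comparison) stays within the $O(n)$-per-iteration budget, so that multiplying the four factors $O(n^4)$ boxes, $O(n^{17})$ subproblems per box, $O(n^5)$ splits per subproblem, and $O(n)$ work per split yields $O(n^{27})$.
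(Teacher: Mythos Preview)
Your correctness argument is fine and matches the paper's: $B_0$ satisfies the box invariant trivially, $\tCC$ respects $I_0$, Lemma~\ref{lemma:fixed-k} delivers an optimal solution in $\Table(I_0)$, and Observation~\ref{obs:globalSol} converts it back to a clustering.

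Your running-time accounting, however, contains two compensating errors. First, you undercount the pairs $(I_l,I_r)$ merging to $I$: your $O(n^5)$ bound (from $\binom{O(n^2)}{2}$ choices of edges on $s$ times $O(n)$ values of $k'_l$) is exactly the content of Observation~\ref{obs:splits-count}, but that observation is stated for a \emph{fixed} separator $s$. A box $B$ has $O(n)$ admissible separators (help or main lines strictly between its sides), and the merge loop at line~\ref{alg:split-for} must range over all of them, giving $O(n)\cdot O(n^5)=O(n^6)$ split pairs per subproblem --- this is precisely how the paper factors the product, $O(n^4)\cdot O(n^{17})\cdot O(n)\cdot O(n^5)\cdot O(1)$. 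Second, you overcount the cost of comparing $\Phi_B(E)$ to $\Phi_B(\Table(I))$: you say this requires an $O(n)$ DAG traversal, but one simply stores the cost $\Phi_B(E)$ as a single real number alongside each DAG node (it is computed in $O(1)$ via $\Phi_B(E)=\Phi_{B_l}(E_l)+\Phi_{B_r}(E_r)$ by Lemma~\ref{lemma:join-iff}), so comparison is $O(1)$. Your missing $O(n)$ for separators and your spurious $O(n)$ for comparison happen to cancel, so the final exponent $27$ comes out right, but the intermediate accounting is wrong on both points.
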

\begin{proof}
Note that $B_0$, the box with edges on $v_1^-,v_n^+,h_1^-,h_n^+$, satisfies the box invariant. Since $\tCC$ respects $I_0\mydef\langle B_0,\emptyset,k,F\rangle$, Lemma~\ref{lemma:fixed-k} gives that as the algorithm terminates, $\Table(I_0)$ contains a solution $E$ to $I_0$ of cost $\Phi_{B_0}(\tCC)=\Phi(\tCC)$. Moreover, Observation~\ref{obs:globalSol} states that there is a clustering $\cl$ such that $\cl\sqcap B_0=E$, and hence, $\cl$ is an optimal clustering.
Furthermore, the clustering $\cl$ can easily be computed given $E$.

The algorithm uses the conditions specified in Observation~\ref{obs:elementary-iff} to verify solutions to elementary subproblems $I\mydef \langle B,M,k',\delta\rangle$ satisfying $|M|\leq 2$, which takes constant time and does not affect the asymptotic running time of the algorithm.

There are $O(n^4)$ different boxes in $\B(S)$. By Observation~\ref{obs:subproblems-count}, the number of compact subproblems on a $B$ is $O(n^{17})$. For each separator $s$ of $B$, by Observation~\ref{obs:splits-count}, the number of ways to split $I$ to $I_l$ and $I_l$ is $O(n^5)$. As we showed in the beginning of this section, merging two solutions can be done in constant time. 
The cost of each solution can be stored as a single real number and comparing could be assumed to take only $O(1)$ time.%
\footnote{In the RAM model, assuming comparing two solutions can be done in time $T(n)$, the running time is $O(n^{27} \cdot T(n))$.}
The running time of the algorithm thus is
$$
O(n^4)\cdot O(n^{17})\cdot O(n) \cdot O(n^5) \cdot O(1) = O(n^{27}),
$$
where $O(n^4)$ is the number of different boxes $B\in\B(S)$, the factor $O(n^{17})$ is the number of subproblems on $B$,  $O(n)$ is the number of separators of $B$, $O(n^5)$ is the number of ways to split a subproblem into two subproblems for a given separator $s$, and $O(1)$ is the time needed to compare two solutions.
\end{proof}

\begin{theorem}[The $k$-cluster fencing problem]
\label{thm:fixedk}
There is a polynomial time algorithm that, given any set $S$ of $n$ points in the plane
and an integer $k$, finds a set of at most $k$ closed curves such that
each point in $S$ is enclosed by a curve and the total length of the curves is minimized.
\end{theorem}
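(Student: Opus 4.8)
The plan is to obtain Theorem~\ref{thm:fixedk} as an essentially immediate consequence of Theorem~\ref{thm:fixed-k}, after reconciling three minor discrepancies between the ``minimum perimeter sum problem'' solved by Algorithm~\ref{ALG1} and the \fixedkname{} as stated: (i) the algorithm outputs a partition of $S$ into clusters rather than a set of closed curves; (ii) the algorithm produces an optimal partition into \emph{exactly} $k$ clusters, whereas the theorem allows \emph{at most} $k$ curves; and (iii) one must argue that restricting the enclosing curves to be boundaries of convex hulls loses nothing.

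For (iii) I would use the standard shortening argument already implicit in Observation~\ref{lem:p-disjoint} and in the discussion accompanying Figure~\ref{fig:protection}: if a closed curve $\sigma$ encloses a point set $T$, replacing $\sigma$ by $\bound\CH(T)$ does not increase its length and still encloses $T$; and two curves whose enclosed regions overlap can be replaced by $\bound\CH$ of the union of their enclosed point sets without increasing total length. Hence, given any set of at most $k$ closed curves enclosing $S$, the point sets they enclose induce a clustering of $S$ into at most $k$ clusters of total convex-hull perimeter no larger than the original total length; conversely, the convex-hull boundaries of the clusters of any clustering form such a family of curves. Thus the minimum total length over families of at most $k$ closed curves enclosing $S$ equals $\min_{1\le \ell\le k}\Opt_\ell(S)$.

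For (ii) I would observe that $\Opt_\ell(S)$ is non-increasing in $\ell$ for $1\le\ell\le n$: given an optimal $\ell$-clustering in which some cluster contains at least two points, split off one point as its own cluster; the new singleton has perimeter $0$, and the shrunken cluster has convex hull contained in the original one, so the total cost does not increase. Iterating yields $\Opt_{\ell+1}(S)\le\Opt_\ell(S)$, so $\min_{1\le\ell\le k}\Opt_\ell(S)=\Opt_{\min\{k,n\}}(S)$, and an optimal $\min\{k,n\}$-clustering is therefore also optimal for the at-most-$k$ version. (If $k\ge n$ one may alternatively just output the $n$ degenerate single-point curves of total length $0$.)

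Combining these, the algorithm is: set $k'\mydef\min\{k,n\}$, run Algorithm~\ref{ALG1} on $(S,k')$ to obtain an optimal $k'$-clustering $\C=\{S_1,\dots,S_{k'}\}$ in time $O(n^{27})$ by Theorem~\ref{thm:fixed-k}, and output the closed curves $\bound\CH(S_1),\dots,\bound\CH(S_{k'})$; their total length is $\Phi(\C)=\Opt_{k'}(S)$, which by (ii) and (iii) is exactly the optimum over all families of at most $k$ enclosing curves. I do not expect any genuine obstacle here, since all the work has already been carried out in Theorem~\ref{thm:fixed-k}; the only step needing a sentence of care is the monotonicity claim in (ii), i.e.\ that padding a clustering with singleton clusters never increases the cost.
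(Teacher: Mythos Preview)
Your proposal is correct and takes essentially the same approach as the paper: the paper simply states Theorem~\ref{thm:fixedk} immediately after Theorem~\ref{thm:fixed-k} with no explicit proof, relying on the informal identification of the \fixedkname{} with the minimum perimeter sum problem made in the introduction. Your write-up is in fact more careful than the paper in explicitly spelling out the curve-versus-convex-hull reduction and the monotonicity argument handling ``at most $k$'' versus ``exactly $k$''.
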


\section{Acknowledgments}
The authors are very grateful to Joseph S. B. Mitchell for valuable discussions and bringing some related works to our attention.
The authors also thank the reviewers for their insightful comments and for improving the quality of the paper.

The work of M. Abrahamsen, A. Roytman, and M. Thorup is partially supported by
Thorup's Advanced Grant from the Danish Council for Independent Research under Grant No. DFF-0602-02499B,
and partly by Basic Algorithms Research Copenhagen (BARC), supported by
Thorup's Investigator Grant from the Villum Foundation under Grant No. 16582.
The work of A. Adamaszek is supported by the
Danish Council for Independent Research DFF-MOBILEX mobility grant.
The work of V. Cohen-Addad was done as a Postdoctoral Fellow at the University of Copenhagen and supported by the
European Union's Horizon 2020 research and innovation programme under the Marie Sklodowska-Curie Grant Agreement No. 748094.
V. Cohen-Addad was also part of BARC.
M.~Mehr is supported by the 
Netherlands Organisation for Scientific Research (NWO) under Project No. 022.005.025.

\bibliographystyle{plain}
\bibliography{minpersum}

\end{document}